\tikzset{
	%Define standard arrow tip
	>=stealth',
	%Define style for boxes
	punkt/.style={
		rectangle,
		rounded corners,
		draw=black, very thick,
		minimum width=8.5em,
		minimum height=2em,
		text centered},
	% Define arrow style
	pil/.style={
		->,
		thick,
		shorten <=2pt,
		shorten >=2pt,}
}
\tikzstyle{line} = [draw, -latex,thick,
\newif\ifdraft\drafttrue
\newcommand{\ntrans}[1]{\mathrel{{\trans{#1}}\makebox[0em][r]{$\not$\hspace{2ex}}}{\!}}
\newcommand{\ttrans}[1]{\stackrel{\, {#1} \,}{\Longrightarrow}}
\newcommand{\ttranst}[1]{\Longrightarrow\trans{#1}\Longrightarrow}
\newcommand{\Uppaal}{{\sc Uppaal}}
\newcommand{\on}{\mathsf{on}}
\newcommand{\off}{\mathsf{off}}
\newcommand{\bool}[1]{\llbracket #1 \rrbracket}
\newcommand{\confCPS}[2]{#1 \, {\Join} \, #2}
\newcommand{\sniff}[2]{\mathsf{sniff}\, #2(#1)}
\newcommand{\drop}[2]{\mathsf{drop}\, #2(#1)}
\newcommand{\forge}[2]{\mathsf{forge}\, #2
\langle #1 \rangle}
\newcommand{\rsens}[2]{\mathsf{read}\, #2(#1)}
\newcommand{\wact}[2]{\mathsf{write}\, #2 \langle #1 \rangle}
\newcommand{\env}{\mathit{Env}}
\newcommand{\stateSys}{\mathit{State}}
\newcommand{\statefun}{\xi_{\mathrm{x}}}
\newcommand{\actuatorfun}{\xi_{\mathrm{a}}}
\newcommand{\uncertaintyfun}{\xi_{\mathrm{w}}}
\newcommand{\evolmap}{\mathit{evol}}
\newcommand{\errorfun}{\xi_{\mathrm{e}}}
\newcommand{\measmap}{\mathit{meas}}
\newcommand{\invariantfun}{\mathit{inv}}
\newcommand{\safefun}{\mathit{safe}}
\newcommand{\sensorfun}{\xi_{\mathrm{s}}}
\newcommand{\CPS}{CPS}
\renewcommand{\cal}{\mathcal}
\newtheorem{definition}{Definition}
\newtheorem{theorem}{Theorem}
\newtheorem{proposition}{Proposition}
\newtheorem{remark}{Remark}
\newtheorem{lemma}{Lemma}
\newtheorem{example}{Example}
\newtheorem{corollary}{Corollary}
\newtheorem{notation}{Notation}
\begin{document}

\title{A Formal Approach to Physics-Based Attacks \\ in Cyber-Physical Systems\\ (Extended Version)\thanks{This document extends the paper ``A Formal Approach to Physics-Based Attacks in Cyber-Physical Systems'' that will appear in \emph{ACM Transactions on Privacy and Security} by providing proofs that are worked out in full details.}}

\author[1]{Ruggero Lanotte}
\author[2]{Massimo Merro}
\author[2]{Andrei Munteanu}
\author[3]{Luca Vigan\`o}
% \orcid{}
\affil[1]{%
 DISUIT, Universit\`a dell'Insubria, Italy\\
  {ruggero.lanotte@uninsubria.it}
}
\affil[2]{%
  Dipartimento di Informatica, Universit\`a degli Studi di Verona, Italy\\
  {\{massimo.merro,andrei.munteanu\}@univr.it}
}
\affil[3]{%
  Department of Informatics, King's College London, UK\\
  {luca.vigano@kcl.ac.uk}
}

\maketitle

\begin{abstract}
We apply formal methods to lay and streamline theoretical foundations to reason about Cyber-Physical Systems (CPSs) and physics-based attacks, i.e., attacks targeting physical devices. 
We focus on a formal treatment of both integrity and denial of service  attacks to sensors and actuators of CPSs, and on the timing aspects of these attacks. 
Our contributions are fourfold. 
(1)~We define a hybrid process calculus to model both CPSs and physics-based attacks.
(2)~We formalise a threat model that  specifies
MITM attacks that can manipulate sensor readings or control commands in order to drive a CPS into an undesired state;  we group these attacks into classes, and provide the means to assess attack tolerance/vulnerability with respect to a
given class of attacks,  based on a proper notion of most powerful physics-based attack. 
(3)~We formalise how to estimate the impact of a successful attack on a CPS and investigate possible quantifications of the success chances of an attack.
(4)~We illustrate our definitions and results by formalising a non-trivial running example in \Uppaal{} SMC, the statistical extension of the \Uppaal{} model checker;
we 
use \Uppaal{} SMC as an automatic tool for carrying out a static security analysis of our running example in isolation and when exposed to three different physics-based attacks with different impacts. 
\end{abstract}

% \keywords{Cyber-physical system security, formal methods, theoretical foundations, process calculus.}

%%%%%%%%%%%%%%%%%%%%%%%%%%%%%%%%
%%%%%%                                                                      %%%%%%%
%%%%%%                 I N T R O D U C T I O N                %%%%%%%
%%%%%%                                                                      %%%%%%%
%%%%%%%%%%%%%%%%%%%%%%%%%%%%%%%%

\section{Introduction}
\label{sec:introduction}

\subsection{Context and motivation} 
\emph{Cyber-Physical Systems (CPSs)} are integrations of networking and
distributed computing systems  with physical processes that monitor and
control entities in a physical environment, with feedback loops where
physical processes affect computations and vice versa. For example, in
real-time control systems, a hierarchy of \emph{sensors}, \emph{actuators} and \emph{control components} are connected  to  control stations. 

In recent years there has been a dramatic increase in the number of
attacks to the security of CPSs, e.g., manipulating sensor readings and, in general, influencing physical processes to bring the system into a state desired by the attacker.
Some notorious examples are: 
\begin{inparaenum}[(i)]
	\item the \emph{STUXnet} worm, which reprogrammed PLCs of nuclear
	centrifuges in Iran~\cite{stuxnet}; 
	\item  the attack on a sewage treatment
	facility in Queensland, Australia, which manipulated the SCADA system to
	release raw sewage into local rivers and parks~\cite{SlMi2007}; 
	\item the \emph{BlackEnergy} cyber-attack on the Ukrainian power grid, again  compromising the SCADA system~\cite{ICS15}. 
\end{inparaenum}

A common aspect of these attacks  is that they all compromised \emph{safety critical systems}, i.e., systems whose failures may cause catastrophic consequences. Thus, as stated in~\cite{GGIKLW2015,GollmannK16}, the concern for consequences at the physical level puts \emph{\CPS{} security} apart from standard \emph{information security}, and demands for ad hoc solutions to properly address such novel research challenges. 

These ad hoc solutions must explicitly take into consideration a number of specific issues of attacks tailored for \CPS{s}. One main critical issue is the \emph{timing of the attack}: the physical state of a system changes continuously over time and, as the system evolves, 
some states might be more vulnerable to attacks than others~\cite{KrCa2013}. For example, an attack launched when the target state
variable reaches a local maximum (or minimum) may have a great impact on
the whole system behaviour, whereas the system might be able to tolerate the same attack if launched when that variable is far from its local maximum or minimum~\cite{BestTime2014}.
Furthermore, not only the timing of the attack but also the \emph{duration of the attack} is an important parameter to be taken into consideration in order to achieve a successful attack. For example, it may take minutes for a chemical reactor to rupture~\cite{chemical-reactor}, hours to heat a tank of water or burn out a motor, and days to destroy centrifuges~\cite{stuxnet}. 

Much progress has been done in the last years in developing formal approaches to aid the \emph{safety verification} of CPSs (e.g., \cite{HYTECH, PHAVer, SpaceEx, KeYmaera, PlatzerBook, SurveyBartocciSTL2018}, to name a few). However, there is still a relatively small number of works that use formal methods in the context of the \emph{security analysis} of \CPS{s} (e.g., \cite{BuMaCh2012, LICS-Platzer2018,VNN2013,RocchettoTippenhauer2016b,Nigam-Esorics2016,Akella2010,Bodei2018,Nielson2018}). 
In this respect, to the best of our knowledge, a systematic formal approach to study \emph{physics-based attacks}, that is, attacks targeting the physical devices (sensors and actuators) of \CPS{s}, is still to be fully developed. Our paper moves in this direction by relying on a process calculus approach. 

\subsection{Background}
The dynamic behaviour of the \emph{physical plant} of a \CPS{} is often 
represented by means of a \emph{discrete-time state-space
	model\/}\footnote{See~\cite{survey-CPS-security-2016,survey-CPS-security-2019} for a taxonomy of the time-scale models used to represent \CPS{s}.} 
consisting of two equations of the form
\begin{displaymath} 
\begin{array}{rcl}
x_{k+1} & = & Ax_{k} + Bu_{k} + w_{k} \enspace \\[1pt] %%% &  \mathrm{and} &
y_k & = & Cx_{k} + e_k 
\end{array}
\end{displaymath}
where $x_k \in \mathbb{R}^n$ is the current \emph{(physical) state}, $u_k \in \mathbb{R}^m$ is the \emph{input} (i.e., the control actions implemented through actuators) and $y_k \in \mathbb{R}^p$ is the \emph{output} (i.e., the measurements from the sensors). The \emph{uncertainty} $w_k \in \mathbb{R}^n$ and the \emph{measurement error} $e_k \in \mathbb{R}^p$ represent perturbation and sensor noise, respectively, and $A$, $B$, and $C$ are matrices modelling the dynamics of the physical system. Here, the \emph{next state} $x_{k+1}$ depends on the current state $x_k$ and the corresponding control actions $u_k$, at the sampling instant $k \in \mathbb{N}$. The state $x_k$ cannot be directly observed: only its measurements $y_k$ can be observed.

The physical plant is supported by a communication network through which
the sensor measurements and actuator data are exchanged with controller(s) and supervisor(s) (e.g., IDSs), which are the \emph{cyber} components (also called \emph{logics}) of a CPS.

\subsection{Contributions} 
In this paper, we focus on a formal treatment of both \emph{integrity} and \emph{Denial of Service (DoS)} attacks to \emph{physical devices} (sensors and actuators) of \CPS{s}, paying particular attention to the \emph{timing aspects} of these attacks. The overall goal of the paper is to apply \emph{formal methodologies} to lay \emph{theoretical foundations} to reason about and 
formally detect attacks to physical devices of \CPS{s}. A straightforward utilisation of these methodologies is for \emph{model-checking} (as, e.g., in~\cite{SpaceEx})  or \emph{monitoring} (as, e.g., in~\cite{SurveyBartocciSTL2018}) in order to be able to verify security properties of \CPS{s} either before system deployment or, when static analysis is not feasible, at runtime to promptly detect undesired behaviours. In other words, we aim at providing an essential stepping stone for formal and automated analysis techniques for checking the security of CPSs (rather than for providing defence techniques, i.e., \emph{mitigation}~\cite{Mitigation2018}).

Our contribution is fourfold. The \emph{first contribution} is the definition of a \emph{hybrid process calculus}, called \cname{}, to formally specify both \CPS{s} and \emph{physics-based attacks}. In \cname{}, \CPS{s} have two components: 
\begin{itemize}[noitemsep]
	\item a \emph{physical component} denoting the \emph{physical plant}
	(also called environment) of the system, and containing information on
	state variables, actuators, sensors, evolution law, etc., and 
	\item a \emph{cyber component} that governs access to sensors and actuators, and channel-based communication with other cyber components. 
\end{itemize}
Thus, channels are used for logical interactions between cyber components, whereas sensors and actuators make possible the interaction between cyber and physical components.

\cname{} adopts a \emph{discrete notion of time}~\cite{HR95} and it is
equipped with a \emph{labelled transition semantics (LTS)} that allows us to observe both \emph{physical events} (system deadlock and violations of safety conditions) and \emph{cyber events} (channel communications). Based on our LTS, we define two \emph{compositional} trace-based system preorders: a deadlock-sensitive \emph{trace preorder}, $\sqsubseteq$, and a \emph{timed variant}, $\sqsubseteq_{m..n}$, for $m \in \mathbb{N}^{+}$ and $n \in \mathbb{N}^{+}\cup \{\infty\}$, which takes into account discrepancies of execution traces within the discrete time interval $m..n$. Intuitively, given two \CPS{s} $\mathit{Sys}_1$ and $\mathit{Sys}_2$, we write $\mathit{Sys}_1 \sqsubseteq_{m..n} \mathit{Sys}_2$ if $\mathit{Sys}_2$ simulates the execution traces of $\mathit{Sys}_1$, except for the time interval $m..n$; if $n=\infty$ then the simulation only holds for the first $m-1$ time slots.

As a \emph{second contribution}, we formalise a \emph{threat model} that specifies \emph{man-in-the-middle (MITM) attacks} that can manipulate \emph{sensor readings} or \emph{control commands} in order to drive a \CPS{} into an undesired state~\cite{TeShSaJo2015}.\footnote{Note that we focus on attackers
	who have already entered the CPS, and we do not consider how they gained access to the system, e.g., by attacking an Internet-accessible controller or 
	one of the communication protocols as a Dolev-Yao-style attacker~\cite{dolev1983security} would do.} Without loss of generality, MITM attacks targeting physical devices (sensors or actuators) can be assimilated to \emph{physical attacks}, i.e., those attacks that directly compromise physical devices (e.g., electromagnetic attacks).
As depicted in Figure~\ref{fig:threat-model}, our attacks may affect directly the sensor measurements or the controller commands: 
\begin{itemize}[noitemsep]
	\item \emph{Attacks on sensors} consist of reading and eventually 
	replacing $y_k$ (the sensor measurements) with $y^a_k$. 
	\item \emph{Attacks on actuators} consist of reading, dropping and eventually replacing the controller commands $u_k$ with $u^a_k$, affecting directly the actions the actuators may execute.
\end{itemize}
We group attacks into classes. A \emph{class of attacks} takes into account both the potential malicious activities  $\I$ on physical devices and the \emph{timing parameters} $m$ and $n$ of the attack: begin and end of the attack. We represent a class $C$ as a total function $C \in [\I \rightarrow {\cal P}(m..n)]$. Intuitively, for $\iota \in \I$, $C(\iota) \subseteq m..n$ denotes the set of time instants when an attack of class $C$ may tamper with \nolinebreak  the \nolinebreak device \nolinebreak  $\iota$.

\begin{figure}[t]
	\centering
	\begin{minipage}[c]{0.4\textwidth}
		\begin{tikzpicture}[scale=0.4, every node/.style={scale=0.5}]
		\node[punkt] (engine) {\huge Plant};
		
		\node[punkt, below=0.7cm of engine]
		(controller) {\huge Logics};
		
		\node[punkt, right=of engine] (sensor) {\huge Sensors};
		\node[punkt, left=of engine] (actuator) { \huge Actuators}
		edge[pil, right=45] (engine);
		
		\path[line] (engine) -- node[above, pos=0.5] {{\huge${x_k}$}} (sensor);
		\path [line] (sensor) |- node[above, pos=0.7] {{\huge${y_k^a}$}} (controller.east);
		\path [line] (sensor) |- node[right, pos=0.25] {{\huge${y_k}$}} (controller.east);	
		\path [line] (actuator) -- (engine);
		\path [line] (controller.west) -| node[above, pos=0.3,] {\huge ${u_k}$}  (actuator);
		\path [line] (controller.west) -| node[left, pos=0.75,] {\huge ${u_k^a}$}  (actuator);
		\node[above=0.6cm of engine](uncertNode){};
		\path [line] (uncertNode) -- node[right, pos=0.3] {{\huge${w_k}$}} (engine);
		\node[above=0.6cm of sensor](noiseNode){};
		\path [line] (noiseNode) -- node[right, pos=0.3] {{\huge${e_k}$}} (sensor);
		\node[left=1.3 of controller] (mal1) {\includegraphics[,width=.2\textwidth]{./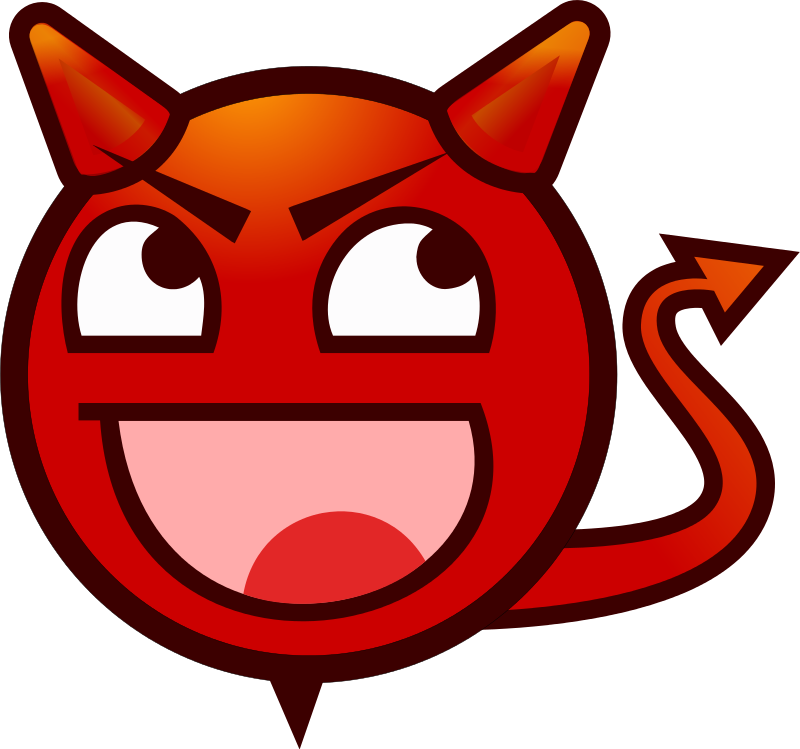}};
		\node[right=1.4 of controller] (mal2) {\includegraphics[,width=.2\textwidth]{./malware.png}};
		\end{tikzpicture}	
	\end{minipage}
	\caption{MITM attacks to sensor readings and control commands}
	\label{fig:threat-model}
\end{figure}

In order to make security assessments on our \CPS{s}, we adopt a
well-known approach called \emph{Generalized Non Deducibility on
	Composition (GNDC)}~\cite{FM99}. 
Thus, in our calculus \cname{}, we say that a \CPS{} $\mathit{Sys}$ 
\emph{tolerates} an attack $A$ if
\begin{displaymath}
\mathit{Sys} \: \parallel \: A \q \sqsubseteq \q \mathit{Sys} \,  .
\end{displaymath}
In this case, the presence of the attack $A$, does not change the (physical and logical) observable behaviour of the system $\mathit{Sys}$, and the attack can be considered harmless.

On the other hand, we say that a \CPS{} $\mathit{Sys}$
is \emph{vulnerable} to an attack $A$ of class $C \in [\I \rightarrow {\cal P}(m..n)]$ if there is a time interval $m'..n'$ in which the attack becomes observable (obviously, $m' \geq m$). Formally, we write: 
\begin{displaymath}
\mathit{Sys} \: \parallel \: A \q \sqsubseteq_{m'..n'} \q \mathit{Sys} \,  .
\end{displaymath}

We provide sufficient criteria to prove attack tolerance/vulnerability to attacks of an arbitrary class $C$. We define a notion of \emph{most powerful physics-based attack} of a given class $C$, $\mathit{Top}(C)$, and prove that if a \CPS{} tolerates $\mathit{Top}(C)$ then it tolerates all attacks $A$ of class $C$ (and of any weaker class\footnote{Intuitively, attacks of classes weaker than $C$ can do less with respect to attacks  of class $C$.}). Similarly, if a \CPS{} is vulnerable to $\mathit{Top}(C)$, in the time interval $m'..n'$, then no attacks of class $C$  (or weaker) can affect the system out of that time interval. This is very useful when checking for attack tolerance/vulnerability with respect to all attacks of a given class $C$.

As a \emph{third contribution}, we formalise how to estimate the \emph{impact of a successful attack} on a \CPS{}.  
As expected, \emph{risk assessment} in industrial \CPS{s} is a crucial phase preceding any defence strategy implementation~\cite{Swedish-guide}. The objective of this phase is to prioritise among vulnerabilities; this is done based on the likelihood that vulnerabilities are exploited, and the impact on the system under attack if exploitation occurs. In this manner, the resources can then be focused on preventing the most critical vulnerabilities~\cite{Milosevic2018a}. 
We provide a \emph{metric} to estimate the \emph{maximum perturbation} introduced in the system under attack with respect to its genuine behaviour, according to its evolution law and the uncertainty of the model. Then, we prove that the impact of the most powerful attack $\mathit{Top}(C)$ represents an upper bound for the impact of any attack $A$ of class $C$ (or weaker).

Finally, as a \emph{fourth contribution}, we formalise a \emph{running example} in \Uppaal{} SMC~\cite{David:2015:UST:2802769.2802840}, the statistical extension of the \Uppaal{} model checker~\cite{DBLP:conf/sfm/BehrmannDL04} supporting  the analysis of systems expressed as composition of \emph{timed and/or probabilistic automata}. 
Our goal is to test \Uppaal{} SMC as an automatic tool for the \emph{static security analysis} of a simple but significant \CPS{} exposed to a number of different physics-based attacks with different impacts on the system under attack. Here, we wish to remark that while we have kept our running example simple, it is actually non-trivial and designed to describe a wide number of attacks, as will become clear below.

This paper extends and supersedes a preliminary conference version that appeared in~\cite{CSF2017}. 

The \Uppaal{} SMC models of our system and the attacks that we have found are available at the repository {\small \texttt{https://bitbucket.org/AndreiMunteanu/cps\_smc/src/}}. 

\subsection{Organisation}
In Section~\ref{sec:calculus}, we give syntax and semantics of \cname{}.
In Section~\ref{sec:running_example}, we provide our running example and its formalisation in \Uppaal{} SMC.
In Section~\ref{sec:cyber-physical-attackers}, we first define our threat model for physics-based attacks, then we use \Uppaal{} SMC to carry out a security analysis of our running example when exposed to three different attacks, and, finally, we provide sufficient criteria for attack tolerance/vulnerability, 
based  on a proper notion of most powerful attack.
In Section~\ref{sec:impact}, we estimate the impact of attacks on \CPS{s} and prove that the most powerful attack of a given class has the maximum impact with respect to all attacks of the same class (or of a weaker one). 
In Section~\ref{sec:conclusions}, we draw conclusions and discuss related \nolinebreak and 
\nolinebreak future \nolinebreak work. 

%%%%%%%%%%%%%%%%%%%%%%%%%%%%%%%%
%%%%%%                                                                      %%%%%%%
%%%%%%                        A L G E B R A                         %%%%%%%
%%%%%%                                                                      %%%%%%%
%%%%%%%%%%%%%%%%%%%%%%%%%%%%%%%%

\section{The Calculus}

\label{sec:calculus}

In this section, we introduce our \emph{Calculus of Cyber-Physical Systems and Attacks}, \cname{}, which extends the \emph{Calculus of Cyber-Physical Systems}, defined in our companion papers~\cite{LaMe17,LMT2018}, with specific features to formalise and study attacks to physical devices. 

Let us start with some preliminary notation. 

\subsection{Syntax of \cname{}}
\begin{notation}
	We use $x, x_k$ 
	for \emph{state variables} (associated to physical states of systems), 
	$c,d$ 
	for \emph{communication channels}, 
	$a, a_k$ 
	for \emph{actuator devices}, 
	and $s,s_k$
	or \emph{sensors devices}.
	
	\emph{Actuator names} are metavariables for actuator devices like
	$\mathit{valve}$, $\mathit{light}$, etc. Similarly, \emph{sensor names}
	are metavariables for sensor devices, e.g., a sensor $\mathit{thermometer}$ that measures a state variable called $\mathit{temperature}$, with a given precision. 
	
	\emph{Values}, ranged over by $v,v',w$, 
	are built from basic values, such as Booleans, integers and real numbers; they also include names.
	
	Given a generic set of names $\mathcal N $, we write $\mathbb{R}^{\mathcal N} $ to denote the set of functions assigning a real value to each name in $\mathcal N$. For $\xi \in \mathbb{R} ^{\mathcal N}$, $n \in \mathcal  N$ and $v \in \mathbb{R} $, we write $\xi [n \mapsto v]$ to denote the function $\psi \in \mathbb{R} ^{\mathcal N}$ such that $\psi(m)=\xi(m)$, for any $m \neq n$, and $\psi(n)=v$.
	Given two generic functions $\xi_1$ and $\xi_2$ with disjoint domains ${\mathcal N}_1$ and ${\mathcal N}_2$, respectively,   we denote with $\xi_1 \cup \xi_2$ the function 
	such that
	$(\xi_1 \cup \xi_2) (n)=\xi_1(n)$, if $n \in {{\mathcal N}_1} $, and 
	$(\xi_1 \cup \xi_2) (n)=\xi_2(n)$, if $n \in {{\mathcal N}_2} $.
\end{notation}

In general, a cyber-physical system consists of: 
\begin{inparaenum}[(i)]
	\item a \emph{physical component} (defining state variables, physical devices, physical evolution, etc.) and 
	\item a \emph{cyber (or logical) component} that interacts with the physical devices (sensors and actuators) and communicates with other cyber components of the same or of other \CPS{s}.
\end{inparaenum}

Physical components in \cname{} are given by two sub-components: 
\begin{inparaenum}[(i)] 
	\item the \emph{physical state},  which is supposed to change at runtime, and
	\item the \emph{physical environment}, which contains static information.\footnote{Actually, this information is periodically updated (say, every six months) to take into account possible drifts of the system.}
\end{inparaenum}
\begin{definition}[Physical state]
	\label{def:physical-state}
	Let $\mathcal X$ be a set of state variables, $\mathcal S$ be a set of sensors,
	and $ \mathcal A$ be a set of actuators. A \emph{physical state} $S$ is a triple
	$\stateCPS {\statefun{}} {\sensorfun{}} {\actuatorfun{}} $,
	where:
	\begin{itemize}
		\item $\statefun{} \in \mathbb{R}^{\mathcal X} $ is the
		\emph{state function},
		\item $\sensorfun{} \in \mathbb{R}^{\mathcal S}$ is the \emph{sensor
			function}, 
		\item $\actuatorfun{} \in \mathbb{R}^{\mathcal A} $ is the
		\emph{actuator function}.
	\end{itemize}
	All functions defining a physical state are \emph{total}. 
\end{definition}

The \emph{state function} $\statefun{}$ returns the current value  associated to each variable in $\mathcal X$, the \emph{sensor function} $\sensorfun{}$ returns the current value associated to each sensor in $\mathcal S$ and the \emph{actuator function} $\actuatorfun{}$ returns the current value associated to each actuator in $\mathcal A$.

\begin{definition}[Physical environment]
	\label{def:physical-env}
	Let $\mathcal X$ be a set of state variables, $\mathcal S$ be a set of sensors, and
	$\mathcal A$ be a set of actuators.
	A \emph{physical environment} $E$ is a 6-tuple 
	$\envCPS
	{\evolmap{}}
	{\measmap{}}
	{\invariantfun{}}
	{\safefun{}}
	{\uncertaintyfun{}}{\errorfun{}}
	$,
	where:
	\begin{itemize}
		\item $\evolmap{}: \mathbb{R}^{\mathcal X} \times
		\mathbb{R}^{\mathcal A} \times \mathbb{R}^{\mathcal X} \rightarrow
		2^{\mathbb{R}^{\mathcal X} }$ is the \emph{evolution map}, 
		
		\item $\measmap{}: \mathbb{R}^{\mathcal X} \times
		\mathbb{R}^{\mathcal S} \rightarrow 2^{\mathbb{R}^{\mathcal S} }$ is
		the \emph{measurement map}, 
		
		\item  $\invariantfun{} \in 2^{\mathbb{R}^{{\mathcal X} }}$  is the \emph{invariant set}, 
		
		\item $\safefun{} \in 2^{\mathbb{R}^{{\mathcal X} }}$ is the \emph{safety set}, 
		
		\item $\uncertaintyfun{} \in \mathbb{R}^{\mathcal X} $ is the \emph{uncertainty function},
		
		\item $\errorfun{} \in \mathbb{R}^{\mathcal S}$ is the \emph{sensor-error function}. 
		
	\end{itemize}
	All functions defining a physical environment are \emph{total functions}.
\end{definition}

The \emph{evolution map} $\evolmap{}$  models the \emph{evolution law} of the physical system, where changes made on actuators may reflect on state variables. Given a state function, an actuator function, and an uncertainty function, the \emph{evolution map} $\evolmap{}$ returns the set of next \emph{admissible state functions}. Since we assume an \emph{uncertainty} in our models, $\evolmap{}$ does not return a single state function but a set of possible state functions.

The \emph{measurement map} $\measmap{}$ returns the set of
next \emph{admissible sensor functions} based on the current state function. Since we assume error-prone sensors, $\measmap{}$ does not return a single sensor function but a set of possible sensor functions.

The \emph{invariant set} $\invariantfun{}$ represents the set of
state functions that satisfy the invariant of the system. A
\CPS{} that gets into a  physical state with a state function that does not satisfy the invariant is in \emph{deadlock}. Similarly, 
the \emph{safety set} $\safefun{}$ represents the set of state functions that satisfy the safety conditions of the system. Intuitively, if a \CPS{} gets into an unsafe state, then its functionality may get compromised.

The \emph{uncertainty function}
$\uncertaintyfun{}$ returns the uncertainty (or accuracy) associated to each state variable. Thus, given a state variable $x \in \mathcal X$,
$\uncertaintyfun{}(x)$ returns the maximum distance between the real value
of $x$, in an arbitrary moment in time, 
and its representation in the model. 
For $\uncertaintyfun{}, \uncertaintyfun'{} \in \mathbb{R}^{\mathcal X}$, we
will write $\uncertaintyfun{} \leq \uncertaintyfun'{}$ if
$\uncertaintyfun{}(x) \leq \uncertaintyfun'{}(x)$, for any $x \in
\mathcal X$. 
The evolution map  $\evolmap{}$ is obviously \emph{monotone} with respect to uncertainty: if
$\uncertaintyfun{} \leq \uncertaintyfun'{}$ then $\evolmap{}(\statefun{},
\actuatorfun{}, \uncertaintyfun{}) \subseteq \evolmap{}(\statefun{},
\actuatorfun{}, \uncertaintyfun'{})$. 

Finally, the \emph{sensor-error function} $\errorfun{}$ returns the maximum error associated to each sensor in $\mathcal  S$.

\medskip
Let us now define formally the cyber component of a \CPS{} in \cname{}.
Our (logical) processes build on Hennessy and Regan's \emph{Timed Process Language TPL}~\cite{HR95}, basically, CCS enriched with a discrete notion of time. We extend TPL with two main ingredients: 
\begin{itemize}[noitemsep]
	\item two constructs to read values detected at sensors and  write values on actuators, respectively; 
	\item special constructs to represent malicious activities on physical 
	devices. 
\end{itemize}
The remaining constructs are the same as those of TPL. 
\begin{definition}[Processes]
	\emph{Processes} are defined as follows: 
	\begin{displaymath}
	\begin{array}{rl}	
	P,Q \Bdf & \nil \q \big| \q \tick.P \q \big| \q P \parallel Q \q 
	\big| \q \pi.P \q \big| \q \phi.P \q \big| \q  \timeout{\mu.P}{Q} 
	\q \big|  \q \ifelse b P Q \q \big| \q P{\setminus} c \q \big| \q  H \langle \tilde{w} \rangle \\[1pt]
	\pi \Bdf & \LIN{c}{x} \q \big| \q \OUT{c}{v}     \\[1pt]
	\phi \Bdf & \rsens x s \q \big| \q   \wact v a \\[1pt]
	\mu \Bdf & \sniff x s \q \big| \q  \drop x a \q \big| \q  \forge v p \, . 
	\end{array}
	\end{displaymath}
\end{definition}

We write $\nil$ for the \emph{terminated process}. The process $\tick.P$
sleeps for one time unit and then continues as $P$. We write $P \parallel Q$ to denote the \emph{parallel composition} of concurrent \emph{threads} $P$ and $Q$.
The process $ \pi.P$ denotes \emph{channel transmission}. 
The construct $\phi.P$ denotes  activities on \emph{physical devices}, i.e., \emph{sensor reading} and \emph{actuator writing}. 
The process $\timeout{\mu.P}{Q}$ denotes MITM malicious activities under timeout targeting physical devices (sensors and actuators). More precisely, we support \emph{sensor sniffing}, \emph{drop of actuator commands}, and \emph{integrity attacks on data coming from sensors and addressed to actuators}. 
Thus, for instance,  $\timeout{\drop x a . P}Q$ drops  a command on the actuator $a$ supplied by the controller in the current time slot;  otherwise, if there are no commands on $a$, it moves to the next time slot and evolves into $Q$.

The process $P{\setminus}c$ is the channel restriction operator of CCS. 
We sometimes write $P{\setminus}\{c_1, c_2, \ldots, c_n \}$ to mean
$P{\setminus}{c_1}{\setminus}{c_2}\cdots{\setminus}{c_n}$. 
The process $\ifelse b P Q$ is the standard conditional, where $b$ is a decidable guard. 
In processes of the form $\tick.Q$ and $\timeout {\mu.P} Q$, the occurrence of $Q$ is said to be \emph{time-guarded}. 
The process $H \langle \tilde{w} \rangle$ denotes (guarded) recursion. 

We  assume a set of \emph{process identifiers} ranged over by $H,H_1,H_2$.
We write $H \langle w_1,\ldots, w_k \rangle$ to denote a recursive process $H$ defined via an equation $H(x_1,\ldots, x_k) = P$, where (i) the tuple $x_1,\ldots, x_k$ contains all the variables that appear free in $P$, and (ii) $P$ contains only  guarded occurrences of the process identifiers, such as $H$ itself. We say that recursion is \emph{time-guarded} if $P$ contains only time-guarded occurrences of the process identifiers. Unless explicitly stated our recursive processes are always time-guarded.

In the  constructs ${\LIN c x. P}$, $\rsens x s . P$, $\timeout{\sniff x s . P}Q$ and $\timeout{ \drop x a. P}Q$  the variable $x$ is said to be \emph{bound}. 
This gives rise to the standard notions of \emph{free/bound (process) variables} and \emph{$\alpha$-conversion}. 
A term is \emph{closed} if it does not contain free variables, and 
we assume to always work with closed processes: the absence of free variables is preserved at run-time. As further notation, we write $T{\subst v x}$ for the substitution of all occurrences of the free variable $x$ in $T$ with the value $v$.

Everything is in place to provide the definition of cyber-physical systems expressed in \cname{}. 

\begin{definition}[Cyber-physical system]
	Fixed a set of state variables $\mathcal X$, a set of sensors $\mathcal S$, and a set of actuators $\mathcal A$, a \emph{cyber-physical system} in \cname{} is given by two main components:
	
	\begin{itemize}
		\item a \emph{physical component} consisting of 
		\begin{itemize}
			\item a \emph{physical environment} $E$ defined on $\mathcal X$, $\mathcal S$, and $\mathcal A$, and 
			\item a \emph{physical state} $S$ recording the current values associated to the state variables in $\mathcal X$, the sensors in $\mathcal S$, and the actuators in $\mathcal A$;  
		\end{itemize}
		\item a \emph{cyber component} $P$ that interacts with the sensors in $\mathcal S$ and the actuators $\mathcal A$, and can communicate, via channels, with other cyber components of the same or of other \CPS{s}.
	\end{itemize}
	We write $\confCPS {E;S} P$ to denote the resulting \CPS{}, and use $M$ and $N$ to range over \CPS{s}. Sometimes, when the physical environment $E$ is clearly identified, we write $\confCPS S P$ instead of $\confCPS {E;S} P$. \CPS{s} of the form $\confCPS S P$ are called environment-free \CPS{s}. 
\end{definition}

The syntax of our \CPS{s} is slightly too permissive as a process 
might use sensors and/or actuators that are not defined in the physical state. To rule out ill-formed \CPS{s}, we use the following definition. 
\begin{definition}[Well-formedness]
	\label{def:well-formedness}
	Let $E = \envCPS
	{\evolmap{}}
	{\measmap{}}
	{\invariantfun{}}
	{\safefun{}}
	{\uncertaintyfun{}}
	{\errorfun{}}
	$ 
	be a physical environment, let $S = \stateCPS
	{\statefun{}}
	{\sensorfun{}}
	{\actuatorfun{}}
	$ be a physical state defined on a set of physical variables $\mathcal X$, a set of sensors ${\mathcal S}$, and a set of actuators ${\mathcal A}$, and let $P$ be a process. 
	The \CPS{} $\confCPS {E;S} P$ is said to be \emph{well-formed} if: 
	\begin{inparaenum}[(i)]
		\item any sensor mentioned in $P$ is in the domain of the function $\sensorfun{}$; 
		\item any actuator mentioned in $P$ is in the domain of the function $\actuatorfun{}$. 
	\end{inparaenum}
\end{definition}
In the rest of the paper, we will always work with well-formed \CPS{s} and use the following abbreviations.

\begin{notation}
	We write $\mu.P$ for 
	the process defined via the  equation $Q = 
	\timeout{\mu.P}Q$, where $Q$ does not occur in $P$. 
	Further, we write
	\begin{itemize}[noitemsep]
		\item $\timeout{\mu}Q$ as an abbreviation for $\timeout{\mu.\nil}{Q}$, 
		\item $\timeout{\mu.P}{}$ as an abbreviation for $\timeout{\mu.P}{\nil}$,
		\item $\OUTCCS c$ and $\LINCCS c$, when channel $c$ is used for pure synchronisation, 
		\item $\tick^{k}.P$ as a shorthand for $\tick \ldots \tick.P$, where the prefix $\tick$ appears $k \geq 0$ consecutive times. 
	\end{itemize} 
	Finally, let $M = \confCPS {E;S}  P$, we write 
	$M\parallel Q$ for $\confCPS {E;S}  { (P\parallel Q) }$, and $M{\setminus}c$ for $\confCPS {E;S} {(P {\setminus}c)}$. 
\end{notation}

\subsection{Labelled transition semantics}
\label{lab_sem}

\begin{table*}[t]
	{\small 
		\caption{LTS for processes}
		\label{tab:lts_processes} 
		\begin{displaymath}
		\begin{array}{l@{\hspace*{10mm}}l}
		\Txiom{Inpp}
		{-}
		{ { {\LIN c x .P} } \trans{\inp c v}    {P{\subst v x}}  }
		& 
		\Txiom{Outp}
		{-}
		{  {\OUT c v .P}  \trans{\out c v}   P} 
		\\[13pt]
		\Txiom{Com}
		{ P \trans{\out c v}  { P'}  \Q  Q \trans{\inp c v}  { Q'} }
		{ P \parallel  Q \trans{\tau}  {P'\parallel Q'}}
		&
		\Txiom{Par}
		{ P \trans{\lambda}  P' \Q \lambda \neq  \tick }
		{ {P\parallel Q} \trans{\lambda} {P'\parallel Q}}
		\\[13pt]
		\Txiom{Read}
		{ - }
		{ { {\rsens x s .P} } \trans{\rcva s v}    {P{\subst v x}}  }
		&
		\Txiom{Write}
		{ - }  
		{\wact v a .P  \trans{\snda a v}   P}
		\\[13pt]
		\Txiom{{\Lightning}Sniff{\Lightning}}
		{ - }
		{ { \timeout{\sniff x s .P}Q } \trans{\rcva {{\mbox{\Lightning}}s} v}    {P{\subst v x}}  }
		&
		\Txiom{{\Lightning}Drop{\Lightning}}
		{ - } 
		{ { \timeout{\drop x a .P}Q } \trans{\rcva {\mbox{\Lightning}a} v}    {P{\subst v x}}  }
		\\[13pt]
		\multicolumn{2}{c}{\Txiom{{\Lightning}Forge{\Lightning}}
			{ p \in \{ s, a \}} 
			{\timeout{\forge v p .P}Q  \trans{\snda {\mbox{\Lightning}p} v}   P}}
		\\[13pt]
		\Txiom{{\Lightning}ActDrop{\Lightning}}
		{ P \trans{\snda {a} v}  {P'}  \Q  Q \trans{\rcva {\mbox{\Lightning}a} v}  { Q'}
		}
		{ P \parallel  Q \trans{\tau}  {P'\parallel Q'}}
		&
		\Txiom{{\Lightning}SensIntegr{\Lightning}}
		{ P \trans{\snda {\mbox{\Lightning}s} v}  {P'}  \Q  Q \trans{\rcva s v}  { Q'} 
		}
		{ P \parallel  Q \trans{\tau}  {P'\parallel Q'}} 
		\\[13pt]
		\Txiom{Res}{P \trans{\lambda} P' \Q \lambda \not\in \{ {\inp c v}, {\out c v} \}}{P {\setminus}c \trans{\lambda} {P'}{\setminus}c}
		&
		\Txiom{Rec}
		{  P{\subst {\tilde{w}} {\tilde{x}}} \trans{\lambda}  Q \Q H(\tilde{x})=P}
		{ H \langle \tilde{w} \rangle  \trans{\lambda}  Q}
		\\[13pt] 
		\Txiom{Then}{\bool{b}=\true \Q P \trans{\lambda} P'}
		{\ifelse b P Q \trans{\lambda} P'}
		&
		\Txiom{Else}{\bool{b}=\false \Q Q \trans{\lambda} Q'}
		{\ifelse b P Q \trans{\lambda} Q'}
		\\[13pt]
		\Txiom{TimeNil}{-}
		{ \nil \trans{\tick}  \nil}
		& 
		\Txiom{Sleep}
		{-}
		{  { \tick.P} \trans{\tick}  P}
		\\[13pt]
		\Txiom{Timeout}
		{-}
		{  {\timeout{\mu.P}{Q} }   \trans{\tick}  Q}
		&
		\Txiom{TimePar}
		{
			P \trans{\tick}  {P'}  \Q 
			Q \trans{\tick} {Q'} 
		}
		{
			{P \parallel Q}   \trans{\tick}  { P' \parallel Q'}
		}
		\end{array}
		\end{displaymath}
	}
\end{table*}

In this subsection, we provide the dynamics of \cname{} in terms of a \emph{labelled transition system (LTS)} in the SOS style of Plotkin. First, we give in Table~\ref{tab:lts_processes} an LTS for logical processes, then in Table~\ref{tab:lts_systems} we lift transition rules from processes to environment-free \CPS{s}. 

\begin{table*}[t]
	{\small 
		\caption{LTS for \CPS{s} $\confCPS S  P$ parametric on an environment 
			$E = \envCPS
			{\evolmap{}}
			{\measmap{}}
			{\invariantfun{}}
			{\safefun{}}
			{\uncertaintyfun{}}
			{\errorfun{}}
			$}
		\label{tab:lts_systems} 
		\begin{displaymath}
		\begin{array}{c}
		\Txiom{Out}
		{S=\langle \statefun{} ,  \sensorfun{} , \actuatorfun{} \rangle \Q P \trans{\out c v}  P' \Q \statefun{} \in \invariantfun{}}
		{\confCPS {S}  P   \trans{\out c v}   \confCPS {S} {P' }}
		\Q\Q\Q
		\Txiom{Inp}
		{S=\langle \statefun{} ,  \sensorfun{} , \actuatorfun{} \rangle \Q P  \trans{\inp c v}  P' \Q \statefun{} \in \invariantfun{}}
		{\confCPS {S}  P    \trans{\inp c v}  \confCPS {S}  {P' }}
		\\[14pt]
		\Txiom{SensRead}{P \trans{\rcva s v} P'  \Q \sensorfun{}(s)=v \Q P 
			\ntrans{\snda{\mbox{\Lightning}s}{v}}\Q  \statefun{}  \in \invariantfun{} 
		}
		{\confCPS {\langle \statefun{} ,  \sensorfun{} , \actuatorfun{} \rangle}
			P \trans{\tau} \confCPS {\langle \statefun{} ,  \sensorfun{} , \actuatorfun{} \rangle}  {P'}}
		\\[14pt]
		\Txiom{{\Lightning}SensSniff{\Lightning}}{P \trans{\rcva {\mbox{\Lightning}s} v} P'   \Q \sensorfun{}(s)=v \Q 
			\statefun{} \in \invariantfun{} 
		}
		{\confCPS {\langle \statefun{} ,  \sensorfun{} , \actuatorfun{} \rangle}  P \trans{\tau} \confCPS {\langle \statefun{} ,  \sensorfun{} , \actuatorfun{} \rangle}  {P'}}
		\\[14pt]
		\Txiom{ActWrite}{P \trans{\snda a v} {P'}  \Q  \actuatorfun'{}=\actuatorfun{}[a \mapsto v] \Q  P \ntrans{\rcva{\mbox{\Lightning}a}{v}} \Q  \statefun{}  \in \invariantfun{}
		}
		{\confCPS {\langle \statefun{} ,  \sensorfun{} , \actuatorfun{} \rangle}  P \trans{\tau} \confCPS {\langle \statefun{} ,  \sensorfun{} , \actuatorfun'{} \rangle} {P'}}
		\\[14pt]
		\Txiom{{\Lightning}ActIntegr{\Lightning}}{P \trans{\snda {\mbox{\Lightning}a} v} {P'}   \Q \actuatorfun'{}=\actuatorfun{}[a \mapsto v]\Q 
			\statefun{}  \in \invariantfun{} }
		{\confCPS {\langle \statefun{} ,  \sensorfun{} , \actuatorfun{} \rangle}  P \trans{\tau} \confCPS {\langle \statefun{} ,  \sensorfun{} , \actuatorfun'{} \rangle} {P'}}
		\\[15pt]
		\Txiom{Tau}{P \trans{\tau} P' \Q  \statefun{} \in \invariantfun{}}
		{ \confCPS {\langle \statefun{} ,  \sensorfun{} , \actuatorfun{} \rangle}  P \trans{\tau} \confCPS {\langle \statefun{} ,  \sensorfun{} , \actuatorfun{} \rangle}  {P'}} 
		\Q\Q
		\Txiom{Deadlock}
		{S=\langle \statefun{} ,  \sensorfun{} , \actuatorfun{} \rangle \Q \statefun{} \not \in \invariantfun{}}
		{ \confCPS {S}  P \trans{\dead} \confCPS {S}  {P}}
		\\[14pt]
		\Txiom{Time}{ P \trans{\tick} {P'} \Q
			S=\langle \statefun{} ,  \sensorfun{} , \actuatorfun{} \rangle \Q
			%%%\confCPS {E; S} P \ntrans{\tau} \Q
			S' \in \operatorname{next}(E;S) \Q \statefun{}\in \invariantfun{} }
		{\confCPS {S}  P \trans{\tick} \confCPS {S'}  {P'}}
		\\[14pt]
		\Txiom{Safety}
		{S=\langle \statefun{} ,  \sensorfun{} , \actuatorfun{} \rangle \Q \statefun{} \not \in \safefun{} \Q \statefun{} \in \invariantfun{}}
		{ \confCPS {S}  P \trans{\unsafe} \confCPS {S} {P}}
		\end{array}
		\end{displaymath}
	}
\end{table*}

In Table~\ref{tab:lts_processes}, the meta-variable $\lambda$ ranges over labels in the set 
$\{\tick, \tau, {\out c v}, {\inp c v}, \allowbreak \snda a v,\rcva s v, \snda {\mbox{\Lightning}p} v, \rcva {\mbox{\Lightning}p} v\}$. Rules \rulename{Outp}, \rulename{Inpp} and \rulename{Com} serve to model channel communication, on some channel $c$. Rules \rulename{Read} and \rulename{Write} 
denote sensor reading and actuator writing, respectively. The following 
three rules model three different MITM malicious activities: sensor sniffing, dropping of actuator commands, and integrity attacks on data coming from sensors or addressed to actuators. 
In particular, rule \rulename{$\mbox{\Lightning}$ActDrop$\mbox{\,\Lightning}$} models a \emph{DoS attack to the actuator $a$}, where the update request of the controller is dropped by the attacker and it never reaches the actuator, whereas 
rule \rulename{$\mbox{\Lightning}$SensIntegr$\mbox{\,\Lightning}$} models an
\emph{integrity attack on sensor $s$}, as the controller of $s$ is
supplied with a fake value $v$ forged by the attack.  
Rule \rulename{Par} propagates untimed actions over parallel components.
Rules \rulename{Res}, \rulename{Rec}, \rulename{Then} and \rulename{Else} are standard. The following four rules 
\rulename{TimeNil},
\rulename{Sleep},
\rulename{TimeOut} and 
\rulename{TimePar} 
model the passage of time. For simplicity, we omit the symmetric counterparts of the rules \rulename{Com}, \rulename{$\mbox{\Lightning}$ActDrop$\mbox{\,\Lightning}$}, \rulename{$\mbox{\Lightning}$SensIntegr$\mbox{\,\Lightning}$}, 
and \rulename{Par}.

In Table~\ref{tab:lts_systems}, we lift the transition rules from processes to environment-free \CPS{s} of the form $\confCPS S P$ for $S = \langle \statefun{}, \sensorfun{}, \actuatorfun{} \rangle$. The transition rules are parametric on a physical environment $E$. Except for rule \rulename{Deadlock}, all rules have a common
premise $\statefun{} \in \invariantfun{}$: a system can evolve only if the invariant is satisfied by the current physical state. 
Here, actions, ranged over by $\alpha$, are in the set $\{\tau,
{\out c v}, {\inp c v}, \tick , \dead , \unsafe\}$. These actions denote: internal activities ($\tau$); channel transmission (${\out c v}$ and ${\inp c v}$); the passage of time ($\tick$); and two specific  physical events: system deadlock ($\dead$) and the violation of the safety conditions ($\unsafe$).  
Rules \rulename{Out} and \rulename{Inp} model transmission and reception, with an external system, on a channel $c$. Rule \rulename{SensRead} models the reading of the current data detected at a sensor $s$; 
here, the presence of a malicious action $\snda {\mbox{\Lightning}s} w$ would prevent the reading of the sensor. We already said that rule
\rulename{$\mbox{\Lightning}$SensIntegr$\mbox{\,\Lightning}$} of
Table~\ref{tab:lts_processes} models integrity attacks on a sensor
$s$. However, together with rule \rulename{SensRead}, it also serves
to implicitly model \emph{DoS attacks on a sensor $s$}, as the controller of $s$ cannot read its correct value if the attacker is currently supplying a fake value for it.
Rule \rulename{{\Lightning}SensSniff{\Lightning}} allows the
attacker to read the confidential value detected at a sensor $s$.
Rule~\rulename{ActWrite} models the writing of a value $v$ on an 
actuator $a$; here, the presence of an attack capable of performing a drop action $\rcva{\mbox{\Lightning}a}{v}$ prevents the access to the actuator by the controller. Rule
\rulename{{\Lightning}ActIntegr{\Lightning}} models a
\emph{MITM integrity attack to an actuator $a$}, as the actuator is provided with a value forged by the attack. 
Rule \rulename{Tau} lifts non-observable actions from processes to
systems. This includes communications channels and attacks' accesses to
physical devices. A similar lifting occurs in rule \rulename{Time} for timed actions, where $\operatorname{next}(E;S)$ returns the set of possible physical states for the next time slot. Formally, for $E = \envCPS
{\evolmap{}}
{\measmap{}}
{\invariantfun{}}
{\safefun{}}
{\uncertaintyfun{}}
{\errorfun{}}$ and $S = \langle \statefun{} ,  \sensorfun{} , \actuatorfun{} \rangle$, we
define:
\begin{displaymath}
\operatorname{next}(E;S) \deff \big \{ \langle \statefun'{} ,  \sensorfun'{} , \actuatorfun'{} \rangle \, : \: \statefun'{} \in   \evolmap{}(\statefun{}, \actuatorfun{}, \uncertaintyfun{}) \: \wedge \; \sensorfun'{} \in \measmap{}(\statefun'{},\errorfun{}) \: \wedge \: \actuatorfun'{}=\actuatorfun{} \big \} \, . 
\end{displaymath}
Thus, by an application of rule \rulename{Time} a \CPS{} moves to the next physical state, in the next time slot. Rule~\rulename{Deadlock} is introduced  to signal the violation of the invariant. When the invariant
is violated, a system deadlock occurs and then, in \cname{}, the system
emits a special action $\dead$, forever. Similarly, rule~\rulename{Safety} is introduced to detect the violation of safety conditions. In this case, the system may emit a special action $\unsafe$ and then continue its evolution.

Summarising, in the LTS of Table~\ref{tab:lts_systems} we define transitions rules of the form $\confCPS {S}{P} \trans{\alpha} \confCPS {S'} {P'}$, parametric on some physical environment $E$. As physical environments do not change at runtime, $\confCPS {S}{P} \trans{\alpha} \confCPS {S'} {P'}$ entails $\confCPS {E;S}{P} \trans{\alpha} \confCPS {E;S'} {P'}$, thus providing the  LTS for all \CPS{s} in \cname{}.

\begin{remark}
	Note that our operational semantics ensures that malicious actions of the form $\snda {\mbox{\Lightning}s} v$ (integrity/DoS attack on sensor $s$) or $\rcva {\mbox{\Lightning}a} v$ (DoS attack on actuator $a$) have a pre-emptive power. 
	These attacks can always prevent the regular access to a physical device by its controller. 
\end{remark}

\subsection{Behavioural semantics} 
\label{sec:trace-semantics}
Having defined the actions that can be performed by a \CPS{} of the form $\confCPS{E;S}{P}$, we can easily concatenate these actions to define the possible \emph{execution traces} of the system. Formally, given a trace  $t = \alpha_1 \ldots \alpha_n$, we will write $\trans{t}$ as an abbreviation for $\trans{\alpha_1}\ldots \trans{\alpha_n}$, and we will use the function $\#\tick(t)$ to get the number of occurrences of the  action $\tick$  in $t$.

The notion of trace allows us to provide a formal definition of system soundness: a \CPS{} is said to be \emph{sound} if it never deadlocks and never violates the safety conditions.
\begin{definition}[System soundness]
	Let $M$ be a well-formed \CPS{}. We say that $M$ is \emph{sound} if whenever $M \trans{t} M'$, for some $t$, the actions $\dead$ and $\unsafe$ never occur in $t$. 
\end{definition}
In our security analysis, we will always focus on sound \CPS{s}.

We recall that the \emph{observable activities} in \cname{} are: time
passing, system deadlock, violation of safety conditions, and channel communication.
Having defined a labelled transition semantics, we are ready to formalise our behavioural semantics, based on execution traces. 

We adopt a standard notation for weak transitions: we write $\Trans{}$ 
for $(\trans{\tau})^*$, whereas $\Trans{\alpha}$ means $\ttranst{\alpha}$, and finally $\ttrans{\hat{\alpha}}$ denotes $\Trans{}$ if $\alpha=\tau$ and $\ttrans{\alpha}$ otherwise. Given a trace $t = \alpha_1 {\ldots} \alpha_n$, we write $\Trans{\hat{t}}$ as an abbreviation for $\Trans{\widehat{\alpha_1}} {\ldots} \Trans{\widehat{\alpha_n}}$. 

\begin{definition}[Trace preorder]
	\label{Trace-equivalence}
	We write $M \sqsubseteq N$ if whenever $M \trans{t}
	M'$, for some $t$, there is $N'$ such that $N\Trans{\hat{t}}N'$. 
\end{definition}
\begin{remark}
	\label{rem:deadlock}
	Unlike other process calculi, in \cname{} our trace preorder is able to observe (physical) deadlock due to the presence of the rule \rulename{Deadlock} and the special action $\dead$: whenever $M \sqsubseteq N$ then $M$ eventually 
	deadlocks if and only if $N$ eventually deadlocks (see Lemma~\ref{lem:trace-deadlock} in the appendix).
\end{remark}

Our trace preorder can be used for \emph{compositional reasoning} in those contexts that don't interfere on physical devices (sensors and actuators) while they may interfere on logical components (via channel communication). 
In particular, trace preorder is preserved by parallel composition of \emph{physically-disjoint} \CPS{s}, by parallel composition of \emph{pure-logical} processes, and by channel restriction. Intuitively, two \CPS{s} are physically-disjoint if they have different plants but they may share logical channels for communication purposes. More precisely, physically-disjoint \CPS{s} have disjoint state variables and disjoint physical devices (sensors and actuators).  
As we consider only well-formed \CPS{s} (Definition~\ref{def:well-formedness}), this ensures that a \CPS{} cannot physically interfere with a parallel \CPS{} by acting on its physical devices. 

Formally, let   
$S_i = \stateCPS {\statefun^i{}} {\sensorfun^i{} } {\actuatorfun^i{} }$ and 
$E_i = \envCPS
{\evolmap^i{} }
{\measmap^i{} }   
{\invariantfun^i{}}
{\safefun^i{}}
{\uncertaintyfun^i{}}
{\errorfun^i{}} 
$ be physical states and physical environments, respectively, 
associated to sets of state variables ${\mathcal{X}}_i$, sets of sensors ${\mathcal{S}}_i$, and sets of actuators ${\mathcal{A}}_i$,  for $i \in \{ 1,2\}$.
For ${\mathcal{X}}_1 \cap {\mathcal{X}}_2=\emptyset$, 
${\mathcal{S}}_1 \cap {\mathcal{S}}_2=\emptyset$  and 
${\mathcal{A}}_1 \cap {\mathcal{A}}_2=\emptyset$, we define: 

\begin{itemize}
	\item the \emph{disjoint union} of the physical states $S_1$ and $S_2$,
	written $S_1 \uplus S_2$, to be the physical state 
	$\stateCPS
	{\statefun{}} 
	{\sensorfun{}}
	{\actuatorfun{} }
	$
	such that: 
	${\statefun{} }=  \statefun^1{} \cup \statefun^2{}$, 
	${\sensorfun{} }=\sensorfun ^1{} \cup \sensorfun^2{}  $, and 
	${\actuatorfun{} }=\actuatorfun ^1{} \cup \actuatorfun^2{}  $; 
	\item the \emph{disjoint union} of the physical environments $E_1$ and $E_2$,
	written $E_1 \uplus E_2$,  to be the physical environment 
	$   \envCPS 
	{\evolmap{} } 
	{\measmap{} }   
	{\invariantfun{} }
	{\safefun{}}
	{\uncertaintyfun{}}
	{\errorfun{}}
	$ such that:
	\begin{enumerate}
		\item ${\evolmap{}} = {\evolmap^1{}} \cup {\evolmap^2{}}$
		\item ${\measmap{}}  =  {\measmap^1{}} \cup {\measmap^2{}}$
		\item $S_1 \uplus S_2 \in \invariantfun{}$ iff $S_1 \in \invariantfun^1{}$ and $S_2 \in \invariantfun^2{}$ 
		\item $S_1 \uplus S_2 \in \safefun{}$ iff $S_1 \in \safefun^1{}$ and $S_2 \in \safefun^2{}$ 
		\item ${\uncertaintyfun{}} = {\uncertaintyfun^1{}} \cup {\uncertaintyfun^2{}}$
		\item ${\errorfun{}} = {\errorfun^1{}} \cup {\errorfun^2{}}$. 
	\end{enumerate}
\end{itemize}

\begin{definition}[Physically-disjoint \CPS{s}]
	Let $M_i = \confCPS {E_i;S_i}{P_i}$, for $i \in \{ 1 , 2 \}$. We say that $M_1$ and $M_2$ are \emph{physically-disjoint} if
	$S_1$ and $S_2$ have disjoint sets of state variables, sensors and actuators. In this case,  we write $M_1 \uplus M_2$ to denote the \CPS{}  defined as $\confCPS {(E_1\uplus E_2); (S_1 \uplus S_2)}{(P_1\parallel P_2)}$. 
\end{definition}

A \emph{pure-logical process} is a process that may interfere on communication channels but it never interferes on physical devices as it never accesses sensors and/or actuators. Basically, a pure-logical process is a TPL process~\cite{HR95}. Thus, in a system $M \parallel Q$, where $M$ is an arbitrary \CPS{}, a pure-logical process $Q$ cannot interfere with the physical evolution of $M$. A process $Q$ can, however, definitely interact with $M$ via communication channels, and hence affect its observable behaviour. 
\begin{definition}[Pure-logical processes]
	A process $P$ is called \emph{pure-logical} if it never acts on 
	sensors and/or actuators.
\end{definition}

Now, we can finally state the compositionality of our trace preorder $\sqsubseteq$ (the proof can be found in the appendix).
\begin{theorem}[Compositionality of $\sqsubseteq$]
	\label{thm:congruence-trace}
	Let $M$ and $N$ be two arbitrary \CPS{s} in \cname{}. 
	\begin{enumerate}
		\item
		\label{thm:congruence1-trace}
		$M \sqsubseteq N$ implies $M \uplus O \sqsubseteq N \uplus O$, for any 
		physically-disjoint \CPS{} $O$; 
		\item
		\label{thm:congruence2-trace}
		$M \sqsubseteq N$ implies $M \parallel P \sqsubseteq N \parallel P$, for any 
		pure-logical process $P$; 
		\item 
		\label{thm:congruence3-trace}
		$M \sqsubseteq N$ implies $M \backslash c \; \sqsubseteq \; N \backslash c$, for 
		any channel $c$.
	\end{enumerate} 
\end{theorem}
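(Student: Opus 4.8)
The plan is to reduce each closure property to an operational correspondence between the transitions of a composite system and those of its components, and then to run a single trace-matching argument on top of the hypothesis $M \sqsubseteq N$. In every case I would start from an arbitrary trace of the left-hand composite, say reaching a residual $R$, decompose it into the activity of the factor $M$ (and, where relevant, of the context: the physically-disjoint $O$ of item~\ref{thm:congruence1-trace}, the pure-logical $P$ of item~\ref{thm:congruence2-trace}, or the effect of restriction in item~\ref{thm:congruence3-trace}), invoke $M \sqsubseteq N$ on the projected $M$-trace $t_M$ to obtain a weak matching $N \Trans{\hat{t_M}} N'$, and finally recompose $N'$ with the unchanged context to reach some $R'$ with the right-hand composite performing $\Trans{\hat{t}} R'$. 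Item~\ref{thm:congruence3-trace} (restriction) is the lightest: since rule \rulename{Res} merely forbids the visible labels $\inp c v$ and $\out c v$ while letting every other label pass---including the $\tau$ produced by an internal $c$-synchronisation through \rulename{Com}---a trace of $M{\setminus}c$ is exactly a $c$-free trace $t$ of $M$. Applying $M \sqsubseteq N$ gives $N \Trans{\hat{t}} N'$, and because $\hat{t}$ is again $c$-free, each step of this weak transition survives under \rulename{Res}, yielding $N{\setminus}c \Trans{\hat{t}} N'{\setminus}c$.

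For the two parallel cases I would first prove a decomposition lemma. For physically-disjoint $O$ (item~\ref{thm:congruence1-trace}), a transition of $M \uplus O$ is either an untimed move of a single factor, lifted by \rulename{Par}/\rulename{Tau}; a $\tau$ arising from a synchronisation on a shared channel between $M$ and $O$ through \rulename{Com}; a global $\tick$ forced by \rulename{TimePar} and \rulename{Time}, which requires both factors to tick simultaneously; or a physical observable $\dead$ or $\unsafe$. The disjointness hypothesis is what makes this clean: because the two plants have disjoint variables, sensors and actuators, the only inter-factor interaction is channel communication, and the defining clauses ``$S_1 \uplus S_2 \in \invariantfun{}$ iff $S_1 \in \invariantfun^1{}$ and $S_2 \in \invariantfun^2{}$'' (and the analogous clause for $\safefun{}$), together with $\evolmap{} = \evolmap^1{} \cup \evolmap^2{}$ and $\measmap{} = \measmap^1{} \cup \measmap^2{}$, guarantee that $\operatorname{next}(E_1 \uplus E_2; S_1 \uplus S_2)$ factors as the product of the component $\operatorname{next}$ sets and that a $\dead$ (resp. $\unsafe$) of the composite is attributable to exactly the factor whose state leaves its own invariant (resp. safety) set. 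Decomposition thus yields $M \trans{t_M} M'$ and $O \trans{t_O} O'$ with $R = M' \uplus O'$, where $t_M$ and $t_O$ carry the same number of $\tick$'s and where each inter-factor synchronisation is recorded as a complementary pair of visible channel actions, one in $t_M$ and one in $t_O$. The pure-logical case (item~\ref{thm:congruence2-trace}) is the specialisation in which $P$ touches no physical device: $P$ contributes nothing to the plant, so $\dead$ and $\unsafe$ come solely from $M$ and only channel synchronisations couple the two sides.

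Next I would prove the complementary recomposition lemma and combine. Applying $M \sqsubseteq N$ to $t_M$ gives $N \Trans{\hat{t_M}} N'$; the matching weak trace has exactly the same visible labels as $t_M$, hence the same $\tick$'s and the same half of each synchronising channel pair. I then reassemble the right-hand composite by interleaving this $N$-derivation with the fixed $O$- (resp. $P$-) derivation $t_O$: every untimed step of either side is lifted by \rulename{Par}/\rulename{Tau}, each complementary channel pair is re-synchronised into a $\tau$ by \rulename{Com}, each global $\tick$ is reassembled by \rulename{TimePar}, and each $\dead$/$\unsafe$ of the composite is produced from the corresponding observable of the responsible factor via the biconditional clauses for $\invariantfun{}$ and $\safefun{}$. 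The main obstacle I anticipate is precisely this reassembly in the presence of weak matching: the weak transition $N \Trans{\hat{t_M}} N'$ inserts arbitrary $\tau$-steps that are absent from $t_M$, and I must schedule them so as not to disturb the global synchronisation of $\tick$. This is sound because untimed labels propagate asynchronously through \rulename{Par} (so $N$'s extra $\tau$'s may be fired within a time slot while $O$ or $P$ stays put), whereas $\tick$ is the unique synchronising action (\rulename{TimePar}): since the $\tick$-count already agrees across $t_M$, $\hat{t_M}$ and $t_O$, the time slots line up and the inserted $\tau$'s can always be placed strictly between two consecutive global ticks. A secondary point to discharge carefully is the faithful transfer of the deadlock observable: I would appeal to the deadlock-sensitivity of $\sqsubseteq$ (Remark~\ref{rem:deadlock} and Lemma~\ref{lem:trace-deadlock}) to ensure that when a factor of the left composite deadlocks, its image under $\sqsubseteq$ deadlocks as well, so that no spurious or missing $\dead$ appears on the right.
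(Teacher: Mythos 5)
Your plan is correct in substance but organises the argument quite differently from the paper. For item~\ref{thm:congruence1-trace} the paper does \emph{not} decompose the whole trace up front: it reduces first (via deadlock-sensitivity, Lemma~\ref{lem:trace-deadlock}) to the case where $M$, $N$ and $O$ all satisfy their invariants, and then proceeds by induction on the length of the trace of $M \uplus O$, with a case analysis on a single action (\rulename{Out}/\rulename{Inp}, the various $\tau$-cases, $\tick$, $\dead$, $\unsafe$), invoking the hypothesis $M \sqsubseteq N$ separately on each action contributed by $M$. Item~\ref{thm:congruence2-trace} is then obtained as a corollary of item~\ref{thm:congruence1-trace} by viewing the pure-logical $P$ as the degenerate plant-free \CPS{} $\confCPS{\emptyset;\emptyset}{P}$, rather than by re-running the argument as you suggest; item~\ref{thm:congruence3-trace} is proved exactly as you do. Your route --- a whole-trace decomposition into $t_M$ and $t_O$, one application of $M \sqsubseteq N$ to $t_M$, and a recomposition/interleaving lemma --- is precisely the machinery the paper introduces only later, for the timed preorder (its lemmas on decomposing and composing system actions, used to prove Theorem~\ref{thm:congruence-traceupto}). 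Your organisation buys something real: since the trace preorder carries no information relating residuals, the paper's inductive step (``use the induction hypothesis on the first $n-1$ actions and the base case on the $n$-th'') needs $M_{n-1} \sqsubseteq N_{n-1}$, which the induction does not supply; your single application of the hypothesis to the projected trace sidesteps that difficulty entirely.

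One detail you must still discharge explicitly in the recomposition. Every system-level rule except \rulename{Deadlock} carries the premise that the \emph{composite} state function satisfies the \emph{composite} invariant, which by the disjoint-union definition means that \emph{both} components' invariants hold. So when you lift an inserted $\tau$-step of $N$'s weak matching derivation, or an $O$-attributed $\unsafe$, you need the invariant of the \emph{other} factor to hold at that instant; your scheduling remark handles the alignment of $\tick$'s but not this side-condition. The remedy is the same reduction the paper performs at the outset (assume $M$, $N$, $O$ invariant-satisfying, treating the degenerate cases separately), together with the observation that within a trace a component's invariant can only break immediately after a $\tick$, after which the composite can emit nothing but $\dead$; any $\tau$'s of $N$'s matching derivation that are pending at that point cannot be lifted, but they can simply be dropped, since weak matching permits truncation. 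With that caveat spelled out, your decomposition/recomposition argument goes through.
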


The reader may wonder whether our trace preorder $\sqsubseteq$ is preserved by more permissive contexts. The answer is no. 
Suppose that in the second item of Theorem~\ref{thm:congruence-trace} we allowed a process $P$ that can also read on sensors. In this case, even if $M \sqsubseteq N$, the parallel process $P$ might read a different value in the two systems  at the very same sensor $s$ (due to the sensor error) and transmit these different values on a free channel, breaking the congruence. Activities on actuators  may also lead to different behaviours of the compound systems:  $M$ and $N$ may have physical components that are not exactly aligned. 
A similar reasoning applies when composing \CPS{s} with non physically-disjoint ones: noise on physical devices may break the compositionality result.  

As we are interested in formalising timing aspects of attacks, such as beginning and duration, we propose a timed variant of $\sqsubseteq$ up to (a possibly infinite) \emph{discrete time interval} $m..n$, with $m \in \mathbb{N}^{+}$ and $n \in \mathbb{N}^{+} \cup \infty $. Intuitively, we write $M \sqsubseteq_{m..n} N$ if the \CPS{} $N$ simulates the execution traces of $M$ in all time slots, except for
those contained in the discrete time interval $m..n$.
\begin{definition}[Trace preorder up to a time interval]
	\label{Time-bounded-trace-equivalence}
	We write $M \sqsubseteq_{m..n} N$, for $m \in \mathbb{N}^+$ and $n \in \mathbb{N}^+ \cup \{\infty\}$, with $m \leq n$, if the following conditions hold: 
	\begin{itemize}
		\item $m$ is the minimum integer for which there is a trace $t$, with $\#\tick(t){=}m{-}1$, s.t.\ $M \trans{t}$ and $N \not\!\Trans{\hat{t}}$;
		\item $n$ is the infimum element of $\mathbb{N}^+ \cup \{ \infty \}$, $n \geq m$, such that whenever $M \trans{t_1}M'$, with $\#\tick(t_1)=n-1$, there is $t_2$, with 
		$\#\tick(t_1)=\#\tick(t_2)$, such that $N \trans{t_2}N'$, for some $N'$, and $M' \sqsubseteq N'$.
	\end{itemize}
\end{definition}

In Definition~\ref{Time-bounded-trace-equivalence}, the first item says that $N$ can simulate the traces of $M$ for at most $m{-}1$ time slots; whereas the second item says two things: (i)~in time interval $m..n$ the simulation does not hold; (ii)~starting from the time slot $n{+}1$ the \CPS{} $N$ can simulate again the traces of $M$. Note that $\mathrm{inf}(\emptyset)=\infty$. Thus, if $M \sqsubseteq_{m..\infty} N$, then $N$ simulates $M$ only in the first $m-1$ time slots. 

\begin{theorem}[Compositionality of $\sqsubseteq_{m..n}$]
	\label{thm:congruence-traceupto}
	Let $M$ and $N$ be two arbitrary \CPS{s} in \cname{}. 
	\begin{enumerate}
		\item
		\label{thm:congruence1-traceupto}
		$M \sqsubseteq_{m..n} N$ implies that for any physically-disjoint \CPS{} there are $m',n' \in \mathbb{N}^+ \cup \infty  $, with $ m'..n' \subseteq  m..n $ such that  $M \uplus O \sqsubseteq_{m'..{n'}} N \uplus O$; 
		\item
		\label{thm:congruence2-traceupto}
		$M \sqsubseteq_{m..n} N$ implies that for any pure-logical process $P$  there are $m',n' \in \mathbb{N}^+ \cup \infty  $, with $ m'..n' \subseteq  m..n $ such that 
		$M \parallel P \sqsubseteq_{m'..{n'}} N \parallel P$; 
		\item 
		\label{thm:congruence3-traceupto}
		$M \sqsubseteq_{m..n} N$ implies that for 
		any channel $c$ there are $m',n' \in \mathbb{N}^+ \cup \infty  $, with $ m'..n' \subseteq  m..n $ such that $M \backslash c \; \sqsubseteq_{ {m'}..{n'} } \; N \backslash c$.  
	\end{enumerate} 
\end{theorem}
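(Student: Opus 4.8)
The plan is to prove Theorem~\ref{thm:congruence-traceupto} by reducing it, as far as possible, to the unbounded compositionality result Theorem~\ref{thm:congruence-trace}, and then isolating exactly where the time-bounded preorder $\sqsubseteq_{m..n}$ forces the interval to possibly shrink. The key observation is that $M \sqsubseteq_{m..n} N$ decomposes the behaviour of $M$ and $N$ into three temporal regions: the prefix (time slots $1..m{-}1$), where $N$ fully simulates $M$; the failure window $m..n$, where simulation breaks; and the tail (from $n{+}1$ onward), where by the second clause of Definition~\ref{Time-bounded-trace-equivalence} every state $M'$ reachable at time $n$ is matched by some $N'$ with $M' \sqsubseteq N'$, i.e.\ the \emph{unbounded} preorder is restored. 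The strategy is to show that composing with a context (disjoint plant, pure-logical process, or restriction) can only \emph{delay} the onset of observable discrepancy and/or \emph{hasten} the recovery, i.e.\ $m' \geq m$ and $n' \leq n$, giving $m'..n' \subseteq m..n$.

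First I would handle the prefix. For time slots before $m$, the context sees identical observable behaviour from $M$ and $N$ by the first clause, and by the arguments already used for Theorem~\ref{thm:congruence-trace} (physical disjointness prevents cross-interference on devices; pure-logical processes touch only channels; restriction only hides channel actions) the composed systems $M \uplus O$ and $N \uplus O$ (resp.\ $M \parallel P$, $M\backslash c$) remain indistinguishable up to slot $m{-}1$ at least. This guarantees the new onset index $m'$ satisfies $m' \geq m$. The delicate point is that the context might \emph{mask} a discrepancy that $M$ alone exhibits at slot $m$ — for instance a restriction $\backslash c$ could hide precisely the channel action on which $N$ failed to match $M$ — so $m'$ may be strictly larger than $m$, which is why the statement only claims containment rather than equality of intervals.

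Next I would treat the tail using the unbounded theorem as a black box. Take any trace $t_1$ of the composed system $M\uplus O$ with $\#\tick(t_1)=n-1$ reaching some state; I would project $t_1$ onto its $M$-component and its $O$-component (legitimate because the two plants are physically disjoint, so device actions and timed actions factor through $\uplus$). Applying the second clause of $M\sqsubseteq_{m..n}N$ to the $M$-projection yields a matching $N$-state with $M' \sqsubseteq N'$ in the \emph{unbounded} sense; then Theorem~\ref{thm:congruence-trace}(\ref{thm:congruence1-trace}) lifts this to $M'\uplus O' \sqsubseteq N'\uplus O'$, which is exactly the recovery condition the tail of $\sqsubseteq_{m'..n'}$ requires. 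Hence the infimum recovery time $n'$ of the composed systems is no later than $n$, giving $n' \leq n$. The analogous projection-and-lift argument runs for items~\ref{thm:congruence2-traceupto} and~\ref{thm:congruence3-traceupto}, invoking parts~\ref{thm:congruence2-trace} and~\ref{thm:congruence3-trace} of Theorem~\ref{thm:congruence-trace} respectively; the restriction case additionally needs that hiding $c$ commutes with the weak-transition matching $\Trans{\hat t}$, which is immediate from rule \rulename{Res}.

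The main obstacle I anticipate is the bookkeeping around the \emph{minimality of $m$} and the \emph{infimum defining $n$} under composition, rather than the simulation mechanics themselves. Definition~\ref{Time-bounded-trace-equivalence} demands that $m'$ be the \emph{least} failure time and $n'$ the \emph{infimum} recovery time of the composed systems, so I cannot merely exhibit \emph{some} failure interval inside $m..n$: I must argue that the genuine least/infimum values the composed systems realise still lie within $m..n$. The prefix argument secures $m'\geq m$ and the tail argument secures $n'\leq n$, and together with $m'\leq n'$ (which holds whenever a discrepancy survives composition at all; if none does, the composed systems satisfy the \emph{unbounded} $\sqsubseteq$ and the claim is vacuously met by any degenerate interval) these yield $m'..n'\subseteq m..n$. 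Care is also needed for the boundary case $n=\infty$, where the tail is empty and only the prefix argument applies, and for the case $n'=\infty$, where one must check the containment $m'..\infty \subseteq m..\infty$ reduces correctly to $m'\geq m$.
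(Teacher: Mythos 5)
Your proposal matches the paper's proof essentially step for step: the paper likewise splits the argument into showing $m' \geq m$ (using the minimality of $m$ and re-running the unbounded compositionality argument on traces with fewer than $m-1$ $\tick$-actions) and $n' \leq n$ (decomposing a trace of $M \uplus O$ with $\#\tick(t)=n-1$ into traces of $M$ and $O$, matching the $M$-part via the second clause of Definition~\ref{Time-bounded-trace-equivalence}, recombining with $O$'s trace, and lifting $M' \sqsubseteq N'$ to $M' \uplus O' \sqsubseteq N' \uplus O'$ by Theorem~\ref{thm:congruence-trace}), with cases (2) and (3) deferred to the corresponding cases of the unbounded theorem exactly as you suggest. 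The only step you leave implicit is the explicit recomposition of $N$'s matching trace with $O$'s trace into a single trace of $N \uplus O$ (the paper's Lemma~\ref{lem:cong2}, paired with the decomposition Lemma~\ref{lem:cong}), which your projection-and-lift framework requires but clearly accommodates.
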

The proof can be found in the appendix. 

\section{A Running Example}
\label{sec:running_example}

In this section, we introduce a running example to illustrate how we can precisely represent \CPS{s} and a variety of 
different physics-based attacks. In practice, we formalise a relatively simple \CPS{} $\mathit{Sys}$ in which the temperature of an \emph{engine} is maintained within a specific range by means of a cooling system. 
We wish to remark here that while we have kept the example simple, it is actually far from trivial and designed to describe a wide number of attacks. 
The main structure of the \CPS{} $\mathit{Sys}$ is shown in Figure~\ref{f:Sys-structure}.

\begin{figure}[t]
	\centering
	\begin{minipage}[c]{0.4\textwidth}
		\begin{tikzpicture}[scale=0.4, every node/.style={scale=0.5}]
		%nodes
		\node[punkt] (engine) {\huge Engine};
		
		\node[punkt, below=0.7cm of engine]
		(controller) {\huge Ctrl};
		\node[punkt, below=0.7cm of controller]
		(ids) {\huge IDS};
		
		\node[punkt, right=of engine] (sensor) {\huge Sensor};
		\node[punkt, left=of engine] (actuator) {\huge Actuator}
		edge[pil, right=45] (engine);
		
		\path[->] (engine) edge[line] (sensor);
		\path [line] (sensor) |- node[left, pos=0.3] {{\huge${s_t}$}} (controller.east);	\path [line] (sensor) |- (ids.east);
		\path [line] (actuator) -- (engine);
		\path [line] (controller.west) -| node[above, pos=0.3,] {\LARGE ${cool}$}  (actuator);
		\path (ids) edge[pil,<->] node[left, pos=0.46] {\LARGE ${sync}$}  (controller); 
		\end{tikzpicture}	
	\end{minipage}
	\caption{The main structure of the \CPS{} $\mathit{Sys}$}
	\label{f:Sys-structure}
	
\end{figure}

\subsection{The \CPS{} $\mathit{Sys}$}

The physical state $\stateSys$ of the engine  is characterised by:\
\begin{inparaenum}[(i)]
	\item a state variable $\mathit{temp}$ containing the current temperature of the engine, and an integer state variable $\mathit{stress}$ keeping track of the level of stress of the mechanical parts of the engine due to high temperatures (exceeding $9.9$ degrees); this integer 
	variable ranges from $0$, meaning no stress, to $5$, for high stress; 
	\item a sensor $s_{\mathrm{t}}$ (such as a thermometer or a thermocouple) measuring the temperature of the engine,
	\item an actuator $\mathit{cool}$ to turn on/off the cooling system. 
\end{inparaenum}

The physical environment of the engine, $\env$, is constituted by:\ 
\begin{inparaenum}[(i)]
	\item a simple evolution law $\evolmap$ that increases (respectively, decreases) the value of $\mathit{temp}$ by one degree per time unit, when the cooling system is inactive (respectively, active), up to the uncertainty of the system; the variable $\mathit{stress}$ is increased each time the current temperature is above $9.9$ degrees, and dropped to $0$ otherwise;
	\item a measurement map $\measmap{}$ returning the value detected by the sensor $s_t$, up to the error associated to the sensor;
	\item an invariant set saying that the system gets faulty when the temperature of the engine gets out of the range $[0, 50]$, 
	\item a safety set to express that the system moves to an unsafe state when the level of stress reaches the threshold $5$, 
	\item an uncertainty function in which each state variable may evolve with an uncertainty  $\delta=0.4$ degrees, 
	\item a sensor-error function saying that the sensor $s_{\mathrm{t}}$ has an accuracy $\epsilon = 0.1$ degrees. 
\end{inparaenum}

Formally, $\stateSys = \langle {\statefun{}}, {\sensorfun{}}, {{\actuatorfun{}}} \rangle $ where:
\begin{itemize}
	\item $\statefun{} \in \mathbb{R} ^{\{\mathit{temp},\mathit{stress}\} }$ and 
	$\statefun{}(\mathit{temp})=0$ and $\statefun{}(\mathit{stress}) = 0$;
	\item $\sensorfun{} \in \mathbb{R} ^{\{s_{\mathrm t}\} }$ and 
	$\sensorfun{}(s_{\mathrm t})=0$;  
	\item $\actuatorfun{} \in \mathbb{R} ^{\{\mathit{cool}\} }$ and
	$\actuatorfun{}(\mathit{cool})=\off$; for the sake of simplicity, we can
	assume $\actuatorfun{}$ to be a mapping $\{ \mathit{cool} \} \rightarrow
	\{ \on , \off\}$ such that $\actuatorfun{}(\mathit{cool})= \off$ if
	$\actuatorfun{}(\mathit{cool}) \geq 0$, and $\actuatorfun{}(\mathit{cool})= \on$ if
	$\actuatorfun{}(\mathit{cool}) < 0$;  
\end{itemize}
and 
$\env = \envCPS 
{\evolmap{}}
{\measmap{}} 
{\invariantfun{}}
{\safefun{}} 
{\uncertaintyfun{}}  
{\errorfun{}}   
$ with:
\begin{itemize}
	
	\item $\evolmap{}( \statefun^i{}, \actuatorfun^i{}, \uncertaintyfun{}) $
	is the set of functions $\xi \in \mathbb{R} ^{\{\mathit{temp},\mathit{stress}\} }$
	such that:
	\begin{itemize}
		\item
		$\xi(\mathit{temp}) = \statefun^i{}(\mathit{temp}) + 
		\mathit{\mathit{heat}}(\actuatorfun^i{},\allowbreak \mathit{cool}) + \gamma $, 
		with $ \gamma \in [- \delta, + \delta] $ and $\mathit{heat}(\actuatorfun^i{},\mathit{cool})=-1$ if
		$\actuatorfun^i{}(\mathit{cool}) = \on$ (active cooling), and
		$\mathit{heat}(\actuatorfun^i{},\mathit{cool})=+1$ if
		$\actuatorfun^i{}(\mathit{cool}) = \off$ (inactive cooling);
		
		\item $\xi(\mathit{stress}) = \min (5 \, , \,  \statefun^i{}(\mathit{stress}){+}1)$ if   $\statefun^i{}(\mathit{temp})>9.9$;  $\xi(\mathit{stress}) = 0$, otherwise;
	\end{itemize}

	\item $\measmap{}(\statefun^i{}, \errorfun{})  = \big \{ \xi :
	\xi(s_{\mathrm{t}}) \in [ \statefun^i{}(\mathit{temp})-\epsilon \, , \, \allowbreak \statefun^i{}(\mathit{temp}) + \epsilon ]  \big \}$;
	
	\item $\invariantfun{}=\{ \statefun^i{} \,: \,  0 \leq \statefun^i{}(\mathit{temp})\leq 50\}$; 
	
	\item $\safefun{}=\{\statefun^i{} \,: \, \statefun^i{}(\mathit{stress}) < 5\}$ (we recall that the stress threshold is $5$);
	
	\item $\uncertaintyfun{} \in \mathbb{R}^{\{\mathit{temp},\mathit{stress}\} }$, $\uncertaintyfun{}(\mathit{temp}) =0.4 = \delta$ and $\uncertaintyfun{}(\mathit{stress})= 0$;
	
	\item $\errorfun{} \in \mathbb{R}^{\{s_{\mathrm{t}} \}}$ and
	$\errorfun(s_{\mathrm{t}})= 0.1=\epsilon$. 
\end{itemize}

For the cyber component of the \CPS{} $\mathit{Sys}$,  
we define two parallel processes: $\mathit{Ctrl}$ and $\mathit{IDS}$. 
The former models the \emph{controller} activity, consisting in reading the temperature sensor and in governing the cooling system via its actuator, whereas the latter models a simple \emph{intrusion detection system} that attempts to detect and signal \emph{anomalies} in the behaviour of the system~\cite{ACM-survey2018}. Intuitively, $\mathit{Ctrl}$ senses the temperature of the engine at each time slot. When the 
\emph{sensed temperature} is above $10$ degrees, the controller activates the coolant. The cooling activity is maintained for $5$ consecutive time units. After that time, the controller synchronises with the $\mathit{IDS}$ component via a private channel $\mathit{sync}$, and then waits for \emph{instructions}, via a channel $\mathit{ins}$. The $\mathit{IDS}$ component checks whether the 
\emph{sensed temperature} is still 
above $10$. If this is the case, it sends an \emph{alarm} of ``high 
temperature'', via a specific channel, and then tells $\mathit{Ctrl}$ to keep cooling for $5$ more time units; otherwise, if the temperature is not above $10$, the $\mathit{IDS}$ component requires $\mathit{Ctrl}$ to stop the cooling activity.

\begin{displaymath}
{\normalsize  
	\begin{array}{rcl}
	\mathit{Ctrl} &  =  & \rsens x {s_{\operatorname{t}}} . \ifelse {x>10}
	{ \mathit{Cooling} } { \tick.\mathit{Ctrl} } \\[1pt]
	\mathit{Cooling}  & =  &   \wact{\on}{\emph{cool}}.\tick^5 . \mathit{Check}
	\\[1pt]
	\mathit{Check} & = & 
	\OUTCCS{\mathit{sync}}. 
	\LIN{\mathit{ins}}{y}.\mathsf{if} \, 
	(y=\mathsf{keep\_cooling}) \, \{ \tick^5.\mathit{Check} \} \:
	\mathsf{else} \;
	\{  \wact{\off}{\mathit{cool}}.\tick .\mathit{Ctrl} \}\\[1pt]
	\mathit{IDS} & = &  \LINCCS {\mathit{sync}}  .  \rsens x {s_{\operatorname{t}}} .  \mathsf{if} 
	\, (x>10)  \,  \{ \OUT{\mathit{alarm}}{\mathsf{high\_temp}}. 
	\OUT{\mathit{ins}}{\mathsf{keep\_cooling}}. \tick.\mathit{IDS} \} \\
	&& \Q \mathsf{else} \; \{ \OUT{\mathit{ins}}{\mathsf{stop}}.\tick. \mathit{IDS} \}
	\, . \end{array}
}
\end{displaymath}
Thus, the whole \CPS{} is defined as: 
\begin{figure}[t]
	
\end{figure}

\begin{displaymath}
\mathit{Sys} \; = \; \confCPS {\env;\stateSys} {(\mathit{Ctrl} \parallel \mathit{IDS})
	{\setminus}\{ \mathit{sync}, \mathit{ins}\}} \
\end{displaymath}%

For the sake of simplicity, our $\mathit{IDS}$ component is quite basic: for instance, it does not check whether the temperature is too low. However, it is straightforward to replace it with a more sophisticated one, containing more informative tests on sensor values and/or on actuators commands.

\begin{figure}[t]
	%%% MAX, before it was 4.75cm
	\centering
	\includegraphics[width=7.25cm,keepaspectratio=true,angle=0]{./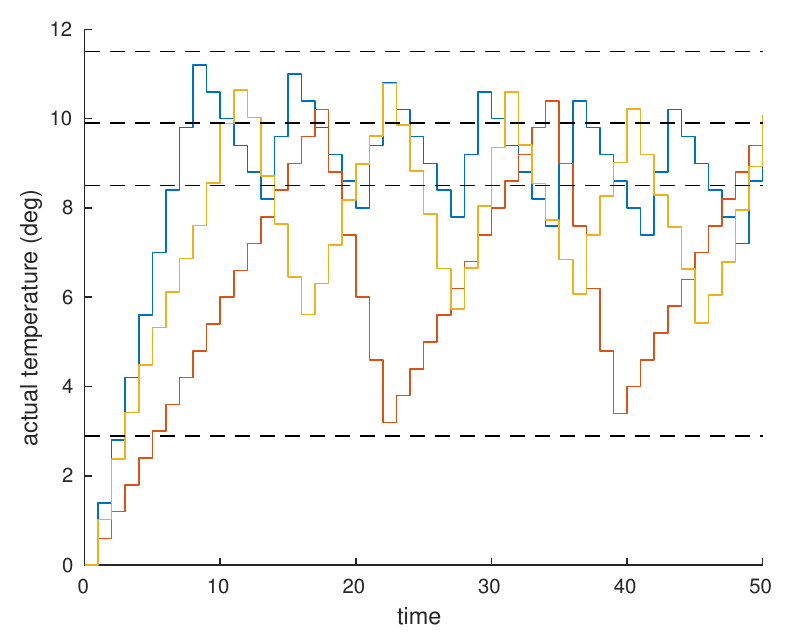} 
	\caption{Three  possible evolutions of the \CPS{} $\mathit{Sys}$}
	\label{f:HS-traj}
\end{figure}

Figure~\ref{f:HS-traj} shows three possible evolutions in time of
the state variable $\mathit{temp}$ of $\mathit{Sys}$:
\begin{inparaenum}[(i)] 
	\item  the first one (in red), in which the temperature of the engine always grows of $1-\delta = 0.6$ degrees per time unit, when the cooling is off, and always decrease of $1+\delta=1.4$ degrees per time unit, when the cooling is on; 
	\item the second one (in blue), in which the temperature always grows of $1+\delta=1.4$ degrees per time unit, when the cooling is off, and  always decreases of $1-\delta=0.6$ degrees per time unit, when the cooling is on; 
	\item and a third one (in yellow), in which, depending on whether the cooling is off or on, at each time step the temperature grows or decreases of an arbitrary offset lying in the interval $[1-\delta , 1+\delta]$. 
\end{inparaenum}

Our operational semantics allows us to formally prove a number of properties of our running example. For instance, 
Proposition~\ref{prop:sys} says that the $\mathit{Sys}$ is 
sound and it never fires the $\mathit{alarm}$. 
\begin{proposition} 
	\label{prop:sys}
	\label{prop:sys1}
	\label{prop:sys2}
	If $\mathit{Sys} \trans{t}$ for some trace $t =\alpha_1 \ldots \alpha_n$, then $\alpha_i \in \{ \tau , \tick \}$, for any $i \in \{1, \ldots, n\}$. 
\end{proposition}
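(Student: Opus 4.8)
The plan is to show that no reachable configuration of $\mathit{Sys}$ can fire any label outside $\{\tau,\tick\}$. Since the channels $\mathit{sync}$ and $\mathit{ins}$ are restricted, communications on them produce only internal $\tau$-steps and their input/output labels are not observable (rule \textsc{Res}); the only remaining candidate observable channel label is therefore the output on the free channel $\mathit{alarm}$, namely $\out{\mathit{alarm}}{\mathsf{high\_temp}}$, emitted by $\mathit{IDS}$. The only other non-$\{\tau,\tick\}$ labels are the physical events $\dead$ (rule \textsc{Deadlock}) and $\unsafe$ (rule \textsc{Safety}). Hence it suffices to establish, for every reachable physical state, the three invariants: (i) $0 \le \mathit{temp} \le 50$, so that $\statefun{} \in \invariantfun{}$ and \textsc{Deadlock} never applies; (ii) $\mathit{stress} \le 4 < 5$, so that $\statefun{} \in \safefun{}$ and \textsc{Safety} never applies; and (iii) whenever $\mathit{IDS}$ evaluates its guard $x > 10$ the sensed value is at most $10$, so that the else-branch is always taken and the alarm is never emitted.

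These three facts are not individually inductive, so I would prove a single strengthened invariant by induction on the number of elapsed $\tick$'s, following the cyclic control flow of $(\mathit{Ctrl} \parallel \mathit{IDS})\setminus\{\mathit{sync},\mathit{ins}\}$: the system alternates a monitoring phase, in which $\mathit{Ctrl}$ repeatedly reads $s_{\mathrm{t}}$ and lets time pass with the coolant off, with a cooling phase, triggered the first time a read exceeds $10$, in which $\wact{\on}{\mathit{cool}}$ is performed and then exactly five $\tick$'s elapse before $\mathit{Check}$ synchronises with $\mathit{IDS}$. The quantitative heart of the argument comes from the evolution law and the measurement map of $\env$: with the coolant off, $\mathit{temp}$ changes by $1 + \gamma \in [0.6,1.4]$ per $\tick$; with the coolant on, by $-1 + \gamma \in [-1.4,-0.6]$; and a sensed value differs from $\mathit{temp}$ by at most $\epsilon = 0.1$. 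From these I would derive the two key bounds: the last read before a trigger forces $\mathit{temp} \le 10 + \epsilon = 10.1$, hence after one off-step the peak at the trigger is at most $10.1 + 1.4 = 11.5$; and five on-steps lower $\mathit{temp}$ by at least $5 \cdot 0.6 = 3$, so at the $\mathit{IDS}$ check $\mathit{temp} \le 11.5 - 3 = 8.5$ and the sensed value is at most $8.6 < 10$.

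Claim (iii) is then immediate: at every check the sensed value is below $10$, so $\mathit{IDS}$ always sends $\mathsf{stop}$, the coolant is switched off, and each cooling phase lasts exactly five $\tick$'s (the $\mathsf{keep\_cooling}$ branch is never entered). Claim (i) follows since the peak $11.5 < 50$ bounds $\mathit{temp}$ from above, while the trigger temperature exceeds $9.9$ (the sensed value exceeded $10$) so that within a cooling phase $\mathit{temp}$ stays above $9.9 - 7 > 0$, and the initial value $0$ lies in $[0,50]$. For claim (ii) I would bound the length of any maximal block of consecutive $\tick$'s whose pre-step temperature exceeds $9.9$: two steps before a trigger one has $\mathit{temp} \le 10.1 - 0.6 = 9.5$, and during cooling $\mathit{temp}$ descends below $9.9$ after at most three on-steps (through $10.9,\,10.3,\,9.7$), so at most four consecutive steps have $\mathit{temp} > 9.9$; consequently $\mathit{stress}$, which resets to $0$ whenever $\mathit{temp} \le 9.9$ and otherwise increments, is incremented at most four times in a row and never reaches $5$.

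The step I expect to be the main obstacle is the discrete-time bookkeeping that underpins the induction: one must pin down, for each reachable control state, the exact $\tick$ in which every instantaneous action (sensor read, $\wact{\cdot}{\mathit{cool}}$, and the $\mathit{sync}/\mathit{ins}$ handshakes, all producing $\tau$) occurs relative to the $\tick$'s that advance the physical state through $\operatorname{next}(\env;\cdot)$. In particular one must check that $\wact{\on}{\mathit{cool}}$ precedes the block $\tick^5$ so that the coolant is active for all five decrements, and that the $\mathit{IDS}$ read is performed only after them; and, since the evolution is nondeterministic through $\gamma \in [-\delta,+\delta]$, every bound above must be read as a worst case over $\gamma$, i.e.\ quantified over all traces $t$ rather than over one representative run. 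Once the per-cycle invariant is correctly formulated, the three claims reduce to the elementary arithmetic displayed above.
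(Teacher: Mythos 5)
Your proposal is correct and follows essentially the same route as the paper: the paper first proves an auxiliary invariant (Lemma~\ref{lem:sys}), by induction on the number of elapsed $\tick$-actions, that bounds $\mathit{temp}$ by $11.1+\delta=11.5$ when the coolant is off, by $(9.9-k(1+\delta),\,11.1+\delta-k(1-\delta)]$ after $k\le 5$ cooling steps, and caps $\mathit{stress}$ at $4$, and then derives exactly your three claims (no $\dead$, no $\unsafe$, no $\out{\mathit{alarm}}{}$ since the IDS senses at most $8.6<10$). Your only departure is presentational: you bound $\mathit{stress}$ by showing that at most four consecutive time slots can have $\mathit{temp}>9.9$, whereas the paper's lemma tracks the stress value explicitly through each phase of the cooling cycle; the arithmetic is the same.
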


Actually, we can be quite precise on the temperature reached by $\mathit{Sys}$ before and after the cooling: in each of the $5$ rounds of cooling, the temperature will drop of a value lying in the real interval $[1{-}\delta, 1 {+} \delta]$, where $\delta$ is the uncertainty. 

\begin{proposition}
	\label{prop:X}
	For any execution trace of $\mathit{Sys}$, we have:
	\begin{itemize}[noitemsep]
		\item when $\mathit{Sys}$ \emph{turns on} the cooling, the value of
		the state variable $\mathit{temp}$ ranges over $(9.9  ,   11.5]$; 
		\item when $\mathit{Sys}$ \emph{turns off} the cooling, the value of
		the variable $\mathit{temp}$ ranges over $(2.9, 8.5]$. 
	\end{itemize}
\end{proposition}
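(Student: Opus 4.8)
The plan is to perform an inductive analysis of the control loop, tracking the actual value of $\mathit{temp}$ at the two kinds of decision instants: when $\mathit{Ctrl}$ executes $\wact{\on}{\mathit{cool}}$ (turn-on) and when it executes $\wact{\off}{\mathit{cool}}$ (turn-off). Throughout I write $t$ for the current value of $\statefun{}(\mathit{temp})$ and $x$ for the value obtained by $\rsens{x}{s_{\mathrm t}}$; by the measurement map $\measmap{}$ and the sensor-error function we always have $x \in [t-\epsilon, t+\epsilon]$ with $\epsilon = 0.1$. The evolution law gives, per tick, an increase $1+\gamma \in [0.6, 1.4]$ when $\mathit{cool} = \off$ and a change $-1+\gamma \in [-1.4, -0.6]$ when $\mathit{cool} = \on$, where $\gamma \in [-\delta, +\delta]$ and $\delta = 0.4$.

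First I would establish the turn-on bound. Inspecting $\mathit{Ctrl}$, the coolant is switched on exactly when a reading satisfies $x > 10$ in the then-branch. From $x > 10$ and $x \leq t+\epsilon$ we immediately get $t > 9.9$, which is the open lower endpoint. For the upper endpoint I would argue that a turn-on never occurs at the \emph{first} reading of a $\mathit{Ctrl}$-phase: initially $t = 0$ so $x \leq 0.1$; and, using the turn-off bound below inductively, after a cooling cycle the single recovery tick in $\wact{\off}{\mathit{cool}}.\tick.\mathit{Ctrl}$ leaves $t \leq 8.5 + 1.4 = 9.9$, whence $x \leq 10$ and no turn-on fires. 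Consequently every turn-on is preceded, one tick earlier in the else-loop, by a reading $x' \leq 10$, so $t' \leq x'+\epsilon \leq 10.1$; adding one cooling-off tick yields $t \leq 10.1 + 1.4 = 11.5$. Hence at turn-on $t \in (9.9, 11.5]$.

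Next I would establish the turn-off bound. After $\wact{\on}{\mathit{cool}}$ the process is $\tick^5.\mathit{Check}$, i.e.\ exactly five cooling ticks. Starting from $t \in (9.9, 11.5]$ and subtracting five per-tick decreases, each in $[0.6,1.4]$ (total in $[3,7]$), the temperature at $\mathit{Check}$ lies in $(9.9-7,\,11.5-3] = (2.9, 8.5]$. At the ensuing synchronisation on $\mathit{sync}$, $\mathit{IDS}$ reads the sensor and obtains $x \leq 8.5+\epsilon = 8.6 < 10$, so it necessarily takes the $\mathsf{stop}$ branch (never $\mathsf{keep\_cooling}$; this is also forced by Proposition~\ref{prop:sys}, since otherwise $\mathit{alarm}$ would fire). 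Thus $\mathit{Check}$ runs its else-branch $\wact{\off}{\mathit{cool}}.\tick.\mathit{Ctrl}$, turning the coolant off at a temperature in $(2.9, 8.5]$ and returning control to $\mathit{Ctrl}$ after the one recovery tick used in the turn-on argument above.

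These two bounds are mutually dependent: the turn-off inequality $t \leq 8.5$ is precisely what guarantees $t \leq 9.9$ after the recovery tick, which in turn is what forbids a spurious turn-on at the first post-cooling reading, so the argument is genuinely a simultaneous induction over cooling cycles, with the base case supplied by the initial phase from $t=0$. The main obstacle, and the only place requiring care, is exactly this coupling: one must keep the turn-off upper bound and the sensor-error budget tight enough that the endpoints $9.9$ and $10$ separate strictly (so that $x = 10$ never triggers cooling), and confirm that the $\mathsf{keep\_cooling}$ branch is genuinely unreachable so that every cooling phase is exactly five ticks long.
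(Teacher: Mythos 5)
Your proof is correct and takes essentially the same route as the paper's: the paper factors the argument into Lemma~\ref{lem:sys}, a per-time-slot invariant proved by induction on the number of $\tick$-actions (additionally tracking the $\mathit{stress}$ variable and the cooling-phase index $k$), from which Proposition~\ref{prop:X} is read off, whereas you run the same coupled induction directly over cooling cycles at the turn-on/turn-off instants. The key steps and constants coincide exactly: the sensor-error argument giving $t>9.9$, the preceding else-branch reading giving $t \leq 10.1+1.4=11.5$ at turn-on, the exactly-five-tick cooling phase with the $\mathsf{keep\_cooling}$ branch shown unreachable, and the resulting interval $(9.9-7,\,11.5-3]=(2.9,8.5]$ at turn-off.
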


The proofs of the Propositions~\ref{prop:sys} and \ref{prop:X} can 
be found in the appendix.  
In the following section, we will verify the safety properties stated in these two propositions relying on the statistical model checker \Uppaal{} SMC~\cite{David:2015:UST:2802769.2802840}.

\subsection{A formalisation of $\mathit{Sys}$ in \Uppaal{} SMC}

\begin{figure}[!t]
	\begin{displaymath}
	\begin{array}{c@{\hspace*{.4cm}}l}
	\includegraphics[scale=0.4]{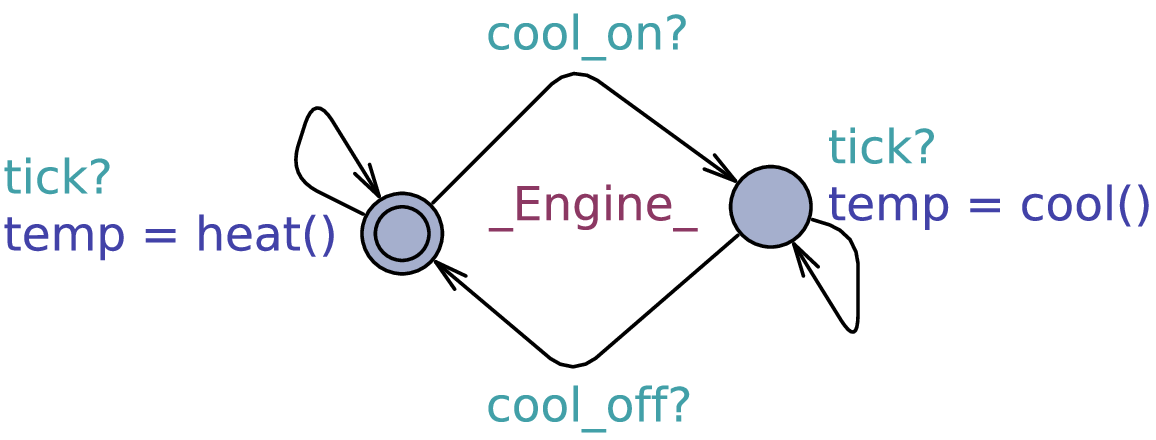}  &
	\includegraphics[scale=0.4]{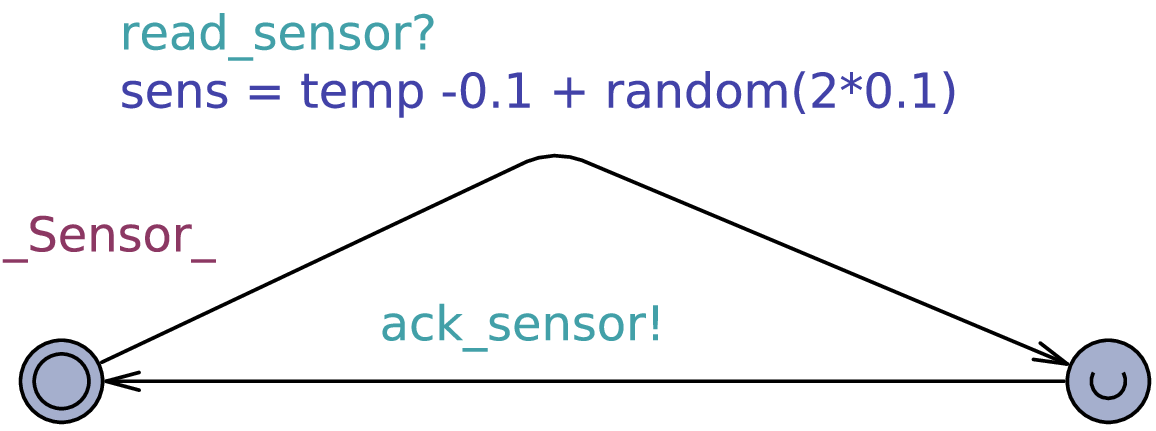}\\[5pt]
	\includegraphics[scale=0.4]{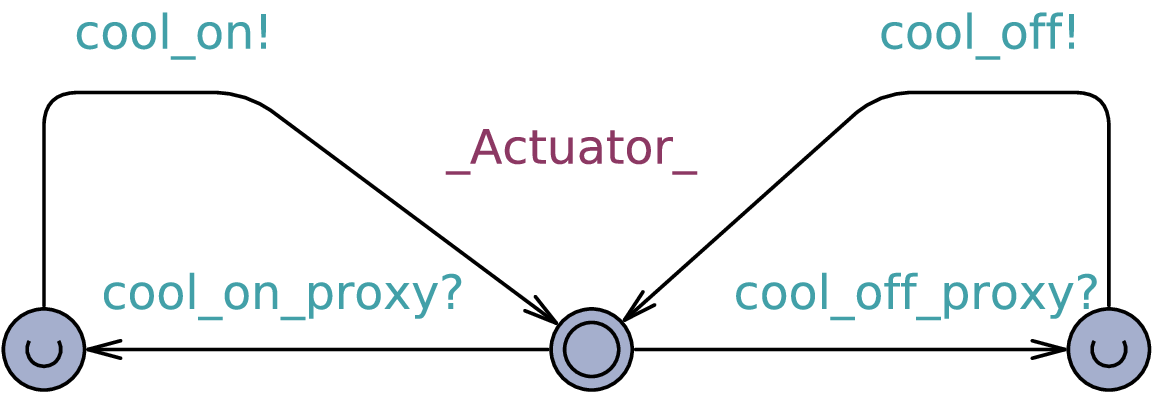}
	&
	\includegraphics[scale=0.4]{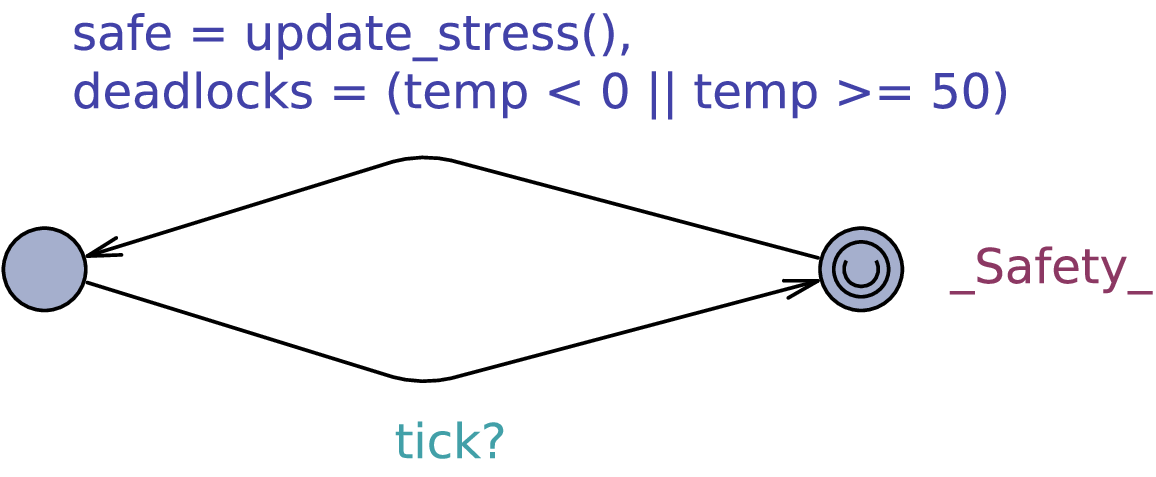} 
	\end{array}
	\end{displaymath}
	\caption{\Uppaal{} SMC model for the physical component of $\mathit{Sys}$}
	\label{fig:physical}	
	\label{fig:physicalBIS}
\end{figure}

In this section, we formalise our running example in \Uppaal{} SMC~\cite{David:2015:UST:2802769.2802840}, the \emph{statistical} extension of the \Uppaal{} model checker~\cite{DBLP:conf/sfm/BehrmannDL04} supporting  the analysis of systems expressed as composition of \emph{timed and/or probabilistic automata}. 
In \Uppaal{} SMC, the user must specify two main statistical parameters $\alpha$ and $\epsilon$,  ranging in the interval $[0, 1]$, and representing the probability of \emph{false negatives} and probabilistic \emph{uncertainty\/}, respectively. Thus, given a CTL property of the system under investigation,  the tool returns a probability 
estimate for that property, lying in a  confidence interval [$p -\epsilon$, 
$p + \epsilon$],  for some probability $p \in [0,1]$, with an accuracy $1-\alpha$.  The number of necessary runs to ensure the required accuracy is then computed by the tool relying on the Chernoff-Hoeffding theory~\cite{Chernoff1952}. 

\subsubsection{Model}
The \Uppaal{} SMC model of our use case $\mathit{Sys}$ is given by three main components represented in terms of \emph{parallel timed automata}: the \emph{physical component}, the \emph{network}, and the 
\emph{logical component}. 

The physical component, whose model is shown in Figure~\ref{fig:physical}, consists of four automata: 
\begin{inparaenum}[(i)] 
	\item the \emph{\_Engine\_} automaton that governs the evolution of the variable \emph{temp} by means of the \emph{heat} and \emph{cool} functions;
	\item the \emph{\_Sensor\_} automaton that updates the global variable \emph{sens} at each measurement request;  
	\item the \emph{\_Actuator\_} automaton that activates/deactivates the cooling system;
	\item the \emph{\_Safety\_} automaton that handles the integer variable \emph{stress}, via the \emph{update{\_}stress} function, and the Boolean variables \emph{safe} and \emph{deadlocks}, associated to the safety set $\safefun{}$ and the invariant set $\invariantfun{}$ of $\mathit{Sys}$, respectively.\footnote{In Section~\ref{sec:related}, we explain why we need to implement an automaton to check  for safety conditions rather than  verifying a safety property.}
\end{inparaenum}
We also have a small automaton to model a discrete notion of time (via a synchronisation channel $\textsf{tick}$) as the evolution of state variables is represented via difference \nolinebreak equations. 

The \emph{network}, whose model is given in Figure~\ref{fig:network}, consists of \emph{two proxies}: a proxy to relay actuator commands between the actuator device and the controller, a second proxy to relay measurement requests between the sensor device and the logical components (controller and IDS). 

The \emph{logical component}, whose model is given in Figure~\ref{fig:logical},
consists of two automata: \emph{\_Ctrl\_} and \emph{\_IDS\_} to model the controller and the Intrusion Detection System, respectively; both of them synchronise with their associated proxy copying a fresh value of \emph{sens} into their local variables (\emph{sens\_ctrl} and \emph{sens\_ids}, respectively). 
Under proper conditions, the $\mathit{\_IDS\_}$ automaton fires alarms by setting a Boolean variable \emph{alarm}.

\begin{figure}[t]
	\centering
	\includegraphics[scale=0.35]{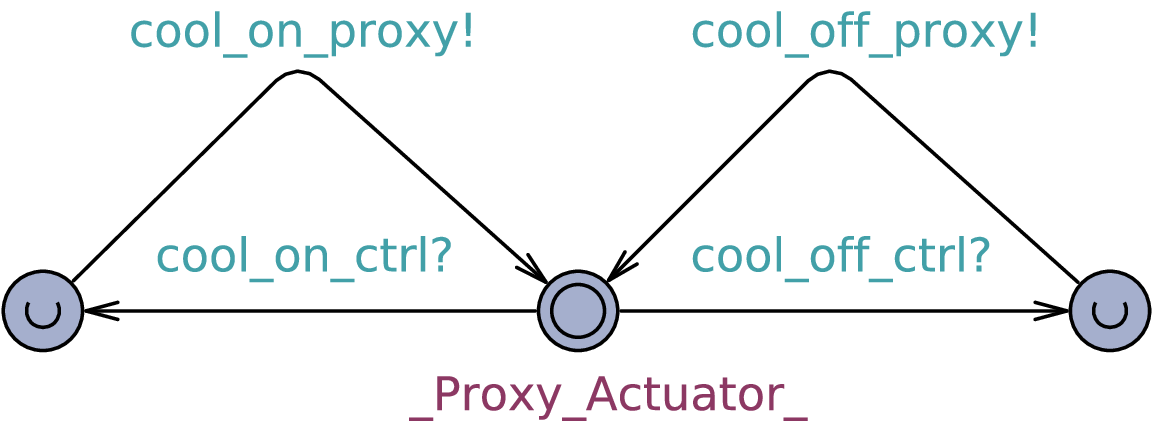}	\Q
	\includegraphics[scale=0.35]{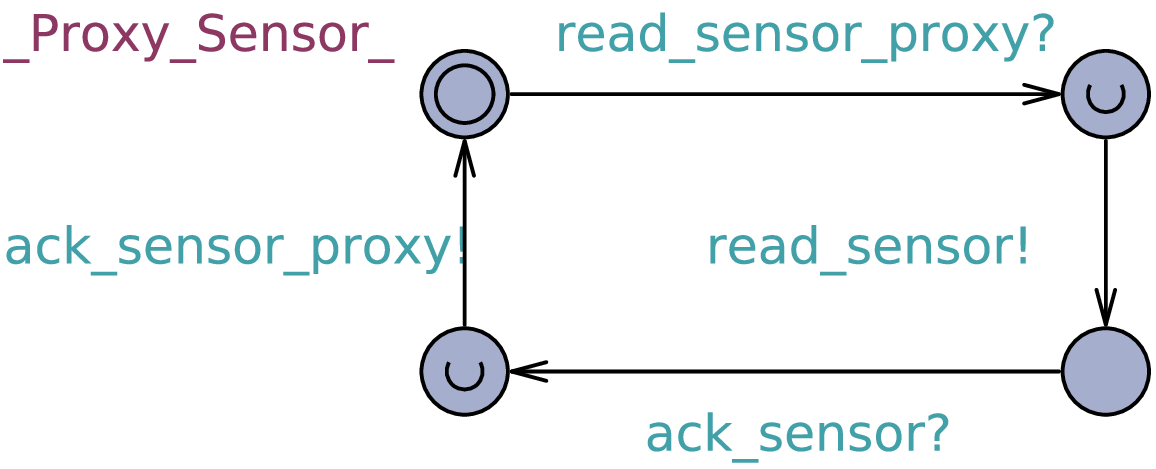} 
	\caption{\Uppaal{} SMC model for the network component of $\mathit{Sys}$}
	\label{fig:network}
\end{figure}

\begin{figure}[t]
	\centering
	\includegraphics[scale=0.4]{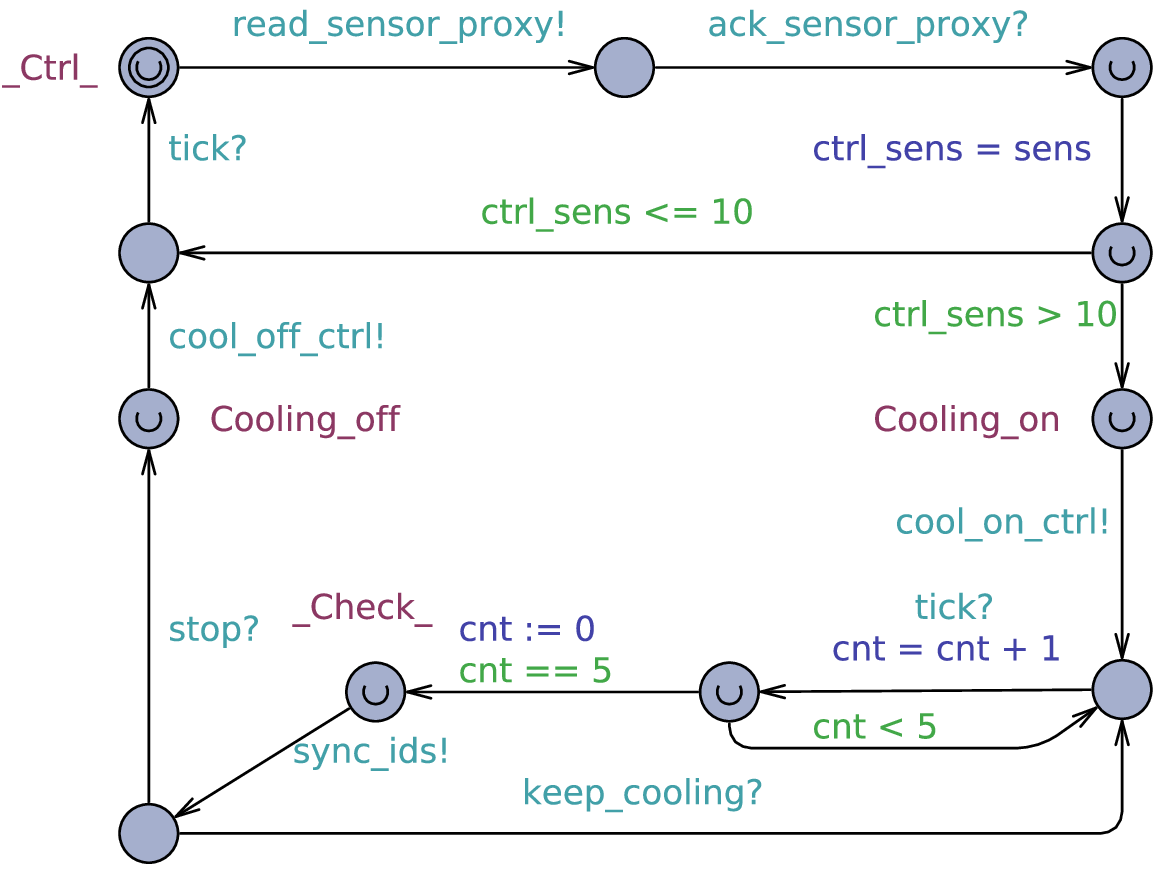}\Q\q
	\includegraphics[scale=0.4]{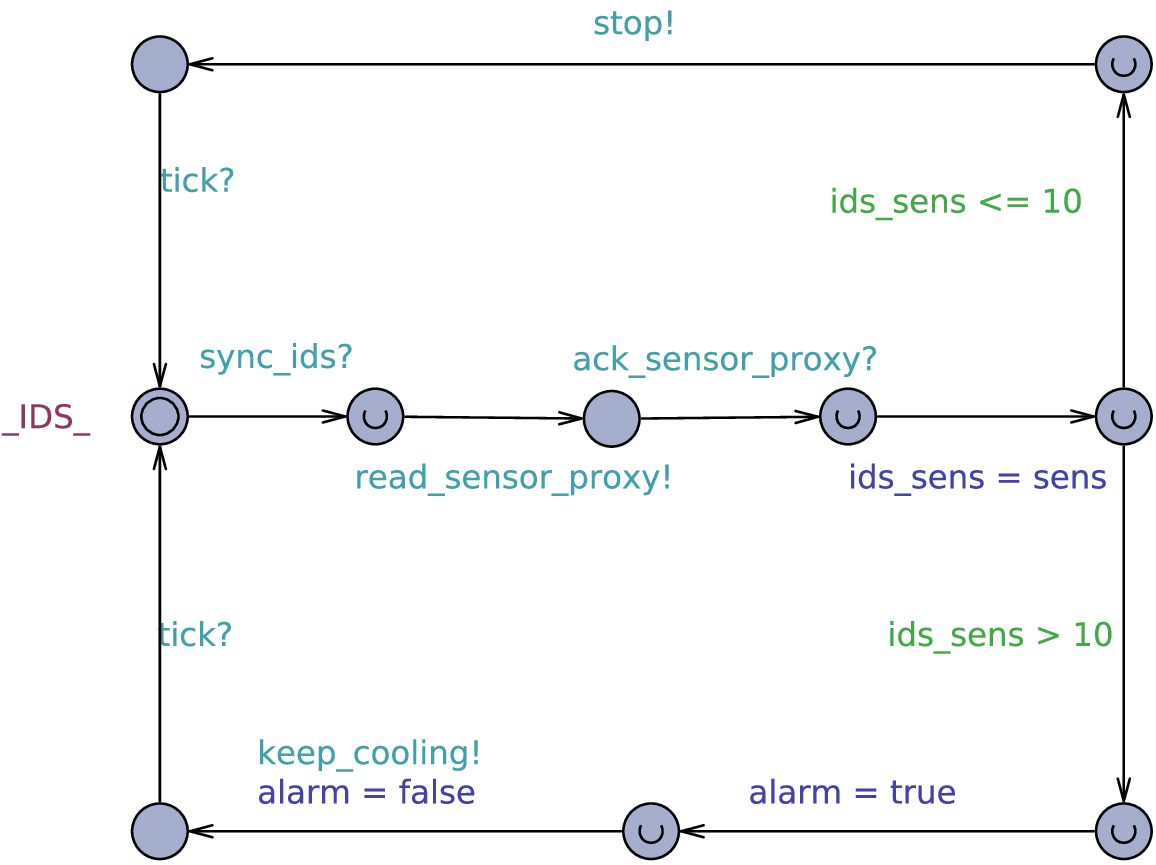} 
	\caption{\Uppaal{} SMC model for the logical component of $\mathit{Sys}$}
	\label{fig:logical}
\end{figure}

\subsubsection{Verification}
We conduct our safety verification using a notebook with the following set-up: 
\begin{inparaenum}[(i)]
	\item 2.8 GHz Intel i7 7700 HQ, with 16 GB memory, and Linux Ubuntu 16.04 operating system;  
	\item \Uppaal{} SMC model-checker 64-bit, version 4.1.19. 
\end{inparaenum} 
The statistical parameters of false negatives ($\alpha$) and probabilistic uncertainty ($\epsilon$) are both set to 0.01, leading to a confidence level of 99\%. As a consequence, having fixed these parameters, for each of our experiments, \Uppaal{} SMC run a number of runs that may vary from a few hundreds to 26492  (\emph{cf.} Chernoff-Hoeffding bounds).

We basically use \Uppaal{} SMC to verify properties expressed in terms of \emph{time bounded CTL} formulae of the form $\square_{[t_1,t_2]} e_{prop}$ and $\lozenge_{[0,t_2]} e_{prop}$\footnote{The $0$ in the left-hand side of the time interval is imposed by the syntax of \Uppaal{} SMC.}, where $t_1$ and $t_2$ are \emph{time instants} according to the discrete representation of time in \Uppaal{} SMC. In practice, we use formulae of the form $\square_{[t_1,t_2]} e_{prop}$ to compute the probability that a property $e_{prop}$\footnote{$e_\mathit{prop}$ is a side-effect free expression over variables (e.g.,  clock variables, location names and primitive variables)~\cite{DBLP:conf/sfm/BehrmannDL04}.} holds in all \emph{time slots} of the time interval $t_1..{t_2}$, whereas  with formulae of the form $\lozenge_{[0,t_2]} e_{prop}$ we calculate the probability that a property $e_{prop}$ holds in a least one \emph{time slot} of the time interval $0..{t_2}$.

Thus, instead of proving Proposition~\ref{prop:sys}, we verify, with a  99\% accuracy, that in all possible executions that are at most $1000$ time slots long, the system $\mathit{Sys}$ results to be sound and alarm free, with probability $0.99$. Formally, we verify the following three properties: 
\begin{itemize}
	\item $\square_{[1,1000]}(\neg \mathit{deadlocks})$, expressing that the system does not deadlock; 
	\item $\square_{[1,1000]}(\mathit{safe})$, expressing that the system does not violate the safety conditions; 
	\item $\square_{[1,1000]} (\neg \mathit{alarm})$, expressing that the IDS does not fire any alarm. 
\end{itemize} 

Furthermore, instead of Proposition~\ref{prop:X}, we verify, with the same accuracy and for runs of the same length (up to a short initial transitory phase  
lasting $5$ time instants) that if the cooling system is off, then the temperature of the engine lies in the real interval $(2.9, 8.5]$, otherwise it ranges over the interval $(9.9, 11.5]$. Formally, we verify the following two properties: 
\begin{itemize}
	\item $\square_{[5,1000]}(\mathit{Cooling\_off} \implies (\mathit{temp} > 2.9 \wedge \mathit{temp} \leq 8.5))$ 
	\item $\square_{[5,1000]}(\mathit{Cooling\_on} \implies (\mathit{temp} > 9.9 \wedge \mathit{temp} \leq 11.5))$. 
\end{itemize}
The verification of each of the five properties above requires around $15$ minutes. The \Uppaal{} SMC models of our system and the attacks discussed in the next section are available at the repository 
\begin{center}
{\small \texttt{https://bitbucket.org/AndreiMunteanu/cps\_smc/src/}}. 
\end{center}

\begin{remark}
	In our \Uppaal{} SMC model we decided to represent both  uncertainty of physical evolution (in the functions \emph{heat} and \emph{cool} of \emph{\_Engine\_}) and measurement noise (in \emph{\_Sensor\_}) in a probabilistic manner via random extractions.
	Here, the reader may wonder whether it would have been enough to restrict our SMC analysis by considering only upper and lower bounds on these two quantities. Actually, this is not the case because 
	such a restricted analysis might miss admissible execution traces. 
	To see this, suppose to work with 
	a physical uncertainty that is always either $0.4$ or $-0.4$. Then, the temperature reached by the system would always be of the form $n.k$, for $n,k \in \mathbb{N}$ and $k$ even. As a consequence, our analysis would miss all  execution  traces in which the system reaches the maximum admissible temperature of $11.5$ degrees. 
\end{remark}

%%%%%%%%%%%%%%%%%%%%%   ATTACKS %%%%%%%%%%%%
%%%%%%%%%%%%%%%%%%%%%%%%%%%%%%%%%%%%%%%%%%%%%

\section{Physics-based Attacks}
\label{sec:cyber-physical-attackers}

In this section, we use \cname{} to formalise a \emph{threat model} of 
physics-based attacks, i.e.,  attacks that can manipulate sensor and/or actuator signals in order to drive a
\emph{sound} \CPS{} into an undesired state~\cite{TeShSaJo2015}.  
An attack may have different levels of access to physical devices;  
for example, it might be able to get read access to the sensors
but not write access; or it might get write-only access to the actuators but not read-access. 
This level of granularity is very important to model precisely how physics-based attacks can affect a
CPS~\cite{Cardenas2015}.  
In \cname{}, we have a syntactic way to distinguish malicious processes from honest ones. 
\begin{definition}[Honest system]
	A \CPS{} $\confCPS {E;S}  P$ is \emph{honest} if $P$ is honest, where 
	$P$ is honest if it does not contain constructs of the form 
	$\timeout{\mu.{P_1}}{P_2}$.
\end{definition}

We group physics-based attacks in classes that describe both the malicious activities and the timing aspects of the attack. Intuitively, a class of attacks provides information about which physical devices are accessed by the attacks of that class, how they are accessed (read and/or write), when the attack begins and when the attack ends. Thus, let $\I$ be the set of all possible malicious activities on the physical devices of a system, $m \in \mathbb{N}^+$ be the time slot when an attack starts, and $n \in \mathbb{N}^+ \cup \{\infty\}$ be the time slot when the attack ends. We then say that an \emph{attack $A$ is of class $C \in [ \I \rightarrow {\cal P}(m..n) ]$} \nolinebreak if: 
\begin{enumerate}
	\item all possible malicious activities of $A$ coincide with
	those contained in $\I$; 
	\item the first of those activities may occur in the
	$m^\mathrm{th}$ time slot but not  before; 
	\item the last of those activities may occur in the $n^\mathrm{th}$ time slot but not after; 
	\item for $\iota \in \I$, $C(\iota)$ returns a (possibly
	empty) set of time slots when $A$ may read/tamper with the device $\iota$ (this set is contained in $m..n$); 
	\item $C$ is a total function, i.e., 
	if no attacks of class $C$ can achieve the malicious activity $\iota \in \I$, then $C(\iota) = \emptyset$. 
\end{enumerate} 

\begin{definition}[Class of attacks]
	\label{def:attacker-class}	
	Let ${\cal I} = \{ \mbox{\Lightning}p ? : p \in {\cal S} \cup
	{\cal A} \} \cup \{ \mbox{\Lightning}p ! \, : p \in {\cal S} \cup {\cal A}
	\}$ be the set of all possible \emph{malicious activities} on 
	physical devices. Let 
	$m \in \mathbb{N}^{+}$,    $n \in \mathbb{N}^+ \cup
	\{\infty\}$, with $m \leq n$. 
	A \emph{class of attacks}
	$C \in [ \I \rightarrow {\cal P}(m..n) ]$ is a total function such that
	for any attack $A$ of class $C$ we have:
	\begin{enumerate}
		\item[(i)]
		\begin{math}
		C(\iota)=
		\{  k  :  A \trans{t}\trans{\iota v} A' \,   \wedge \, 
		k = \#\tick(t)+1  
		\}
		\end{math},
		for  $\iota \in \I$,
		\item[(ii)] 
		\begin{math}
		m = \inf \{ \, k \, : \, k \in C(\iota) 
		\, \wedge \, \iota \in \I \,  \}
		\end{math},
		\item[(iii)]
		\begin{math}
		n = \sup \{ \, k \, : \, k \in C(\iota) 
		\, \wedge \, \iota \in \I \,  \}. 
		\end{math} 
	\end{enumerate}
\end{definition}

Along the lines of~\cite{FM99}, 
we can say that an attack $A$ affects a \emph{sound} \CPS{}  $M$ if the execution of the compound system $M \parallel A$ differs from that of the original system $M$, in an observable manner. Basically, a physics-based attack can influence the system under attack in at least two different \nolinebreak ways:
\begin{itemize}
	\item The system $M \parallel  A$ might deadlock when $M$ may not; this means that the attack $A$ affects the \emph{availability} of the system. We recall that in the context of \CPS{s}, deadlock is a particular severe physical event. 
	\item The system $M \parallel A$ might have non-genuine execution traces containing observables (violations of safety conditions or 
	communications on channels) that can't be reproduced by $M$; here the attack affects the \emph{integrity} of the system behaviour. 
\end{itemize}

\begin{definition}[Attack tolerance/vulnerability]
	\label{def:attack-tolerance}
	Let $M$ be an honest and sound  \CPS{}. We say that $M$ is \emph{tolerant to an attack $A$} if 
	$M \parallel A \, \sqsubseteq \,  M$. 
	We say that $M$ is \emph{vulnerable to an attack $A$} if there is a time interval $m..n$, with $m\in \mathbb{N}^+$ and $n \in \mathbb{N}^+ \cup \{\infty\}$, such that $M \parallel A \: \sqsubseteq_{m..n} \,  M$. 
\end{definition}

Thus, if a system $M$ is vulnerable to an attack $A$ of class $C \in [\I
\rightarrow {\cal P}(m..n)]$, during the time interval $m'..n'$, then the attack operates during the interval $m..n$ but it influences the system under attack in the time interval $m'..n'$ (obviously, $m' \geq m$). If $n'$ is finite, then we have a \emph{temporary attack}, otherwise we have a \emph{permanent attack}. Furthermore, if $m'-n$ is big enough and $n-m$ is small, then we have a quick nasty attack that affects the system late enough to allow \emph{attack camouflages}~\cite{GGIKLW2015}. On the other hand, if $m'$ is significantly smaller than $n$, then the attack affects the observable behaviour of the system well before its termination and the \CPS{} has good chances of undertaking countermeasures to stop the attack. Finally, if $M \parallel A \trans{t}\trans{\dead}$, for some trace $t$, then we say that the attack $A$ is \emph{lethal}, as it is capable to halt (deadlock) the \CPS{} $M$. This is obviously a permanent attack.

Note that, according to Definition~\ref{def:attack-tolerance}, the tolerance (or vulnerability) of a \CPS{} also depends on the capability of the $\mathit{IDS}$ component to detect and signal undesired physical behaviours. In fact, the $\mathit{IDS}$ component might be designed to detect abnormal physical behaviours going well further than deadlocks and violations of safety conditions.

According to the literature, we say that an attack is \emph{stealthy} if it is able to drive the \CPS{} under attack into an incorrect physical state (either deadlock or violation of the safety conditions) without being noticed by the $\mathit{IDS}$ component.

\subsection{Three different attacks on the physical devices of the \CPS{} $\mathit{Sys}$}
In this subsection, we present three different attacks to the \CPS{} $\mathit{Sys}$ described in Section~\ref{sec:running_example}.

Here, we use \Uppaal{} SMC to verify the models associated to the system under attack in order to detect deadlocks, violations of safety conditions, and IDS failures.

\begin{example}
	\label{exa:att:DoS}
	Consider the following \emph{DoS/Integrity attack} on the
	the actuator $\mathit{cool}$, of class $C \in [\I
	\rightarrow {\cal P}(m..m)]$ with $C(\mbox{\Lightning}cool?)=C(\mbox{\Lightning}cool!)=\{ m \} $ and $C(\iota) = \emptyset$, for $\iota \not \in \{
	\mbox{\Lightning}cool? , \mbox{\Lightning}cool! \}$: 
	\begin{displaymath}
	A_m  \: = \:  
	\tick^{m{-}1}.  \lfloor {\drop x {cool}}.\mathsf{if}  \:  (x{=}{\off}) \:   \{ {\forge {\off}{cool}} \} \: \mathsf{else} \: \{ \nil \} \rfloor \,  . 
	\end{displaymath} 
	Here, the attack $A_m$ \emph{operates exclusively in the $m^\mathrm{th}$ time slot}, when it tries to drop an eventual cooling command (on or off) coming from the controller, and fabricates a fake command to turn off the cooling system. Thus, if the controller sends in the $m^\mathrm{th}$ time slot a command to turn off the coolant, then nothing bad happens as the attack will put the same message back. On the hand, if the controller sends a command to turn the cooling on, then the attack will drop the command. We recall that the controller will turn on the cooling only if the sensed temperature is greater than $10$ (and hence $\mathit{temp}> 9.9$); this may happen only if $m > 8$. 
	Since the command to turn the cooling on is never re-sent by $\mathit{Ctrl}$, the temperature will continue to rise, and after only $4$ time units the system may violate the safety conditions emitting an action $\unsafe$, while the $\mathit{IDS}$ component will start sending alarms every $5$ time units, until the whole system deadlocks because the temperature reaches the threshold of $50$ degrees. Here, the $\mathit{IDS}$ component of $\mathit{Sys}$ is able to detect the attack with only one time unit delay. 
	
	\begin{figure}[t]
		\centering	
		\includegraphics[scale=0.45]{./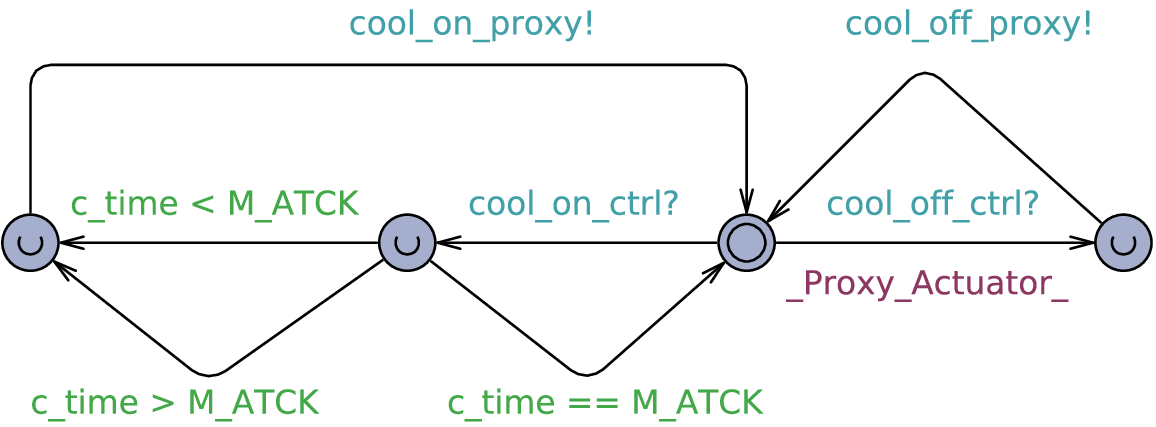}  
		\caption{\Uppaal{} SMC model for the attacker $A_m$ of Example~\ref{exa:att:DoS}}
		\label{fig:attack1_auto}
	\end{figure}
\end{example}

\begin{proposition}
	\label{prop:att:DoS}
	Let $\mathit{Sys}$ be our running example and $A_m$ be the attack defined in Example~\ref{exa:att:DoS}. Then, 
	\begin{itemize}
		\item $\mathit{Sys} \parallel  A_m \; \sqsubseteq \; \mathit{Sys}$, for $1 \leq m \leq 8$, 
		\item
		$\mathit{Sys} \parallel  A_m \;\: 
		\sqsubseteq_{m{+}4..\infty} \; \mathit{Sys}$, for $m > 8$. 
	\end{itemize}
\end{proposition}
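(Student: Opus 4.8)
The plan is to treat the two regimes of $m$ separately, since $A_m$ can only do harm in an execution in which $\mathit{Ctrl}$ actually tries to switch the cooling \emph{on} during the $m$-th time slot.

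\emph{Case $1 \le m \le 8$.} First I would show that the cooling is never activated during the first eight slots. Since the engine starts at $\mathit{temp}=0$ and the temperature rises by at most $1+\delta=1.4$ per slot while the cooling is off, an easy induction gives $\mathit{temp}_k \le 1.4\,(k-1)$, so $\mathit{temp}_k \le 9.8$ and the sensed value stays $\le 9.8+\epsilon = 9.9 < 10$ for every $k \le 8$; hence $\mathit{Ctrl}$ never reaches the branch that writes on $\mathit{cool}$. Consequently, in the $m$-th slot there is no write action on $\mathit{cool}$ for $A_m$ to intercept, so its drop prefix cannot fire and is forced to elapse via rule \rulename{Timeout}, reducing $A_m$ to $\nil$. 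The attacker therefore only contributes $\tick$-transitions, in lockstep with $\mathit{Sys}$. I would make this precise by a straightforward simulation argument, relating each reachable $\mathit{Sys}'\parallel A'$ (with $A'$ inert) to $\mathit{Sys}'$; this yields trace inclusion and hence $\mathit{Sys}\parallel A_m \sqsubseteq \mathit{Sys}$ (the two systems in fact have identical traces). Note that clause~\ref{thm:congruence2-trace} of Theorem~\ref{thm:congruence-trace} is not directly applicable, because $A_m$ is not pure-logical, so the argument must be carried out on traces.

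\emph{Case $m>8$, lower endpoint $m'=m+4$.} Here $\mathit{Ctrl}$ may issue a genuine command to switch the cooling on in slot $m$ (possible exactly from slot $9$ on, by the bound above), and the actuator-drop rule lets $A_m$ remove it; since the intercepted command is $\on$, nothing is re-injected and the cooling stays off. Following Definition~\ref{Time-bounded-trace-equivalence}, I must (a)~exhibit a trace $t$ with $\#\tick(t)=m+3$ that $\mathit{Sys}$ cannot weakly match, and (b)~show that no shorter divergent trace exists. For (a) I take the fastest-heating execution: by Proposition~\ref{prop:X} the turn-on happens with $\mathit{temp}\in(9.9,11.5]$, and since the cooling never engages the temperature stays above $9.9$ thereafter; tracking the counter $\mathit{stress}$, which is incremented at each slot in which $\mathit{temp}>9.9$ and which enables $\unsafe$ once it reaches $5$, I show it attains the threshold four slots after the dropped command, i.e.\ in slot $m+4$, so $(\mathit{Sys}\parallel A_m)\trans{t}\trans{\unsafe}$. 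As $\mathit{Sys}$ is sound and performs only $\tau,\tick$ (Proposition~\ref{prop:sys}), it can never answer $\unsafe$, so $t$ is unmatched. For (b) I observe that in slots $m,\dots,m+3$ the compromised system performs only $\tau$ and $\tick$ actions on every run (the $\mathit{IDS}$ alarm can fire no earlier than slot $m+5$, after the controller's $\tick^5$, and the deadlock at $\mathit{temp}=50$ is far later); hence for $\#\tick\le m+2$ every trace of $\mathit{Sys}\parallel A_m$ is a sequence of ticks and hidden actions that $\mathit{Sys}$ reproduces, making $m+4$ the genuine minimum.

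\emph{Case $m>8$, upper endpoint $n'=\infty$.} It remains to show that the recovery set of the second clause of Definition~\ref{Time-bounded-trace-equivalence} is empty, so that its infimum is $\infty$. The key point is that a dropped $\on$ is never re-sent by $\mathit{Ctrl}$, so the compromise is permanent: along the divergent trace $\mathit{temp}$ keeps increasing, $\mathit{stress}$ stays saturated at $5$ (so $\unsafe$ remains enabled at every subsequent slot), and eventually $\mathit{temp}$ leaves the invariant $[0,50]$, after which rule \rulename{Deadlock} makes $\dead$ enabled forever. Thus for every finite $n\ge m+4$ there is a reachable state $M'$ with $\#\tick=n-1$ that can still perform $\unsafe$ (or $\dead$), while no state $N'$ reachable from $\mathit{Sys}$ with the same number of ticks can; hence $M'\not\sqsubseteq N'$, no finite $n$ lies in the recovery set, and $n'=\inf\emptyset=\infty$.

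\emph{Main obstacle.} The structural and observational bookkeeping is routine; the delicate part is the \emph{quantitative} argument pinning the divergence to slot $m+4$. This requires carrying the exact interval bounds for $\mathit{temp}$ under the nondeterministic uncertainty $\gamma\in[-\delta,\delta]$ and the sensor error $\epsilon$ (via Proposition~\ref{prop:X}), and simultaneously establishing both that some admissible run reaches $\mathit{stress}=5$ at slot $m+4$ and that none reaches it earlier, so that $m+4$ is truly minimal. The second subtlety is the infimum argument for $n'=\infty$: one must verify that the safety violation persists for \emph{all} finite horizons, not merely exhibit a single late deadlock, as this is exactly what separates a permanent from a temporary attack in the sense of Definition~\ref{def:attack-tolerance}.
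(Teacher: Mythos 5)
Your first case ($1 \le m \le 8$) matches the paper's argument: the bound $\mathit{temp} \le 7\,(1+\delta) = 9.8$ over the first eight slots forces the sensed value to stay below $10$, so there is no write on $\mathit{cool}$ for the attack to intercept and $A_m$ contributes only $\tick$-steps; this is exactly how the paper argues it. Your treatment of the upper endpoint $n'=\infty$ is also in line with the paper: the dropped $\on$-command is never re-sent, so the safety violation persists and the invariant is eventually violated, leaving no finite recovery point.

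The genuine gap is in case $m>8$, part (a): your witness trace does not exist for general $m$. In the \emph{fastest-heating} execution the controller issues its first $\on$-command at slot $9$ (where $\mathit{temp}=11.2$), and after each full cool-and-reheat cycle the next commands fall at slots $19, 29, \dots$; for any $m$ outside this pattern the attack, which can drop only in slot $m$, intercepts nothing, the cooling engages normally, and no divergence arises along that run. Worse, even at a slot where that run does issue a command, $\mathit{stress}=0$ holds there (the temperature in the preceding slot is at most $9.8$, or the preceding slots were cooling slots), so the counter reaches $5$ only five slots after the drop, yielding $\unsafe$ at slot $m{+}5$ rather than the claimed $m{+}4$. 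What is needed, and what the paper proves as the core of its proof, is a reachability construction for \emph{arbitrary} $m$: the fastest and slowest heatings reach $\mathit{temp}=10.1$ at any slot in $9..18$, and a cooling-plus-heating cycle of exactly $10$ slots makes this value recur at slots $n+10i$, hence at any prescribed slot; placing $\mathit{temp}=10.1$ at slot $m{-}1$ with sensed value $10$ guarantees both that the $\on$-command is issued precisely in slot $m$ (so it is actually dropped) and that $\mathit{stress}=1$ already holds in slot $m$ (since $10.1>9.9$), whence $\mathit{stress}=5$ and $\unsafe$ occur exactly at slot $m{+}4$. Without this construction neither the interception for arbitrary $m$ nor the minimality of the endpoint $m{+}4$ (as opposed to $m{+}5$) is established --- this is precisely the ``delicate quantitative part'' you flagged but did not carry out.
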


In order to support the statement of Proposition~\ref{prop:att:DoS} (see also proof in appendix) we verify our \Uppaal{} SMC model of $\mathit{Sys}$ in which the communication network used by the controller to access the actuator is compromised. More precisely, we replace the \emph{\_Proxy\_Actuator\_} automaton of Figure~\ref{fig:network} with a compromised \nolinebreak  one, provided in Figure~\ref{fig:attack1_auto}, that implements the malicious activities of the 
MITM attacker $A_m$ of Example~\ref{exa:att:DoS}.

We have done our analysis, with a $99\%$ accuracy,  for execution traces that are at most $1000$ \nolinebreak  time \nolinebreak  units long and restricting the \emph{attack time} $m$   in the time interval $1..300$. The results of our analysis  \nolinebreak are: 
\begin{itemize}
	\item when $m \in 1..8$,
	the attack is harmless as the system results to be safe, deadlock free and alarm free, with probability $0.99$;  
	\item when $m \in  9..300$, we have the following situation: 
	\begin{itemize}
		\item the probability that at the attack time $m$ the controller sends a command to activate the cooling system (thus, triggering the attacker that will drop the command) can be obtained by verifying the property $\lozenge_{[0,m]}(\mathit{Cooling\_on} \, \wedge \, global\_clock \ge m)$; as shown in Figure~\ref{fig:attack1}, when $m$ grows in the time interval $1..300$, the resulting probability stabilises around the value $0.096$;
		\item up to the $m{+}3^\mathrm{th}$ time slot the system under attack remains safe, i.e., both properties $\square_{[1,m+3]}(\mathit{safe})$ and  $\square_{[1,m+3]}(\neg\mathit{deadlock})$ hold with probability $0.99$;  
		\item 
		up to the $m{+}4^\mathrm{th}$ time slot no alarms are fired, i.e., the  property $\square_{[1,m+4]}(\neg \mathit{alarm})$ holds with  probability $0.99$ (no false positives);   
		\item in the $m{+}4^\mathrm{th}$ time slot the system under attack might become unsafe as the probability, for $m \in 9..300$, that the property $\lozenge_{[0,m+4]}(\neg \mathit{safe})$ is satisfied  stabilises around the value $0.095$;\footnote{Since this probability coincides with that of $\lozenge_{[0,m]}(\mathit{Cooling\_on} \, \wedge \, global\_clock \ge m)$, it appears very likely that the activation of the cooling system in the $m^\mathrm{th}$ time slot triggers the attacker whose activity drags the system into an unsafe state with a delay of $4$ time slots. }
		\item in the  $m{+}5^\mathrm{th}$ time slot the IDS may fire an alarm as the probability, for $m \in 9..300$, that the property $\lozenge_{[0,m+5]}(\mathit{alarm})$ is satisfied stabilises around the value $0.094$;\footnote{As the two probabilities are pretty much the same, and  $\square_{[1,m+3]}(\mathit{safe})$ and $\square_{[1,m+4]}(\neg\mathit{alarm})$ hold, the IDS seems to be quite effective in detecting the violations of the safety conditions in the  $m{+}4^\mathrm{th}$ time slot, with only one time slot delay.} 
		\item the system under attack may deadlock as the property $\lozenge_{[0,1000]}(deadlocks)$ is satisfied with probability $0.096$.\footnote{Since the probabilities are still the same, we argue that when the system reaches an unsafe state then it is not able to recover and it is doomed to deadlock.}
	\end{itemize}
\end{itemize}

\begin{figure}[t]	
	\centering
	\includegraphics[scale=0.45]{./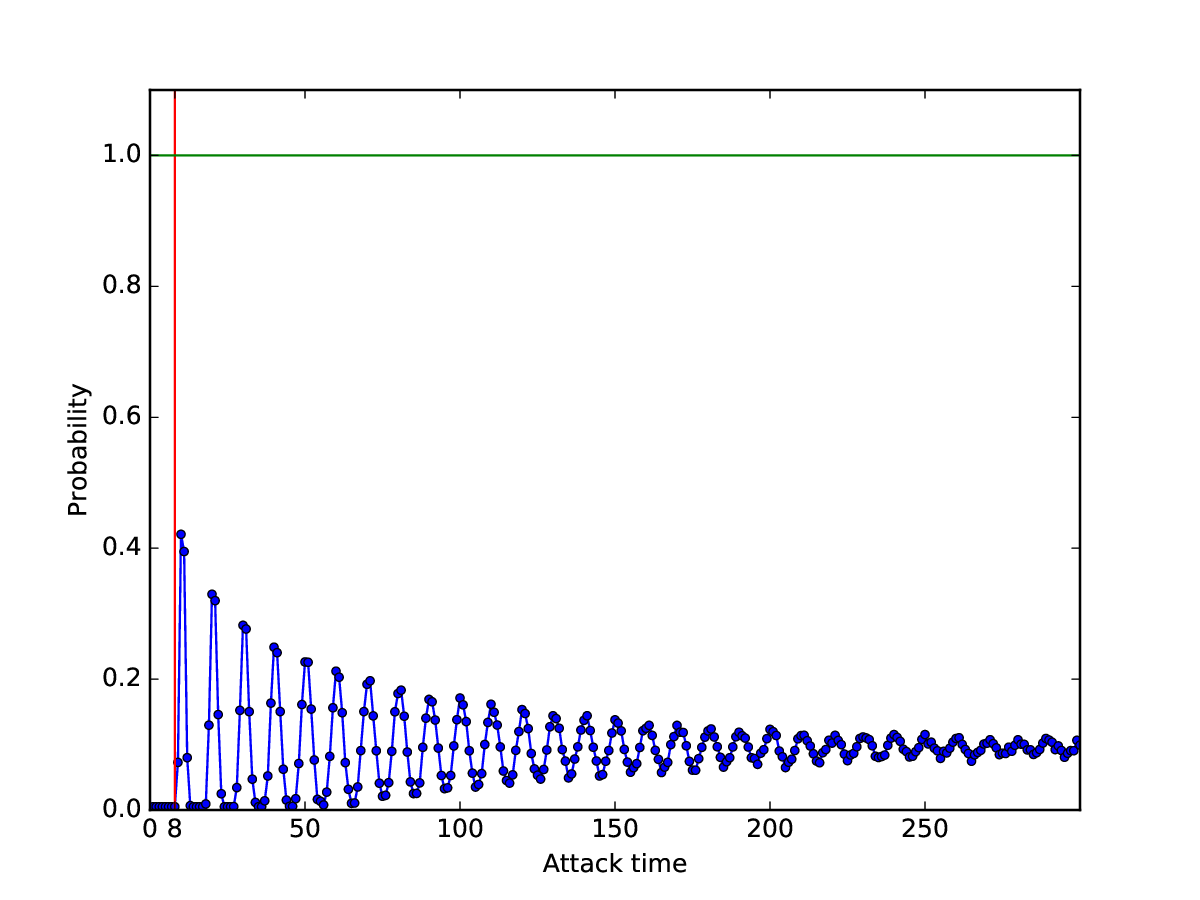} 
	\caption{Probability results of $\lozenge_{[0,m]}(\mathit{Cooling\_on} \, \wedge \, global\_clock \ge m)$  by varying $m$ in $1..300$}
	\label{fig:attack1}
\end{figure}

\begin{example}
	\label{exa:att:DoS2}
	Consider the following \emph{DoS/Integrity attack} to the sensor $s_{\mathrm{t}}$, of class $C \in [\I \rightarrow {\cal P}(2..\infty)]$ such that $C( \mbox{\Lightning}s_{\mathrm{t}}?) = \{ 2 \}$, 
	$C(\mbox{\Lightning}s_{\mathrm{t}}!) = 2..\infty$ and $C(\iota) = \emptyset$, for $ \iota \not \in \{\mbox{\Lightning}s_{\mathrm{t}}!, \mbox{\Lightning}s_{\mathrm{t}}?\}$. The attack \emph{begins} is activity in the time slot $m$, with $m > 8$, and then never stops: 
	\begin{displaymath}
	\begin{array}{rcl}
	A_m & = & \tick^{m-1}. A   \\[1pt]
	A   & = & 
	\lfloor \sniff x {s_{\mathrm{t}} }. \ifelse {x \leq 10} {B \langle x \rangle} {\tick.A}    \rfloor{}  \\[1pt]
	B(y) & = & \lfloor { \forge {y } {s_{\mathrm{t}} }.\tick.B \langle y \rangle } \rfloor {B \langle y \rangle } \, . 
	\end{array}
	\end{displaymath}
	
	Here, the attack $A_m$ behaves as follows. It 
	sleeps for $m-1$ time slots and then, in the following time slot,  it sniffs the current temperature at sensor $s_{\mathrm{t}}$. If the sensed temperature $v$ is greater than $10$, then it moves to the next time slot and restarts sniffing; otherwise  from that time on it will keep sending the same temperature $v$ to the logical components  (controller and IDS). 
	Actually, once the forgery activity starts, the process $\mathit{Ctrl}$ will always receive a temperature below $10$ and will never activate the cooling system (and consequently the IDS). 
	As a consequence, the system under attack $\mathit{Sys} \parallel A$ will first  move to an unsafe state until the invariant will be violated and the system will deadlock. 
	Indeed, in the worst execution scenario, already in the $m{+}1^\mathrm{th}$ time slot the temperature may exceed $10$ degrees, and after $4$ $\tick$-actions, in the $m{+}5^\mathrm{th}$ time slot, the system may violate the safety conditions emitting an $\unsafe$ action.
	Since the temperature will keep growing without any cooling activity, the deadlock of the \CPS{} cannot be avoided. 
	This is a \emph{lethal} attack, as it causes a shut down of the system; it is also a \emph{stealthy attack} as it remains undetected because the IDS never gets into action. 
\end{example}
\begin{proposition}
	\label{prop:att:dos-integrity}
	Let $\mathit{Sys}$ be our running example and $A_m$, for $m > 8$, be the attack defined in Example~\ref{exa:att:DoS2}. Then $\mathit{Sys} \parallel A_m \; \sqsubseteq _{{m+5}..\infty} \; \mathit{Sys}$. 
\end{proposition}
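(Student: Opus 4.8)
The plan is to apply Definition~\ref{Time-bounded-trace-equivalence} directly, instantiated with $M = \mathit{Sys} \parallel A_m$ and $N = \mathit{Sys}$, and to check that its two clauses pin the discrepancy interval to exactly $(m{+}5)..\infty$. Throughout I would use Proposition~\ref{prop:sys} as the yardstick for the genuine system: since every trace of $\mathit{Sys}$ contains only $\tau$ and $\tick$ actions (no channel observables, no $\unsafe$, no $\dead$), any trace of $M$ exhibiting a physical observable $\unsafe$ or $\dead$ is automatically unmatchable by $N$, whereas any $\tau/\tick$-only trace of $M$ is trivially matchable because $\mathit{Sys}$ is sound and can always let time pass, so $\mathit{Sys}\Trans{\hat t}$.

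For the first clause (the \emph{start} of the interval) I must show that $m{+}5$ is the least slot at which a discrepancy surfaces, which splits into a lower bound and a witness. For the lower bound I would characterise $\mathit{Sys}\parallel A_m$ during the first $m{+}4$ time slots: the prefix $\tick^{m-1}$ is inert, and while the guard $x>10$ holds the body of $A$ only \emph{sniffs}. By rule \rulename{{\Lightning}SensSniff{\Lightning}} sniffing is a read-only $\rcva{\mbox{\Lightning}s_{\mathrm t}}{v}$ action producing no $\snda{\mbox{\Lightning}s_{\mathrm t}}{v}$, so, unlike forging, it does not pre-empt rule \rulename{SensRead}: the controller keeps reading the real temperature and the plant evolves genuinely. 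Once $A$ latches a reading $\le 10$ and starts forging, the controller stops activating the coolant; using Proposition~\ref{prop:X} together with the bounds $\delta = 0.4$ and $\epsilon = 0.1$, I would bound the temperature/stress dynamics to show that even in the fastest-heating execution the stress counter cannot reach its threshold $5$ before slot $m{+}5$, and that the $\mathit{IDS}$ never synchronises (no coolant cycle completes), so no $\mathit{alarm}$, $\unsafe$ or $\dead$ occurs up to slot $m{+}4$. Hence every trace $t$ of $M$ with $\#\tick(t)\le m{+}3$ uses only $\tau$ and $\tick$ and is matched by $\mathit{Sys}$. For the witness I would exhibit the worst-case execution of Example~\ref{exa:att:DoS2}, in which forging latches as early as possible and the plant heats at rate $1{+}\delta$, so that $\mathit{temp}$ exceeds $9.9$ already in slot $m{+}1$ and the stress counter reaches its threshold, enabling rule \rulename{Safety} to emit $\unsafe$ in slot $m{+}5$; this gives a trace $t$ with $\#\tick(t)=m{+}4$ and a trailing $\unsafe$ that $\mathit{Sys}$ cannot reproduce, so $N\not\Trans{\hat t}$ and $m{+}5$ is the minimum.

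For the second clause I must show $n=\infty$, i.e.\ that the simulation never resumes, equivalently that the set of candidate resumption slots is empty so that $\inf\emptyset=\infty$. The key structural fact is that forging is \emph{permanent}: process $B\langle y\rangle$ keeps supplying the fixed low value $y$ forever, so from the latching slot on the controller never re-reads a value above $10$ and never re-activates cooling. The temperature therefore rises monotonically (up to uncertainty) until it leaves the invariant range $[0,50]$, at which point rule \rulename{Deadlock} forces $M$ to emit $\dead$ from then on, while the $\mathit{IDS}$ stays silent (the attack is stealthy and lethal). Thus for every finite $n\ge m{+}5$ there is an $M'$ reachable after $n{-}1$ ticks whose future trace contains $\unsafe$ or $\dead$, which no $N'$ reachable from $\mathit{Sys}$ can match; the resumption set is empty and $n=\infty$. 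Combining the two clauses yields $\mathit{Sys}\parallel A_m \sqsubseteq_{m+5..\infty}\mathit{Sys}$.

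I expect the main obstacle to be the quantitative timing argument that pins the first observable to \emph{exactly} slot $m{+}5$: one must simultaneously rule out any earlier $\unsafe$, $\dead$ or $\mathit{alarm}$ in \emph{all} executions up to slot $m{+}4$, and produce one execution realising $\unsafe$ at slot $m{+}5$. These bounds depend delicately on the interplay between the non-blocking sniffing phase (during which genuine cooling may still occur), the worst-case onset of forging, and the stress dynamics under the uncertainty $\delta$ and error $\epsilon$. By contrast, the $n=\infty$ part is essentially qualitative once the permanence of forging, and hence the monotone divergence of $\mathit{temp}$ towards deadlock, is observed.
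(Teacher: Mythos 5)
Your overall plan---instantiating Definition~\ref{Time-bounded-trace-equivalence} directly, using Proposition~\ref{prop:sys} to reduce matchability to the absence of $\unsafe$/$\dead$/$\mathit{alarm}$, a lower bound up to slot $m{+}4$, a witness at slot $m{+}5$, and a permanence argument for $n=\infty$---has the same structure as the paper's proof (which is in fact terser than yours on the lower bound). However, your witness trace is off by one, and this is fatal because Definition~\ref{Time-bounded-trace-equivalence} requires $m{+}5$ to be \emph{exactly} the minimum slot of discrepancy. For $\unsafe$ to be emitted at slot $m{+}5$, the variable $\mathit{stress}$ must equal $5$ there, which by the evolution law requires $\mathit{temp}>9.9$ in \emph{all five} slots $m, m{+}1, \ldots, m{+}4$; in your execution the temperature first exceeds $9.9$ in slot $m{+}1$, so $\mathit{stress}$ first reaches $5$ in slot $m{+}6$, and a discrepancy at $m{+}6$ does not establish $\sqsubseteq_{m+5..\infty}$. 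The crux you are missing is that the latch slot $m$ itself must satisfy two conditions \emph{simultaneously}: $\mathit{temp}>9.9$ (so that stress starts accumulating at once) and $\mathit{temp}\leq 10.1$ (so that, with $\epsilon=0.1$, the sniffed value can be $\leq 10$ and the attack actually latches). Moreover, your prescription that ``the plant heats at rate $1{+}\delta$'' cannot produce such a configuration: heating at rate $1.4$ would give a temperature of at least $11.2$ by slot $9\leq m$, whose sensed value exceeds $10$, so the attack would never latch in slot $m$ at all. This is precisely why the paper's proof re-uses the cyclic heat/cool construction from the proof of Proposition~\ref{prop:att:DoS}: for every $m>8$ one can build a genuine prefix placing $\mathit{temp}$ exactly at $10.1$ in slot $m$ with sensed value $10$ and $\mathit{stress}=0$; the attack then latches, cooling is never (re)activated, and $\mathit{stress}$ reaches $5$ exactly in slot $m{+}5$.

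A secondary slip: your claim that ``the $\mathit{IDS}$ never synchronises (no coolant cycle completes)'' is false in general. Cooling can be genuinely activated during the read-only sniffing phase, and the attack may latch in the middle of a cooling cycle, in which case the cycle completes and the $\mathit{IDS}$ does synchronise. What is true, and suffices for your lower bound, is that every $\mathit{IDS}$ reading returns a value $\leq 10$ (either a genuine post-cooling value, which is at most $8.7$ with the sensor error, or the forged value), so no $\mathit{alarm}$ is ever fired; one must also check, as you only gesture at, that a mid-cooling latch cannot exploit the residual stress accumulated before the latch, which follows because the continued cooling drives $\mathit{temp}$ below $9.9$ and resets $\mathit{stress}$ before it can reach $5$. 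The remaining ingredients of your proposal---matchability of $\tau/\tick$ traces via Proposition~\ref{prop:sys}, the non-pre-emptive nature of rule \rulename{{\Lightning}SensSniff{\Lightning}}, and the $n=\infty$ argument from the permanence of forging and the inevitable violation of the invariant---are sound and agree with the paper.
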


\begin{figure}[t]	
	\centering
	\includegraphics[scale=0.5]{./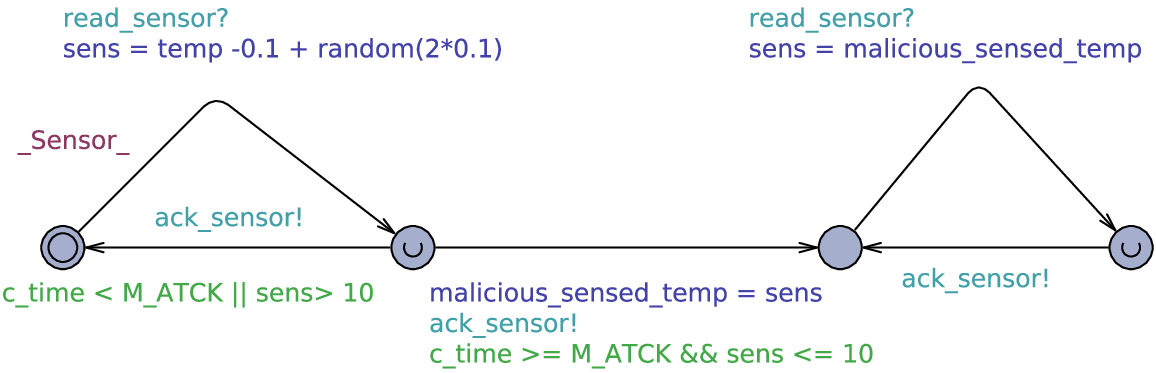} 
	\caption{\Uppaal{} SMC model for the attacker $A_m$ of Example~\ref{exa:att:DoS2}}
	\label{fig:attack2_auto}
\end{figure}

Here, we verify the \Uppaal{} SMC model of $\mathit{Sys}$ in which we assume that its sensor device is compromised (we recall that our MITM forgery attack on sensors or actuators can be assimilated to device compromise). The interested reader may find the proof in the appendix. In particular, we replace the \emph{\_Sensor\_} automaton of Figure~\ref{fig:physical} with a compromised one, provided in Figure~\ref{fig:attack2_auto}, and implementing the malicious activities of the MITM attacker $A_m$ of Example~\ref{exa:att:DoS2}.

We have done our analysis, with a $99\%$ accuracy, for execution traces that are at most $1000$ time units long and restricting the \emph{attack time} $m$ in the integer interval $9..300$. The results of our analysis \nolinebreak are:

\begin{itemize}
	\item up to the $m{+}4^\mathrm{th}$ time slot the system under attack remains safe, deadlock free, and alarm free, i.e., all three properties $\square_{[1,m+4]}(\mathit{safe})$, $\square_{[1,m+4]}(\neg\mathit{deadlock})$, and $\square_{[1,m+4]}(\neg\mathit{alarm})$ hold with  probability $0.99$; 
	
	\item in the $m{+}5^\mathrm{th}$ time slot the system under attack might become unsafe as the probability, for $m \in 9..300$, that the property $\lozenge_{[0,m+5]}(\neg \mathit{safe})$ is satisfied stabilises around $0.104$;
	
	\item the system under attack will eventually deadlock  not later that $80$ time slots after the attack time $m$, as the property $\square_{[m+80,1000]}(deadlocks)$ is satisfied with probability $0.99$; 
	
	\item finally, the attack is stealthy as the property $\square_{[1,1000]}(\neg \mathit{alarm})$ holds with probability $0.99$. 
\end{itemize}

Now, let us examine a similar but less severe attack. 
\begin{example}
	\label{exa:att:integrity}
	Consider the following DoS/Integrity attack  to sensor $s_{\mathrm{t}}$, of class $C \in [\I 
	\rightarrow {\cal P}(1..n)]$,  with $C(\mbox{\Lightning}s_{\mathrm{t}}!)=C(\mbox{\Lightning}s_{\mathrm{t}}?)=1..n$ and $C(\iota) = \emptyset$, for 
	$\iota \not \in \{\mbox{\Lightning}s_{\mathrm{t}}!,\mbox{\Lightning}s_{\mathrm{t}}?\}$:  
	\begin{displaymath}
	\begin{array}{rcl}
	A_n    & = &     \lfloor \sniff x {s_{\mathrm{t}} }. 
	\lfloor   \forge {x{-}4} {s_{\mathrm{t}} }.  \tick.A_{n-1 } \rfloor {A_{n-1 }} \rfloor {A_{n-1 }}, \:\textrm{  for $n>0$}\\
	A_0 &  = & \nil \, . 
	\end{array}
	\end{displaymath}
	
	In this attack, for $n$ consecutive time slots, 
	$A_n$ sends to the logical components (controller and IDS) the current sensed temperature decreased by 
	an offset $4$. The effect of this attack on the system depends on the
	\emph{duration} $n$ of the attack itself: 
	\begin{inparaenum}[(i)]
		\item for $n \leq 8$, the attack is harmless as the variable $\mathit{temp}$ may not reach a (critical) temperature above $9.9$;
		\item for $n=9$, the variable $\mathit{temp}$ might reach a temperature above $9.9$ in the $9^\mathrm{th}$ time slot, and the attack would delay the activation of the cooling system of one time slot; as a 
		consequence, the system  might get into an unsafe state in the time 
		interval $14..15$, but no alarm will be fired;  
		\item for $n \geq 10$, the system may get into an unsafe state in the time slot $14$ and in the following $n+11$ time slots; in this case, this would  not be \emph{stealthy attack} as the $\mathit{IDS}$ will fire the alarm with a delay of at most two time slots later, rather this is a \emph{temporary attack} that ends in the time slot $n+11$. 
	\end{inparaenum}
\end{example}

\begin{proposition}
	\label{prop:att:integrity}
	Let $\mathit{Sys}$ be our  use case  and $A_n$ be the attack defined in Example~\ref{exa:att:integrity}. Then: 
	\begin{itemize}
		\item $\mathit{Sys} \parallel  A_n \, \sqsubseteq \, \mathit{Sys}$, for $n \leq 8$,
		\item $\mathit{Sys} \parallel  A_n \; \sqsubseteq_{14..15} \; \mathit{Sys}$, for $n =9$, 
		\item $\mathit{Sys} \parallel A_n \; \sqsubseteq_{14..n{+}11} \; \mathit{Sys}$, for $n 
		\geq 10$.
	\end{itemize}
\end{proposition}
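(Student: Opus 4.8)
The plan is to reduce all three claims to locating the time slots in which the compromised system $\mathit{Sys}\parallel A_n$ can emit an observable that the genuine system $\mathit{Sys}$ cannot. By Proposition~\ref{prop:sys} every trace of $\mathit{Sys}$ uses only $\tau$ and $\tick$, so $\mathit{Sys}$ never performs $\dead$, never performs $\unsafe$, and never outputs on the free channel $\mathit{alarm}$. The prefixes of $A_n$ touch the plant only through sensor sniffing and sensor forgery, both of which are lifted to $\tau$ at the system level; hence $\mathit{Sys}\parallel A_n\sqsubseteq\mathit{Sys}$ holds exactly when $\mathit{Sys}\parallel A_n$ is again sound and alarm-free, and the failure slots that fix the endpoints of $\sqsubseteq_{m..n'}$ are precisely those in which $\mathit{Sys}\parallel A_n$ can fire $\unsafe$, $\dead$, or the alarm. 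The whole argument therefore reduces to bounding, over all nondeterministic trajectories, the first and last such slot.

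First I would characterise the effect of the forgery. The only action of $A_n$ is to hand the controller and the IDS the sensed temperature lowered by the offset $4$ (via $\forge{x{-}4}{s_{\mathrm{t}}}$) for the first $n$ time slots, after which the genuine readings are restored. Combining this with the evolution law (temperature moving by $1{\pm}\delta$ per slot with $\delta=0.4$, the counter $\mathit{stress}$ incrementing while $\mathit{temp}>9.9$ and resetting otherwise) and with the sensor error $\pm\epsilon$, $\epsilon=0.1$, I would compute the reachable envelopes of $\mathit{temp}$ and $\mathit{stress}$ as functions of the slot index, separating the heating phase, the (delayed or prematurely aborted) cooling bursts of length $5$, and the IDS checkpoints that follow each burst.

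For $n\le 8$ I would show that, starting from $\mathit{temp}=0$, the temperature cannot stay above $9.9$ for enough consecutive slots to drive $\mathit{stress}$ near its threshold before the attack ends, and that once the true readings return the controller re-enters the nominal regime of Proposition~\ref{prop:X}; hence neither $\unsafe$ nor an alarm is ever reachable and the plain preorder follows. For $n=9$ I would exhibit the extremal trajectory (fastest heating while the forgery suppresses the coolant, then slowest cooling once it switches on) in which $\mathit{stress}$ first reaches $5$ at slot $14$ and $\mathit{temp}$ falls back below $9.9$ by slot $15$, so that $\mathit{stress}$ resets and the nominal simulation resumes from slot $16$; pairing this witness with a uniform argument that every trajectory has recovered by slot $16$ and that none diverges before slot $14$ yields $\sqsubseteq_{14..15}$. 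For $n\ge 10$ I would show that the persistent $4$-degree offset keeps the distorted loop oscillating with $\mathit{temp}$ above $9.9$, so $\mathit{stress}$ stays saturated from slot $14$ on and the IDS is eventually driven to raise the alarm (with at most a two-slot delay); after the forgery stops at slot $n$ the genuine loop needs only a bounded number of five-slot cooling cycles to force $\mathit{temp}$ below $9.9$ again, an estimate giving recovery by slot $n{+}11$ and hence $\sqsubseteq_{14..n{+}11}$. Throughout I expect to mirror the structure of the proofs of Propositions~\ref{prop:att:DoS} and~\ref{prop:att:dos-integrity}.

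The hard part will be the tightness of the two endpoints in this worst-case analysis. For the lower endpoint I must produce an explicit divergent trajectory that fires $\unsafe$ at exactly slot $14$ (and, for $n\ge 10$, an alarm) while simultaneously ruling out any earlier divergence on \emph{every} trajectory; for the upper endpoint I must establish a uniform recovery bound, showing that once the forgery ceases the interaction of the five-slot cooling bursts, the IDS checks, and the $\pm\delta$ and $\pm\epsilon$ nondeterminism cannot keep $\mathit{stress}$ at its threshold past slot $15$ (respectively slot $n{+}11$) on any trajectory, and in particular that $\mathit{temp}$ never approaches the deadlock bound $50$. Reconciling these extremal choices across the coupled heating, cooling, and sensing nondeterminism is the delicate, calculation-heavy core of the argument.
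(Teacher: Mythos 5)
You follow essentially the same route as the paper's own proof: reduce both $\sqsubseteq$ and $\sqsubseteq_{m..n'}$ to locating the slots in which $\mathit{Sys}\parallel A_n$ can fire $\unsafe$, $\dead$, or an output on $\mathit{alarm}$ (legitimate, since Proposition~\ref{prop:sys} confines $\mathit{Sys}$ to $\tau$- and $\tick$-actions), compute the reachable envelope of $\mathit{temp}$ and $\mathit{stress}$ under the $-4$ forgery (this is the paper's Lemma~\ref{lem:sys2}, giving $\mathit{temp}\in[0,15.5]$ and hence no deadlock), and pin the two endpoints with extremal witness traces plus a uniform recovery bound. Your $n=9$ witness (fastest heating, one-slot cooling delay, slowest cooling, $\mathit{stress}=5$ at slot $14$, $\mathit{temp}\leq 9.6$ at slot $15$, reset at $16$) is exactly the paper's, and your recovery estimate for $n\geq 10$ (ten cooling slots at rate at least $0.6$ force $\mathit{temp}\leq 15.5-6=9.5<9.9$ after the forgery stops) is the paper's final fact; your saturation witness for $n\ge 10$ differs only cosmetically from the paper's temp-$14$ cycling construction, and both need the same ramp-flexibility argument to place a high temperature at the arbitrary slot $n$.

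One side-claim should be dropped when you flesh this out, because it is false as stated: for $n\geq 10$ the IDS is \emph{not} ``driven to raise the alarm with at most a two-slot delay'' after the slot-$14$ violation, and your slot-$14$ witness cannot fire any alarm at all. Under the forgery, cooling can only be triggered when $\mathit{temp}>13.9$ (otherwise the forged reading cannot exceed $10$), and five cooling slots at rate at least $0.6$ bring $\mathit{temp}$ down to at most $15.5-3=12.5$ by the time the IDS reads, which it then sees as at most $12.5-3.9=8.6<10$; hence no alarm can occur in slots $1..n$ — the attack is stealthy during its entire lifetime, exactly as the paper's \Uppaal{} analysis reports (alarms only in $n{+}1..n{+}11$). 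This does not sink your plan, because the $\unsafe$ actions alone carry both endpoints: your saturated trace, continued by refusing cooling at slot $n$ (forged reading $9.9$), heating to $15.4$ at slot $n{+}1$, and then cooling at the slowest rate, is still unsafe at slot $n{+}11$. But the witness obligations — including the one at the upper endpoint $n{+}11$, which your final paragraph leaves implicit by asking only for recovery \emph{after} that slot — must be phrased in terms of $\unsafe$, not of alarms.
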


\begin{figure}[t]	
	\centering
	\includegraphics[scale=0.5]{./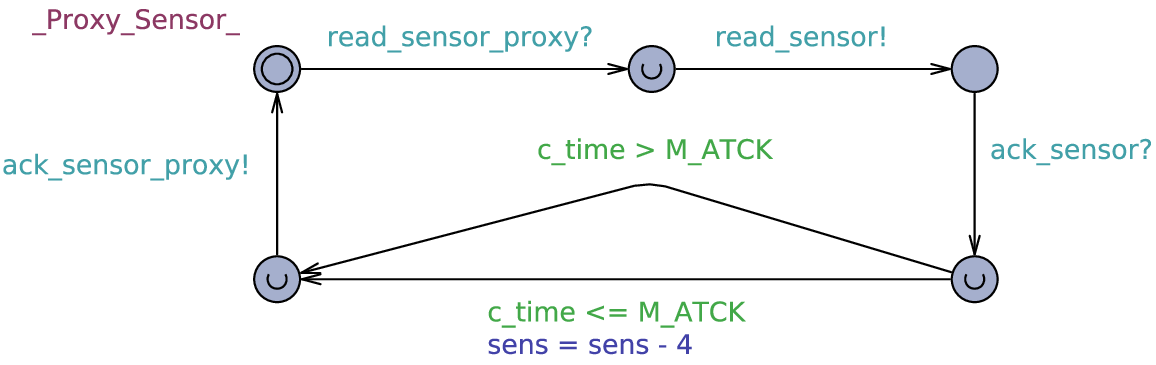} 
	\caption{\Uppaal{} SMC model for the attacker $A_n$ of Example~\ref{exa:att:integrity}}
	\label{fig:attack3_auto}
\end{figure}

Here, we verify the \Uppaal{} SMC model of $\mathit{Sys}$ in which we replace the \emph{\_Proxy\_Sensor\_} automaton of Figure~\ref{fig:network} with a compromised one, provided in Figure~\ref{fig:attack3_auto}, and implementing the MITM activities of the attacker $A_n$ of Example~\ref{exa:att:integrity}. The interested reader may find the proof in the appendix.

We have done our analysis, with a $99\%$ accuracy, for execution traces that are at most $1000$ time units long, and assuming that the \emph{duration of the attack} $n$ may vary in the integer interval $1..300$. The results of our analysis are: 

\begin{itemize}
	\item when $n \in 1..8$, the system under attack remains safe, deadlock free, and alarm free, i.e., all three properties $\square_{[1,1000]}(\mathit{safe})$, $\square_{[1,1000]}(\neg\mathit{deadlock})$, and $\square_{[1,1000]}(\neg\mathit{alarm})$ hold with  probability $0.99$; 
	\item when $n = 9$, we have the following situation: 
	\begin{itemize}
		\item the system under attack is deadlock free, i.e., the property $\square_{[1,1000]}(\neg\mathit{deadlock})$ holds with probability \nolinebreak $0.99$; 
		\item the system remains safe and alarm free, except for the time interval $14..15$, i.e., all the following properties $\square_{[1,13]}(\mathit{safe})$, $\square_{[1,13]}(\neg\mathit{alarm})$,  $\square_{[16,1000]}(\mathit{safe})$,  and $\square_{[16,1000]}(\neg\mathit{alarm})$ hold with  probability $0.99$;  
		\item in the time interval $14..15$, we may have violations of safety conditions, as the property $\lozenge_{[0,14]}(\neg \mathit{safe} \, \wedge \, global\_clock \ge 14)$	is satisfied with a probability $0.62$, while the property $\lozenge_{[0,15]}(\neg \mathit{safe} \, \wedge \, global\_clock \ge 15)$ is satisfied with probability $0.21$; both violations are \emph{stealthy} as the property $\square_{[14,15]}(\neg \mathit{alarm})$ holds with probability $0.99$; 
	\end{itemize}
	\item when $n \geq 10$, we have the following situation: 
	\begin{itemize}
		\item the system is deadlock free, i.e., the property {{\small $\square_{[1,1000]}(\neg\mathit{deadlock})$}} holds with probability \nolinebreak $0.99$; 
		\item the system remains safe  except for the time interval $14..n{+}11$, i.e., the two properties $\square_{[1,13]}(\mathit{safe})$ and  $\square_{[n+12,1000]}(\mathit{safe})$  hold with  probability $0.99$; 
		\item the system is alarm free except for the time interval $n{+}1..n{+}11$,
		i.e., the two properties   $\square_{[0,n]}(\neg\mathit{alarm})$ and $\square_{[n+12,1000]}(\neg\mathit{alarm})$ hold with  probability $0.99$; 
		\item in the $14^\mathrm{th}$ time slot the system under attack may reach an unsafe state as the probability, for $n \in 10..300$, that the property 
		$\lozenge_{[0,14]}(\neg \mathit{safe} \, \wedge \, global\_clock \ge 14)$ is satisfied stabilises around $0.548$;
		\item  once the attack has terminated, in the time interval $n{+}1..n{+}11$, 
		the system under attack has good chances to reach an unsafe state as the probability, for $n \in 10..300$, that the property 
		$\lozenge_{[0,n{+}11]}(\neg \mathit{safe} \, \wedge \, n{+}1 \leq global\_clock \leq n{+}11)$ is satisfied stabilises around $0.672$;  
		\item the violations of the safety conditions remain completely stealthy only up to the duration $n$ of the attack (we recall that $\square_{[0,n]}(\neg\mathit{alarm})$ is satisfied with probability $0.99$); the probability, for $n \in 10..300$,  that the property $\lozenge_{[0,{n{+}11}]}(\mathit{alarm})$ is satisfied stabilises around $0.13$; thus, in the time interval $n{+}1..n{+}11$, only a small portion of violations of safety conditions are detected by the IDS while a great majority of them remains stealthy. 
	\end{itemize}
\end{itemize}

\subsection{A technique for proving attack tolerance/vulnerability}
In this subsection, we provide sufficient criteria to prove attack
tolerance/vulnerability to attacks of an arbitrary class $C$. Actually, we
do more than that: we provide sufficient criteria to prove attack
tolerance/vulnerability to all attacks of any class $C'$ that is somehow
``weaker'' than a given class $C$.
\begin{definition} Let $C_i \in  [ \I \rightarrow {\cal P}(m_i..n_i) ]$, 
	for $i \in \{1,2 \}$, be 
	two classes of attacks, with $m_1..n_1 \subseteq m_2..n_2$. We say that $C_1$ is \emph{weaker} than $C_2$, written 
	$C_1 \preceq C_2$, if $C_1(\iota) \subseteq C_2(\iota)$ for any $\iota 
	\in \I$.
\end{definition}
Intuitively, if $C_1 \preceq C_2$ then: 
\begin{inparaenum}[(i)]
	\item the attacks of class $C_1$ might achieve fewer malicious activities than any attack of class $C_2$ (formally, there may be $\iota \in \I$ such that $C_1(\iota)=\emptyset$ and $C_2(\iota)\neq\emptyset$); 
	\item for those malicious activities $\iota \in \I$ achieved by the attacks of both classes $C_1$ and $C_2$ (i.e., $C_1(\iota) \neq \emptyset$ and $C_2(\iota) \neq \emptyset$), if they may be perpetrated by the  attacks of class $C_1$ at some time slot $k \in m_1..n_1$ (i.e., $k \in C_1(\iota)$) then all  attacks of class $C_2$ may do the same activity $\iota$ at the same time $k$ (i.e., $k \in C_2(\iota)$).
\end{inparaenum}

The next objective is to define a notion of \emph{most powerful attack} (also called \emph{top attacker}) of a given class $C$, such that, if a \CPS{} $M$ tolerates the most powerful attack of class $C$ then it also tolerates \emph{any} attack of class $C'$, with $C' \preceq C$. We will provide a similar condition for attack vulnerability: let $M$ be a \CPS{} vulnerable to $\mathit{Top}(C)$ in the time interval $m_1..n_1$; then, for any attack $A$ of class $C'$, with $C' \preceq C$, if $M$ is 
vulnerable to $A$ then it is so for a smaller time interval $m_2..n_2 \subseteq m_1..n_1$.

Our notion of top attacker has two extra ingredients with respect to the
physics-based attacks seen up to now: (i)~\emph{nondeterminism}, and (ii)~time-unguarded recursive processes. This extra power of the top attacker is not a problem as we are looking for sufficient criteria.

With respect to nondeterminism, we assume a generic procedure $\mathit{rnd}()$ that given an arbitrary set ${\cal Z}$ returns an element of ${\cal Z}$ chosen in a nondeterministic manner. This procedure allows us to express \emph{nondeterministic choice}, $P \oplus Q$, as an abbreviation for the process $\ifelse {\mathit{rnd}(\{\true,\false\})}  P Q $. Thus, let $\iota \in \{ \mbox{\Lightning}p ? : p \in {\cal S} \cup {\cal A} \} \cup \{ \mbox{\Lightning}p ! \, : p \in {\cal S} \cup {\cal A} \}$,  $m \in \mathbb{N}^{+}$, $n \in \mathbb{N}^{+} \cup
\{\infty\}$, with $m \leq n$, and ${\cal T} \subseteq m..n$, we define the attack process $ \mathit{Att}(\iota , k, {\cal T})$\footnote{In case of sensor sniffing, we might avoid to add this specific attack process as our top attacker process can forge any possible value without need to read sensors.} as the attack which may achieve the malicious activity $\iota$, at the time slot $k$, and which tries to do the same in all subsequent time slots of ${\cal T}$.  Formally, 
\begin{displaymath}
{\small
	\begin{array}{l}
	\mathit{Att}( \mbox{\Lightning}a?, k, {\cal T})   = 
	\ifthen {k \in {\cal T}} {  
		(\lfloor\drop x {a}.\mathit{Att}( \mbox{\Lightning}a?, k, {\cal T}) \rfloor 
		{\mathit{Att}( \mbox{\Lightning}a?, k{+}1, {\cal T})})  \, \oplus \,   \tick.  \mathit{Att}( \mbox{\Lightning}a? , k{+}1, {\cal T})} \:
	\\
	\hspace*{3.55cm}  \mathsf{else}\: \{  \ifelse { k < \mathrm{sup}({\cal T})}
	{\tick. \mathit{Att}( \mbox{\Lightning}a?, k{+}1, {\cal T})}
	{\nil} \}\\[2pt]
	\mathit{Att}( \mbox{\Lightning}s?, k, {\cal T})   = 
	\ifthen {k \in {\cal T}} {  
		(\lfloor\sniff x {s}.\mathit{Att}( \mbox{\Lightning}s?, k, {\cal T}) \rfloor 
		{\mathit{Att}( \mbox{\Lightning}s?, k{+}1, {\cal T})})  \, \oplus \,   \tick.  \mathit{Att}( \mbox{\Lightning}s? , k{+}1, {\cal T})} \:
	\\
	\hspace*{3.55cm}  \mathsf{else}\:  \{ \ifelse { k < \mathrm{sup}({\cal T})}
	{\tick. \mathit{Att}( \mbox{\Lightning}s?, k{+}1, {\cal T})}
	{\nil}  \}\\[2pt]
	\mathit{Att}( \mbox{\Lightning}p!, k, {\cal T})   =  \ifthen {k \in {\cal T}} {
		(\lfloor \forge {\mathit{rnd}(\mathbb{R})} {p}.\mathit{Att}( \mbox{\Lightning}p!, k, {\cal T}) \rfloor  
		{\mathit{Att}( \mbox{\Lightning}p!, k{+}1, {\cal T})})  \oplus   \tick.  \mathit{Att}( \mbox{\Lightning}p! , k{+}1, {\cal T})} \: 
	\\
	\hspace*{3.55cm} \mathsf{else}  \: \{  \ifelse {k < \mathrm{sup}({\cal T})}
	{\tick. \mathit{Att}( \mbox{\Lightning}p! , k{+}1, {\cal T})} {\nil} \}
	\, . 
	\end{array}
}
\end{displaymath}

Note that, for ${\cal T} = \emptyset$, we assume $\mathrm{sup}({\cal T})=- \infty$. 
We can now use the definition above to formalise the notion of most powerful attack of a given class $C$. 
\begin{definition}[Top attacker]
	Let $C \in [ \I \rightarrow {\cal P}(m..n)]$ be a class of attacks. We define 
	\begin{displaymath}
	\mathit{Top}(C)  \q = \q  \prod_{\iota \in \I } \mathit{Att}( 
	\iota , 1 , C(\iota))
	\end{displaymath}
	as the \emph{most powerful attack}, or \emph{top attacker}, of class $C$. 
\end{definition}

The following theorem provides soundness criteria for attack tolerance and attack vulnerability. 
\begin{theorem}[Soundness criteria]
	\label{thm:sound}
	Let $M$ be an honest and sound \CPS{}, $C$ an arbitrary class of attacks, and $A$ an attack of a class $C'$, with $C' \preceq C$.
	\begin{itemize}
		\item If $M \parallel \mathit{Top}(C) \, \sqsubseteq \, M$ then  
		$M \parallel A \, \sqsubseteq  \, M$. 
		\item If $M \parallel \mathit{Top}(C) \sqsubseteq_{{m_1}..{n_1}} M$  then either $M \parallel A \sqsubseteq M$ or $M \parallel A \sqsubseteq_{{m_2}..{n_2}} M$, with $m_2..n_2 \subseteq m_1..n_1$. 
	\end{itemize}
\end{theorem}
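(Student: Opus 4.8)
The plan is to isolate a single \emph{domination lemma} and then obtain both items by composing it with the hypotheses through the preorder structure. Concretely, I would first prove that for every attack $A$ of a class $C' \preceq C$,
\begin{displaymath}
M \parallel A \q \sqsubseteq \q M \parallel \mathit{Top}(C) \, ,
\end{displaymath}
i.e.\ the top attacker of class $C$ reproduces, as far as the observable behaviour of the compound system is concerned, every behaviour induced by any weaker attack. Granting this, the first item is immediate: since $\sqsubseteq$ is a preorder, transitivity combines $M \parallel A \sqsubseteq M \parallel \mathit{Top}(C)$ with the hypothesis $M \parallel \mathit{Top}(C) \sqsubseteq M$ to yield $M \parallel A \sqsubseteq M$.

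For the domination lemma I would exhibit a \emph{weak simulation} $\rel$ of the states reachable from $M \parallel A$ by the states reachable from $M \parallel \mathit{Top}(C)$; a routine induction then shows that the existence of such a relation entails $M \parallel A \sqsubseteq M \parallel \mathit{Top}(C)$, and moreover $\rel \subseteq \sqsubseteq$. The relation keeps the physical component and the honest process of $M$ identical on the two sides, and relates the residual of $A$ after $k{-}1$ time slots with the residual $\prod_{\iota} \mathit{Att}(\iota,k,C(\iota))$ of the top attacker after the same number of ticks, so the two attackers stay \emph{time-synchronised}. The simulation is established by a case analysis on the step performed on the left: timed ($\tick$) and purely internal steps (conditional resolution, recursion unfolding) are matched verbatim, preserving synchronisation. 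The delicate cases are the system-level $\tau$-actions produced by a malicious activity $\iota$ fired by $A$ in time slot $k$ (sensor sniffing, sensor integrity, actuator dropping and actuator integrity): since $A$ is of class $C'$ we have $k \in C'(\iota) \subseteq C(\iota)$ because $C' \preceq C$, so the component $\mathit{Att}(\iota,k,C(\iota))$ is exactly in a state where it may resolve its nondeterministic choice $\oplus$ in favour of the malicious branch and perform the same $\iota$; for forging actions the value is matched by instantiating $\mathit{rnd}(\mathbb{R})$ to the value chosen by $A$. Because $M$ is honest, no malicious action originates from $M$ itself, so these are the only $\tau$-actions needing care, and because $M$ is sound the invariant premise $\statefun{} \in \invariantfun{}$ of the system rules holds identically on both sides, keeping the two physical states aligned.

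It remains to derive the second item from the domination lemma and the hypothesis $M \parallel \mathit{Top}(C) \sqsubseteq_{m_1..n_1} M$. If $M \parallel A \sqsubseteq M$ we are in the first alternative; otherwise there is a genuine divergence, so $M \parallel A \sqsubseteq_{m_2..n_2} M$ for some $m_2 \leq n_2$, and I must show $m_2..n_2 \subseteq m_1..n_1$. For the lower bound $m_1 \leq m_2$: a trace $t$ of $M \parallel A$ with $\#\tick(t)=m_2{-}1$ witnessing the divergence (so $M \not\!\Trans{\hat{t}}$) is, by the domination lemma, a weak trace of $M \parallel \mathit{Top}(C)$; its underlying strong trace has the same tick-count and the same weak closure, hence witnesses a divergence of $M \parallel \mathit{Top}(C)$ from $M$ at time $m_2$, and the minimality of $m_1$ forces $m_1 \leq m_2$. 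For the upper bound $n_2 \leq n_1$: given $M \parallel A \trans{r} L'$ with $\#\tick(r)=n_1{-}1$, the simulation provides $M \parallel \mathit{Top}(C) \Trans{\hat{r}} T'$ with $L' \rel T'$; passing to the underlying strong trace (same tick-count $n_1{-}1$) and applying the resumption clause of $M \parallel \mathit{Top}(C) \sqsubseteq_{m_1..n_1} M$ yields $M \trans{r_2} M'$ with matching ticks and $T' \sqsubseteq M'$; since $\rel \subseteq \sqsubseteq$ and $\sqsubseteq$ is transitive, $L' \sqsubseteq T' \sqsubseteq M'$, so the resumption clause of Definition~\ref{Time-bounded-trace-equivalence} already holds for $M \parallel A$ at $n_1$, whence $n_2 \leq n_1$.

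The main obstacle is the domination lemma, and within it the bookkeeping that keeps the two attackers time-synchronised while one reproduces the other's nondeterministic and value-level choices. In particular one must check that the \emph{pre-emptive} malicious actions --- sensor forging (which also realises a sensor DoS) and actuator dropping --- are reproduced with exactly the same pre-emptive effect on the honest controller of $M$, so that sensor reads that are blocked, actuator writes that are dropped, and values that are forged coincide on the two sides and the physical states never drift apart. A secondary technical point, handled above, is the passage between the weak transitions delivered by the simulation and the strong transitions used in the second clause of Definition~\ref{Time-bounded-trace-equivalence}; this is harmless because inserting $\tau$-steps changes neither the tick-count nor the weak closure of a trace.
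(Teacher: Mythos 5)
Your proposal is correct and follows essentially the same route as the paper: your domination lemma is precisely the paper's Lemma~\ref{lem:top}, proved there by induction on the traces of $M \parallel A$ with the same key moves (resolving the nondeterministic choices of $\mathit{Top}(C)$ in favour of the malicious branch, instantiating $\mathit{rnd}(\mathbb{R})$ with the value forged by $A$, and using $k \in C'(\iota) \subseteq C(\iota)$ together with honesty and soundness of $M$ to keep the honest part and the physical state identical on both sides), and the two items are then derived exactly as you do, via transitivity of $\sqsubseteq$ and the minimality/infimum clauses of $\sqsubseteq_{m..n}$, the paper merely phrasing the second item as a proof by contradiction rather than directly. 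The only real difference is presentational: you package the lemma as a time-synchronised weak simulation with $\rel \subseteq \sqsubseteq$, which makes the residual-domination step in your upper-bound argument explicit, whereas the paper states its lemma only for initial configurations and is terse (``we reason as in the previous case'') at the corresponding point.
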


\begin{corollary}
	Let $M$ be an honest and sound \CPS{}, and $C$ a class of attacks. If
	$\mathit{Top}(C)$ is not lethal for $M$ then any attack $A$ of class $C'$, with $C' \preceq C$, is not lethal for $M$. If $\mathit{Top}(C)$
	is not a permanent attack for $M$, then any attack $A$ of class $C'$, with $C' \preceq C$, is not a permanent attack for $M$.
\end{corollary}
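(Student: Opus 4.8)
The statement has two parts, and the plan is to treat the \emph{permanent} clause as an immediate consequence of the dichotomy underlying Theorem~\ref{thm:sound}, and the \emph{lethal} clause through the deadlock-sensitivity of the trace preorder recorded in Remark~\ref{rem:deadlock}. Throughout I would rely on the observation (built into Definitions~\ref{def:attack-tolerance} and~\ref{Time-bounded-trace-equivalence}) that, for an honest sound $M$ and any attack $B$, exactly one of the following holds: either $M \parallel B \sqsubseteq M$, so that $B$ is tolerated, or there is a unique interval $m..n$ with $M \parallel B \sqsubseteq_{m..n} M$, so that $M$ is vulnerable to $B$; in the latter case $B$ is \emph{permanent} precisely when $n=\infty$.

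For the permanent clause I would argue as follows. Assume $\mathit{Top}(C)$ is not a permanent attack for $M$. By the dichotomy there are two cases. If $M \parallel \mathit{Top}(C) \sqsubseteq M$, then the first item of Theorem~\ref{thm:sound} gives $M \parallel A \sqsubseteq M$, so $A$ is tolerated and hence not permanent. Otherwise $M \parallel \mathit{Top}(C) \sqsubseteq_{m_1..n_1} M$ with $n_1<\infty$, finiteness being exactly the failure of permanence in the vulnerable case. Then the second item of Theorem~\ref{thm:sound} gives either $M \parallel A \sqsubseteq M$, in which case $A$ is tolerated, or $M \parallel A \sqsubseteq_{m_2..n_2} M$ with $m_2..n_2 \subseteq m_1..n_1$. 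Since interval inclusion forces $n_2 \leq n_1 < \infty$, the attack $A$ is at most temporary, hence not permanent. This closes the permanent clause using only Theorem~\ref{thm:sound}.

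The lethal clause needs a finer fact than the two items of Theorem~\ref{thm:sound}, because a permanent attack need not reach $\dead$, so lethality cannot be recovered from vulnerability alone. Here I would reuse the \emph{universality lemma} that drives the proof of Theorem~\ref{thm:sound}, namely $M \parallel A \sqsubseteq M \parallel \mathit{Top}(C)$ whenever $A$ is of a class $C' \preceq C$ (the top attacker can replay, step for step, any malicious action $A$ performs at an admissible time slot, using its nondeterminism to choose when to fire and $\mathit{rnd}(\mathbb{R})$ to forge the matching value). Given this, the lethal clause is immediate: $A$ is lethal for $M$ iff $M \parallel A$ eventually emits $\dead$, and by Remark~\ref{rem:deadlock} (and the accompanying Lemma~\ref{lem:trace-deadlock}) the relation $M \parallel A \sqsubseteq M \parallel \mathit{Top}(C)$ guarantees that $M \parallel A$ eventually deadlocks iff $M \parallel \mathit{Top}(C)$ does. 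Contrapositively, if $\mathit{Top}(C)$ is not lethal for $M$ then $M \parallel \mathit{Top}(C)$ never reaches $\dead$, whence $M \parallel A$ never reaches $\dead$, i.e.\ $A$ is not lethal.

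I expect the only real work to lie in the universality lemma invoked for the lethal clause; since that lemma is already needed to prove Theorem~\ref{thm:sound}, the additional effort for the corollary is confined to matching the notions: recognising that \emph{lethal} coincides with \emph{eventually deadlocks} so that Remark~\ref{rem:deadlock} applies, and that \emph{not permanent} is exactly the finiteness of the vulnerability endpoint so that interval inclusion $m_2..n_2 \subseteq m_1..n_1$ transfers it to $A$. A minor point to verify is the boundary case in which $\mathit{Top}(C)$ is tolerated, i.e.\ $M \parallel \mathit{Top}(C) \sqsubseteq M$, where soundness of $M$ together with deadlock-sensitivity already preclude lethality of $A$, consistently with both clauses.
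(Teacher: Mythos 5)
Your proof is correct and follows exactly the route the paper intends: the permanent clause is the stated application of Theorem~\ref{thm:sound} (case split on whether $\mathit{Top}(C)$ is tolerated or vulnerable with finite right endpoint, then interval inclusion $m_2..n_2 \subseteq m_1..n_1$ transfers finiteness to $A$), and the lethal clause correctly recognises that one needs the simulation fact underlying that theorem, namely Lemma~\ref{lem:top} (equivalently $M \parallel A \sqsubseteq M \parallel \mathit{Top}(C)$), so that a trace of $M \parallel A$ ending in $\dead$ is weakly matched by $M \parallel \mathit{Top}(C)$, contradicting non-lethality of $\mathit{Top}(C)$. The only cosmetic point is that you invoke the ``iff'' of Remark~\ref{rem:deadlock}, whereas your argument only uses (and only needs) the forward direction, which is the one that soundly follows from trace inclusion.
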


The following  example illustrates 
how Theorem~\ref{thm:sound} could be used to infer attack tolerance/vulnerability with respect to an entire class of attacks. 
\begin{example}\label{exa:top-attacker}
	Consider our running example $\mathit{Sys}$ and a class of attacks
	$C_m$, for $m \in \mathbb{N}$, such that 
	$C_m(\mbox{\Lightning}cool?)=C_m(\mbox{\Lightning}cool!)=\{ m \} $ and $C_m(\iota) = \emptyset$, for $\iota \not \in \{
	\mbox{\Lightning}cool? , \mbox{\Lightning}cool! \}$. 
	Attacks of class $C_m$ may tamper with the actuator $\mathit{cool}$ only in the time slot $m$ (i.e., in the time interval $m..m$). 
	The attack $A_m$ of Example~\ref{exa:att:DoS} is of class $C_m$.
\end{example}

In the following analysis in \Uppaal{} SMC of the top attacker $\mathit{Top}(C_m)$, we will show that both the vulnerability window and the probability of successfully attacking the system represent an upper bound for the attack $A_m$ of Example~\ref{exa:att:DoS} of class $C_m$. Technically, we verify the \Uppaal{} SMC 
model of $\mathit{Sys}$ in which we replace the  \emph{\_Proxy\_Actuator\_} automaton of Figure~\ref{fig:network} with a compromised one, provided in Figure~\ref{fig:attack_top}, and implementing the activities of the top attacker $\mathit{Top}(C_m)$. We carry out our analysis with a $99\%$ accuracy,  for execution traces that are at most $1000$ time slots long, limiting the \emph{attack time} $m$ to the integer interval $1..300$.

\begin{figure}[t]	
	\centering
	\includegraphics[scale=0.5]{./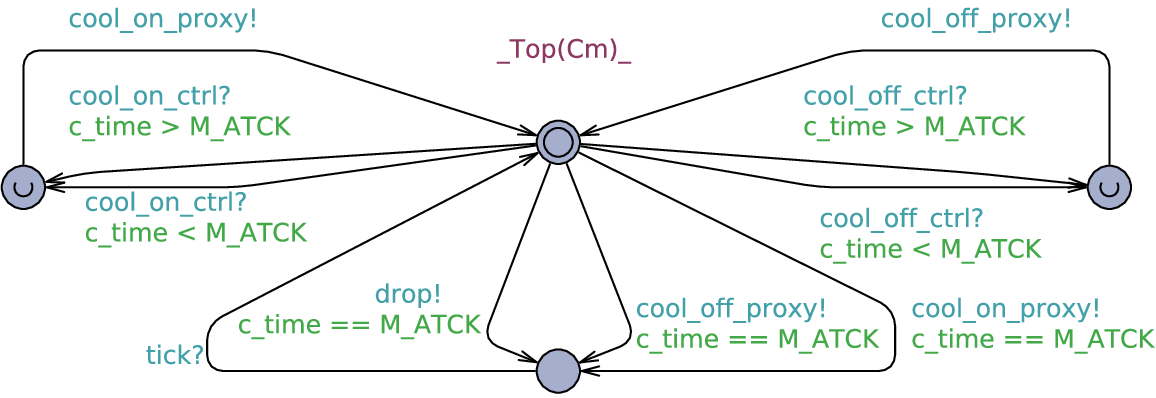} 
	\caption{\Uppaal{} SMC model for the top attacker $\mathit{Top}(C_m)$ of Example~\ref{exa:top-attacker}}
	\label{fig:attack_top}
\end{figure}

\begin{figure}[t]	
	\centering
	\includegraphics[scale=0.45]{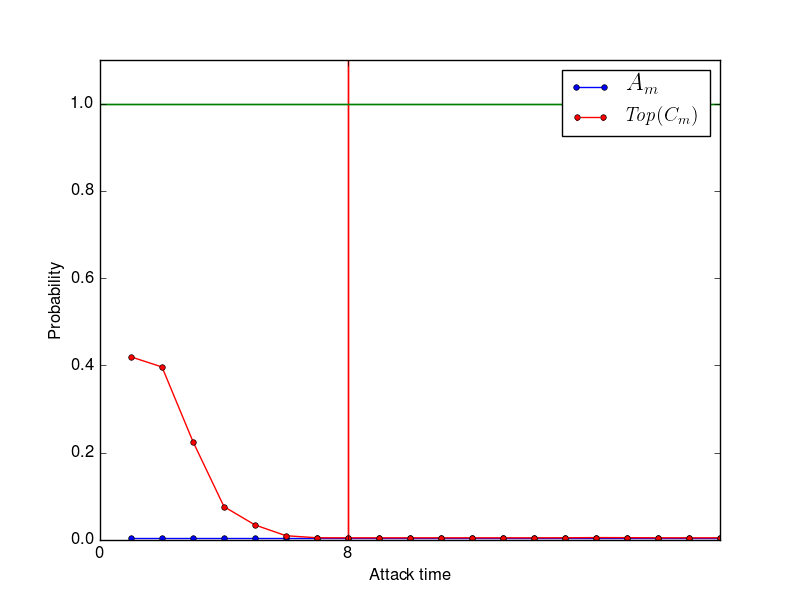} 
	\caption{Results of 
		$\lozenge_{[0,m+3]}(\mathit{deadlock} \wedge \mathit{global\_clock}\ge m+1 )$ by varying the attack time $m$}
	\label{fig:top_attacker_early deadlock}
\end{figure}

\begin{figure}[t]	
	\centering
	\includegraphics[scale=0.45]{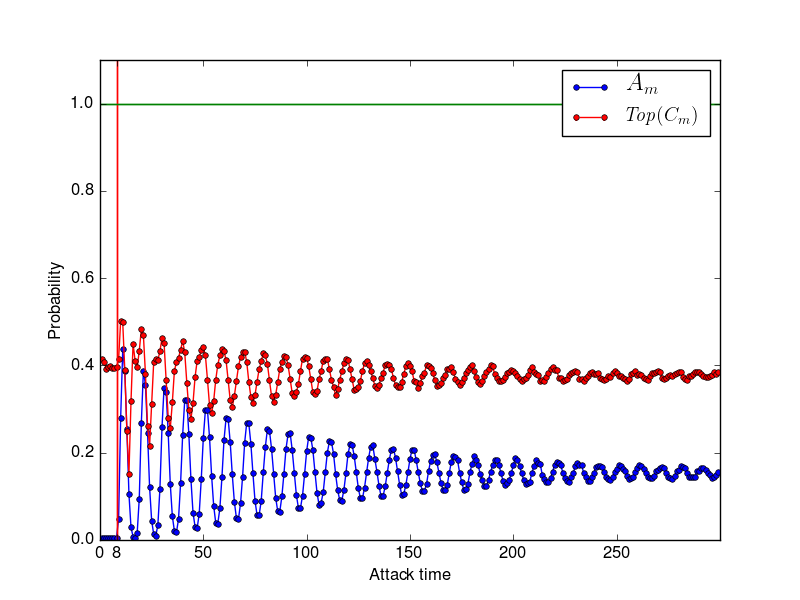} 
	\caption{Results of 
		$\lozenge_{[0,1000]}(\mathit{deadlock} \wedge \mathit{global\_clock}\ge m+4 )$
		by varying the attack  time $m$}
	\label{fig:top_attacker_deadlock}
\end{figure}

To explain our analysis further, let us provide details on how 
$\mathit{Top}(C_m)$ affects $\mathit{Sys}$ when compared to the  attacker $A_m$ of class $C_m$ seen in the Example~\ref{exa:att:DoS}. 
\begin{itemize}
	\item In the time interval $1..m$, the attacked system remains safe, deadlock free, and alarm free. Formally, the three properties $\square_{[1,m]}(\mathit{safe})$, $\square_{[1,m]}(\neg\mathit{deadlock})$ and $\square_{[1,m]}(\neg\mathit{alarm})$  hold with  probability $0.99$. Thus, in this time interval, the top attacker is harmless, as well as $A_m$.  
	\item In the time interval $m{+}1..m{+}3$, the system exposed to the top attacker may deadlock when $m \in 1..8$; for $m>8$ the system under attack is deadlock free (see Figure~\ref{fig:top_attacker_early deadlock}). This is because the top attacker, unlike the  attacker $A_m$, can forge in the first $8$ time slots cool-on commands turning on the cooling and dropping the temperature below zero in the time interval $m{+}1..m{+}3$. 
	Note that no alarms or unsafe behaviours occur in this case, as neither the safety process nor the IDS check whether the temperature drops below a certain threshold. Formally, the properties $\square_{[m+1,m+3]}(\mathit{safe})$ and $\square_{[m+1,m+3]}(\neg\mathit{alarm})$ hold with probability $0.99$, as already seen for the attacker $A_m$.
	\item In the time interval $m{+}4..1000$,  the top attacker has better chances to deadlock the system when compared with the attacker  $A_m$ (see Figure~\ref{fig:top_attacker_deadlock}). With respect to safety and alarms, 
	the top attacker and the attacker $A_m$ have the same probability of success (the properties $\square_{[m+4,1000]}(\mathit{safe})$ and $\square_{[m+4,1000]}(\neg\mathit{alarm})$ return the same probability results). 
\end{itemize}

This example shows how the verification of a top attacker $\mathit{Top}(C)$  provides an upper bound of the effectiveness of the entire class of attacks $C$,  in terms of both vulnerability window and probability of 
successfully attack the system. Of course, the accuracy of such approximation cannot be  estimated a priori.

%%%%%%%%%%%%%%%%%%%%%%%%%%%%%%%%%%%%%%%%%%%%%%%%%%%%%%%%%%%%
%%%%%%%%%%%%% IMPACT %%%%%%%%%%%%%%%
\section{Impact of a physics-based attack}  
\label{sec:impact}

In the previous section,  we have grouped physics-based attacks by
focussing on the physical devices under attack and the timing aspects of
the attack (Definition~\ref{def:attacker-class}). 
Then, we have provided a formalisation of when a \CPS{} should be considered tolerant/vulnerable to an attack  
(Definition~\ref{def:attack-tolerance}). 
In this section, we show that these two formalisations are important not only to demonstrate the tolerance (or vulnerability) of a CPS with respect to certain attacks, but also to evaluate the disruptive impact of those attacks on the target \CPS{}~\cite{GeKiHa2015,Milosevic2018a}.

The goal of this section is to provide a \emph{formal metric} to
estimate the impact of a \emph{successful attack} on the \emph{physical behaviour} of a \CPS{}. 
In particular, we focus on the ability that an attack may have to drag a CPS out of the correct behaviour modelled by its evolution map, with the given uncertainty. 

Recall that $\evolmap{}$ is 
\emph{monotone} with respect to the uncertainty. Thus, as stated in Proposition~\ref{prop:monotonicity}, an increase of the uncertainty may translate into a widening of the range of the possible behaviours of the \CPS{}.
In the following, given the physical environment $E =  \envCPS
{\evolmap{}}
{\measmap{}}
{\invariantfun{}}
{\safefun{}}
{\uncertaintyfun{}}{\errorfun{}}
$, we write $\replaceENV
E {\uncertaintyfun{}} {\uncertaintyfun'{}}$ as an abbreviation for $ \envCPS
{\evolmap{}}
{\measmap{}}
{\invariantfun{}}
{\safefun{}}
{\uncertaintyfun'{}}{\errorfun{}}
$; similarly, 
for $M = \confCPS {E;S}  P$ we write $\replaceENV
M {\uncertaintyfun{}} {\uncertaintyfun'{}}$ for  $\confCPS {{\replaceENV E {\uncertaintyfun{}} {\uncertaintyfun'{}}};S}  P$.
\begin{proposition}[Monotonicity] 
	\label{prop:monotonicity}
	Let $M$ be an honest and sound \CPS{} with uncertainty $\uncertaintyfun{}$. If $\uncertaintyfun{} \leq \uncertaintyfun'{}$ and $M \trans{t} M'$ then $\replaceENV M {\uncertaintyfun{}}  {\uncertaintyfun'{}} \trans{t} \replaceENV {M'} {\uncertaintyfun{}} {\uncertaintyfun'{}}$.
\end{proposition}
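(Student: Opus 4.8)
The plan is to reduce the statement to a single-step preservation property and then lift it to arbitrary traces by a straightforward induction. The crucial observation is that, among all the rules of Table~\ref{tab:lts_systems}, the uncertainty function $\uncertaintyfun{}$ occurs \emph{only} in rule \rulename{Time}, and there only through the set $\operatorname{next}(E;S)$, whose definition refers to $\evolmap{}(\statefun{}, \actuatorfun{}, \uncertaintyfun{})$. Every other side condition appearing in the rules --- the invariant test $\statefun{} \in \invariantfun{}$, the safety test, the sensor condition $\sensorfun'{} \in \measmap{}(\statefun'{}, \errorfun{})$, the actuator updates, and the process-level premises of Table~\ref{tab:lts_processes} --- is completely independent of the uncertainty. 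Hence replacing $\uncertaintyfun{}$ by any $\uncertaintyfun'{} \geq \uncertaintyfun{}$ leaves the applicability of every rule except \rulename{Time} untouched.

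First I would establish the single-step claim: if $\uncertaintyfun{} \leq \uncertaintyfun'{}$ and $\confCPS{E;S}{P} \trans{\alpha} \confCPS{E;S'}{P'}$, then $\confCPS{\replaceENV E {\uncertaintyfun{}}{\uncertaintyfun'{}};S}{P} \trans{\alpha} \confCPS{\replaceENV E {\uncertaintyfun{}}{\uncertaintyfun'{}};S'}{P'}$. This is proved by a case analysis on the last rule used. For all actions other than $\tick$ --- rules \rulename{Out}, \rulename{Inp}, \rulename{SensRead}, \rulename{ActWrite}, \rulename{Tau}, \rulename{Deadlock}, \rulename{Safety}, together with the sensor-sniffing and actuator-integrity rules --- the derivation is literally the same, since neither the premises nor the resulting state $S'$ mention $\uncertaintyfun{}$. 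The only interesting case is rule \rulename{Time}: here $S' = \langle \statefun'{}, \sensorfun'{}, \actuatorfun'{} \rangle \in \operatorname{next}(E;S)$, which in particular requires $\statefun'{} \in \evolmap{}(\statefun{}, \actuatorfun{}, \uncertaintyfun{})$. By the stated monotonicity of the evolution map with respect to uncertainty, $\evolmap{}(\statefun{}, \actuatorfun{}, \uncertaintyfun{}) \subseteq \evolmap{}(\statefun{}, \actuatorfun{}, \uncertaintyfun'{})$, so $\statefun'{} \in \evolmap{}(\statefun{}, \actuatorfun{}, \uncertaintyfun'{})$; the remaining two conjuncts of the definition of $\operatorname{next}$, namely $\sensorfun'{} \in \measmap{}(\statefun'{}, \errorfun{})$ and $\actuatorfun'{} = \actuatorfun{}$, are unaffected. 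Therefore the very same successor state $S'$ lies in $\operatorname{next}(\replaceENV E {\uncertaintyfun{}}{\uncertaintyfun'{}};S)$, and rule \rulename{Time}, with the unchanged process premise $P \trans{\tick} P'$ and the unchanged invariant test, fires with the identical conclusion.

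Finally I would lift this to the full trace. Writing $M \trans{t} M'$ as $M = M_0 \trans{\alpha_1} M_1 \trans{\alpha_2} \cdots \trans{\alpha_n} M_n = M'$ with $t = \alpha_1 \cdots \alpha_n$, and recalling that the physical environment $E$ does not change at runtime, each intermediate configuration has the form $M_i = \confCPS{E;S_i}{P_i}$. Applying the single-step claim to each transition yields $\replaceENV{M_{i-1}}{\uncertaintyfun{}}{\uncertaintyfun'{}} \trans{\alpha_i} \replaceENV{M_i}{\uncertaintyfun{}}{\uncertaintyfun'{}}$, and concatenating these $n$ steps gives exactly $\replaceENV M {\uncertaintyfun{}}{\uncertaintyfun'{}} \trans{t} \replaceENV{M'}{\uncertaintyfun{}}{\uncertaintyfun'{}}$, as required. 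I expect no genuine obstacle: honesty and soundness of $M$ play no role in this argument and are inherited merely because the result sits within the impact analysis. The one point demanding care is verifying that rule \rulename{Time} reproduces the \emph{same} successor $S'$ --- rather than merely some state in the enlarged set --- since it is precisely this that makes $\replaceENV{M'}{\uncertaintyfun{}}{\uncertaintyfun'{}}$, with its identical state and process, the correct endpoint of the preserved trace.
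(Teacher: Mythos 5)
Your proposal is correct. The paper gives no explicit proof of this proposition---it is treated as an immediate consequence of the monotonicity of the evolution map noted in Section~\ref{sec:calculus}---and your argument (the uncertainty occurs only in rule \rulename{Time} through $\operatorname{next}(E;S)$, monotonicity of $\evolmap{}$ lets the \emph{same} successor state be chosen, and induction lifts the single-step claim to traces) is precisely the spelled-out version of that reasoning, including the accurate observation that honesty and soundness of $M$ play no role.
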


However, a wider uncertainty in the model does not always correspond to a widening of the possible behaviours of the \CPS{}. In fact, this depends on the \emph{intrinsic tolerance} of a \CPS{} with respect to changes in the uncertainty function. In the following, we will write $\uncertaintyfun{} + \uncertaintyfun'{}$ to denote the function $\uncertaintyfun''{} \in \mathbb{R}^{\mathcal X}$ such that $\uncertaintyfun''{}(x) = \uncertaintyfun{}(x) + \uncertaintyfun'{}(x)$, for any $x \in \mathcal X$.

\begin{definition}[System $\xi$-tolerance] 
	An honest and sound \CPS{} $M$ with uncertainty $\uncertaintyfun{}$   is said \emph{$\xi$-tolerant}, for $\xi \in \mathbb{R}^{{\mathcal X}}$ and $\xi \geq 0$, if
	\begin{displaymath}
	\xi \, = \, \sup \big\{ \xi'  : \,
	\replaceENV M {\uncertaintyfun{}}  {{\uncertaintyfun{}}+{\eta}}
	\sqsubseteq M, \text{ for any } 0 \leq \eta \leq \xi'     
	\big\}.
	\end{displaymath}
\end{definition}

Intuitively, if a \CPS{} $M$ has been designed with a given uncertainty
$\uncertaintyfun{}$, but $M$ is actually $\xi$-tolerant, with $\xi > 0$, then the uncertainty $\uncertaintyfun{}$ is somehow underestimated: the real uncertainty of  $M$  is given by $\uncertaintyfun{} + \xi$. 
This information is quite important when trying to estimate the impact of an attack on a \CPS{}. In fact, if a system  $M$ has been designed with a given uncertainty $\uncertaintyfun{}$, but $M$ is actually $\xi$-tolerant, with $\xi > 0$, then an attack has (at least) a ``room for manoeuvre'' $\xi$ to degrade the whole \CPS{} without being observed (and hence detected). 

Let $\mathit{Sys}$ be our running example. In the rest of the section, with an abuse of notation, we will write $\replaceENV {\mathit{Sys}}  {\delta} {\gamma} $ to denote $\mathit{Sys}$ where the uncertainty $\delta$ of the variable $\mathit{temp}$ has been \nolinebreak replaced \nolinebreak with \nolinebreak $\gamma$.
\begin{example}
	\label{exa:toll}
	The \CPS{} $\mathit{Sys}$  is $\frac{1}{20}$-tolerant as 
	\begin{math}
	\sup \big\{ \xi' : \replaceENV {\mathit{Sys}} \delta {\delta {+} \eta}  \sqsubseteq  \mathit{Sys} , \text{ for } 0 \leq \eta \leq  \xi' \big\}
	\end{math} is equal to $\frac{1}{20}$. 
	Since $ \delta + \xi = \frac{8}{20} + \frac{1}{20}=\frac{9}{20}$, then
	this statement relies on the following proposition whose proof can be found in the appendix. 

\end{example}

\begin{proposition} We have
	\label{prop:toll} \
	\begin{itemize}
		\item $ \replaceENV {\mathit{Sys}} \delta  \gamma \, \sqsubseteq \, \mathit{Sys} $, for $\gamma \in (\frac{8}{20}, \frac{ 9}{20})$, 
		\item $ \replaceENV  {\mathit{Sys}} \delta \gamma \, \not\sqsubseteq \,  \mathit{Sys}$, for $\gamma >\frac{ 9}{20}$.  
	\end{itemize}
\end{proposition}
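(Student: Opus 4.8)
The plan is to reduce the two behavioural claims to a single physical question about the perturbed system, and then settle that question by a worst-case analysis of the integer counter $\mathit{stress}$.

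\emph{Reduction.} First I would exploit Proposition~\ref{prop:sys}: every trace of $\mathit{Sys}$ consists only of $\tau$- and $\tick$-actions, so $\mathit{Sys}$ never emits $\dead$, $\unsafe$, nor an output on the free channel $\mathit{alarm}$. Moreover $\mathit{Sys}$ is sound, hence from each reachable state time may pass and $\mathit{Sys} \Trans{\tick^k}$ for every $k$. Consequently $\mathit{Sys}$ can weakly match \emph{any} trace whose only observable actions are ticks. This yields the equivalence on which the whole proof rests: $\replaceENV{\mathit{Sys}}{\delta}{\gamma} \sqsubseteq \mathit{Sys}$ holds if and only if $\replaceENV{\mathit{Sys}}{\delta}{\gamma}$ is itself sound and never fires the alarm. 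Indeed, if the perturbed system produces only $\tau/\tick$ traces then $\mathit{Sys}$ reproduces them, whereas any trace of $\replaceENV{\mathit{Sys}}{\delta}{\gamma}$ containing $\dead$, $\unsafe$ or an output on $\mathit{alarm}$ has a weak version that $\mathit{Sys}$ cannot match. Both items thus become statements about the physical evolution under the enlarged uncertainty $\gamma$.

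\emph{Tolerance for $\gamma \in (\frac{8}{20},\frac{9}{20})$.} Here I would rerun the bounds of Proposition~\ref{prop:X} with $\gamma$ in place of $\delta$, the only new ingredient being $\mathit{stress}$, which hits the unsafe threshold $5$ exactly when $\mathit{temp}$ stays above $9.9$ for five consecutive slots. I would isolate the schedule that keeps the temperature high as long as possible: the controller postpones activating the coolant for one slot (the noise $\epsilon=0.1$ lets it read $\leq 10$ while $\mathit{temp} \leq 10.1$), heating is maximal ($+1+\gamma$) up to the activation value $\leq 11.1+\gamma$, and the following cooling slots decrease $\mathit{temp}$ as slowly as possible ($-1+\gamma$). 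The resulting temperatures $10.1,\; 11.1{+}\gamma,\; 10.1{+}2\gamma,\; 9.1{+}3\gamma,\; 8.1{+}4\gamma$ show that $\mathit{temp}$ exceeds $9.9$ for at most four consecutive slots precisely when $8.1+4\gamma \leq 9.9$, i.e.\ $\gamma \leq \frac{9}{20}$; hence $\mathit{stress}$ never exceeds $4$. The same bounds give $\mathit{temp}\leq 11.1+\gamma < 50$ and, after five cooling slots, $\mathit{temp}\leq 6.1+6\gamma < 10$, so the engine never deadlocks and the $\mathit{IDS}$ never reads above $10$ at a check and never fires. By the reduction, $\replaceENV{\mathit{Sys}}{\delta}{\gamma} \sqsubseteq \mathit{Sys}$.

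\emph{Vulnerability for $\gamma > \frac{9}{20}$.} Conversely I would exhibit one explicit execution of $\replaceENV{\mathit{Sys}}{\delta}{\gamma}$ realising the schedule above. For $\gamma > \frac{9}{20}$ we now have $8.1+4\gamma > 9.9$, so $\mathit{temp}$ stays above $9.9$ for five consecutive slots; the counter climbs $0,1,2,3,4,5$ over six slots and reaches $5$ while cooling is still in progress, before the $\mathit{IDS}$ check, in a state that satisfies the invariant but violates the safety set. Thus an $\unsafe$ action is enabled, and since $\mathit{Sys}$ never performs $\unsafe$ this trace cannot be matched, giving $\replaceENV{\mathit{Sys}}{\delta}{\gamma} \not\sqsubseteq \mathit{Sys}$; Proposition~\ref{prop:monotonicity} confirms the witness persists for all larger uncertainties.

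\emph{Main obstacle.} The delicate part is the physical case analysis, not the process algebra: one must justify that ``delay one slot, heat maximally, then cool minimally'' is genuinely the worst case for $\mathit{stress}$ — in particular that activation can be postponed for at most one slot because $\mathit{temp}$ grows by at least $1-\gamma > \frac12$ per slot, and that any earlier activation or faster cooling only shortens the above-$9.9$ window — and that this window jumps from four to five slots exactly at $\gamma=\frac{9}{20}$. Handling the discrete $\mathit{stress}$ bookkeeping together with the strict comparison against $9.9$ is what pins down the threshold.
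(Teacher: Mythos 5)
Your proposal is correct in substance and follows essentially the same route as the paper: your reduction via Proposition~\ref{prop:sys} (trace inclusion holds iff the perturbed system stays sound and alarm-free), your threshold computation $8.1+4\gamma \le 9.9$ iff $\gamma \le \frac{9}{20}$, and your witness schedule $10.1,\; 11.1{+}\gamma,\; 10.1{+}2\gamma,\; 9.1{+}3\gamma,\; 8.1{+}4\gamma$ are precisely the content of Lemma~\ref{lem:sys:damage}, Lemma~\ref{lem:sys3} and the witness trace in the paper's proof of Proposition~\ref{prop:toll}.

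Two concrete points, however, separate your plan from a complete proof. First, the step you flag as the ``main obstacle'' is exactly where the paper does something different in kind: it never argues that ``delay one slot, heat maximally, cool minimally'' is the worst schedule. Instead it proves a $\gamma$-parametric \emph{inductive invariant over all reachable configurations} (Lemma~\ref{lem:sys3}, by mathematical induction on the number of $\tick$-actions, mirroring Lemma~\ref{lem:sys}), from which the bounds $\mathit{temp}\le 11.1+\gamma$, the IDS-check bound $6.1+6\gamma$, and $\mathit{stress}\le 4$ follow for \emph{every} trace; until you carry out that induction, your item-1 claim is verified only along one schedule. Second, your vulnerability witness fails as literally stated for large $\gamma$: the maximal-heating step gives $\mathit{temp}=11.1+\gamma$, which exceeds $50$ once $\gamma>38.9$, so the invariant is violated and the system can only emit $\dead$ --- the five-slot window above $9.9$ never completes. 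The conclusion still holds, since your own reduction lists $\dead$ among the observables $\mathit{Sys}$ cannot match, and the paper resolves this with exactly that case split (``if $10.1+1+\gamma>50$ then the system deadlocks, otherwise\dots''). But your fallback appeal to Proposition~\ref{prop:monotonicity} does not repair it: that proposition assumes the lower-uncertainty system is honest \emph{and sound}, and $\replaceENV{\mathit{Sys}}{\delta}{\gamma_0}$ is precisely not sound for $\gamma_0>\frac{9}{20}$ --- its unsoundness is the very thing you are establishing.
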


Now everything is in place to define our metric to estimate the 
impact of an attack. 
\begin{definition}[Impact] 
	\label{def:attack-xi-tolerance}
	Let  $M$ be an honest and sound \CPS{} with uncertainty $\uncertaintyfun{}$. We say that an attack $A$ has \emph{definitive impact} $\xi$ on the system $M$ if 
	\begin{displaymath}
	\xi  = \inf \big\{ \xi'  :  \xi' \in \mathbb{R}^{{\mathcal X}} 
	\: \wedge \: \xi'{>}0 \:  \wedge \: M \parallel  A  \, \sqsubseteq \,  
	{\replaceENV M {\uncertaintyfun{}}  {{\uncertaintyfun{}}+{\xi'}}  }  \big\}.
	\end{displaymath}%
	It has \emph{pointwise impact} $\xi$ on the system $M$ at time $m$ if 
	{\small 
		\begin{displaymath}
		\xi  {=}  \inf \big\{ \xi'  :  \xi' \in \mathbb{R}^{{\mathcal X}} 
		\, \wedge \, \xi'{>}0 
		\,  \wedge \, M \parallel  A   \, \sqsubseteq_{m..n} 
		\replaceENV M  {\uncertaintyfun{}}  {{\uncertaintyfun{}} + {\xi'}},  n \in \mathbb{N}^+ \cup \{\infty\}   \big\}.
		\end{displaymath}
	}
\end{definition}

Intuitively, the impact of an attacker $A$ on a system $M$ measures the perturbation introduced by the presence of the attacker in the compound system $ M \parallel A$ with respect to the original system $M$. 
With this definition, we can establish either the definitive (and hence maximum) impact of the attack $A$ on the system $M$, or the impact at a specific time $m$.
In the latter case, by definition of $\sqsubseteq_{m..n}$, there are two possibilities: either the impact of the attack keeps growing after time $m$, or in the time interval $m{+}1$, the system under attack deadlocks.

The impact of $\mathit{Top}(C)$ provides an upper bound for the impact of all attacks of class $C'$, $C' \preceq C$, as shown in the following theorem (proved in the appendix).
\begin{theorem}[Top attacker's impact]
	\label{thm:sound2}
	Let  $M$ be an honest and sound \CPS{}, and 
	$C$  an arbitrary class of attacks. Let $A$ be an arbitrary attack of class $C'$, with $C' \preceq C$.  
	\begin{itemize}[noitemsep]
		\item The definitive impact of $\mathit{Top}(C)$ on $M$ is greater than or equal to the definitive impact of $A$ on $M$.
		\item If $\mathit{Top}(C)$ has pointwise impact $\xi$ on $M$ at time $m$, and $A$ has pointwise impact $\xi'$  on $M$ at time $m'$, with  
		$m' \leq m$, then $\xi' \leq \xi $.  
	\end{itemize}
\end{theorem}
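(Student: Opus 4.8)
The plan is to reduce both claims to a single \emph{core lemma} that already underlies Theorem~\ref{thm:sound}, namely that the top attacker simulates every weaker attack:
\begin{displaymath}
M \parallel A \; \sqsubseteq \; M \parallel \mathit{Top}(C) \qquad \text{for every attack } A \text{ of class } C' \preceq C .
\end{displaymath}
Intuitively this holds because $\mathit{Top}(C)$ may, through its nondeterministic choices and its time-unguarded recursive structure, reproduce at every admissible time slot exactly the forge/drop/sniff activities that $A$ performs, so that every execution trace of $M \parallel A$ is also an execution trace of $M \parallel \mathit{Top}(C)$. I will take this inclusion as the single nontrivial ingredient and combine it with transitivity of $\sqsubseteq$ and with monotonicity of the dynamics with respect to uncertainty (Proposition~\ref{prop:monotonicity}).

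For the \emph{definitive impact} I compare the two witness sets of Definition~\ref{def:attack-xi-tolerance}. Write $S_A = \{\, \xi' \in \mathbb{R}^{\mathcal X} : \xi'{>}0 \wedge M \parallel A \sqsubseteq \replaceENV{M}{\uncertaintyfun{}}{{\uncertaintyfun{}}+{\xi'}} \,\}$ and define $S_{\mathit{Top}}$ analogously for $\mathit{Top}(C)$, so the two definitive impacts are $\inf S_A$ and $\inf S_{\mathit{Top}}$. For any $\xi' \in S_{\mathit{Top}}$ we have $M \parallel \mathit{Top}(C) \sqsubseteq \replaceENV{M}{\uncertaintyfun{}}{{\uncertaintyfun{}}+{\xi'}}$; composing the core lemma with this by transitivity of $\sqsubseteq$ yields $M \parallel A \sqsubseteq \replaceENV{M}{\uncertaintyfun{}}{{\uncertaintyfun{}}+{\xi'}}$, i.e.\ $\xi' \in S_A$. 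Hence $S_{\mathit{Top}} \subseteq S_A$, and taking infima over the order $\leq$ on $\mathbb{R}^{\mathcal X}$ gives $\inf S_A \leq \inf S_{\mathit{Top}}$, which is exactly the assertion that the definitive impact of $\mathit{Top}(C)$ dominates that of $A$.

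For the \emph{pointwise impact} I introduce, for a fixed system $X$ and enlargement $\eta \geq 0$, the \emph{first-discrepancy time} $\mathsf{fd}_{\eta}(X) \in \mathbb{N}^+ \cup \{\infty\}$, defined as the index $m$ for which $X \sqsubseteq_{m..n} \replaceENV{M}{\uncertaintyfun{}}{{\uncertaintyfun{}}+{\eta}}$ (for the determined $n$), and as $\infty$ when $X \sqsubseteq \replaceENV{M}{\uncertaintyfun{}}{{\uncertaintyfun{}}+{\eta}}$; then the pointwise impact of $X$ at $m$ is $\inf\{\eta{>}0 : \mathsf{fd}_{\eta}(X)=m\}$. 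Two monotonicity facts drive the argument. First, Proposition~\ref{prop:monotonicity} gives $\replaceENV{M}{\uncertaintyfun{}}{{\uncertaintyfun{}}+{\eta}} \sqsubseteq \replaceENV{M}{\uncertaintyfun{}}{{\uncertaintyfun{}}+{\eta'}}$ whenever $\eta \leq \eta'$, so $\mathsf{fd}_{\eta}(X)$ is non-decreasing in $\eta$. Second, the core lemma $M \parallel A \sqsubseteq M \parallel \mathit{Top}(C)$ and a bounded-transitivity argument give $\mathsf{fd}_{\eta}(M \parallel A) \geq \mathsf{fd}_{\eta}(M \parallel \mathit{Top}(C))$ for every $\eta$ (an enlarged system that simulates the richer $M\parallel\mathit{Top}(C)$ up to slot $m{-}1$ a fortiori simulates its sub-behaviour $M\parallel A$). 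Now let $\xi,\xi'$ be the pointwise impacts of $\mathit{Top}(C)$ at $m$ and of $A$ at $m'$, with $m' \leq m$. Since $A$ has a pointwise impact at $m'$, the set $\{\eta : \mathsf{fd}_{\eta}(M\parallel A)=m'\}$ is nonempty, and non-decreasingness of $\mathsf{fd}$ lets me rewrite $\xi' = \inf\{\eta : \mathsf{fd}_{\eta}(M\parallel A) \geq m'\}$. Combining $\mathsf{fd}_{\eta}(M\parallel A) \geq \mathsf{fd}_{\eta}(M\parallel \mathit{Top}(C))$ with $m' \leq m$ yields the inclusion $\{\eta : \mathsf{fd}_{\eta}(M\parallel \mathit{Top}(C)) \geq m\} \subseteq \{\eta : \mathsf{fd}_{\eta}(M\parallel A) \geq m'\}$, whence $\xi' \leq \inf\{\eta : \mathsf{fd}_{\eta}(M\parallel \mathit{Top}(C)) \geq m\} \leq \xi$.

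The main obstacle is the justification of the core lemma together with the bounded-transitivity step that transfers simulability up to slot $m{-}1$ from $M \parallel \mathit{Top}(C)$ to $M \parallel A$. The former amounts to showing that the nondeterministic, time-unguarded top attacker can schedule its malicious actions to match, slot by slot, any trace realisable by a weaker attack of class $C' \preceq C$; this is precisely where $C' \preceq C$ (componentwise inclusion of the timing sets, together with $m_1..n_1 \subseteq m_2..n_2$) is used, and I expect it to be the content already discharged inside the proof of Theorem~\ref{thm:sound}. The latter requires careful bookkeeping of $\#\tick$ across weak transitions, since $\sqsubseteq_{m..n}$ counts time slots and the matching of $M \parallel A$'s traces must be routed through $M \parallel \mathit{Top}(C)$'s weak traces without gaining or losing ticks. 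Everything else reduces to routine manipulation of infima over the uncertainty order on $\mathbb{R}^{\mathcal X}$.
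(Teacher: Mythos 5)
Your proposal is correct and rests on exactly the same two pillars as the paper's proof: the simulation lemma for the top attacker (the paper's Lemma~\ref{lem:top}, established for Theorem~\ref{thm:sound}, which yields $M \parallel A \sqsubseteq M \parallel \mathit{Top}(C)$ for every $A$ of class $C' \preceq C$) and monotonicity with respect to uncertainty (Proposition~\ref{prop:monotonicity}); in particular, your ``core lemma'' is not something you need to re-derive, it is literally the lemma the paper invokes. Your treatment of the definitive impact---witness-set inclusion via transitivity of $\sqsubseteq$---is identical to the paper's. The only divergence is the pointwise case: the paper argues by contradiction (assuming $\xi' > \xi$, splitting on $m = m'$ versus $m' < m$, picking near-infimum witnesses $\overline{\xi} \leq \overline{\xi'}$ in the two defining sets, using Proposition~\ref{prop:monotonicity} to push the failing trace of $M \parallel A$ down from uncertainty $\uncertaintyfun{} + \overline{\xi'}$ to $\uncertaintyfun{} + \overline{\xi}$, and then using Lemma~\ref{lem:top} to lift that trace to $M \parallel \mathit{Top}(C)$, contradicting the minimality of $m$ in Definition~\ref{Time-bounded-trace-equivalence}), whereas you argue directly, packaging the same two facts into monotonicity properties of the first-discrepancy map $\eta \mapsto \mathsf{fd}_\eta(\cdot)$ and comparing infima of upward-closed sets. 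Your direct version is arguably cleaner: it avoids the case split and makes explicit where each hypothesis ($C' \preceq C$ via the simulation, $m' \leq m$ via the set inclusion) enters. The paper's version, on the other hand, works straight from the definition of $\sqsubseteq_{m..n}$ and does not need your rewriting step $\xi' = \inf\{\eta : \mathsf{fd}_\eta(M\parallel A) \geq m'\}$, whose nontrivial direction tacitly treats the pointwise order on $\mathbb{R}^{\mathcal X}$ as if it were total; this is the same looseness as the paper's extraction of comparable near-infimum witnesses, so neither argument is more rigorous than the other on this point.
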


In order to help the intuition on the impact metric defined in Definition~\ref{def:attack-xi-tolerance}, we give a couple of examples. Here, we focus on the role played by the size of the vulnerability window.
\begin{example}
	\label{exa:effect1}
	Let us consider the attack $A_n $ of Example~\ref{exa:att:integrity}, for $n \in \{ 8, 9, 10 \}$. 
	Then, 
	\begin{itemize}[noitemsep]
		\item $A_8$ has definitive impact $0$ on $\mathit{Sys}$,
		\item $A_9$ has definitive impact $0.2\overline{3}$ on $\mathit{Sys}$,
		\item $A_{10}$ has definitive impact $0.4$ on $\mathit{Sys}$.
	\end{itemize} 
	Formally, the impacts of these three attacks are obtained by calculating  
	\begin{displaymath}
	\inf \{ \xi' :\, \xi'> 0 \, \wedge \, \mathit{Sys}  \parallel  A_{n} \, \sqsubseteq \, \replaceENV {\mathit{Sys}} {\delta} {\delta + {\xi'}} 
	\}\,, 
	\end{displaymath}
	for $n \in \{ 8, 9, 10 \}$. Attack $A_9$ has a very low impact on $\mathit{Sys}$ as it may drag the system into a \emph{temporary unsafe state} in the time interval $14..15$, whereas $A_{10}$ has a slightly stronger impact as it may induce a \emph{temporary unsafe state} during the larger time interval $14..21$. 
	Technically, since $\delta + \xi = 0.4 + 0.4=0.8$,
	the calculation of the impact of $A_{10}$ relies on the following proposition whose proof can be found in the appendix. 

\end{example}
\begin{proposition}
	\label{prop:effect1}
	Let $A_{10}$ be the attack defined in Example~\ref{exa:att:integrity}. Then:
	\begin{itemize}
		\item 
		$\mathit{Sys} \parallel A_{10} \, \not \sqsubseteq \,  \replaceENV {\mathit{Sys}}  {\delta} {\gamma} $, for $\gamma \in (0.4, 0.8)$, 
		\item 
		$\mathit{Sys} \parallel A_{10} \, \sqsubseteq \, \replaceENV {\mathit{Sys}}
		{\delta} {\gamma} $, for $\gamma > 0.8$. 
	\end{itemize}
\end{proposition}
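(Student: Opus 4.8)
The plan is to prove the two items separately and, in both cases, to reduce the question to the behaviour of the two systems inside the vulnerability window. Recall from Proposition~\ref{prop:att:integrity} that $\mathit{Sys} \parallel A_{10} \sqsubseteq_{14..21} \mathit{Sys}$, so already with the design uncertainty $\delta = 0.4$ the genuine system reproduces every trace of the attacked system up to slot $13$ and again from slot $22$ onwards; only traces that exercise the unsafe window $14..21$ can separate $\mathit{Sys} \parallel A_{10}$ from a genuine variant. Hence for both items it suffices to analyse what the genuine plant can and cannot do in that window once the uncertainty on $\mathit{temp}$ is raised from $\delta$ to $\gamma$. Throughout I would lean on Proposition~\ref{prop:monotonicity}: since the evolution map is monotone in the uncertainty, $\replaceENV{\mathit{Sys}}{\delta}{\gamma} \sqsubseteq \replaceENV{\mathit{Sys}}{\delta}{\gamma'}$ whenever $\gamma \leq \gamma'$, so each inclusion only has to be checked at the critical value and then propagates, by transitivity of $\sqsubseteq$, to the whole half-line.

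For the inclusion $\mathit{Sys} \parallel A_{10} \sqsubseteq \replaceENV{\mathit{Sys}}{\delta}{\gamma}$ with $\gamma > 0.8$ I would first characterise the worst-case attacked runs described in Example~\ref{exa:att:integrity}: the forged offset $-4$ keeps the controller from cooling until the true temperature is near $14$, and once the attack stops (slot $10$) the honest sensor is restored and the plant cools at rate $1-\delta$, so $\mathit{temp}$ descends through a long band above $9.9$, driving $\mathit{stress}$ to $5$ and emitting the $\unsafe$ actions of the window $14..21$ together with the delayed alarms. I would then match each such trace by a run of $\replaceENV{\mathit{Sys}}{\delta}{\gamma}$ in which the controller cools honestly near temperature $10$ but the widened uncertainty lets $\mathit{temp}$ overshoot to a peak of $11.1+\gamma$ (note that at $\gamma=\delta$ this recovers the bound $11.5$ of Proposition~\ref{prop:X}) and then fall at the slow rate $1-\gamma$. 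Since a window of $W$ consecutive $\unsafe$ actions needs the temperature above $9.9$ for about $W+4$ consecutive slots, and the number of hot descent slots is $(11.1+\gamma-9.9)/(1-\gamma)=(1.2+\gamma)/(1-\gamma)$, for $\gamma>0.8$ this count exceeds the value reached under attack; choosing the uncertainty realisations so that the $\tick$-count, the $\unsafe$ actions and the $\mathit{alarm}$ outputs of $\replaceENV{\mathit{Sys}}{\delta}{\gamma}$ land in the same slots as in the attacked trace yields the simulating run, and Proposition~\ref{prop:monotonicity} extends the result to all $\gamma>0.8$.

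For the non-inclusion $\mathit{Sys} \parallel A_{10} \not\sqsubseteq \replaceENV{\mathit{Sys}}{\delta}{\gamma}$ with $\gamma \in (0.4,0.8)$ I would exhibit a single witness: the worst-case attacked run that realises the full window $14..21$, and argue that no run of $\replaceENV{\mathit{Sys}}{\delta}{\gamma}$ reproduces it. The point is that with $\gamma<0.8$ the genuine plant has a lower peak $11.1+\gamma$ and a faster descent $1-\gamma>0.2$, so the hot-slot count $(1.2+\gamma)/(1-\gamma)$ stays strictly below the value it attains at $\gamma=0.8$, namely $2/0.2=10$; consequently $\mathit{stress}$ cannot remain at $5$ long enough to emit the complete sequence of $\unsafe$ actions, and the missing $\unsafe$, at a slot already pinned down by the $\tick$-prefix of the witness, cannot be weakly matched, breaking $\sqsubseteq$ by Definition~\ref{Trace-equivalence}. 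Combining the two items gives $\inf\{\xi'>0 : \mathit{Sys}\parallel A_{10} \sqsubseteq \replaceENV{\mathit{Sys}}{\delta}{\delta+\xi'}\}=0.8-0.4=0.4$, matching the impact announced in Example~\ref{exa:effect1}. The hard part is exactly this trajectory bookkeeping that pins the threshold to $0.8$: one must track the read--decide--actuate timing of $\mathit{Ctrl}$, the $\pm 0.1$ sensor error, the five-slot cooling commitment together with the $\mathit{IDS}$ keep-cooling loop, and the discrete $\mathit{stress}$ update, consistently for both systems, and verify that the supremum of achievable consecutive hot slots crosses the critical window length precisely when $(1.2+\gamma)/(1-\gamma)=10$, i.e.\ at $\gamma=0.8$.
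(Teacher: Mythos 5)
Your second item follows essentially the paper's route (an explicit matching trace built just above the threshold, extended to all $\gamma>0.8$ by monotonicity and transitivity), but your first item contains a genuine gap: the separation criterion you use is false. You claim that for $\gamma\in(0.4,0.8)$ the genuine plant cannot keep $\mathit{temp}>9.9$ for the $12$ consecutive slots needed to emit $\unsafe$ at $14..21$, because the single-descent count $(1.2+\gamma)/(1-\gamma)$ stays below $10$. But a single descent is not the worst case: the $\mathit{IDS}$ keep-cooling loop that you relegate to ``bookkeeping'' actually destroys the bound. After five cooling slots the temperature can be steered into $(9.9,10.1]$, where the sensed value may read $\leq 10$, so the $\mathit{IDS}$ orders \textsf{stop} \emph{without firing any alarm}; the plant then re-heats by up to $1+\gamma$ in one slot and runs a second hot descent. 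Concretely, for $\gamma=0.7$: $\mathit{temp}=9.91$ at slot $9$, $11.61$ at slot $10$ (cooling forced), descend by $0.302$ per slot to $10.1$ at slot $15$, $\mathit{IDS}$ stops (sensed $10.0$), jump to $11.8$ at slot $16$ (cooling forced again), descend by $0.3$ to $10.6$ at slot $20$, then drop by $0.7$ to $9.9$ at slot $21$, $\mathit{IDS}$ stops again. This run of $\replaceENV{\mathit{Sys}}{\delta}{\gamma}$ keeps $\mathit{temp}>9.9$ on slots $9..20$, hence can emit $\unsafe$ exactly at $14..21$, with no alarm and no deadlock. So the full unsafe window \emph{alone} is matchable for $\gamma<0.8$, and the non-inclusion does not follow from your argument.

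What is missing is precisely the ingredient the paper's proof hinges on: the witness trace of $\mathit{Sys}\parallel A_{10}$ couples the unsafe window with the \emph{single alarm at slot $16$} (Proposition~\ref{prop:att:integrity}). An alarm at exactly slot $16$, with none earlier, forces a fresh cooling activation at slot $11$ (a keep-cooling check at $11$ would itself fire an alarm, and an earlier activation shifts the check away from $16$), hence the sensed value at slot $10$ must be $\leq 10$, i.e.\ $\mathit{temp}\leq 10.1$ at slot $10$; whereas unsafety at slot $14$ forces $\mathit{temp}>9.9$ already at slot $9$. Since the minimum per-slot rise is $1-\gamma$, both constraints are compatible only if $9.9+(1-\gamma)\leq 10.1$, i.e.\ $\gamma\geq 0.8$ --- this is where the threshold really comes from, not from descent length. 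Separately, in the second item you only match ``worst-case'' attacked runs; trace inclusion quantifies over \emph{all} runs of $\mathit{Sys}\parallel A_{10}$, so you also need the paper's case analysis on trace shapes (no $\dead$ can occur since $A_{10}$ is temporary, an $\mathit{alarm}$ output cannot occur without $\unsafe$, and pure $\tau$/$\tick$ traces are matched via Proposition~\ref{prop:sys} and Proposition~\ref{prop:monotonicity}).
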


On the other hand, the attack provided in Example~\ref{exa:att:DoS2}, driving the system to a (\emph{permanent}) \emph{deadlock state}, has a much stronger impact on the \CPS{} $\mathit{Sys}$ than the attack of Example~\ref{exa:att:integrity}.

\begin{example}
	\label{exa:effect2}
	Let us consider the attack $A_m$ of Example~\ref{exa:att:DoS2}, for $m > 8$.  
	As already discussed, this is a stealthy lethal attack that has a very severe and high impact. In fact, it has a definitive impact of $8.5$ on the \CPS{} $\mathit{Sys}$.  Formally, 
	\begin{displaymath}
	8.5= \inf \big \{ \, \xi'  :\: \xi'> 0 \: \wedge \: \mathit{Sys} \parallel  A_m \, \sqsubseteq \, \replaceENV {\mathit{Sys}} \delta  {\delta +
		\xi'} \big \}. 
	\end{displaymath}
	Technically, since $\delta + \xi = 0.4+8.5=8.9$, what  stated in this example relies on the following proposition whose proof can be found in the appendix.

\end{example}
\begin{proposition}
	\label{prop:effect2}
	Let  $A_m$ be the attack defined in Example~\ref{exa:att:DoS2}. Then:
	\begin{itemize}[noitemsep]
		\item $\mathit{Sys} \parallel A_m \, \not \sqsubseteq \, 
		\replaceENV {\mathit{Sys}}
		\delta  \gamma$, for $\gamma \in (0.4,8.9)$, 
		\item $\mathit{Sys} \parallel A_m \, \sqsubseteq \, \replaceENV {\mathit{Sys}}
		\delta  \gamma$, for $\gamma >8.9$. 
	\end{itemize} 
\end{proposition}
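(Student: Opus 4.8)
The plan is to prove the two inequalities separately, exploiting the fact that $\replaceENV{\mathit{Sys}}\delta\gamma$ differs from $\mathit{Sys}$ only in the uncertainty of $\mathit{temp}$, so that its family of execution traces grows monotonically with $\gamma$ by Proposition~\ref{prop:monotonicity} and by the monotonicity of $\evolmap$ in the uncertainty. The first step is to recall, from Example~\ref{exa:att:DoS2} and Proposition~\ref{prop:att:dos-integrity}, the exact shape of the traces of $\mathit{Sys}\parallel A_m$: once the sensor is frozen the controller is permanently fed a value ${\le}\,10$, never turns the cooling on and never synchronises with the IDS, so that no $\mathit{alarm}$ is ever fired; meanwhile $\mathit{temp}$ rises monotonically, the $\mathit{stress}$ counter saturates at $5$ (making $\unsafe$ continuously available), and eventually $\mathit{temp}$ leaves $[0,50]$, yielding $\dead$. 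Hence the only observables that $\replaceENV{\mathit{Sys}}\delta\gamma$ must reproduce are $\tick$, the optional $\unsafe$ available once $\mathit{stress}=5$, and a final $\dead$, with no $\mathit{alarm}$ in between.

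For $\mathit{Sys}\parallel A_m\sqsubseteq\replaceENV{\mathit{Sys}}\delta\gamma$ with $\gamma>8.9$, I would show how the genuine system with the enlarged uncertainty simulates each such trace stealthily. The construction keeps $\mathit{temp}$ in the narrow band $(9.9,10.1]$: since the sensor error is $\epsilon=0.1$, the reading can still be taken ${\le}\,10$, so the controller stays idle (no cooling, hence no IDS check and no alarm) while $\mathit{stress}$ saturates at $5$ and $\unsafe$ is offered on exactly the required $\tick$'s. To emit the final $\dead$ while $\mathit{stress}$ is still $5$, at the chosen slot the reading is taken ${>}\,10$, the controller switches cooling on, and in the very first cooling step the extreme downward uncertainty $-1-\gamma$ is used: from a value just above $9.9$ this produces a successor close to $9.9-(1+\gamma)<0$, so the invariant is breached and $\dead$ is emitted before the $5$-tick cooling window closes, i.e.\ before the IDS reads the sensor. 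The threshold is exactly $9.9-1=8.9$, the least extra uncertainty for which a single cooling step can drop $\mathit{temp}$ from above $9.9$ to below $0$.

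For $\mathit{Sys}\parallel A_m\not\sqsubseteq\replaceENV{\mathit{Sys}}\delta\gamma$ with $\gamma\in(0.4,8.9)$, I would exhibit the lethal, $\unsafe$-laden trace above and argue that it cannot be simulated. Since the attack is lethal, any matching run must end in $\dead$; since the matched prefix carries $\unsafe$, that run must keep $\mathit{temp}>9.9$ (so $\mathit{stress}=5$) right up to the breaching step and must never fire the alarm. A case analysis on the actuator state at that last step shows the per-step displacement of $\mathit{temp}$ has modulus at most $1+\gamma\le 9.9$, so from a value exceeding $9.9$ the successor stays strictly above $0$; and breaching $50$ is impossible in a single step from such a value and cannot be forced across a cooling window without the IDS reading a high value and firing the alarm. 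Hence no stealthy $\dead$ preserving the $\unsafe$ pattern is reachable, and the trace is unmatched.

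The main obstacle is precisely this last exhaustiveness argument: establishing that $8.9$ is \emph{exact} requires ruling out every stealthy route to the invariant breach under the enlarged uncertainty—both the downward route and any upward overshoot occurring within a cooling phase—while simultaneously preserving the $\mathit{stress}=5$ pattern and keeping the IDS silent. This reduces to a careful bookkeeping of the interplay between the $5$-tick cooling windows, the saturating $\mathit{stress}$ counter, and the sensor-error slack $\epsilon$, and is where the delicate part of the proof lies; the two monotonicity facts are then used only to lift the boundary cases to the stated open range $(0.4,8.9)$ and the open condition $\gamma>8.9$.
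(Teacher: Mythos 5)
Your second item is essentially the paper's own proof: the paper likewise holds the temperature just above $9.9$ (at $9.9{+}k$ with $0<k<\gamma-8.9$, after an initial phase at exactly $9.9$), keeps the controller idle because the sensed value can always be taken $\le 10$, lets $\mathit{stress}$ saturate so that $\unsafe$ is available exactly at the required ticks, and produces $\dead$ by letting the cooling switch on and taking a single tick of extreme downward drift $-(1+\gamma)$; monotonicity (Proposition~\ref{prop:monotonicity}) then lifts the construction from $8.9<\gamma<9$ to all $\gamma>8.9$. On this half your proposal and the paper coincide.

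The genuine gap is in the first item, and it is precisely the ``exhaustiveness'' step you defer. The paper discharges it by asserting that, in any alarm-free run matching the $\unsafe$-pattern, the temperature in the slot \emph{before} the deadlock lies in $(9.9,\,11.1+\gamma]$, whence one tick cannot leave $[0,50]$ when $\gamma\le 8.9$. Your sketch needs the same two bounds and proves neither; worse, neither bound actually holds on all of $(0.4,8.9)$, so the step does not merely lack a proof --- as stated it fails. Concretely: (i) $\mathit{stress}$ lags $\mathit{temp}$ by one tick (the next value of $\mathit{stress}$ is computed from the \emph{current} $\mathit{temp}$), so an $\unsafe$ in the last slot before the breach only forces $\mathit{temp}>9.9$ \emph{two} slots before the breach; a two-tick descent with cooling on, started just above $9.9$, ends below $0$ as soon as $2(1+\gamma)>9.9$, i.e.\ $\gamma>3.95$, with $\unsafe$ still available in the intermediate slot and the IDS check (due only five ticks after activation) pre-empted by $\dead$. (ii) For $\gamma>1$ the per-tick displacement during cooling is $-1+\gamma'$ with $\gamma'\in[-\gamma,\gamma]$, which can be \emph{positive}; a cooling window entered at temperature up to $11.1+\gamma$ can therefore climb to $6.1+6\gamma$, which exceeds $50$ once $\gamma>\frac{43.9}{6}\approx 7.32$, and again the system deadlocks before the IDS ever reads the sensor, so your claim that an upward overshoot ``cannot be forced across a cooling window without the IDS firing the alarm'' is false in that range. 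The same two routes show that the paper's appeal to a Lemma~\ref{lem:sys3}-style invariant --- proved only for $\gamma\le\frac{9}{20}$, and whose cooling-phase bound reverses sign for $\gamma>1$ --- does not extend to the stated interval. Your per-step argument is sound roughly for $\gamma\le 3.95$; beyond that, no amount of bookkeeping of cooling windows, sensor slack and the saturating counter will close the argument, because the stealthy simulating runs it must exclude do exist.
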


Thus, Definition~\ref{def:attack-xi-tolerance} provides an instrument to estimate the impact of a \emph{successful attack} on a CPS in terms of the perturbation introduced both on its physical and on its logical processes. However, there is at least another question that  
a \CPS{} designer could ask: ``Is there a way to
estimate the chances that an attack will be successful during the
execution of my \CPS{}?'' To paraphrase in a more operational manner: how
many execution traces of my \CPS{} are prone to be attacked by a specific
attack? As argued in the future work, we believe that \emph{probabilistic metrics} might reveal to be very useful in this respect~\cite{ifm2018,MaMe14}. 

\section{Conclusions, related and future work}
\label{sec:conclusions}

\subsection{Summary}
We have provided \emph{theoretical foundations} to reason about
and \emph{formally} detect attacks to physical devices of \CPS{s}.
A straightforward utilisation of these methodologies is for 
\emph{model-checking} or \emph{monitoring} in order to be able to formally analyse security properties of \CPS{s} either before system deployment or, when static analysis is not feasible, at runtime to promptly detect undesired behaviours. 
To that end, we have proposed a hybrid process calculus, called \cname{}, as a formal \emph{specification language} to model physical and cyber components of \CPS{s} as well as MITM physics-based attacks. Note that our calculus is general enough to represent \emph{Supervisory Control And Data Acquisition} (SCADA) systems as cyber components which can easily interact with controllers and IDSs via channel communications. SCADA systems are the main technology used by system engineers to supervise the activities of  complex CPSs. 

Based on \cname{} and its labelled transition semantics, we have formalised a threat model for \CPS{s} by grouping physics-based attacks in classes, 
according to the target physical devices and two timing parameters: begin and duration of the attacks. Then, we developed two different \emph{compositional} trace semantics for \cname{} to assess \emph{attack tolerance/vulnerability} with respect to a given attack. Such a tolerance may hold \emph{ad infinitum} or for a limited amount of time. In the latter case, the \CPS{} under attack is vulnerable and the attack affects the observable behaviour of the system only after a certain point in time, when the attack itself may already be achieved or still working.

Along the lines of GNDC~\cite{FM99}, we have defined a notion of \emph{top attacker}, $\mathit{Top}(C)$, of a given class of attacks $C$, which has been used to provide sufficient criteria to prove attack tolerance/vulnerability to all attacks of class $C$ (and weaker ones).

Then, we have provided a metric to estimate the \emph{maximum impact} introduced in the system under attack with respect to its genuine behaviour, according to its evolution law and the uncertainty of the model. We have proved that the impact of the most powerful attack $\mathit{Top}(C)$ represents an upper bound for the impact of any attack $A$ of class $C$ \nolinebreak (and \nolinebreak weaker \nolinebreak ones).

Finally, we have formalised a \emph{running example} in \Uppaal{} SMC~\cite{David:2015:UST:2802769.2802840}, the statistical extension of the \Uppaal{} model checker~\cite{DBLP:conf/sfm/BehrmannDL04}. 
Our goal was to test \Uppaal{} SMC as an automatic tool for the \emph{static security analysis} of a simple but significant \CPS{} exposed to a number of different physics-based attacks with different impacts on the system under attack. Here, it is important to note that, although we have verified most of the properties stated in the paper, we have not been  able to capture time properties on the responsiveness of the IDS to violations of the safety conditions. Examples of such properties are: 
\begin{inparaenum}[(i)]
	\item there are time slots $m$ and $k$ such that the system may have an unsafe state at some time $n > m$, and the IDS detects this violation with a delay of at least $k$ time slots ($k$ being a lower bound of the reaction time of the IDS), or 
	\item  there is a time slot  $n$ in which the IDS fires an alarm  but neither an unsafe state nor a deadlock occurs in the time interval $n{-}k..n{+}k$: this would provide a tolerance of the occurrence of  \emph{false positive}.
\end{inparaenum}
Furthermore, \Uppaal{} SMC does not support the verification of nested formulae. Thus, although from a designer's point of view it would have been much more practical to verify a logic formula of the form $\exists\lozenge(\square_{[t, t+5]} temp > 9.9)$ to check safety and invariant conditions, in \Uppaal{} SMC we had to implement a \_Safety\_ automaton that is not really part of our \CPS{} (for more details see the  discussion of related work).

\subsection{Related work}
\label{sec:related}
A number of approaches have been proposed for modelling \CPS{s} using
\emph{hybrid process algebras}~\cite{CuRe05,BergMid05,vanBeek06,RouSong03,HYPE}.
Among these approaches, our calculus \cname{} shares some similarities with the $\phi$-calculus~\cite{RouSong03}. However, unlike \cname{}, in the $\phi$-calculus, given a hybrid system $(E,P)$, the process $P$ can dynamically change the evolution law in $E$. Furthermore, the $\phi$-calculus does not have a representation of physical devices and measurement law, which are instead crucial for us to model physics-based attacks that operate in a timely fashion on sensors and actuators. More recently, Galpin et al.~\cite{HYPE} have proposed a process algebra in which the continuous part of the system is represented by appropriate variables whose changes are  determined by active influences (i.e., commands on actuators).

Many good surveys on the security of cyber-physical systems have been published recently (see, e.g., \cite{ACM-survey2018,survey-CPS-security-2016,SurveyComputerIndustry2018,survey-CPS-security-2019}), including a survey of 
surveys~\cite{SurveyOfSurverys2017}. 
In particular, the surveys~\cite{survey-CPS-security-2019,survey-CPS-security-2016} provide a systematic categorisation of $138$ selected papers on \CPS{} security. 
Among those $138$ papers, $65$  adopt a discrete notion of time similar to ours, $26$ a continuous one, $55$ a quasi-static time model, and the rest use a hybrid time model. 
This study encouraged us in adopting a discrete time model for physical processes rather than a continuous 
one.  
Still, one might wonder what is actually lost when one adopts a discrete rather than a continuous time model, in particular when the attacker has the possibility to move in a continuous time setting. A continuous time model is, of course, more expressive. For instance, Kanovich et al.~\cite{Kanovich2015} identified a novel vulnerability in the context of \emph{cryptographic 
	protocols} for \CPS{s} in which the attacker works in a continuous-time setting to fool discrete-time verifiers. However, we believe that, for \emph{physics-based attacks}, little is lost by adopting a discrete time model.  In fact, 
sensor measurements and actuator commands are elaborated within controllers, which are digital devices with an intrinsic discrete notion of time. 
In particular, with respect to dropping of actuator commands and forging of sensor measurements, there are no differences between discrete-time and continuous-time attackers given that to achieve those malicious activities the attacker has to synchronise with the controller. 
Thus, there remain only two potential malicious activities: sensor sniffing and forging of actuator commands. Can a continuous-time attacker, able to carry out these two malicious activities, be more disruptive than a similar attacker adopting a discrete-time model? This would only be the case when dealing with very rare physical processes changing their physical state in an extremely fast way, faster than the controller which is the one dictating the discrete time of the CPS. However, we believe that CPSs of this kind would be hardly controllable as they would pose serious safety issues even in the absence of any attacker.

The survey~\cite{ACM-survey2018} provides an exhaustive review of papers on physics-based anomaly detection proposing a unified taxonomy, whereas the survey~\cite{SurveyComputerIndustry2018} presents the main  solutions in the estimation of the consequences of cyber-attacks, attacks modelling and detection, and the development of security architecture (the main types of attacks and threats against \CPS{s} are analysed and grouped in a tree structure). 

Huang et al.~\cite{HCALTS2009} were among the first to propose \emph{threat models} for \CPS{s}. Along with~\cite{KrCa2013,BestTime2014}, they stressed the role played by timing parameters on integrity and DoS attacks.

Gollmann et al.~\cite{GGIKLW2015} discussed possible goals
(\emph{equipment damage}, \emph{production damage}, \emph{compliance
	violation}) and \emph{stages} (\emph{access}, \emph{discovery},
\emph{control}, \emph{damage}, \emph{cleanup}) of physics-based attacks.
In this article, we focused on the ``damage'' stage, where the attacker already has a rough idea of the plant and the control architecture of the target \CPS{}.As we remarked in Section~\ref{sec:introduction}, here we focus on an attacker who has already entered the CPS, without considering how the attacker gained access to the system, which could have happened in several ways, for instance by 
attacking an Internet-accessible controller or
one of the communication protocols.

Almost all papers discussed in the surveys mentioned above \cite{survey-CPS-security-2019,ACM-survey2018,SurveyComputerIndustry2018} investigate attacks on \CPS{s} and their protection by relying on \emph{simulation test systems} to validate the results, rather than \emph{formal methodologies}.
We are aware of a number of works applying \emph{formal methods} to \CPS{} security, although they apply methods, and most of the time have goals, that are quite different from ours. We discuss the most significant ones on the following.

Burmester et al.~\cite{BuMaCh2012} employed \emph{hybrid timed automata} to give a threat 
framework based on the traditional Byzantine faults model for
crypto-security. 
However, as remarked in~\cite{TeShSaJo2015}, physics-based attacks and faults have inherently distinct characteristics.
Faults are considered as physical events that affect the system behaviour where simultaneous events 
don't act in a coordinated way, whereas  %. On the other hand, 
cyber attacks may be performed over a significant number of attack points and in \nolinebreak a \nolinebreak coordinated \nolinebreak way.

In~\cite{Vig2012}, Vigo presented an attack scenario that addresses some
of the peculiarities of a cyber-physical adversary, and discussed how this scenario relates to other attack models popular in the security protocol literature. Then, in~\cite{VNN2013} Vigo et al.\ proposed an untimed calculus of broadcasting processes 
equipped with %%%explicit 
notions of %%%communication failure 
failed and unwanted communication. These works differ quite considerably from ours, e.g., they focus on \emph{DoS attacks} without taking into consideration timing aspects or impact of the attack.

C\'ombita et al.~\cite{Cardenas2015} and Zhu and
Basar~\cite{game-theory-CPS2015} applied \emph{game theory} to capture the conflict of goals between an attacker who seeks to maximise the damage inflicted to a \CPS{}'s security and a defender who aims to minimise it~\cite{game-theory-2013}.

Rocchetto and Tippenhauer~\cite{RocchettoTippenhauer2016a} introduced a taxonomy of the diverse attacker models proposed for \CPS{} security and outline requirements for generalised attacker models; in~\cite{RocchettoTippenhauer2016b}, they then proposed an extended \emph{Dolev-Yao attacker model} suitable for \CPS{s}. 
In their approach, physical layer interactions are modelled as abstract
interactions between logical components to support reasoning on the
physical-layer security of \CPS{s}. This is done by introducing additional orthogonal channels. Time is not represented.

Nigam et al.~\cite{Nigam-Esorics2016} worked around the notion of \emph{Timed Dolev-Yao Intruder Models for Cyber-Physical Security Protocols} by bounding the number of intruders required for the automated verification of such protocols. Following a tradition in security protocol analysis, they provide an answer to the question: How many intruders are enough for verification and where should they be placed? They also extend the strand space model to \CPS{} protocols by allowing for the symbolic representation of time, so that they can use the tool Maude~\cite{RT-MAUDE} along with SMT support. Their notion of time is however different from ours, as they focus on the time a message needs to travel from an agent to another. The paper does not mention physical devices, such as sensors and/or actuators.

There are a few approaches that carry out \emph{information flow security analysis} on discrete/continuous models for \CPS{s}.  Akella et al.\ \cite{Akella2010}  proposed an approach to perform information flow analysis, including both trace-based analysis and automated analysis through process algebra specification. This approach has been used to verify process algebra models of a gas pipeline system and a smart electric power grid system.
Bodei et al.~\cite{BDFG17} proposed a process calculus supporting a control flow analysis that safely approximates the abstract behaviour of IoT systems. Essentially, they track how data spread from sensors to the logics of the network, and how physical data are manipulated. In \cite{Bodei2018}, the same authors extend their work to infer \emph{quantitative measures} to establish the cost of possibly security countermeasures, in terms of time and energy. 
Another discrete model has been proposed by Wang~\cite{Wang2014}, where Petri-net models have been used to verify \emph{non-deducibility security properties} of a natural gas pipeline system. 
More recently, Bohrer and Platzer~\cite{LICS-Platzer2018} introduced dHL, a hybrid logic for verifying cyber-physical hybrid-dynamic information flows, communicating information through both discrete computation and physical dynamics, so security is ensured even when attackers observe \emph{continuously-changing values} in continuous time. 

Huang et al.~\cite{HZTYQ18}  proposed a \emph{risk assessment method} that uses a Bayesian network to model the attack propagation process and infers the probabilities of sensors and actuators to be compromised. These probabilities are fed into a stochastic hybrid system (SHS) model to predict the evolution of the physical process being controlled. Then, the security risk is quantified by evaluating the system availability with \nolinebreak the \nolinebreak  SHS \nolinebreak model. 

{As regards tools for the formal verification of CPSs, we remark that we tried to verify our case study using \emph{model-checking tools} for distributed systems 
	such as PRISM~\cite{PRISM}, \Uppaal{}~\cite{UPPAAL}, Real-Time Maude~\cite{RT-MAUDE}, and \textsf{prohver} within the {\textsc M{\scriptsize ODEST} T{\scriptsize OOLSET}}~\cite{MODEST-toolset}. 
	In particular, as our example adopts a discrete notion of time, we started looking at tools supporting discrete time. PRISM, for instance, relies on Markov decision processes or discrete-time Markov chains, depending on whether one is interested in modelling nondeterminism or not. It supports the verification of both CTL and LTL properties (when dealing with nonprobabilistic systems). 
	This allowed us to express the formula $\exists\lozenge(\square_{[t, t+5]} temp > 9.9)$ to verify violations of the safety conditions, avoiding the implementation of the \emph{\_Safety\_} automaton. 
	However, using integer variables to represent state variables with a fixed precision requires the introduction of extra transitions (to deal with nondeterministic errors), which significantly complicates the PRISM model. 
	In this respect, \Uppaal{}  appears to be more efficient than PRISM, as we have been able  to concisely express  the error occurring in integer state variables thanks to the \emph{select()} construct, 
	in which the user can fix the granularity adopted to approximate a dense interval. This discrete representation provides an \emph{under-approximation} of the system behaviour; thus, a finer granularity translates into an exponential increase of the complexity of the system, with obvious consequences on the verification performance. 
	Then, we tried to model our case study  in Real-Time Maude,  a completely different framework for real-time systems, based on \emph{rewriting logic}. The language supports object-like inheritance features that are quite helpful to represent complex systems in a modular manner. 
	We used communication channels to implement our attacks on the physical devices. Furthermore, we used 
	rational variables for a  more concise discrete representation of state variables.  We  have been able to verify  LTL and T-CTL properties, although the verification process resulted to be quite slow due to a proliferation of rewriting rules when fixing a reasonable granularity to approximate dense intervals. 
	As the verification logic is quite powerful, there is no need to implement an ad hoc process to check for safety. 
	Finally,  we also tried to model our case study in the \emph{safety model 
		checker} \textsf{prohver} within the {\textsc M{\scriptsize ODEST} T{\scriptsize OOLSET}} (see~\cite{ourFORTE2018}).  We specified our case study in the high-level language {\textsc {HM{\scriptsize{ODEST}}}}, supporting:
	(i) differential inclusion to model linear CPSs with constant bounded derivatives; 
	(ii) linear formulae to express nondeterministic assignments within a  dense interval; 
	(iii) a compositional programming style inherited from process algebra; 
	(iv) shared actions to synchronise parallel components. 
	However, we faced the same performance limitations encountered in \Uppaal{}. Thus, we decided to move to statistical model checking.

	Finally, this article 
	extends the preliminary conference version~\cite{CSF2017} in the following aspects:
	\begin{inparaenum}[(i)]
		\item the calculus has been slightly redesigned by distinguishing physical state and physical environment, adding specifying constructs to sniff, drop and forge packets, and removing, for simplicity,  protected physical devices; 
		\item the two trace semantics  have been proven to be compositional, i.e., preserved by properly defined contexts; 
		\item both our running example $\mathit{Sys}$ and the attacks proposed in Examples \ref{exa:att:DoS}, \ref{exa:att:DoS2},  \ref{exa:att:integrity} and \ref{exa:top-attacker} have been implemented and verified in \Uppaal{} SMC. 
	\end{inparaenum}

	\subsection{Future work} 
	While much is still to be done, we believe that our paper provides a
	stepping stone for the development of formal and automated tools to
	analyse the security of \CPS{s}. We will consider applying, possibly after proper enhancements, existing tools and frameworks for automated security protocol analysis, resorting to the development of a dedicated tool if existing ones prove not up to the task.
	We will also consider further security properties and concrete examples of \CPS{s}, as well as other kinds of physics-based attacks,such as \emph{delays in the communication} of measurements and/or commands,  and \emph{periodic attacks}, i.e., attacks that operate in a periodic fashion inducing periodic physical effects on the targeted system that may be easily confused by engineers with system malfunctions. This will allow us to refine the classes of attacks we have given here (e.g., by formalising a type system amenable to static analysis), and provide a formal definition of when a \CPS{} is more secure than another so as to be able to design, by progressive refinement, secure variants of a vulnerable \CPS{s}.
	
	We also aim to extend the behavioural theory of \cname{} by developing  suitable \emph{probabilistic metrics} to take into consideration the probability of a specific trace to actually occur. We have already done some progress in this direction for a variant of \cname{} with no security features in it, by defining ad hoc compositional \emph{bisimulation metrics}~\cite{LMT2018}. In this manner, we believe that our notion of impact might be refined by taking into account quantitative aspects of an attack such as the probability of being successful when targeting a specific \CPS{}. A first attempt on a (much) simpler IoT setting can be found in~\cite{ifm2018}.

	Finally, with respect to automatic approximations of the impact, while we have not yet fully investigated the problem, we believe that we can transform it into a ``minimum problem''. For instance, if the environment uses linear functions, then, by adapting techniques developed for linear hybrid automata (see, e.g., \cite{ALUR95}), the set of all traces with length at most $n$ (for a fixed $n$) can be characterised by a system of first degree inequalities, so the measure of the impact could be translated into a linear programming problem.
	
	\section*{Acknowledgements}
		We thank the anonymous reviewers for their insightful and careful reviews.
		Massimo Merro and Andrei Munteanu have been partially supported by the project ``Dipartimenti di Eccellenza 2018--2022'' funded by the Italian Ministry of Education, Universities and Research (MIUR).

\bibliographystyle{abbr}
\bibliography{IoT_bib}

%% 
%%             Appendix 
%%

\appendix
\section{Proofs}
\subsection{Proofs of Section~\ref{sec:calculus}}

As already stated in Remark~\ref{rem:deadlock}, our trace preorder $\sqsubseteq$ is deadlock-sensitive. Formally, 
\begin{lemma}
	\label{lem:trace-deadlock}
	Let $M$ and $N$ be two \CPS{s} in \cname{} such that $M \sqsubseteq N$. Then, $M$ satisfies its system invariant if and only if $N$ satisfies its system invariant. 
\end{lemma}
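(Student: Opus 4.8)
The plan is to reduce the biconditional to the observability of the deadlock action $\dead$, and then exploit the one‑sidedness of the trace preorder together with a simple invariance property of internal moves. Recall that ``$K$ satisfies its system invariant'' means that the state function $\statefun$ of the physical state of $K = \confCPS{E;S}{P}$, with $S = \langle\statefun,\sensorfun,\actuatorfun\rangle$, lies in $\invariantfun$. The two semantic facts I would isolate first are: (a)~$\statefun\notin\invariantfun$ \emph{if and only if} $K\trans{\dead}$, where the ``only if'' is rule \rulename{Deadlock} and the ``if'' holds because no other rule produces $\dead$; and (b)~every $\tau$-transition at the \CPS{} level leaves $\statefun$ unchanged, as one checks by inspecting the rules \rulename{Tau}, \rulename{SensRead}, \rulename{ActWrite} and their malicious counterparts (only \rulename{Time} alters $\statefun$, and it is labelled $\tick$). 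Since every rule except \rulename{Deadlock} carries the premise $\statefun\in\invariantfun$, a further consequence is that a \CPS{} violating its invariant can perform \emph{no} action other than $\dead$; hence its only weak transitions are sequences of $\dead$'s. It therefore suffices to prove that $M\sqsubseteq N$ forces $M\trans{\dead}\iff N\trans{\dead}$.

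For the forward direction I would assume $M\trans{\dead}$ and apply $M\sqsubseteq N$ to the length-one trace $\dead$, obtaining $N' $ with $N\Trans{\dead}N'$, say $N\Trans{}N_1\trans{\dead}N_2$. By fact~(a) the state of $N_1$ violates the invariant, and since $N\Trans{}N_1$ consists of $\tau$-steps, fact~(b) gives that $N$ and $N_1$ have the same state function; hence $N$ already violates its invariant, i.e.\ $N\trans{\dead}$. This half uses only that $N$ simulates $M$.

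The reverse direction is the delicate one, since $\sqsubseteq$ provides no simulation of $N$ by $M$. I would argue by contraposition. Assume $N\trans{\dead}$, so that $N$ can weakly exhibit only $\dead$'s (by the consequence of~(a) above), and suppose toward a contradiction that $M$ satisfies its invariant. By fact~(b) no sequence of $\tau$-moves out of $M$ can reach an invariant-violating state, so $M$ cannot emit $\dead$ before performing some \emph{visible} action; that first visible action is therefore one of $\out c v$, $\inp c v$, $\tick$ or $\unsafe$, and in particular is not $\dead$. But $M\sqsubseteq N$ would require $N$ to match it weakly, whereas $N$ can only display $\dead$ — a contradiction. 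Hence $M\trans{\dead}$, completing the biconditional.

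The one point requiring care — and the main obstacle — is guaranteeing in the reverse direction that a first visible action of $M$ actually exists. I must exclude two situations: an infinite internal computation, which is ruled out by the standing assumption that recursion is time-guarded (forbidding divergent $\tau$-loops); and a configuration that is genuinely stuck while still satisfying the invariant and the safety set. For the second I would rely on a \emph{time-progress} (patience) property, arguing by induction on process structure that a non-deadlocked \CPS{} whose process cannot proceed internally can always let time elapse via rule \rulename{Time}, thereby producing the visible action $\tick$; the subtle cases in this induction are restrictions of pending communications, and making this patience argument precise is the technical heart that closes the proof.
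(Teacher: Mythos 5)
Your forward direction is correct, and it is essentially the paper's own argument spelled out in full: the paper proves the whole lemma with the single observation that a \CPS{} violating its invariant can only fire $\dead$ actions, and your facts~(a) and~(b) (together with the remark that every rule of Table~\ref{tab:lts_systems} except \rulename{Deadlock} carries the premise $\statefun \in \invariantfun$) are exactly what makes that observation yield the implication ``$M$ violates $\Rightarrow$ $N$ violates''.

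The gap is in the reverse direction, and it sits exactly where you flagged it --- but the flag cannot be discharged, because the patience property you rely on is false in this calculus. Inspect Table~\ref{tab:lts_processes}: the only rules producing $\tick$ are \rulename{TimeNil}, \rulename{Sleep}, \rulename{Timeout} and \rulename{TimePar}; the channel prefixes $\pi.P$ have no $\tick$ transition at all (unlike in TPL, where prefixes idle, here they are insistent). Hence a process such as $(\OUT c v .\nil){\setminus}c$ --- a pending output on a restricted channel --- has no transitions whatsoever: no internal move (no communication partner), no visible channel action (blocked by rule \rulename{Res}), and no $\tick$. Pairing it with a physical state that satisfies both $\invariantfun$ and $\safefun$ gives a well-formed, honest and sound \CPS{} $M$ with no nonempty traces, so your structural induction breaks precisely at the restriction-over-pending-communication case you call ``subtle'': no first visible action of $M$ exists. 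Worse, such an $M$ is vacuously below \emph{every} $N$ in $\sqsubseteq$ --- including an $N$ whose current state violates its invariant --- so no argument from $M \sqsubseteq N$ alone can manufacture the contradiction you are after; the reverse implication genuinely requires a progress assumption on $M$ (e.g.\ that $M$ can perform at least one action) which neither the calculus nor the lemma's hypotheses supply. The paper's one-line proof does not confront this either --- the observation it invokes only substantiates the forward half --- so you have correctly isolated a real difficulty that the paper glosses over; but the resolution you propose, proving patience by induction on process structure, is not available, and the reverse direction cannot be closed along that route.
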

\begin{proof}
	This is because \CPS{s} that don't satisfy their invariant can only fire $\dead$ actions. 
\end{proof}

\begin{proof}[Proof of Theorem~\ref{thm:congruence-trace}]
	We prove the three statements separately.
	
	\begin{compactenum}
		\item Let us prove that $M \uplus O \trans{t} M' \uplus O' $ entails $N \uplus O \Trans{ \hat{t}} N' \uplus O' $. The proof is by induction on the length of the trace $M \uplus O \trans{t} M' \uplus O' $.
		
		As $M \sqsubseteq N $, by an application of Lemma~\ref{lem:trace-deadlock} it follows that either both $M$ and $N$ satisfy their respective invariants or they both don't. In the latter case, the result would be easy to prove as the systems can only fire $\dead$ actions.
		Similarly, if the system invariant of $O$ is not satisfied, then $M\uplus O$ and $N\uplus O$ can perform only $\dead$ actions and again the result would follow easily. 
		Thus, let us suppose that the system invariants of $M$, $N$ and $O$ are satisfied.
		
		\noindent 
		\emph{Base case.}
		We suppose  $M=\confCPS {E_1;S_1}{P_1}$, $N=\confCPS {E_2;S_2}{P_2}$,  and $O=\confCPS {E_3;S_3}{P_3}$. We proceed by case analysis on why 
		$M \uplus O \trans{\alpha} M' \uplus O' $, for some action $\alpha$. 
		\begin{itemize}
			\item $\alpha = \out c v$. Suppose $M \uplus O \trans{\out c v} M' \uplus O' $ is derived by an application of rule \rulename{Out}.
			We have two possible  cases: 
			\begin{itemize}
				\item either $P_1 \parallel P_3 \trans{\out c v} P_1 \parallel P_3'$, because $P_3 \trans{\out c v} P_3'$, for some $P_3'$, $O'=\confCPS {S_3}{P_3'}$, and $M'=M$, 
				\item or $P_1 \parallel P_3 \trans{\out c v} P'_1 \parallel P_3$, because 
				$P_1  \trans{\out c v} P'_1   $, for some $P_1'$,  and  $M=\confCPS {S_1}{P_1'}$ and  $O'=O$.
			\end{itemize}
			In the first case, by an application of rule \rulename{Par} we derive $P_2\parallel P_3 \trans{\out c v} P_2 \parallel P_3'$. Since both system invariants of $N$ and $O$ are satisfied, we can derive the required trace $N\uplus O \trans{\out c v} N \uplus O'$ by an application of rule \rulename{Out}. 
			In the second case, since $P_1 \trans{\out c v} P'_1$ and the invariant of $M$ is satisfied, by an application of rule \rulename{Out} we can derive $M \trans{\out c v} M'$.
			As $M \sqsubseteq N$, there exists a trace $N \Trans{ \widehat{\out c v}} N'$, for some system $N'$. Thus, by several applications of rule \rulename{Par} we can easily derive $N \uplus O \Trans{ \widehat{\out c v}} N' \uplus O = N' \uplus O' $, as required. 
			
			\item $\alpha = \inp c v$. Suppose $M \uplus O \trans{\inp c v} M' \uplus O'$ is derived by an application of  rule \rulename{Inp}. This case is similar to the previous one.
			
			\item $\alpha = \tau$. Suppose $M \uplus O \trans{\tau} M' \uplus O'$ is derived by an application of rule \rulename{SensRead}.
			We have two possible cases: 
			\begin{compactitem}
				\item either $P_1 \parallel P_3 \trans{\rcva s v} P_1 \parallel P_3'$ because $P_3 \trans{\rcva s v} P_3'$, for some $P_3'$, $P_1 \parallel P_3\ntrans{\snda{\mbox{\Lightning}s}{v}}$ (and hence $P_3\ntrans{\snda{\mbox{\Lightning}s}{v}}$), $M'=M$ and $O'=\confCPS {S_3}{P_3'}$, 
				\item or $P_1 \parallel P_3 \trans{\rcva s v} P'_1 \parallel P_3$ because $P_1  \trans{\rcva s v} P'_1$, for some $P_1'$, $P_1 \parallel P_3 \ntrans{\snda{\mbox{\Lightning}s}{v}}$ (and hence $P_1 \ntrans{\snda{\mbox{\Lightning}s}{v}}$) and $M'=\confCPS {S_1}{P_1'}$ and $O'=O$.
			\end{compactitem}
			In the first case, by an application of rule \rulename{Par} we derive $P_2\parallel P_3 \trans{\rcva s v} P_2 \parallel P_3'$. Moreover from $P_3\ntrans{\snda{\mbox{\Lightning}s}{v}}$ and since the sets of sensors are always disjoint, we can derive $P_2 \parallel P_3\ntrans{\snda{\mbox{\Lightning}s}{v}}$. Since both invariants of  $N$ and $O$ are satisfied,  we can derive $N\uplus O \trans{\tau} N \uplus O'$ by an application of rule \rulename{SensRead}, as required.
			In the second case, since $P_1 \trans{\rcva s v} P_1'$ and the invariant of $M$ is satisfied, by an application of  rule \rulename{SensRead} we can derive $M  \trans{\tau} M' $ with $M'= \confCPS{S_1}{P_1'}$. As $M \sqsubseteq N$, there exists a derivation $N  \Trans{ \hat{\tau}} N' $, for some $N'$. Thus, we can derive the required trace $N \uplus O \Trans{ \hat{\tau}} N' \uplus O$  by an application of rule \rulename{Par}. 
			
			\item $\alpha = \tau$. Suppose that $M \uplus O \trans{\tau} M' \uplus O'$ is derived by an application of rule \rulename{$\mbox{\Lightning}$SensSniff$\mbox{\,\Lightning}$}. This case is similar to the previous one. 
			
			\item $\alpha = \tau$. Suppose that $M \uplus O \trans{\tau} M' \uplus O'$ is derived by an application of rule \rulename{ActWrite}. This case is similar to the case \rulename{SensRead}.
			
			\item $\alpha = \tau$. Suppose that $M \uplus O \trans{\tau} M' \uplus O' $ is derived by an application of rule \rulename{$\mbox{\Lightning}$AcIntegr$\mbox{\,\Lightning}$}. This case is similar to the case \rulename{$\mbox{\Lightning}$SensSniff$\mbox{\,\Lightning}$}.
			
			\item $\alpha = \tau$. Suppose that $M \uplus O \trans{\tau} M' \uplus O'$ is derived by an application of rule \rulename{Tau}. We have four possible cases:
			\begin{compactitem}
				\item $P_1\parallel P_3 \trans{\tau} P_1' \parallel P_3'$ by an application of rule \rulename{Com}.
				We have two sub-cases: either $P_1 \trans{\out c v} { P'_1}$ and $P_3 \trans{\inp c v} { P_3'}$, or $P_1 \trans{\inp c v} { P'_1} $ and $P_3 \trans{\out c v} { P_3'}$, for some $P_1'$ and $P_3'$.
				We prove the first case, the second one is similar.
				As the invariant of $M$ is satisfied, by an application of rule \rulename{Out} we can derive $M \trans{\out c v} M' $. As $M \sqsubseteq N$, there exists a trace $N \Trans{ \hat{\tau}} \trans{\out c v} \Trans{ \hat{\tau}} N'$, for some $N'=\confCPS {E_2;S_2'}{P_2'}$. 
				As $P_3 \trans{\inp c v} P_3'$, by several applications of rule \rulename{Par} and one of rule \rulename{Com} we derive  $N \uplus O \Trans{ \widehat{\out c v}} N' \uplus O'$, as required.

				\item $P_1\parallel P_3 \trans{\tau} P_1 \parallel P_3'$  or $P_1\parallel P_3 \trans{\tau} P'_1 \parallel P_3$ by an application of  \rulename{Par}.
				This case is easy.
				
				\item $P_1\parallel P_3 \trans{\tau} P_1' \parallel P_3'$ by an application of either rule \rulename{$\mbox{\Lightning}$ActDrop$\mbox{\,\Lightning}$} or rule \rulename{$\mbox{\Lightning}$SensIntegr$\mbox{\,\Lightning}$}.
				This case does not apply as the sets of actuators of $M$ and $O$ are disjoint.
				
				\item $P_1\parallel P_3 \trans{\tau} P_1' \parallel P_3'$ by the application of on rule among \rulename{Res}, \rulename{Rec}, \rulename{Then} and \rulename{Else}.
				This case does not apply to parallel processes.
			\end{compactitem}
			
			\item $\alpha = \dead$. Suppose that $M \uplus O \trans{\dead} M' \uplus O' $ is derived by an application of rule \rulename{Deadlock}. 
			This case is not admissible as the invariants of $M$, $N$ and $O$ are satisfied.
			
			\item $\alpha = \tick$. Suppose that $M \uplus O \trans{\tick} M' \uplus O' $ is derived by an application of rule \rulename{Time}. This implies $P_1\parallel  P_3 \trans{\tick} P_1'\parallel P_3'$, for some $P_1'$ and $P_3'$, $M'=\confCPS {E_1;S_1'}{P_1'}$ and $O=\confCPS {E_3;S_3'}{P_3'}$, with $S'_1 \in \operatorname{next}(E_1;S_1)$ and $S'_3 \in \operatorname{next}(E_3;S_3)$.
			As $P_1\parallel P_3 \trans{\tick} P_1'\parallel P_3'$ can only be derived by an application of rule \rulename{TimePar}, it follows that $P_1 \trans{\tick} P_1'$ and $P_3 \trans{\tick} P_3'$. Since the invariant of $M$ is satisfied, by an application of rule \rulename{Time} we can derive $M \trans{\tick} M'$ with $M'=\confCPS{E_1;S_1'}{P_1'}$. As $M \sqsubseteq N$, there exists a derivation $N \Trans{ \hat{\tau}} N'' \trans{\tick}N''' \Trans{ \hat{\tau}} N'$, for some $N'=\confCPS {E_2;S_2'}{P_2'}$, $N''=\confCPS {E_2;S_2''}{P_2''}$, $N'''=\confCPS {E_2;S_2'''}{P_2'''}$, with $S_2''' \in \operatorname{next}(E_2;S_2'')$.
			By several applications of rule \rulename{Par} we can derive that $N \uplus O \Trans{ \hat{\tau}} N'' \uplus O$ and $N''' \uplus O' \Trans{ \hat{\tau} }N' \uplus O'$. In order to conclude the proof, it is sufficient to prove $N'' \uplus O \trans{\tick} N''' \uplus O' $.
			By the definition of rule \rulename{Time}, from  $N'' \trans{\tick}N'''$ it follows that $P_2'' \trans{\tick} P_2'''$.  As $P_3 \trans{\tick}   P_3'$, by an application of rule \rulename{TimePar} it follows that $P_2'' \parallel  P_3  \trans{\tick} P_2''' \parallel P_3'$. Since $S_2''' \in \operatorname{next}(E_2;S_2'')$ and $S'_3 \in \operatorname{next}(E_3;S_3)$ we can derive that $S_2''' \uplus S_3' \in \operatorname{next}(E_2;S_2'')\cup \operatorname{next}(E_3;S_3)$. By an application of rule \rulename{Time} we have $N'' \uplus O \trans{\tick} N''' \uplus O' $  and hence $N \uplus O \Trans{\widehat{\tick}} N' \uplus O' $, as required. 
			
			\item $\alpha= \unsafe$. Suppose that $M \uplus O \trans{\unsafe} M' \uplus O'$ is derived by an application of rule \rulename{Safety}. This is similar to the case $\alpha = \out c v$ by considering the fact that $\statefun{} \not \in  \safefun{}$ implies that $\statefun{}  \cup \statefun{}  '  \not \in \safefun{} \cup \safefun{}'$, for any $\statefun{}  '$ and any $ \safefun{}'$.
		\end{itemize}
		\noindent 
		\emph{Inductive case.}
		We have to prove that $M\uplus O=M_0\uplus O_0 \trans{\alpha_1}\dots \trans{\alpha_n} M_{n}\uplus O_{n}$ implies 
		$N\uplus O=N_0\uplus O_0 \Trans{ \widehat{\alpha_1}}\dots \Trans{ \widehat{\alpha_n}} N_{n}\uplus O_{n}$. We can use the inductive hypothesis to easily deal with the first $n-1$ actions and resort to the base case to handle the $n^\mathrm{th}$ action.

		\item We have to prove that $M \sqsubseteq N$ implies $M \parallel P \sqsubseteq N \parallel P$, for any pure-logical process $P$. This is a special case of (1) as $M \parallel P=M\uplus \left(\confCPS{\emptyset;\emptyset}{P}\right)$ and $N \parallel P=N\uplus \left(\confCPS{\emptyset;\emptyset}{P}\right)$, where $\confCPS{\emptyset;\emptyset}{P}$ is a \CPS{} with no physical process in it, only logics. 
		
		\item We have to prove that $M \sqsubseteq N$ implies $M \backslash c \; \sqsubseteq \; N \backslash c$, for any channel $c$. For any derivation $M \backslash c \trans{t} M' \backslash c $ we can easily derive that $M \trans{t} M'$ with $c$ not occurring in $t$. Since $M \sqsubseteq N$, it follows that $N \Trans{ \hat{t}} N' $, for some $N'$. Since $c$ does not appear in $t$, we can easily derive that $N\backslash c \Trans{ \hat{t} } N' \backslash c $, as required. 
	\end{compactenum}
\end{proof}

In order to prove Theorem~\ref{thm:congruence-traceupto} we adapt to \cname{} two standard lemmata used in process calculi theory to compose and  decompose the actions performed by a compound system. 

\begin{lemma}[Decomposing system actions]
	\label{lem:cong}
	Let $M$ and $N$ be two \CPS{}s in \cname{}. Then, 
	\begin{compactitem}
		\item if $M \uplus N\trans{\tick} M'\uplus N'$, for some $M'$ and $N'$, then $M \trans{\tick} M'$ and  $N \trans{\tick} N'$;
		\item if $M \uplus N\trans{\dead} M \uplus N$, then  $M \trans{\dead} M$ or $N \trans{\dead} N$;
		\item if $M \uplus N\trans{\tau} M'\uplus N'$, for some  $M'$ and $N'$, due to a channel synchronisation between $M$ and $N$,
		then either $ M \trans{\out c v}  { M'} $ and  $N \trans{\inp c v}  { N'} $, or $ M \trans{\inp c v}  {M'} $ and $N \trans{\out c v}  {N'} $, for some \nolinebreak channel \nolinebreak $c$; 
		\item if $M \uplus N\trans{\alpha} M'\uplus N'$, for some $M'$ and $N'$,
		$\alpha \neq \tick$, not due to a channel synchronisation between $M$ and $N$, then either $M \trans{\alpha} M$ and $N=N'$, or $N \trans{\alpha} N$ and $M=M'$.
	\end{compactitem}
\end{lemma}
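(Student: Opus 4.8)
The plan is to argue by a case analysis on the last SOS rule used to derive the transition $M \uplus N \trans{\alpha} M' \uplus N'$ at the level of \CPS{s} (Table~\ref{tab:lts_systems}) and, wherever that rule lifts an underlying process move, a further case analysis on the last rule deriving the corresponding process transition $P_1 \parallel P_2 \trans{\lambda} Q$ (Table~\ref{tab:lts_processes}), where $M = \confCPS{E_1;S_1}{P_1}$ and $N = \confCPS{E_2;S_2}{P_2}$. Throughout I would exploit two facts: the defining clauses of the disjoint union $\uplus$ on environments — in particular that $S_1 \uplus S_2 \in \invariantfun{}$ iff $S_1 \in \invariantfun^1{}$ and $S_2 \in \invariantfun^2{}$, the analogous clause for $\safefun{}$, and that $\evolmap{}$, $\measmap{}$ and $\uncertaintyfun{}$ are componentwise unions — and physical disjointness, i.e.\ that $S_1$ and $S_2$ have pairwise disjoint sets of state variables, sensors and actuators. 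I would also note at the outset that transitions change neither the environment nor the sets of variables, sensors and actuators, so that the reached system is again a genuine disjoint union $M' \uplus N'$.

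For the first item, only rule \rulename{Time} emits $\tick$, so $P_1 \parallel P_2 \trans{\tick} P_1' \parallel P_2'$, which can only come from \rulename{TimePar}; hence $P_1 \trans{\tick} P_1'$ and $P_2 \trans{\tick} P_2'$. Using that $\evolmap{}$ and $\measmap{}$ are unions over disjoint domains, the premise $S_1' \uplus S_2' \in \operatorname{next}(E_1 \uplus E_2; S_1 \uplus S_2)$ factors as $S_i' \in \operatorname{next}(E_i; S_i)$, and the invariant clause of $\uplus$ gives both local invariants; re-applying \rulename{Time} to each component yields $M \trans{\tick} M'$ and $N \trans{\tick} N'$. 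For the second item, the only rule emitting $\dead$ is \rulename{Deadlock}, whose premise $S_1 \uplus S_2 \notin \invariantfun{}$ is, by the invariant clause of $\uplus$, equivalent to $S_1 \notin \invariantfun^1{}$ or $S_2 \notin \invariantfun^2{}$; the offending component then fires $\dead$ by \rulename{Deadlock}.

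For the third item the hypothesis pins the derivation down: the $\tau$ is lifted by \rulename{Tau} from a process move obtained by \rulename{Com}, so one thread performs $\out c v$ and the other $\inp c v$; since the global invariant holds it decomposes to both local invariants, and re-applying \rulename{Out} and \rulename{Inp} gives the two labelled moves. For the fourth item I would run through the \CPS{} rules that can produce a non-$\tick$ label other than a channel synchronisation: \rulename{Out}/\rulename{Inp} (whose underlying $\out c v$/$\inp c v$ process move, being $\neq \tick$, can only arise from \rulename{Par}), \rulename{SensRead}, \rulename{$\mbox{\Lightning}$SensSniff$\mbox{\,\Lightning}$}, \rulename{ActWrite}, \rulename{$\mbox{\Lightning}$ActIntegr$\mbox{\,\Lightning}$}, the remaining \rulename{Tau} cases, and \rulename{Safety}. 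In every case the relevant process label differs from $\tick$, so the move is propagated by \rulename{Par} from a single thread, and the named device lies in exactly one of $S_1, S_2$; consequently any physical-state update is confined to that component and the other is left untouched, so re-applying the same rule to the active component discharges the goal, giving $M \trans{\alpha} M'$ with $N' = N$ (or symmetrically). A negative premise such as $P \ntrans{\snda{\mbox{\Lightning}s}v}$ in \rulename{SensRead} transfers from the compound process to the active thread, since by \rulename{Par} a thread move on a non-$\tick$ label would lift to the whole parallel composition.

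The main obstacle is the fourth item, and specifically the need to rule out a cross-component synchronisation on a physical device: a priori the $\tau$ could be produced by the attack-synchronisation rules \rulename{$\mbox{\Lightning}$ActDrop$\mbox{\,\Lightning}$} or \rulename{$\mbox{\Lightning}$SensIntegr$\mbox{\,\Lightning}$}, which match a write/forge on a device against a drop/integrity action on the \emph{same} device. Here physical disjointness does the real work: such a match would force the device to belong to both $S_1$ and $S_2$, which is impossible, so the synchronisation must take place inside one thread and is again propagated by \rulename{Par}. Getting this elimination right — together with the bookkeeping that the reached physical state is exactly $S_1' \uplus S_2$ (or $S_1 \uplus S_2'$) under \rulename{ActWrite}/\rulename{$\mbox{\Lightning}$ActIntegr$\mbox{\,\Lightning}$} — is the delicate point; everything else is a routine matching of premises against the $\uplus$ clauses.
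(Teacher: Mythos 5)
Your proof is correct, and in fact the paper never proves Lemma~\ref{lem:cong} at all: it is stated as a ``standard lemma adapted to \cname{}'' and used directly, so your rule-by-rule case analysis supplies exactly the missing routine argument. Moreover, the one genuinely delicate step you identify --- eliminating a cross-component synchronisation via \rulename{$\mbox{\Lightning}$ActDrop$\mbox{\,\Lightning}$}/\rulename{$\mbox{\Lightning}$SensIntegr$\mbox{\,\Lightning}$} by combining physical disjointness with well-formedness --- is precisely the argument the paper itself uses in the corresponding case of its proof of Theorem~\ref{thm:congruence-trace} (``this case does not apply as the sets of actuators \ldots are disjoint''), so your proposal is fully consistent with the paper's reasoning.
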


\begin{lemma}[Composing system actions]
	\label{lem:cong2}
	Let $M$ and $N$ be two \CPS{s} of \cname{}. Then, 
	\begin{compactitem}
		\item If $M \trans{\tick} M'$ and $N \trans{\tick} N'$, for some $M'$ and $N'$, then $M \uplus N\trans{\tick} M'\uplus N'$; 
		\item If $N \ntrans{\dead}$ and $M \trans{\alpha} M'$, for some $M'$ and $\alpha \neq \tick $, then $M \uplus N\trans{\alpha} M'\uplus N$ and $N \uplus M\trans{\alpha} N \uplus M'$.
	\end{compactitem}
\end{lemma}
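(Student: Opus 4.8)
The plan is to prove both items \emph{directly}, re-deriving the claimed transition of $M \uplus N$ (and of $N \uplus M$) from the system-level rules of Table~\ref{tab:lts_systems}, feeding them premises extracted by inverting the rules that justify the hypotheses $M \trans{\tick} M'$, $N \trans{\tick} N'$ and $M \trans{\alpha} M'$. Throughout I would fix $M = \confCPS{E_1;S_1}{P_1}$ and $N = \confCPS{E_2;S_2}{P_2}$ with $S_i = \langle \statefun^i, \sensorfun^i, \actuatorfun^i\rangle$, so that by definition $M \uplus N = \confCPS{(E_1 \uplus E_2);(S_1 \uplus S_2)}{(P_1 \parallel P_2)}$. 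Since $M \uplus N$ is defined only for physically-disjoint, well-formed CPSs, I may freely use that $\mathcal X_1,\mathcal S_1,\mathcal A_1$ are disjoint from $\mathcal X_2,\mathcal S_2,\mathcal A_2$ and that each $P_i$ mentions only devices in $\mathcal S_i \cup \mathcal A_i$.

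For the first item, inverting rule \rulename{Time} on each hypothesis yields $P_1 \trans{\tick} P_1'$, $P_2 \trans{\tick} P_2'$, the invariant memberships $\statefun^1 \in \invariantfun^1$ and $\statefun^2 \in \invariantfun^2$, and $S_1' \in \operatorname{next}(E_1;S_1)$, $S_2' \in \operatorname{next}(E_2;S_2)$. I would then assemble the three premises needed to apply \rulename{Time} to $M \uplus N$: rule \rulename{TimePar} gives $P_1 \parallel P_2 \trans{\tick} P_1' \parallel P_2'$; clause~(3) of the definition of $E_1 \uplus E_2$ turns $\statefun^1 \in \invariantfun^1$ and $\statefun^2 \in \invariantfun^2$ into $S_1 \uplus S_2 \in \invariantfun$; and the key computational fact, read off from clauses~(1),(2) of $\uplus$ on environments together with the disjointness of the name sets, is that $\operatorname{next}$ distributes over disjoint union, i.e.\ $S_1' \uplus S_2' \in \operatorname{next}(E_1 \uplus E_2; S_1 \uplus S_2)$. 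Applying \rulename{Time} then delivers $M \uplus N \trans{\tick} \confCPS{S_1' \uplus S_2'}{P_1' \parallel P_2'} = M' \uplus N'$.

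For the second item I would first observe that, since $\dead$ is produced only by rule \rulename{Deadlock} (whose sole premise is the invariant violation), the hypothesis $N \ntrans{\dead}$ is equivalent to $\statefun^2 \in \invariantfun^2$. I would then case-split on the non-$\tick$ action $\alpha$. If $\alpha = \dead$ the claim is immediate: $M \trans{\dead}$ forces $\statefun^1 \notin \invariantfun^1$, hence $S_1 \uplus S_2 \notin \invariantfun$ and \rulename{Deadlock} fires with $M' = M$. In every other case the rule deriving $M \trans{\alpha} M'$ carries $\statefun^1 \in \invariantfun^1$, which together with $\statefun^2 \in \invariantfun^2$ gives $S_1 \uplus S_2 \in \invariantfun$, enabling the corresponding system rule for $M \uplus N$. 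For $\alpha \in \{\out c v, \inp c v\}$ I would lift the underlying $P_1$-transition through \rulename{Par} to $P_1 \parallel P_2$ and re-apply \rulename{Out}/\rulename{Inp}; for $\alpha = \unsafe$ I would use that clause~(4) of $\uplus$ makes $\statefun^1 \notin \safefun^1$ entail $S_1 \uplus S_2 \notin \safefun$, then apply \rulename{Safety}; and for $\alpha = \tau$ I would inspect which of \rulename{Tau}, \rulename{SensRead}, \rulename{$\mbox{\Lightning}$SensSniff$\mbox{\,\Lightning}$}, \rulename{ActWrite}, \rulename{$\mbox{\Lightning}$ActIntegr$\mbox{\,\Lightning}$} produced it, lift its process premise through \rulename{Par}, and re-apply the same rule (noting that the read value $\sensorfun^1(s)$ and the actuator update $\actuatorfun^1[a \mapsto v]$ are preserved by $\uplus$ precisely because $s,a \in \mathcal S_1 \cup \mathcal A_1$).

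The main obstacle will be discharging the \emph{negative} (pre-emption) premises of \rulename{SensRead} and \rulename{ActWrite}: from $P_1 \ntrans{\snda{\mbox{\Lightning}s}{v}}$ I must conclude $P_1 \parallel P_2 \ntrans{\snda{\mbox{\Lightning}s}{v}}$, and symmetrically for $\rcva{\mbox{\Lightning}a}{v}$. Since a forge/drop label can only be emitted by a $\parallel$-component via \rulename{Par} (synchronisations produce $\tau$, not such labels), it suffices to rule out $P_2$ as the emitter, and this is exactly where physical disjointness and well-formedness enter: emitting $\snda{\mbox{\Lightning}s}{v}$ would require $P_2$ to mention the sensor $s \in \mathcal S_1$, contradicting $\mathcal S_1 \cap \mathcal S_2 = \emptyset$. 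Finally, the symmetric transition $N \uplus M \trans{\alpha} N \uplus M'$ follows by the identical argument, now using the omitted symmetric counterparts of \rulename{Par} and \rulename{Com} and the fact that $N \uplus M$ carries the process $P_2 \parallel P_1$.
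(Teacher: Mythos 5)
Your proof is correct. Note that the paper itself states this lemma \emph{without proof}: it is introduced merely as one of ``two standard lemmata used in process calculi theory to compose and decompose the actions performed by a compound system,'' so there is no line-by-line paper proof to compare against; your argument supplies exactly the details left implicit, and with the same machinery the paper uses in its proof of Theorem~\ref{thm:congruence-trace} — \rulename{TimePar} plus the distribution of $\operatorname{next}$ over $\uplus$ and clause~(3) of the environment union for the $\tick$ case, \rulename{Par}-lifting for untimed actions, and physical disjointness plus well-formedness to discharge the negative premises of \rulename{SensRead} and \rulename{ActWrite}. Your reading of the hypothesis $N \ntrans{\dead}$ as $\statefun^2{} \in \invariantfun^2{}$ (needed for every case except $\alpha = \dead$, where the union invariant fails automatically) is precisely the right way to use it.
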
 

\begin{proof}[Proof of Theorem~\ref{thm:congruence-traceupto}]
	Here, we prove case (1) of the theorem. The proofs of cases (2) and (3) are similar to the corresponding ones of Theorem \ref{thm:congruence-trace}.
	
	We prove that $M \sqsubseteq_{m..n} N$ implies that there are $m',n' \in \mathbb{N}^+ \cup \infty $, with $ m'..n' \subseteq m..n $ such that $M \uplus O \sqsubseteq_{m'..{n'}} N \uplus O$.  
	We prove separately that $m' \geq m$ and $n' \leq n$. 
	
	\begin{itemize}
		\item $m'\geq m$.
		We recall that $m,m' \in \mathbb{N}^+ $.
		If $m=1$, then we trivially have $m'\geq 1 =m$.
		Otherwise, since $m$ is the minimum integer for which there is a trace $t$, with $\#\tick(t)=m-1$, such that $M \trans{t}$ and $N \not\!\!\!\Trans{\hat{t}}$, then for any trace $t$, with $\#\tick(t)<m-1$ and such that $M \trans{t}$, it holds that $N \Trans{\hat{t}}$. As done in the proof of case (1) of Theorem~\ref{thm:congruence-trace}, we can derive  that for any trace $t$, with $\#\tick(t)<m-1$ and such that $M \uplus O \trans{t}$ it holds that $N \uplus O \Trans{\hat{t}}$. This implies the required condition, $m' \geq m$.
		
		\item $n' \leq n$.
		We recall that $n$ is the infimum element of $\mathbb{N}^+ \cup \{ \infty \}$, $n \geq m$, such that whenever $M \trans{t_1}M'$, with $\#\tick(t_1)=n-1$, there is $t_2$, with $\#\tick(t_1)=\#\tick(t_2)$, such that $N \trans{t_2}N'$, for some $N'$, and $M' \sqsubseteq N'$.
		Now, if $M\uplus O \trans{t} M'\uplus O'$, with $\#\tick(t)=n-1$, by Lemma~\ref{lem:cong} we can split the trace $t$ by extracting the actions performed by $M$ and those performed by $O$. Thus, there exist two traces $M \trans{t_1} M'$ and $O \trans{t_3} O'$, with $\#\tick(t_1)=\#\tick(t_3)=n-1$ whose combination has generated the trace $M\uplus O \trans{t} M'\uplus O'$.
		As $M \sqsubseteq_{m..n} N$, from $M \trans{t_1} M'$ we know that 
		there is a trace $t_2$, with $\#\tick(t_1)=\#\tick(t_2)$, such that $N \trans{t_2}N'$, for some $N'$, and $M' \sqsubseteq N'$. Since $N \trans{t_2} N'$ and $O \trans{t_3} O'$, by an application of Lemma~\ref{lem:cong2} we can build a trace $N \uplus O\trans{t'}N' \uplus O'$, for some $t'$ such that $\#\tick(t)=\#\tick(t')= n-1 $. As $M' \sqsubseteq N'$, by Theorem~\ref{thm:congruence-trace} we can derive that $M' \uplus O'\sqsubseteq N' \uplus O'$. This implies that $n'\leq n$. 
	\end{itemize}
\end{proof}

\subsection{Proofs of Section~\ref{sec:running_example}}

In order to prove Proposition~\ref{prop:sys} and Proposition~\ref{prop:X}, we use the following lemma that formalises the invariant properties binding 
the state variable $\mathit{temp}$ with the activity of the cooling system.
%%More precisely, the first statement argues on properties holding when the coolant is not in action.

Intuitively, when the cooling system is inactive the value of the state
variable $\mathit{temp}$ lays in the real interval $[0, 11.5]$.
Furthermore, if the coolant is not active and the variable $\mathit{temp}$
lays in the real interval $(10.1, 11.5]$, then the cooling will be turned
on in the next time slot. Finally, when active the cooling system will
remain so for $k\in1..5$ time slots (counting also the current time slot)
with the variable $\mathit{temp}$ being in the real interval $(
9.9-k{*}(1{+}\delta) , 11.5-k{*}(1{-}\delta)]$.

\begin{lemma} 
	\label{lem:sys}
	Let $\mathit{Sys}$ be the system defined in Section~\ref{sec:running_example}.
	Let
	\begin{small}
		\begin{displaymath}
		\mathit{Sys} = \mathit{Sys_1} \trans{t_1}\trans\tick 
		\mathit{Sys_2}\trans{t_2}\trans\tick  \dots 
		\trans{t_{n-1}}\trans\tick  \mathit{Sys_n}
		%%% \trans{t_n}\trans\tick \mathit{Sys_{n+1}}
		\end{displaymath}
	\end{small}%
	such that the traces $t_j$ contain no $\tick$-actions, for any $j \in  1 .. n{-}1 $, and for any  $i \in  1 .. n $, $\mathit{Sys_i}= \confCPS {S_i}{P_i} $ with 
	$S_i = \langle \statefun^i{} ,  \sensorfun^i{} , \actuatorfun^i{} \rangle$.
	Then, for any $i \in 1 .. n{-}1 $, we have the following:
	\begin{enumerate}
		
		\item \label{uno}
		if   $ \actuatorfun^i{}(\mathit{cool})= \off $ then
		$\statefun^i{}(\mathit{temp})  \in [0, 11.1+\delta ]$;
		with  $\statefun^i{}(\mathit{stress})=0$ if $ \statefun^i{}(\mathit{temp})  \in [0, 10.9+\delta ] $, and  $\statefun^i{}(\mathit{stress})=1 $, otherwise; 
		
		\item \label{due}
		if   $ \actuatorfun^i{}(\mathit{cool})= \off $ and 
		$\statefun^i{}(\mathit{temp})\in (10.1, 11.1+\delta ]$ then, in the next time slot,  $\actuatorfun^{i{+}1}{}(\mathit{cool})=\on$
		and $\statefun^{i{+}1}{}(\mathit{stress})  \in 1..2$;

		\item \label{tre}
		if  $ \actuatorfun^i{}(\mathit{cool})=\on$ then   $\statefun^i{}(\mathit{temp}) \in ( 9.9-k {*}(1{+}\delta) , 11.1+\delta -k{*}(1{-}\delta)] $, 
		for some  $k  \in 1 .. 5 $   such that $\actuatorfun^{i-k}{}(\mathit{cool})=\off $ and 
		$\actuatorfun^{i-j}{}(\mathit{cool}) =\on $, for $j \in 0..k{-}1$;
		moreover,  if  $k\in 1.. 3$ then     $\statefun^i{}(\mathit{stress})  \in  1..k{+}1  $, otherwise,  
		$\statefun^i{}(\mathit{stress}) =0$. 
	\end{enumerate}
\end{lemma}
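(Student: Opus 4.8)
The plan is to prove the three statements \emph{simultaneously} by induction on the time-slot index $i$, since they are mutually dependent: the analysis of a cooling slot (statement~\ref{tre}) feeds on the temperature reached at the moment the coolant was switched on, which is governed by statements~\ref{uno}--\ref{due}, while a cooling-off slot reached right after a cooling phase (statements~\ref{uno}--\ref{due}) inherits its temperature from statement~\ref{tre}. I would therefore strengthen the induction hypothesis so that, besides the bounds on $\statefun^i(\mathit{temp})$ and $\statefun^i(\mathit{stress})$, it also records the \emph{control state}, i.e.\ which syntactic continuation $P_i$ is currently running: whether $\mathit{Ctrl}$ sits in its sensing loop (coolant off), inside the $\tick^5$ countdown of $\mathit{Cooling}$, or in $\mathit{Check}$ (and symmetrically for $\mathit{IDS}$). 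This is essential because the physical invariants are entangled with the logical phase of the controller.

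Before the induction I would isolate the quantitative facts that drive every case. From the sensor-error function ($\epsilon = 0.1$) and the measurement map $\measmap{}$, the value read by $\mathit{Ctrl}$ and $\mathit{IDS}$ lies in $[\statefun{}(\mathit{temp}) - \epsilon,\, \statefun{}(\mathit{temp}) + \epsilon]$; hence the coolant is \emph{forced} on when $\statefun{}(\mathit{temp}) > 10 + \epsilon = 10.1$ and \emph{forced} off when $\statefun{}(\mathit{temp}) \leq 10 - \epsilon = 9.9$, with nondeterministic behaviour in between. From the evolution map $\evolmap{}$, one heating step adds an amount in $[1-\delta,\, 1+\delta]$ and one cooling step an amount in $[-1-\delta,\, -1+\delta]$. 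The key coincidence is $10 - \epsilon = 9.9$, which is exactly the threshold at which $\mathit{stress}$ is incremented: this is what lets me align the switching of the coolant with the growth of $\mathit{stress}$.

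With these facts in hand, the inductive step splits on the actuator value at slot $i$. If $\actuatorfun^i(\mathit{cool}) = \off$ I case on the sensor reading to decide whether the coolant stays off (statement~\ref{uno}: staying off forces $\statefun^{i-1}(\mathit{temp}) \leq 10.1$, whence $\statefun^i(\mathit{temp}) \leq 10.1 + (1+\delta) = 11.1+\delta$) or is switched on (statement~\ref{due}, where $\statefun^i(\mathit{temp}) > 10.1 > 9.9$ guarantees the increment). If $\actuatorfun^i(\mathit{cool}) = \on$ I introduce the counter $k$, determined uniquely by the actuator history $\actuatorfun^{i-k}=\off$, $\actuatorfun^{i-j}=\on$ for $j \in 0..k-1$, as the number of consecutive cooling ticks already applied; the temperature at the switch-on slot $i-k$ lies in $(9.9,\, 11.1+\delta]$ by statements~\ref{uno}--\ref{due}, and composing $k$ cooling steps yields exactly the interval $(9.9 - k(1+\delta),\, 11.1 + \delta - k(1-\delta)]$. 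A crucial sub-step closes the recursion: after $5$ cooling ticks the temperature has dropped to at most $11.1 + \delta - 5(1-\delta) = 8.5 < 10 - \epsilon$, so in $\mathit{Check}$ the $\mathit{IDS}$ necessarily reads a value below $10$ and orders $\mathit{Ctrl}$ to stop; this guarantees $k \leq 5$ and prevents $\mathit{Sys}$ from ever entering a second cooling block.

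I expect the \emph{stress bookkeeping} to be the main obstacle. Because $\mathit{stress}$ is reset to $0$ the instant the temperature falls to $9.9$ or below, the bounds $\statefun^i(\mathit{stress}) \in 1..k{+}1$ for $k \in 1..3$ and $\statefun^i(\mathit{stress}) = 0$ for $k \geq 4$ require tracking precisely the window in which $\mathit{temp}$ stays strictly above $9.9$: during the heating phase this window is entered from below, while during cooling it is exited from above, and in both cases the count must be reconciled step-by-step with the per-step temperature bounds derived above (in particular, the transition $k=3 \rightsquigarrow k=4$ is exactly where the temperature upper bound $11.1+\delta - 3(1-\delta) = 9.7 < 9.9$ forces the reset). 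Propagating the joint invariant on the triple $(\mathit{temp}, \mathit{stress}, \text{control state})$ through the $\mathit{Ctrl}$/$\mathit{IDS}$ synchronisation on $\mathit{sync}$ and $\mathit{ins}$ is where the argument is most delicate and where the case analysis is heaviest.
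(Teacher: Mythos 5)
Your proposal is correct and follows essentially the same route as the paper's proof: a simultaneous induction over the time slots, with a case split on the actuator value, and the same quantitative ingredients --- the $10.1$/$9.9$ sensing thresholds induced by $\epsilon=0.1$, the per-step drift in $[1-\delta,1+\delta]$, the bound $11.1+\delta-5(1-\delta)=8.5<9.9$ that forces the IDS to stop the coolant and caps $k$ at $5$, and the bound $11.1+\delta-3(1-\delta)=9.7<9.9$ that forces the $\mathit{stress}$ reset beyond $k=3$. Your only departure is carrying the $\mathit{Ctrl}$/$\mathit{IDS}$ control state explicitly in the induction hypothesis, which the paper leaves implicit by reading the control phase off the actuator history; this is a presentational strengthening of the same argument, not a different one.
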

\begin{proof}
	Let us write $v_i$ and $s_i$ to denote the values of the state variables
	$\mathit{temp}$ and $\mathit{stress}$, respectively, in the systems
	$\mathit{Sys_i}$, i.e., $\statefun^i{} (\mathit{temp})=v_i $ and
	$\statefun^i{} (\mathit{stress})=s_i $. Moreover, we will say that the
	coolant is active (resp., is not active) in $\mathit{Sys_i}$ if
	$\actuatorfun^i{}(\mathit{cool})=\on$ (resp.,
	$\actuatorfun^i{}(\mathit{cool})=\off$).
	
	The proof is by mathematical induction on $n$, i.e., the number of
	$\tick$-actions of our traces.
	
	The \emph{case base} $n=1$ follows directly from the definition of $\mathit{Sys}$. 
	
	Let us prove the \emph{inductive case}. 
	We assume that the three statements hold for $n-1$ and prove that they  
	also hold for $n$.
	\begin{enumerate}[noitemsep]
		\item Let us assume that the cooling  is not active  in $\mathit{Sys_{n}}$.
		In this case, we prove that $v_n \in [0, 11.1+\delta ]$, with and $s_n=0$ if $ v_n  \in [0, 10.9+\delta ] $, and $s_n=1$ otherwise.

		% Hence let us suppose that  the coolant is not active  in $\mathit{Sys_{n}}$ and we prove that $v_n \in [0, 11.1+\delta ]$.
		We consider separately the cases in which the coolant is active or not in $\mathit{Sys_{n-1}}$
		\begin{itemize}[noitemsep]
			\item Suppose the coolant is not active in $\mathit{Sys_{n{-}1}}$ (and
			not active in $\mathit{Sys_{n}}$).
			
			By the induction hypothesis we have 
			$v_{n-1} \in [0, 11.1+\delta ]$; with $s_{n{-}1}=0$ if $ v_{n{-}1}  \in [0, 10.9+\delta ] $, and $s_{n{-}1}=1 $ otherwise. Furthermore,  if   
			$v_{n-1} \in (10.1, 11.1+\delta ]$, then, by the induction hypothesis, the coolant must be active in $\mathit{Sys_{n}}$.
			Since we know that in $\mathit{Sys_n}$ the cooling is not active,
			it follows that $v_{n-1} \in [0, 10.1]$ and $s_n =0$.
			Furthermore, in $\mathit{Sys_{n}}$ the temperature
			will increase of a value laying in the real interval $[1-\delta,1+\delta]=[0.6,1.4]$. Thus, $v_{n}$ will be in 
			$ [0.6, 11.1+\delta ]\subseteq[0, 11.1+\delta ]$.  
			Moreover, if $v_{n-1} \in [0, 9.9]$, then the state variable $\mathit{stress}$ is not incremented and hence $s_n=0$ with   
			$ v_n  \in [0+1-\delta \, , \,  9.9+ 1+\delta ]=[0.6 \, , \, 10.9+\delta]\subseteq [0 \, , \, 10.9+\delta ] $. Otherwise, 
			if $v_{n-1} \in (9.9,10.1]$, then the state variable $\mathit{stress}$ is incremented, and hence $s_n=1$.
			
			\item Suppose the coolant is active in $\mathit{Sys_{n{-}1}}$ (and not
			active in $\mathit{Sys_{n}}$).
			
			By the induction hypothesis, $v_{n-1} \in ( 9.9-k *(1+\delta) , 11.1+\delta
			-k*(1-\delta)]$ for some $k \in 1..5$ such that the coolant is not active
			in $\mathit{Sys_{n{-}1{-}k}}$ and is active in $\mathit{Sys_{n{-}k}},
			\ldots, \mathit{Sys_{n-1}}$.
			
			The case $k \in \{1,\ldots,4\}$  is not admissible. 
			In fact if  $k \in \{1,\ldots,4\}$ then the coolant would be active for less than $5$ $\tick$-actions as we know that 
			%%so  the coolant it will be active in
			$\mathit{Sys_{n}}$ is not active. 
			%%contradicting  the fact that the coolant is not active  in $\mathit{Sys_{n}}$ by hypotheses.
			Hence, it must be $k=5$. Since $\delta=0.4$ and $k=5$, it holds that $v_{n-1 }\in (9.9-5*1.4, 11.1+0.4 -5*0.6]=(2.8, 8.6] $
			and $s_{n{-}1} =0$. Moreover, since
			the coolant is active for $5$ time slots, in $\mathit{Sys_{n{-}1}}$ the controller and the $\mathit{IDS}$ synchronise together via channel $\mathit{sync}$ and hence the $\mathit{IDS}$ checks the
			temperature. Since $v_{n-1} \in (2.8, 8.6]$ the $\mathit{IDS}$ process 
			sends to the controller a command to 
			$\mathsf{stop}$ the cooling, and the controller will switch off the 
			cooling system. Thus, in the next time slot, the temperature
			will increase of a value laying in the real interval $[1-\delta,1+\delta]=[0.6,1.4]$. As
			a consequence, in $\mathit{Sys_{n}}$ we will have $v_{n}
			\in [2.8+0,6, 8.6+1.4]=[3.4,10] \subseteq [0, 11.1+\delta ]$.
			Moreover, since $v_{n-1} \in (2.8, 8.6]$ and $s_{n{-}1} =0$, we derive that the state variable $\mathit{stress}$ is not increased and hence
			$s_{n } =0$, with $v_{n} \in [3.4,10] \subseteq [0, 10.9+\delta ]$. 
		\end{itemize}
		
		\item Let us assume that the coolant is not active in $\mathit{Sys_{n}}$
		and $v_n \in (10.1, 11.1+\delta ]$; we prove that the coolant is active in
		$\mathit{Sys_{n{+}1}}$ with $s_{n {+}1} \in 1..2 $. Since the coolant is
		not active in $\mathit{Sys_{n}}$, then it will check the temperature
		before the next time slot. Since $v_n \in (10.1, 11.1+\delta ]$ and
		$\epsilon=0.1$, then the process $\mathit{Ctrl}$ will sense a temperature
		greater than $10$ and the coolant will be turned on. Thus, the coolant
		will be active in $\mathit{Sys_{n{+}1}}$. Moreover, since $v_n \in (10.1,
		11.1+\delta ]$, and $s_{n}$ could be either $0$ or $1$, the state variable
		$\mathit{stress}$ is increased and therefore $s_{n {+}1} \in 1..2$.
		
		\item Let us assume that the coolant is active in $\mathit{Sys_{n}}$; we
		prove that $v_{n} \in ( 9.9-k *(1+\delta), 11.1+\delta -k*(1-\delta)] $
		for some $k \in 1..5 $ and the coolant is not active in
		$\mathit{Sys_{n{-}k}}$ and active in $\mathit{Sys_{n-k+1}}, \dots,
		\mathit{Sys_{n}}$. Moreover, we have to prove that if $k\leq 3$ then $s_n
		\in 1..k{+}1 $, otherwise, if $k > 3$ then $s_n =0$.
		
		We prove the first statement. That is, we prove that $v_{n} \in ( 9.9-k
		*(1+\delta), 11.1+\delta -k*(1-\delta)] $, for some $k \in 1..5 $, and the
		coolant is not active in $\mathit{Sys_{n{-}k}}$, whereas it is active in
		the systems $\mathit{Sys_{n-k+1}}, \dots, \mathit{Sys_{n}}$.

		%%Hence let us suppose that  the coolant is   active  in $\mathit{Sys_{n}}$  and   we prove that
		%% there exists  $k \in [1,5] $
		%% such that $v_{n} \in ( 9.9-k *(1+\delta), 11.1+\delta -k*(1-\delta)] $  the coolant is not active in $\mathit{Sys_{n -k}}$ and is active
		%%in $\mathit{Sys_{n-k+1}}, \dots \mathit{Sys_{n}}$.%
		%% MAssimo: perche' ripeti la stessa identica cosa 2 volte?
		
		We separate the case in which the coolant is active in
		$\mathit{Sys_{n{-}1}}$ from that in which is not active.
		
		\begin{itemize}[noitemsep]
			\item Suppose the coolant is not active in $\mathit{Sys_{n{-}1}}$ (and active in $\mathit{Sys_{n}}$).
			
			In this case $k=1$ as the coolant is not active in $\mathit{Sys_{n-1}}$
			and it is active in $\mathit{Sys_{n}}$. Since $k=1$, we have to prove $v_n
			\in (9.9-(1+\delta), 11.1+\delta-(1-\delta)]$.
			
			However, since the coolant is not active in $\mathit{Sys_{n-1}}$ and is
			active in $\mathit{Sys_{n}}$ it means that the coolant has been switched
			on in $\mathit{Sys_{n-1}}$ because the sensed temperature was above $10$
			(since $\epsilon=0.1$ this may happen only if $v_{n-1} > 9.9$). By the
			induction hypothesis, since the coolant is not active in
			$\mathit{Sys_{n-1}}$, we have that $v_{n-1} \in [0, 11.1+\delta ]$.
			Therefore, from $v_{n-1} > 9.9$ and $v_{n-1} \in [0, 11.1+\delta ]$ it
			follows that $v_{n-1} \in (9.9, 11.1+\delta ]$. Furthermore, since the
			coolant is active in $\mathit{Sys_{n}}$, the temperature will decrease of
			a value in $[1-\delta,1+\delta]$ and therefore $v_n \in (9.9-(1+\delta),
			11.1+\delta-(1-\delta)]$, which concludes this case of the proof.
			
			\item Suppose the coolant is active in $\mathit{Sys_{n{-}1}}$ (and active
			in $\mathit{Sys_{n}}$ as well).
			
			By the induction hypothesis, there is $h \in 1..5$ such that $v_{n-1} \in
			( 9.9-h *(1+\delta) , 11.1+\delta -h*(1-\delta)] $ and the coolant is not
			active in $\mathit{Sys_{n{-}1{-}h}}$ and is active in
			$\mathit{Sys_{n{-}h}}, \ldots, \mathit{Sys_{n{-}1}}$.
			
			The case $h=5$ is not admissible. In fact, since $\delta=0.4$, if $h=5$
			then $v_{n-1 }\in (9.9-5*1.4, 11.1+\delta -5*0.6]=(2.8, 8.6] $.
			Furthermore, since the cooling system has been active for $5$ time
			instants, in $\mathit{Sys_{n{-}1}}$ the controller and the IDS synchronise
			together via channel $\mathit{sync}$, and the $\mathit{IDS}$ checks the
			received temperature. As $v_{n-1 }\in (2.8, 8.6] $, the $\mathit{IDS}$
			sends to the controller via channel $\mathit{ins}$ the command
			$\mathsf{stop}$. This implies that the controller should turn off the
			cooling system, in contradiction with the hypothesis that the coolant is
			active in $\mathit{Sys_{n }}$.
			
			Hence, it must be $h \in 1 .. 4$. Let us prove that for $k=h+1$ we obtain
			our result. Namely, we have to prove that, for $k=h+1$, (i) $v_{n} \in (
			9.9-k *(1+\delta), 11.1+\delta -k*(1-\delta)] $, and (ii) the coolant is
			not active in $\mathit{Sys_{n{-}k}}$ and active in $\mathit{Sys_{n-k+1}},
			\dots, \mathit{Sys_{n}}$.
			
			Let us prove the statement (i). By the induction hypothesis, it holds that
			$v_{n-1} \in ( 9.9-h *(1+\delta) , 11.1+\delta -h*(1-\delta)] $. Since the
			coolant is active in $\mathit{Sys_{n}}$, the temperature will decrease
			Hence, $v_{n } \in ( 9.9-(h+1) *(1+\delta) , 11.1+\delta
			-(h+1)*(1-\delta)] $. Therefore, since $k=h+1$, we have that $v_{n} \in (
			9.9-k *(1+\delta) , 11.1+\delta -k*(1-\delta)] $.
			
			Let us prove the statement (ii). By the induction hypothesis the coolant
			is not active in $\mathit{Sys_{n-1-h}}$ and it is active in
			$\mathit{Sys_{n-h}}, \ldots, \mathit{Sys_{n-1}}$. Now, since the coolant
			is active in $\mathit{Sys_{n}}$, for $k=h+1$, we have that the coolant is
			not active in $\mathit{Sys_{n-k}}$ and is active in $\mathit{Sys_{n-k+1}},
			\ldots, \mathit{Sys_{n}}$, which concludes this case of the proof.
		\end{itemize}
		
		Thus, we have proved that $v_{n} \in ( 9.9-k *(1+\delta), 11.1+\delta
		-k*(1-\delta)] $, for some $k \in 1..5 $; moreover, the coolant is not
		active in $\mathit{Sys_{n{-}k}}$ and active in the systems
		$\mathit{Sys_{n-k+1}}, \dots, \mathit{Sys_{n}}$.
		
		It remains to prove that $s_n \in 1..k{+}1 $ if $k\leq 3$, and $s_n =0$,
		otherwise.
		
		By inductive hypothesis, since the coolant is not active in
		$\mathit{Sys_{n{-}k}}$, we have that $s_{n{-}k} \in 0..1$. Now, for $k \in
		[1..2]$, the temperature could be greater than $9.9$. Hence if the state
		variable $\mathit{stress}$ is either increased or reset, then $s_n \in
		1..k{+}1 $, for $k\in 1.. 3$. Moreover, since for $k\in 3..5 $ the
		temperature is below $9.9$, it follows that $s_n =0$ for $k> 3$.
	\end{enumerate}
\end{proof}

\begin{proof}[Proof of Proposition~\ref{prop:sys}]
	Since $\delta=0.4$, by Lemma~\ref{lem:sys} the value of the state variable
	$\mathit{temp}$ is always in the real interval $[0, 11.5]$. As a
	consequence, the invariant of the system is never violated and the system
	never deadlocks. Moreover, after $5$ time units of cooling, the state
	variable $\mathit{temp}$ is always in the real interval $( 9.9-5 *1.4 ,
	11.1+0.4-5*0.6]=(2.9, 8.5]$. Hence, the process $\mathit{IDS}$ will never
	transmit on the channel $\mathit{alarm}$.
	
	Finally, by Lemma~\ref{lem:sys} the maximum value reached by the state variable $\mathit{stress}$ is $4$ and therefore the system does not reach unsafe states.
\end{proof}

\begin{proof}[Proof of  Proposition~\ref{prop:X}]
	Let us prove the two statements separately. 
	\begin{itemize}
		\item Since $\epsilon=0.1$, if process $\mathit{Ctrl}$ senses a
		temperature above $10$ (and hence $\mathit{Sys}$ turns on the cooling)
		then the value of the state variable $\mathit{temp}$ is greater than
		$9.9$. By Lemma~\ref{lem:sys}, the value of the state variable
		$\mathit{temp}$ is always less than or equal to $11.1+\delta $. Therefore, if
		$\mathit{Ctrl}$ senses a temperature above $10$, then the value of the
		state variable $\mathit{temp}$ is in $(9.9,11.1+\delta ]$.
		
		\item By Lemma~\ref{lem:sys} (third item), the coolant can be active for no
		more than $5$ time slots. Hence, by Lemma~\ref{lem:sys}, when $\mathit{Sys}$
		turns off the cooling system the state variable $\mathit{temp}$ ranges
		over $( 9.9-5 *(1+\delta) , 11.1+\delta-5*(1-\delta)]$. 
	\end{itemize}
\end{proof}

\subsection{Proofs of Section~\ref{sec:cyber-physical-attackers}}

\begin{proof}[Proof of Proposition~\ref{prop:att:DoS}]
	We distinguish the two cases, depending on $m$.
	\begin{itemize}[noitemsep]
		\item Let $m\leq 8$.
		%%, the system $\mathit{Sys}$ tolerates the attack 
		%%$A_n$. This is  because 
		%%\[
		%%Sys  \parallel  A_n \, \simeq \, Sys \enspace . 
		%%\]
		%At this regards,
		We recall that the cooling system is activated only when the sensed temperature is above $10$. Since $\epsilon = 0.1$, when this happens the state variable $\mathit{temp}$ must be at least $9.9$. Note that after $m{-}1 \leq 7$ $\tick$-actions, when the attack tries to interact with the
		controller of the actuator $\mathit{cool}$, the variable $\mathit{temp}$ may reach at most $7* (1 + \delta)= 7 * 1.4=9.8$ degrees. Thus, the cooling system will not be activated and the attack will not have any effect.
		
		\item Let $m>8$. 
		%%, we have a different situation. In fact, we will show that
		%%the following holds: 
		%%\[
		%%Sys \parallel  A_n \, \simeq_{n+5} \, Sys  \enspace . 
		%%\]
		%%This means that the attack produces an observable effect in the $n{+}6$-th
		%% time slot. 
		%%We recall that the only potential observable action in $\mathit{Sys}$ is
		%%the output on the channel $\mathit{alarm}$. And this output is emitted
		%%only when the sensed temperature is above $10$, after $5$ time units of
		%%cooling. 
		By Proposition~\ref{prop:sys}, the system $\mathit{Sys}$ in
		isolation may never deadlock, it does not get into an unsafe state, and it may never emit an output on channel $\mathit{alarm}$. Thus, any execution trace of the system $\mathit{Sys}$ consists of a sequence of $\tau$-actions and $\tick$-actions.

		In order to prove the statement it is enough to show the following four facts:
		\begin{itemize}[noitemsep]
			\item the system $\mathit{Sys} \parallel A_m$ may not deadlock in the first
			$m+3$ time slots; 
			
			\item the system $\mathit{Sys} \parallel A_m$ may not emit any output in the first $m+3$ time slots; 
			
			\item the system $\mathit{Sys} \parallel A_m$ may not enter in an unsafe
			state in the first $m+3$ time slots;
			
			\item the system $\mathit{Sys} \parallel A_m$ has a trace reaching un unsafe state from the $(m{+}4)$-th time slot on, and until the invariant gets violated and the system deadlocks. 
		\end{itemize}
		
		%%\begin{figure}[t!]
		%%\centering
		%%\includegraphics[width=7cm,keepaspectratio=true,angle=0]{./figures/Ex3_2500_crop.pdf}
		%%\caption{Simulations on the system of \autoref{exa:att:DoS}}
		%%\label{f:Ex3}
		%%\end{figure}

		The first three facts are easy to show as the attack may steal the command
		addressed to the actuator $\mathit{cool}$ only in the $m$-th time slot.
		Thus, until time slot $m$, the whole system behaves correctly. In
		particular, by Proposition~\ref{prop:sys} and Proposition~\ref{prop:X}, no alarms,
		deadlocks or violations of safety conditions occur, and the temperature
		lies in the expected ranges. Any of those three actions requires at least
		further $4$ time slots to occur. Indeed, by Lemma~\ref{lem:sys}, when the
		cooling is switched on in the time slot $m$, the variable
		$\mathit{stress}$ might be equal to $2$ and hence the system might not
		enters in an unsafe state in the first $m+3$ time slots. Moreover, an
		alarm or a deadlock needs more than $3$ time slots and hence no alarm can
		occur in the first $m+3$ time slots.
		
		Let us show the fourth fact, i.e., that there is a trace where the system
		$\mathit{Sys} \parallel A_m$ enters into an unsafe state starting from the
		$(m{+}4)$-th time slot and until the invariant gets violated.
		
		Firstly, we prove that for all time slots $n$, with $9\leq n < m$, there
		is a trace of the system $\mathit{Sys} \parallel A_m$ in which the state
		variable $\mathit{temp}$ reaches the values $10.1$ in the time slot $n$.
		
		The fastest trace reaching the temperature of $10.1$ degrees requires
		$\lceil \frac{10.1}{1 + \delta }\rceil = \lceil \frac{10.1}{1.4 }\rceil
		=8$ time units, whereas the slowest one $\lceil \frac{10.1}{1 - \delta
		}\rceil = \lceil \frac{10.1}{0.6 }\rceil =17$ time units. Thus, for any
		time slot $n$, with $9 \leq n \leq 18$, there is a trace of the system
		where the value of the state variable $\mathit{temp}$ is $10.1$. Now, for
		any of those time slots $n$ there is a trace in which the state variable
		$\mathit{temp}$ is equal to $10.1$ in all time slots $n+10i < m$, with
		$i\in \mathbb{N}$. Indeed, when the variable $\mathit{temp}$ is equal to
		$10.1$ the cooling might be activated. Thus, there is a trace in which the
		cooling system is activated. We can always assume that during the cooling
		the temperature decreases of $1+\delta$ degrees per time unit, reaching at
		the end of the cooling cycle the value of $5$. This entails that the trace
		may continue with $5$ time slots in which the variable $\mathit{temp}$ is
		increased of $1+\delta$ degrees per time unit; reaching again the value
		$10.1$. Thus, for all time slots $n$, with $9 \leq n < m$, there is a
		trace of the system $\mathit{Sys} \parallel A_m$ in which the state
		variable $\mathit{temp}$ is $10.1$ in $n$.
		
		As a consequence, we can suppose that in the $m{-}1$-th time slot there is
		a trace in which the value of the variable $\mathit{temp}$ is $10.1$.
		Since $\epsilon=0.1$, the sensed temperature lays in the real interval
		$[10,10.2]$. Let us focus on the trace in which the sensed temperature is
		$10$ and the cooling system is not activated. In this case, in the $m$-th
		time slot the system may reach a temperature of $10.1 + (1 + \delta)=11.5$
		degrees and the variable $\mathit{stress}$ is $1$.
		
		The process $\mathit{Ctrl}$ will sense a temperature above $10$ sending
		the command $\snda {cool} {\on}$ to the actuator $\mathit{cool}$. Now,
		since the attack $A_m$ is active in that time slot ($m > 8$), the command
		will be stolen by the attack and it will never reach the actuator. Without
		that dose of coolant, the temperature of the system will continue to grow.
		As a consequence, after further $4$ time units of cooling, i.e.\ in the
		$m{+}4$-th time slot, the value of the state variable $\mathit{stress}$
		may be $5$ and the system enters in an unsafe state.
		
		After $1$ time slots, in the time slot $m+5$, the controller and the
		$\mathit{IDS}$ synchronise via channel $\mathit{sync}$, the $\mathit{IDS}$
		will detect a temperature above $10$, and it will fire the output on
		channel $\mathit{alarm}$ saying to process $\mathit{Ctrl}$ to keep
		cooling. But $\mathit{Ctrl}$ will not send again the command $\snda {cool}
		{\on}$. Hence, the temperature would continue to increase and the system
		remains in an unsafe state while the process $\mathit{IDS}$ will keep
		sending of $\mathit{alarm}$(s) until the invariant of the environment gets
		violated.
	\end{itemize}
\end{proof}

\begin{proof}[Proof of Proposition~\ref{prop:att:dos-integrity}] 
	As proved for Proposition~\ref{prop:att:DoS}, 
	we can prove that  there
	is a trace of the system $\mathit{Sys} \parallel A_m$ in which the state
	variable $\mathit{temp}$ reaches a value greater than $9.9$ in the time slot $m$, for all    $ m >8$.
	The process $\mathit{Ctrl}$ never activates the $\mathit{Cooling}$ component  as it will always detect a temperature below $10$.
	Hence, after further $5$ $\tick$-actions
	(in the $m+5$-th time slot)
	the system will violate the safety conditions emitting an $\unsafe$ 
	action. So, since the process $\mathit{Ctrl}$ never activates the $\mathit{Cooling}$,
	the temperature will increase its value until the invariant will be violated and the
	system will deadlock.  
	Thus, $\mathit{Sys} \parallel A_m \; \sqsubseteq _{m+5..\infty} \; \mathit{Sys}$. 
\end{proof}

In order to prove Proposition~\ref{prop:att:integrity}, we introduce Lemma~\ref{lem:sys2}. This is a variant of Lemma~\ref{lem:sys} in which the \CPS{} $\mathit{Sys} $ runs in parallel with the attack $A_n$ defined in Example~\ref{exa:att:integrity}. Here, due to the presence of the attack, the temperature is $4$ degrees higher when compared to the system $\mathit{Sys}$ in isolation. 
\begin{lemma} 
	\label{lem:sys2}
	Let $\mathit{Sys}$ be the system defined in Section~\ref{sec:running_example} and $A_n$ be the attack of Example~\ref{exa:att:integrity}. 
	Let
	\begin{small}
		\begin{displaymath}
		\mathit{Sys} \parallel A_n= \mathit{Sys_1}  \trans{t_1}\trans\tick \dots \mathit{Sys_{n-1}}  \trans{t_{n-1}}\trans\tick \mathit{Sys_n} 
		\end{displaymath}
	\end{small}%
	such that the traces $t_j$ contain no $\tick$-actions, for any $j \in  1 .. n{-}1 $,  and for any  $i \in  1 .. n $ $\mathit{Sys_i}= \confCPS {S_i}{P_i} $ with 
	$S_i = \langle \statefun^i{} ,  \sensorfun^i{} , \actuatorfun^i{} \rangle$.
	Then, for any $i \in 1 .. n{-}1 $ we have the following:
	\begin{itemize}[noitemsep]
		\item  if   $ \actuatorfun^i{}(\mathit{cool})= \off $, then
		$\statefun^i{}(\mathit{temp})\in [0, 11.1+4+\delta ]$; 
		
		\item if $ \actuatorfun^i{}(\mathit{cool})= \off $ and
		$\statefun^i{}(\mathit{temp})\in (10.1+4, 11.1+4+\delta ]$, then we have $
		\actuatorfun^{i+1}{}(\mathit{cool}) =\on$;

		\item  if  $ \actuatorfun^i{}(\mathit{cool})=\on$, then   $\statefun^i{}(\mathit{temp}) \in ( 9.9+4-k *(1+\delta) , 11.1+4+\delta -k*(1-\delta)] $, 
		for some  $k  \in 1..5$,   such that $\actuatorfun^{i-k}{}(\mathit{cool}) =\off $ and $ \actuatorfun^{i-j}{}(\mathit{cool}) =\on $, for $j \in 0..k{-}1$. 
		
	\end{itemize}
\end{lemma}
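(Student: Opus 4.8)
The plan is to establish Lemma~\ref{lem:sys2} by mathematical induction on $n$, the number of $\tick$-actions, mirroring step by step the proof of Lemma~\ref{lem:sys}. The key observation that drives the whole argument is that, within the scope of the lemma, the attack $A_n$ is always active and in every time slot it supplies to both $\mathit{Ctrl}$ and $\mathit{IDS}$ a \emph{forged} reading equal to the genuine measured temperature decreased by the fixed offset $4$. Hence every decision of the logical component---the activation of the cooling when the forged reading exceeds $10$, and the IDS check performed after five consecutive cooling slots---is triggered when the real value of $\mathit{temp}$ is exactly $4$ degrees higher than in the attack-free run analysed in Lemma~\ref{lem:sys}. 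This is precisely why each of the three bounds of Lemma~\ref{lem:sys2} is the corresponding bound of Lemma~\ref{lem:sys} shifted up by $4$.

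The base case $n=1$ follows directly from the definition of the initial configuration $\mathit{Sys} \parallel A_n$. For the inductive step I would assume the three statements for $n-1$ and prove them for $n$, by the same case analysis used in Lemma~\ref{lem:sys}. For the first item (coolant off in $\mathit{Sys_n}$) I split on the status of the coolant in $\mathit{Sys_{n-1}}$: if it was off, the inductive hypothesis together with the second item excludes $\statefun^{n-1}{}(\mathit{temp}) \in (10.1{+}4, 11.1{+}4{+}\delta]$, so $\statefun^{n-1}{}(\mathit{temp}) \le 10.1{+}4$ and one heating step (at most $1{+}\delta$) keeps the value in $[0, 11.1{+}4{+}\delta]$; if it was on, the only admissible length is $k=5$, and since the IDS reads the forged, and hence below-threshold, temperature it commands $\mathsf{stop}$, the cooling is switched off, and one heating step gives the bound.

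For the second item I would observe that with the coolant off the controller senses $\statefun^{n}{}(\mathit{temp}) - 4$ up to the error $\epsilon = 0.1$, so $\statefun^{n}{}(\mathit{temp}) > 10.1{+}4$ makes the forged reading exceed $10$ and the cooling is activated in $\mathit{Sys_{n+1}}$. For the third item (coolant on in $\mathit{Sys_n}$) I again split on $\mathit{Sys_{n-1}}$: if the coolant was off, then $k=1$ and its activation forces $\statefun^{n-1}{}(\mathit{temp}) > 9.9{+}4$, so one cooling step lands in $(9.9{+}4{-}(1{+}\delta), 11.1{+}4{+}\delta{-}(1{-}\delta)]$; if it was on, the inductive hypothesis yields some $h \in 1..5$, the case $h=5$ is excluded exactly as before because the IDS would command $\mathsf{stop}$, and for $k = h{+}1$ one cooling step produces the claimed interval together with the required actuator history.

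The only point requiring real care---and hence the main obstacle---is the consistent bookkeeping of the $4$-degree offset across all the threshold comparisons: the activation threshold becomes $\mathit{temp} > 9.9{+}4$ instead of $9.9$, and, crucially, in the IDS check after five cooling slots one must verify that the forged temperature has indeed dropped below $10$ (i.e.\ the real temperature below $14$) so that the IDS issues $\mathsf{stop}$ rather than $\mathsf{keep\_cooling}$. Once these shifted thresholds are in place, the remaining arithmetic is a verbatim translation of the computations in the proof of Lemma~\ref{lem:sys}, with every bound replaced by its $+4$ counterpart; note also that, unlike Lemma~\ref{lem:sys}, no claims on the stress variable are needed here, which slightly simplifies the case analysis.
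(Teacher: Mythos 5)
Your proposal matches the paper's proof: the paper establishes this lemma with the single line ``Similar to the proof of Lemma~\ref{lem:sys}'', and your induction on the number of $\tick$-actions, with every threshold and bound of Lemma~\ref{lem:sys} shifted upward by the forged offset $4$, is exactly that argument worked out in detail. Your identification of the two critical checkpoints --- the activation threshold becoming $\mathit{temp} > 9.9 + 4$ because $\mathit{Ctrl}$ and $\mathit{IDS}$ only see the forged reading, and the exclusion of cooling phases longer than five slots because after five cooling steps the forged reading is below $10$ so the IDS issues $\mathsf{stop}$ --- is precisely the bookkeeping that the paper's reference to Lemma~\ref{lem:sys} implicitly requires.
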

\begin{proof}
	Similar to the proof of    Lemma~\ref{lem:sys}.
\end{proof}

Now, everything is in place to prove Proposition~\ref{prop:att:integrity}. 
\begin{proof}[Proof of Proposition~\ref{prop:att:integrity}]
	Let us proced by case analysis. 
	\begin{itemize}[noitemsep]
		\item 
		Let $0 \leq n \leq 8$.  
		%%It is easy to derive that $
		%%Sys  \parallel  A_n  \, \simeq  \, Sys$. 
		%%This is because in the system under attack we have the following properties: i) the invariant is always preserved, and ii) 
		%% the alarm  will be never turned on. As a consequence, any trace of the 
		%system $\mathit{Sys} \parallel  A_n$ consists of a sequence of $\tau$-actions and $\tick$-actions.
		In the proof of Proposition~\ref{prop:att:DoS}, we remarked that the system $\mathit{Sys}$ in isolation may sense a temperature greater than $10$  
		only after $8$ $\tick$-actions, i.e., in the $9$-th time slot.
		%% We recall that the attack sends to 
		%%the controller the temperature of the sensor decreased by $2$. 
		However, the life of the attack is $n \leq 8$, and in the $9$-th time
		slot the attack is already terminated. As a consequence, starting
		from the $9$-th time slot the system will correctly sense the
		temperature and it will correctly activate the cooling system.
		\item Let $n=9$. 
		The maximum value that may be reached by the state variable
		$\mathit{temp}$ after $8$ $\tick$-actions, i.e., in the $9$-th time
		slot, is $8 * (1+ \delta)=8*1.4= 11.2$. However, since in the $9$-th
		time slot the attack is still alive, the process $\mathit{Ctrl}$
		will sense a temperature below $10$ and the system will move to the next
		time slot and the state variable   $\mathit{stress}$ is incremented. Then, in the $10$-th time slot, when the attack is
		already terminated, the maximum temperature the system may reach is $11.2
		+ (1+ \delta)=12.6$ degrees and the state variable $\mathit{stress}$ is equal to  $1$. Thus, the process $\mathit{Ctrl}$ will sense
		a temperature greater than $10$, activating the cooling system  and incrementing the state variable $\mathit{stress}$. 
		As a consequence, during the following $4$ time units of cooling, the value of the state variable $\mathit{temp}$ will be at most
		$12.6 - 4*(1-\delta)= 12.6- 4*0.6=10.2$, and hence  in the $14$-th time
		slot, the value of the state variable $\mathit{stress}$ is $5$ and  the system will enter in an unsafe state. 
		In the $15$-th time
		slot, the value of the state variable $\mathit{stress}$ is still equal to $5$ 
		and the system will still be in an unsafe state.
		However, 
		the value of the state variable $\mathit{temp}$ will be at most
		$12.6 - 5*(1-\delta)= 12.6- 5*0.6=9.6$
		which will be sensed by process $\mathit{IDS}$  as at most
		$9.7$ (sensor error $\epsilon=0.1$). As a consequence,  no alarm will be turned on and the   variable $\mathit{stress}$ will be reset.
		Moreover, the invariant will be obviously always preserved. 
		
		As in the current time slot the attack has already terminated, from this
		point in time on, the system will behave correctly with neither deadlocks
		or alarms.

		\item 
		Let $n \geq 10$. 
		In order to prove that  $\mathit{Sys} \parallel A_{n} \, \sqsubseteq_{[14,n{+}11]} \,
		\mathit{Sys}$,   it is enough to show that:
		\begin{itemize}[noitemsep]
			
			\item  the system $\mathit{Sys} \parallel A_n$ does not deadlock;
			
			\item the system $\mathit{Sys} \parallel A_{n}$ may not emit any output in the first $13$ time slots; 
			\item there is a trace in which the system $\mathit{Sys} \parallel A_{n}$ enters in an unsafe state in the $14$-th time slot;
			
			\item 
			there is a trace in which the system $\mathit{Sys} \parallel A_n$ is in an unsafe state in the $(n{+}11)$-th time slot;

			\item   the system $\mathit{Sys} \parallel A_n$ does not have any execution
			trace emitting an output along channel $\mathit{alarm}$
			or entering in an unsafe state after   the $n+11$-th time slot. 
			
		\end{itemize}
		
		As regards the first fact, since $\delta=0.4$, by Lemma~\ref{lem:sys2} the
		temperature of the system under attack will always remain in the real
		interval $[0, 15.5]$. Thus, the invariant is never violated and the trace
		of the system under attack cannot contain any $\dead$-action. Moreover,
		when the attack terminates, if the temperature is in $[0,9.9]$, the system
		will continue his behaviour correctly, as in isolation. Otherwise, since
		the temperature is at most $15.5$, after a possible sequence of cooling
		cycles, the temperature will reach a value in the interval $[0,9.9]$, and
		again the system will continue its behaviour correctly, as in isolation.
		
		Concerning the second and the third facts, the proof is analogous to that
		of case $ n=9$.
		
		%Concerning the second fact, when the process $\mathit{Check}$ tests the temperature (provided by the attack), by 
		%\autoref{prop:X2} (a straightforward variant of  \autoref{prop:X}),
		%%%in the first $15$ time slots
		%it follows that the state variable $\mathit{temp}$ can range in the interval $(4.9, 10.5]$. As $\epsilon = 0.1$ and the attack
		%is alive, the sensed temperature is at most $8.6$ and no alarm will be emitted. 
		
		Concerning the fourth fact, firstly we prove that for all time slots $m$,
		with $9 < m \leq n$, there is a trace of the system $\mathit{Sys}
		\parallel A_n$ in which the state variable $\mathit{temp}$ reaches the
		values $14$ in the time slot $m$. Since the attack is alive at that time,
		and $\epsilon=0.1$, when the variable $\mathit{temp}$ will be equal to
		$14$ the sensed temperature will lay in the real interval $[9.9,10.1]$.
		
		The fastest trace reaching the temperature of $14$ degrees requires
		$\lceil \frac{14}{1 + \delta }\rceil = \lceil \frac{14}{1.4 }\rceil =   \frac{14}{1.4 } =10$
		time units, whereas the slowest one $\lceil \frac{14}{1 - \delta }\rceil =
		\lceil \frac{14}{0.6 }\rceil =24$ time units. Thus, for any time slot $m$,
		with $9 < m \leq 24$, there is a trace of the system where the value of
		the state variable $\mathit{temp}$ is $14$. Now, for any of those time
		slots $m$ there is a trace in which the state variable $\mathit{temp}$ is
		equal to $14$ in all time slots $m+10i < n$, with $i\in \mathbb{N}$. As
		already said, when the variable $\mathit{temp}$ is equal to $14$ the
		sensed temperature lays in the real interval $[9.9, 10.1]$ and the cooling
		might be activated. Thus, there is a trace in which the cooling system is
		activated. We can always find a trace where during the cooling the
		temperature decreases of $1+\delta$ degrees per time unit, reaching at the
		end of the cooling cycle the value of $5$. Thus, the trace may continue
		with $5$ time slots in which the variable $\mathit{temp}$ is increased of
		$1+\delta$ degrees per time unit; reaching again the value $14$. Thus, for
		all time slots $m$, with $9 < m \leq n$, there is a trace of the system
		$\mathit{Sys} \parallel A_n$ in which the state variable $\mathit{temp}$
		has value $14$ in the time slot $m$.
		
		Therefore, we can suppose that in the $n$-th time slot the variable
		$\mathit{temp}$ is equal to $14$ and, since the maximum increment of
		temperature is $1.4$, the the variable $\mathit{stress}$ is at least equal
		to $1$. Since the attack is alive and $\epsilon=0.1$, in the $n$-th time
		slot the sensed temperature will lay in $[9.9,10.1]$. We consider the case
		in which the sensed temperature is less than $10$ and hence the cooling is
		not activated.
		Thus, in the $n{+}1$-th time slot the system may reach a temperature of
		$14 + 1 + \delta=15.4$ degrees and the process $\mathit{Ctrl}$ will sense
		a temperature above $10$, and it will activate the cooling system. In this
		case, the variable $\mathit{stress}$ will be increased. As a consequence,
		after further $5$ time units of cooling, i.e.\ in the $n{+}6$-th time
		slot, the value of the state variable $\mathit{temp}$ may reach $15.4 -
		5*(1-\delta)=12.4$ and the alarm will be fired and the variable
		$\mathit{stress}$ will be still equal to $5$. After $4$ time unit in the $n{+}10$-th
		time slot
		the state variable $\mathit{temp}$ may reach $12.4 -
		4*(1-\delta)=10$ and   the variable
		$\mathit{stress}$ will be still equal to $5$ and
		the system will be in an unsafe state.
		So  in the $n{+}11$-th
		time slot $\mathit{stress}$ will be still equal to $5$ and
		the system will be in an unsafe state.
		
		Concerning the fifth fact,
		% we can distinguish two cases: (i) the system has a temperature less than $9.9$ $n+7$-th time slot, 
		%or (ii)the system is  in an unsafe state in the $n+7$-th time slot
		%
		% Let us start with the case
		%(i). Since the attack terminated more than $5$ time slot ago, and since the system is not in an unsafe state, it follows that the possible activation of 
		%the refrigerant drove the temperature under the desired thresholds. So,
		%after the $n+7$-th time slot the system will keep behaving correctly, as
		%if in isolation.
		%  
		%Consider now the case (ii), i.e.,the system is in an unsafe state in the $n+7$-th time slot. 
		by Lemma~\ref{lem:sys2}, in the $n{+}1$-th time slot the attack will be terminated and the system may reach a temperature that is, in the worst case, at most $15.5$. Thus, the cooling system may be activated and the variable $\mathit{stress}$ will be increased. As a consequence, in the $n{+}11$-th time slot, the value of the state variable $\mathit{temp}$ may be at most $15.5-10*(1-\delta)=15.5-10*0.6=9.5$ and the variable $\mathit{stress}$ will be reset to $0$. Thus, after the $n+11$-th time slot, the system will behave correctly, as in isolation.
	\end{itemize}
\end{proof}

In order to prove Theorem~\ref{thm:sound}, we introduce the following lemma. 
\begin{lemma}
	\label{lem:top}
	Let $M$ be an honest and sound  \CPS{}, $C$ an arbitrary class
	of attacks, and $A$ an attack of a class $C' \preceq C$.
	Whenever $M\parallel A\trans t M'\parallel A'$, then 
	\begin{math}
	M\parallel \mathit{Top}(C)  \Trans{\hat{t}} M'  \parallel  \prod_{\iota \in \I} \mathit{Att}(\iota, \#\tick(t){+}1, C(\iota)) .
	\end{math}
\end{lemma}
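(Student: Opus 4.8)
The plan is to proceed by induction on the length of the trace $t$. The base case $t=\varepsilon$ is immediate: $\mathit{Top}(C)=\prod_{\iota\in\I}\mathit{Att}(\iota,1,C(\iota))$ is literally the required residual for $\#\tick(\varepsilon){+}1=1$, and $\Trans{}$ is reflexive. For the inductive step I would write $t=t'\alpha$ with $M\parallel A\trans{t'}M''\parallel A''\trans{\alpha}M'\parallel A'$, apply the induction hypothesis to obtain $M\parallel\mathit{Top}(C)\Trans{\hat{t'}}M''\parallel T_k$, where $T_k=\prod_{\iota\in\I}\mathit{Att}(\iota,k,C(\iota))$ and $k=\#\tick(t'){+}1$ is the current time slot, and then show that $M''\parallel T_k$ weakly matches the single action $\alpha$, landing in $M'\parallel T_{k'}$ with $k'=\#\tick(t){+}1$. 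Everything reduces to a case analysis on $\alpha$ and on the SOS rule justifying the last step, together with careful bookkeeping of the index: for a non-$\tick$ malicious step the armed branch of each $\mathit{Att}(\iota,k,C(\iota))$ loops back to $\mathit{Att}(\iota,k,C(\iota))$, so the residual stays $T_k$ (hence $k'=k$), whereas a $\tick$ advances every component from index $k$ to $k{+}1$ (hence $k'=k{+}1$).

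The easy cases are where $\alpha$ is a channel communication, a pure internal step of $M''$, a $\dead$, or a $\unsafe$: the top attacker simply does not participate, the same rule applies to $M''\parallel T_k$, and $k$ is unchanged. If $\alpha=\tick$, then both $M''$ and $A''$ tick, and I would check that every component $\mathit{Att}(\iota,k,C(\iota))$ admits a $\tick$-transition to $\mathit{Att}(\iota,k{+}1,C(\iota))$ --- via \rulename{Timeout} on its armed branch, via \rulename{Sleep} on its $\tick$ branch, or via \rulename{TimeNil} once the component has reduced to $\nil$ (which coincides with $\mathit{Att}(\iota,k{+}1,C(\iota))$ when $k\geq\sup C(\iota)$) --- so that by \rulename{TimePar} and \rulename{Time} we get $M''\parallel T_k\trans{\tick}M'\parallel T_{k+1}$, choosing the same successor physical state selected in $M''\parallel A''$.

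The core of the argument is the case $\alpha=\tau$ arising from an interaction with a physical device, and here $C'\preceq C$ does the work. If the step is a malicious activity $\iota$ of $A''$ at slot $k$ (a sniff, a drop, a sensor forge, or an actuator forge), then by Definition~\ref{def:attacker-class} we have $k\in C'(\iota)\subseteq C(\iota)$, so the homologous component of $T_k$ is armed; choosing the drop/forge branch of its $\oplus$ and instantiating $\mathit{rnd}(\mathbb{R})$ to the very value forged by $A''$ lets the top attacker reproduce the same synchronisation through \rulename{$\mbox{\Lightning}$ActDrop$\mbox{\,\Lightning}$}, \rulename{$\mbox{\Lightning}$SensIntegr$\mbox{\,\Lightning}$}, \rulename{$\mbox{\Lightning}$SensSniff$\mbox{\,\Lightning}$} or \rulename{$\mbox{\Lightning}$ActIntegr$\mbox{\,\Lightning}$}, reaching the same $M'$ with residual still $T_k$.

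The genuinely delicate point --- and the one I expect to be the main obstacle --- is matching the \emph{honest} device accesses of $M''$ itself, because the pre-emptive power recorded in Remark~\ref{rem:deadlock}'s companion remark means that the mere presence of an armed drop/forge branch in $T_k$ fires the negative premise of \rulename{SensRead} or \rulename{ActWrite} and blocks the honest access outright. The resolution I would pursue is to have the top attacker interfere \emph{actively} rather than stand aside: when $M''$ reads the genuine value $v=\sensorfun(s)$, let $T_k$ forge exactly $v$ (again via $\mathit{rnd}(\mathbb{R}){=}v$) and synchronise through \rulename{$\mbox{\Lightning}$SensIntegr$\mbox{\,\Lightning}$}, so the controller still receives $v$; when $M''$ writes $v$ to an actuator, let $T_k$ first drop the command and then forge $v$ back onto it, two $\tau$-steps whose net effect is $\actuatorfun[a\mapsto v]$, harmlessly absorbed by the weak closure $\Trans{\hat{t}}$. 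The careful part is to verify, slot by slot, that whenever such re-forging is \emph{forced} (i.e.\ when $k\in C(\mbox{\Lightning}s!)$ or $k\in C(\mbox{\Lightning}a?)$) the top attacker indeed has the matching forge capability available at $k$, that otherwise no interference is possible and the honest rule applies verbatim, and that in every sub-case all negative premises, invariant side-conditions, and the nondeterministic $\oplus$ resolutions line up so that $M''\parallel T_k$ reaches \emph{precisely} the successor physical state of $M'\parallel A'$. Discharging those negative premises in the presence of $\oplus$, while keeping the index $k$ consistent across these possibly-doubled $\tau$-steps, is the technical crux of the lemma.
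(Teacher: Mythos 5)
Your inductive skeleton, the easy cases, and the malicious-synchronisation cases (where $C' \preceq C$ together with instantiating $\mathit{rnd}(\mathbb{R})$ to the value forged by $A$ does the work) all line up with the paper's proof. The gap is in what you yourself identify as the crux: matching the \emph{honest} device accesses of $M''$. Your plan makes the top attacker interfere actively---re-forge the read value for \rulename{SensRead}, and drop-then-re-forge for \rulename{ActWrite}---and you correctly note that this requires the re-forging capability to be available whenever interference is ``forced''. That check fails: a class of attacks is an arbitrary total function in $[\I \rightarrow {\cal P}(m..n)]$, and nothing in Definition~\ref{def:attacker-class} relates $C(\mbox{\Lightning}a?)$ to $C(\mbox{\Lightning}a!)$. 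Take $C$ with $C(\mbox{\Lightning}a?) = \{k,k'\}$ and $C(\mbox{\Lightning}a!) = \emptyset$, and let $A$ be an attack of the weaker class $C'$ with $C'(\mbox{\Lightning}a?) = \{k'\}$, $k' \neq k$. Then $M \parallel A$ can perform an honest actuator write in slot $k$, reaching a state whose actuator function is $\actuatorfun[a \mapsto v]$, while your $T_k$ has the drop armed at $k$ and no forge at any slot: under your reading the honest write is blocked, the top attacker can only absorb the command, and the configuration $M'$ with the updated actuator is unreachable. Since the lemma's conclusion demands landing exactly in $M' \parallel \prod_{\iota \in \I}\mathit{Att}(\iota,\#\tick(t){+}1,C(\iota))$, your strategy cannot be completed in this sub-case.

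The paper avoids this because in its proof interference is never forced. Since $P \oplus Q$ abbreviates a conditional on the nondeterministic guard $\mathit{rnd}(\{\true,\false\})$, the paper handles the \rulename{SensRead} and \rulename{ActWrite} cases by resolving $\mathit{rnd}(\{\true,\false\}) = \false$ in every component of $\mathit{Top}(C)$: each $\mathit{Att}(\iota,k,C(\iota))$ then offers only its $\tick$ branch, so $\mathit{Top}(C) \ntrans{\snda{\mbox{\Lightning}s}{v}}$ and $\mathit{Top}(C) \ntrans{\rcva{\mbox{\Lightning}a}{v}}$, the negative premises of the honest rules are satisfied, and the honest access goes through verbatim with the residual $\prod_{\iota\in\I}\mathit{Att}(\iota,k,C(\iota))$ syntactically unchanged. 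In other words, resolving the internal choice \emph{disarms} the pre-emptive capability. This is admittedly a subtle point about how nondeterministic guards interact with negative premises, but it is the only reading under which the lemma can hold at all: under your reading (an armed branch irrevocably falsifies the negative premise), the counterexample above refutes the statement itself, not merely the proof. So the repair is not finer slot-by-slot bookkeeping for your active-interference machinery, but discarding it: let the top attacker step aside through the $\oplus$, exactly as the paper does.
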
 
\begin{proof}
	Let us define $\mathit{Top}^h(C)$ as the attack process
	$\prod_{\iota \in \I } \mathit{Att}( \iota , h , C(\iota))$. 
	Then, $\mathit{Top}^1 (C)=\mathit{Top} (C)$. 
	The proof is by mathematical induction on the length $k$ of the trace $t$.

	\noindent 
	\emph{Base case.} $k=1$.
	This means $t=\alpha $, for some action $\alpha$.
	We proceed by case analysis on $\alpha$.
	\begin{itemize}
		\item $\alpha = \out {c} v$. As the attacker $A$ does not use 
		communication channels, from $M\parallel A\trans {\out {c} v} M'\parallel A'$ we can derive that $A=A'$ and $M \trans {\out {c} v} M'$. Thus, by applications of rules \rulename{Par} and \rulename{Out} we derive $M \parallel \mathit{Top}(C) \trans {\out {c} v} M' \parallel \mathit{Top}^1(C)=M' \parallel \mathit{Top}(C)$.
		
		\item $\alpha = \inp {c} v$. This case is similar to the previous one.
		
		\item $\alpha =\tau$. There are five sub-cases. 
		\begin{itemize}

			\item Let $M\parallel A\trans {\tau}  M'\parallel A'$ be derived by 
			an application of rule \rulename{SensRead}. 
			Since the attacker $A$ performs only malicious actions, from $M\parallel A\trans {\tau} M'\parallel A'$ we can derive that $A=A'$ and
			$P \trans {\rcva { s} v} P'$ for some process $P$ and $P$' such that
			$M=\confCPS {E; S} P$ and $M'=\confCPS {E; S} {P'}$.
			By considering $\mathit{rnd}(\{\true,\false\})=\false$ for any process $\mathit{Att}(\iota, 1, C(\iota))$, we have that $\mathit{Top}(C)$ can only perform a $\tick $ action, and $\mathit{Top}(C) \ntrans { \snda {\mbox{\Lightning}s} v}$. Hence, by an application of rules \rulename{Par} and \rulename{SensRead} we derive $M \parallel \mathit{Top}(C) \trans  {\tau} M' \parallel \mathit{Top}^1(C)=M' \parallel \mathit{Top}(C)$.
			
			\item Let $M\parallel A\trans {\tau} M'\parallel A'$ be derived by an 
			application of rule \rulename{ActWrite}. 
			This case is similar to the previous one.
			
			\item Let $M\parallel A\trans {\tau} M'\parallel A'$ be derived by 
			an application of rule \rulename{$\mbox{\Lightning}$SensSniff$\mbox{\,\Lightning}$}. 
			Since $M$ is sound it follows that $M=M'$ and $A\trans {\rcva {\mbox{\Lightning}s} v }A'$. This entails $1 \in C'({\mbox{\Lightning}s?}) \subseteq C({\mbox{\Lightning}s?}) $. 
			By assuming $\mathit{rnd}(\{\true,\false\})=\true$ for the process $\mathit{Att}(  {\mbox{\Lightning}s?}, 1, C( {\mbox{\Lightning}s?}))$, 
			it follows that 
			\begin{math} 
			\mathit{Top}(C) \trans { \rcva {\mbox{\Lightning}s} v} \mathit{Top}^1(C) =\mathit{Top}(C) . 
			\end{math}
			Hence, by applying the rules \rulename{Par} and \rulename{$\mbox{\Lightning}$SensRead$\mbox{\,\Lightning}$} we derive $M \parallel \mathit{Top}(C) \trans {\tau} M' \parallel \mathit{Top}^1(C)=M' \parallel \mathit{Top}(C)$.
			
			\item Let $M\parallel A\trans {\tau} M'\parallel A'$ be derived by 
			an application of rule \rulename{$\mbox{\Lightning}$ActIntgr$\mbox{\,\Lightning}$}. Since $M$ is sound it follows that $M=M'$ and $A\trans {\snda {\mbox{\Lightning}a} v }A'$. 
			As a consequence, $1 \in C'({\mbox{\Lightning}a}!) \subseteq C({\mbox{\Lightning}a}!)$. 
			By assuming 
			$\mathit{rnd}(\{\true,\false\}){=}\true$ and $\mathit{rnd}(\mathbb{R})=v$ for the process $\mathit{Att}( {\mbox{\Lightning}a}!, 1, C({\mbox{\Lightning}a}!))$, it follows that 
			\begin{math} 
			\mathit{Top}(C) \trans { \snda {\mbox{\Lightning}a} v} \mathit{Top}^1(C) =\mathit{Top}(C)  . 
			\end{math} 
			Thus, by applying the rules \rulename{Par} and \rulename{$\mbox{\Lightning}$ActIntegr$\mbox{\,\Lightning}$} we derive $M \parallel \mathit{Top}(C) \trans  {\tau} M' \parallel \mathit{Top}^1(C)=M' \parallel \mathit{Top}(C)$.
			
			\item Let $M\parallel A\trans {\tau}  M'\parallel A'$ be derived by 
			an application of rule \rulename{Tau}. 
			Let $M=\confCPS {E; S} P$ and $M'=\confCPS {E'; S} {P'}$. 
			First, we consider the case when $P \parallel A \trans \tau P' \parallel A'$ is derived by an application of either rule \rulename{$\mbox{\Lightning}$SensIntegr$\mbox{\,\Lightning}$} or rule \rulename{$\mbox{\Lightning}$ActDrop$\mbox{\,\Lightning}$}.
			Since $M$ is sound and $A$ can perform only malicious actions, we have that:
			\begin{inparaenum}[(i)]
				\item either $P\trans {\rcva { s} v }P'$ and $A\trans  {\snda {\mbox{\Lightning}s} v }A'$ 
				\item or $P\trans  {\snda { a} v }P'$ and $A\trans  {\rcva {\mbox{\Lightning}a} v }A'$.
			\end{inparaenum}
			We focus on the first case as the second one is similar. 
			
			Since $A\trans {\snda {\mbox{\Lightning}s} v }A'$, we derive $1 \in C'(\mbox{\Lightning}s!) \subseteq C(\mbox{\Lightning}s!)$, and $\mathit{Top}(C) \trans { \snda {\mbox{\Lightning}s} v} \mathit{Top}^1(C) =\mathit{Top}(C)$, by assuming $\mathit{rnd}(\{\true,\false\})= \true$ and $\mathit{rnd}(\mathbb{R})= v$ for the process $\mathit{Att}( {\mbox{\Lightning}s}!, 1, C({\mbox{\Lightning}s}!))$. 
			Thus, by applying the rules \rulename{$\mbox{\Lightning}$SensIntegr$\mbox{\,\Lightning}$} and \rulename{Tau} we derive $M \parallel \mathit{Top}(C) \trans {\tau} M' \parallel \mathit{Top}^1(C)=M' \parallel \mathit{Top}(C)$, as required. 
			
			To conclude the proof we observe that if $P \parallel A \trans \tau P' \parallel A'$ is derived by an application of a rule different from \rulename{$\mbox{\Lightning}$SensIntegr$\mbox{\,\Lightning}$} 
			and \rulename{$\mbox{\Lightning}$ActDrop$\mbox{\,\Lightning}$}, then 
			by inspection of Table~\ref{tab:lts_processes} and by definition of attacker, it follows that $A$ can't perform a $\tau$-action since $A$ does not use channel communication and performs only malicious actions. Thus, the only possibility is that the $\tau$-action is performed by $P$ in isolation. As a consequence, by applying the rules \rulename{Par} and \rulename{Tau}, we derive $M \parallel \mathit{Top}(C) \trans  {\tau} M' \parallel \mathit{Top}^1(C)=M' \parallel \mathit{Top}(C)$.
		\end{itemize} 
		\item $\alpha = \tick$. In this case the transition $M\parallel A\trans \tick  M'\parallel A'$ is derived by an application of rule \rulename{Time} because $M\trans \tick M'$ and $A \trans \tick A'$.
		Hence, it suffices to prove that $\mathit{Top}(C) \trans \tick \mathit{Top}^2(C)$.
		We consider two cases: $1 \in C(\iota)$ and $1 \not\in C(\iota)$.
		If $1 \in C(\iota)$, then the transition $\mathit{Att}( \iota, 1 , C(\iota)) \trans \tick \mathit{Att}( \iota, 2, C(\iota))$ can be derived by assuming $\mathit{rnd}(\{\true,\false\})=\false$. 
		Moreover, since $\mathit{rnd}(\{\true,\false\})=\false$ the process $ \mathit{Att}( \iota, 1, C(\iota)) $ can only perform a $\tick$ action.
		If $1 \not \in C(\iota)$, then the process $ \mathit{Att}( \iota, 1 , C(\iota)) $ can only perform a $\tick $ action. As a consequence, $  \mathit{Att}( \iota, 1 , C(\iota))  \trans \tick \mathit{Att}( \iota, 2 , C(\iota)) $ and $\mathit{Top}(C) \trans \tick  \mathit{Top}^2(C) $. 
		By an application of rule \rulename{Time}, we derive $M \parallel \mathit{Top}(C) \trans  {\tick} M' \parallel \mathit{Top}^2(C)$.

		\item $\alpha = \dead$. This case is not admissible because $M\parallel A\trans {\dead} M'\parallel A'$ would entail $M \trans {\dead} M'$. However, $M$ is sound and it can't deadlock. 
		
		\item $\alpha = \unsafe$. Again, this case is not admissible because 
		$M$ is sound. 
	\end{itemize}
	\emph{Inductive case} ($k>1$). 
	We have to prove that $M\parallel A\trans t M'\parallel A'$ implies 
	$M\parallel \mathit{Top}(C) \Trans{\hat{t}} M'\parallel \mathit{Top}^{ \#\tick(t)+1  }(C)$. 
	Since the length of $t$ is greater than $1$, it follows that $t=t' \alpha$, for some trace $t'$ and  some action $\alpha$.
	Thus, there exist $M''$ and $A''$ such that 
	\begin{math} 
	M\parallel A\trans {t'}   M''\parallel A'' \trans \alpha  M'\parallel A'
	.
	\end{math} 
	By inductive hypothesis, it follows that $M\parallel \mathit{Top}(C) \Trans{\hat{t'}} M''\parallel\mathit{Top}^{\#\tick(t')+1} (C)$. To conclude the proof, it is enough to show that $M''\parallel A'' \trans \alpha M'\parallel A'$ implies $M'' \parallel  \mathit{Top}^{\#\tick(t' )+1 } (C) \Trans{\hat{\alpha}} M'\parallel \mathit{Top}^{\#\tick(t  )+1 }(C)$.
	The reasoning is similar to that followed in the base case, except 
	for actions $\alpha=\dead$ and $\alpha=\unsafe$ that need to be treated separately. We prove the case $\alpha=\dead$ as the case $\alpha=\unsafe$ is similar.

	Let $M=\confCPS {E; S} P$. The transition $M''\parallel A\trans {\dead}  M'\parallel A'$ must be derived by an application of rule \rulename{Deadlock}. This implies that $M''=M'$, $A''=A'$ and the state function of $M$ is not in the invariant set $ \invariantfun{}$. Thus, by an application of rule \rulename{Deadlock} we derive  
	\begin{displaymath} 
	M'' \parallel  \mathit{Top}^{\#\tick(t' )+1 } (C)
	\trans {\dead}  M'\parallel \mathit{Top}^{\#\tick(t' )+1 }(C) .
	\end{displaymath} 
	Since $\#\tick(t) + 1 = \#\tick(t') + \#\tick(\dead) +1=\#\tick(t' )+1$, it follows, as required, that 
	\begin{displaymath} 
	M'' \parallel \mathit{Top}^{\#\tick(t' )+1 } (C)
	\trans {\dead} M'\parallel \mathit{Top}^{\#\tick(t  )+1 }(C) \, . 
	\end{displaymath} 
\end{proof}

Everything is finally in place to prove Theorem~\ref{thm:sound}.
\begin{proof}[Proof of Theorem~\ref{thm:sound}] 
	We have to prove that either $M \parallel A \sqsubseteq M$ or $M \parallel A \sqsubseteq_{{m_2}..{n_2}} M$, for some $m_2$ and $n_2$ such that $m_2..n_2 \subseteq m_1..n_1$ ($m_2=1$ and $n_2=\infty$ if the two systems are completely unrelated). The proof proceeds by contradiction. 
	Suppose that $M \parallel A \not\sqsubseteq M$ and $M \parallel A \sqsubseteq_{{m_2}..{n_2}} M$, with $m_2..n_2 \not \subseteq m_1 .. n_1$. We distinguish two cases: either $n_1=\infty$ or $n_1 \in \mathbb{N}^+$. 
	
	If $n_1=\infty$, then it must be $m_2 < m_1$. Since $M \parallel A
	\sqsubseteq_{{m_2}..{n_2}} M$, by Definition~\ref{Time-bounded-trace-equivalence}
	there is a trace $t$, with $\#\tick(t)=m_2{-} 1$, such that $M \parallel A
	\trans t$ and $M \not\!\Trans{\hat{t}}$. By Lemma~\ref{lem:top}, this
	entails $ M \parallel \mathit{Top}(C)\Trans{\hat{t}}$. Since $M
	\not\!\Trans{\hat{t}}$ and $\#\tick(t)=m_2{-}1 < m_2 < m_1$, this contradicts
	$M \parallel \mathit{Top}(C) \sqsubseteq_{{m_1}..{n_1}} M$.
	
	If $n_1 \in \mathbb{N}^+$, then $m_2 < m_1$ and/or $n_1 < n_2$, and we
	reason as in the previous case.
\end{proof}

\subsection{Proofs of Section~\ref{sec:impact}}

In order to prove Proposition~\ref{prop:toll}, we need a couple of lemmas. 

Lemma~\ref{lem:sys3} is a variant of Lemma~\ref{lem:sys}. Here the behaviour
of $\mathit{Sys} $ is parametric on the uncertainty.

\begin{lemma} 
	\label{lem:sys3}
	Let $\mathit{Sys}$ be the system defined in Section~\ref{sec:running_example}, and 
	$0.4 < \gamma \leq \frac{9}{20}$. 
	Let
	\begin{small}
		\begin{displaymath}
		\replaceENV {Sys} \delta  \gamma   =\mathit{Sys_1} \trans{t_1}\trans\tick 
		\mathit{Sys_2}  \dots 
		\trans{t_{n-1}}\trans\tick  \mathit{Sys_n}
		%%% \trans{t_n}\trans\tick \mathit{Sys_{n+1}}
		\end{displaymath}
	\end{small}%
	such that the traces $t_j$ contain no $\tick$-actions, for any $j \in  1 .. n{-}1 $, and for any  $i \in  1 .. n $ $\mathit{Sys_i}= \confCPS {S_i}{P_i} $ with 
	$S_i = \langle \statefun^i{} ,  \sensorfun^i{} , \actuatorfun^i{} \rangle$.
	Then, for any $i \in 1 .. n{-}1 $ we have the following:
	\begin{itemize}[noitemsep]
		\item  if   $ \actuatorfun^i{}(\mathit{cool})= \off $ then
		$\statefun^i{}(\mathit{temp})\in [0, 11.1+ \gamma ]$
		and $\statefun^i{}(\mathit{stress})=0$ if $ \statefun^i{}(\mathit{temp})  \in [0, 10.9+\gamma ] $ and, otherwise,  $\statefun^i{}(\mathit{stress})=1 $;

		\item   if   $\actuatorfun^i{}(\mathit{cool})= \off $ and 
		$\statefun^i{}(\mathit{temp})\in (10.1, 11.1+ \gamma ]$ then   $ \actuatorfun^{i+1}{}(\mathit{cool}) =\on$
		and $\statefun^{i{+}1}{}(\mathit{stress})  \in 1..2$;

		\item  if  $ \actuatorfun^i{}(\mathit{cool})=\on$ then   $\statefun^i{}(\mathit{temp}) \in 
		( 9.9-k *(1+\gamma), 11.1+\gamma -k*(1- \gamma)] $, 
		for some  $k  \in 1..5$,   such that $\actuatorfun^{i-k}{}(\mathit{cool}) =\off $ and $ \actuatorfun^{i-j}{}(\mathit{cool}) =\on $, for $j \in 0..k{-}1$;
		moreover,  if  $k\in 1.. 3$ then     $\statefun^i{}(\mathit{stress})  \in  1..k{+}1  $, otherwise,  
		$\statefun^i{}(\mathit{stress}) =0$.

	\end{itemize}
\end{lemma}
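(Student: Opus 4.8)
The plan is to proceed by mathematical induction on $n$, the number of $\tick$-actions in the trace, following verbatim the structure of the proof of Lemma~\ref{lem:sys}. The statement of Lemma~\ref{lem:sys3} is obtained from that of Lemma~\ref{lem:sys} simply by replacing the fixed uncertainty $\delta = 0.4$ with the parameter $\gamma$, and the entire case analysis carries over unchanged at the symbolic level: the discrete logic of the cyber component (the controller $\mathit{Ctrl}$ activating the coolant when the sensed temperature exceeds $10$, the five-slot cooling cycle, the synchronisation with $\mathit{IDS}$ via $\mathit{sync}$, and the resulting $\mathsf{stop}$/$\mathsf{keep\_cooling}$ decision) is completely independent of the value of the uncertainty. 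The base case $n=1$ follows directly from the definition of $\mathit{Sys}$, and the inductive step splits into the same subcases according to whether $\actuatorfun^i{}(\mathit{cool})$ equals $\on$ or $\off$ in $\mathit{Sys_{n-1}}$ and $\mathit{Sys_n}$. Note that the lower bound $\gamma > 0.4$ plays no role in the argument itself (the intervals only shrink for smaller $\gamma$); it merely delimits the regime of interest sitting just above the already-known case $\delta = 0.4$.

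What genuinely requires fresh work is re-verifying, with $\gamma$ in place of $\delta$, the handful of numerical inequalities on which Lemma~\ref{lem:sys} implicitly relied on $\delta = 0.4$. First, the argument that a cooling phase lasts exactly five slots (so that in the $\on$-to-$\off$ transition the value $k=5$ is forced and the cases $k \in 1..4$ are inadmissible) is purely structural and is inherited directly. Second, I would check that after five cooling slots the logical components read a temperature safely below $10$: the relevant upper bound is $11.1 + \gamma - 5(1-\gamma) + \epsilon = 6.2 + 6\gamma$, which is strictly below $10$ for every $\gamma \le \tfrac{9}{20}$, so the $\mathit{IDS}$ reliably issues $\mathsf{stop}$ and no spurious extra cooling cycle is triggered. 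Third, the invariant $\statefun^i{}(\mathit{temp}) \in [0,50]$ is preserved, since the maximal reachable temperature $11.1+\gamma$ and the minimal one $9.9 - 5(1+\gamma)$ both stay comfortably inside $[0,50]$ throughout this range of $\gamma$.

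The main obstacle, and the step that pins down the upper bound $\gamma \le \tfrac{9}{20}$, is the bookkeeping on the $\mathit{stress}$ variable in the third item. The claim that $\statefun^i{}(\mathit{stress}) \in 1..k{+}1$ for $k \in 1..3$ and $\statefun^i{}(\mathit{stress}) = 0$ for $k \in 4..5$ hinges on determining, at each cooling step $k$, whether the temperature can still strictly exceed the stress threshold $9.9$. The delicate case is $k=3$: the upper bound on $\statefun^i{}(\mathit{temp})$ is $11.1 + \gamma - 3(1-\gamma) = 8.1 + 4\gamma$, and one needs $8.1 + 4\gamma \le 9.9$, i.e.\ exactly $\gamma \le \tfrac{9}{20}$, to guarantee that at the third cooling slot the temperature does not exceed $9.9$, so that $\mathit{stress}$ is reset before ever reaching $5$ and the system stays safe (capped at $\mathit{stress} = 4$). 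I expect the care here to lie entirely in the strict-versus-nonstrict boundary: at $\gamma = \tfrac{9}{20}$ the bound $8.1 + 4\gamma$ equals $9.9$ \emph{exactly}, so it is precisely the strictness of the test $\statefun^i{}(\mathit{temp}) > 9.9$ in the evolution map that keeps $\mathit{stress}$ from being incremented once more. This is exactly the inequality that fails as soon as $\gamma > \tfrac{9}{20}$, which is what underlies the non-simulation direction of the companion Proposition~\ref{prop:toll}.
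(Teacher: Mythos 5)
Your proposal is correct and follows essentially the same route as the paper: an induction mirroring the proof of Lemma~\ref{lem:sys}, in which the only genuinely new work is the numerical re-verification with $\gamma$ in place of $\delta$, and in particular the crucial inequality $11.1+\gamma-3(1-\gamma)=8.1+4\gamma\leq 9.9$, which holds exactly for $\gamma\leq\frac{9}{20}$ and guarantees that $\mathit{stress}$ is reset at the third cooling slot (hence $\statefun^i{}(\mathit{stress})=0$ for $k\in 4..5$) --- precisely the point the paper singles out as the sole difference from Lemma~\ref{lem:sys}. Your additional checks (the $\mathit{IDS}$ reading at most $6.2+6\gamma<10$ after five cooling slots, preservation of the invariant, and the strict-versus-nonstrict boundary at $\gamma=\frac{9}{20}$) are consistent with, and slightly more explicit than, the paper's own terse argument.
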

\begin{proof}
	Similar to the proof of    Lemma~\ref{lem:sys}.
	The crucial difference w.r.t.\ the proof of   Lemma~\ref{lem:sys} 
	is limited to 
	the second part of the third item. In particular the part saying that
	$\statefun^i{}(\mathit{stress}) =0$, when $k \in 4..5$.
	Now, after $3$ time units of cooling, the state variable 
	$ \mathit{stress} $ lays  in the integer interval $  1..k{+}1=1..4$.
	Thus, in order to  have  $\statefun^i{}(\mathit{stress}) =0$, when $k \in 4..5$, the   temperature in the
	third time slot of  the cooling must be less than or equal to $9.9$.
	However, from the first statement of the third item 
	we deduce that,   in the
	third time slot of cooling,  the  state variable $\mathit{temp}$
	reaches at most 
	$ 11.1+\gamma -3*(1- \gamma)  =  8.1+4\gamma $. Thus, 
	Hence we have that $ 8.1+4\gamma \leq 9.9$ for $\gamma \leq \frac{9}{20}$.
\end{proof}

The following lemma is a variant of Proposition~\ref{prop:sys}. 

\begin{lemma} 
	\label{lem:sys:damage}
	Let $\mathit{Sys}$ be the system defined in Section~\ref{sec:running_example} and
	$\gamma$ such that 
	$0.4 < \gamma \leq \frac{9}{20}$. 
	If $\replaceENV {\mathit{Sys}}   \delta \gamma    \trans{t} Sys'$, for some $t=\alpha_1 \ldots \alpha_n$,  then
	$\alpha_i \in \{ \tau , \tick \}$, for any $i \in 1 .. n$. 
\end{lemma}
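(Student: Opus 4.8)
The plan is to mirror the proof of Proposition~\ref{prop:sys}, replacing every appeal to Lemma~\ref{lem:sys} with the corresponding bound from its parametric variant Lemma~\ref{lem:sys3}, which is already available under the hypothesis $0.4 < \gamma \leq \frac{9}{20}$. A trace action $\alpha_i$ lies in $\{\tau, \tick\}$ exactly when it is neither a $\dead$ action (produced only by rule \rulename{Deadlock}), nor an $\unsafe$ action (produced only by rule \rulename{Safety}), nor a channel output. So it suffices to rule out all three kinds of actions along any trace of $\replaceENV {\mathit{Sys}} \delta \gamma$.

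First I would use the temperature bounds of Lemma~\ref{lem:sys3} to show that the invariant is never violated. When the coolant is inactive the lemma gives $\statefun^i{}(\mathit{temp}) \in [0, 11.1 + \gamma]$, and when it is active $\statefun^i{}(\mathit{temp}) \in (9.9 - k(1+\gamma), 11.1 + \gamma - k(1-\gamma)]$ for some $k \in 1..5$. Plugging in $\gamma \leq \frac{9}{20}$, the upper value never exceeds $11.1 + \gamma \leq 11.55$, while the lower value (attained for $k=5$) stays strictly above $9.9 - 5(1+\gamma) = 4.9 - 5\gamma \geq 2.65 > 0$. Hence $\statefun^i{}(\mathit{temp}) \in [0, 50]$ throughout, so $\statefun^i{} \in \invariantfun{}$ always holds; consequently rule \rulename{Deadlock} never applies and no $\dead$ action can occur.

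Next I would rule out $\unsafe$ actions and channel outputs. For safety, the stress bounds of Lemma~\ref{lem:sys3} give $\statefun^i{}(\mathit{stress}) \in \{0,1\}$ when the coolant is off, $\statefun^i{}(\mathit{stress}) \in 1..k{+}1 \subseteq 1..4$ when it is on with $k \leq 3$, and $\statefun^i{}(\mathit{stress}) = 0$ for $k \in 4..5$ (this last clause is precisely where $\gamma \leq \frac{9}{20}$ is used), so $\statefun^i{}(\mathit{stress}) \leq 4 < 5$ at every step and rule \rulename{Safety} never fires. For outputs, note that in $\mathit{Sys}$ the channels $\mathit{sync}$ and $\mathit{ins}$ are restricted, so communications on them are observed as $\tau$; the only free output channel is $\mathit{alarm}$, emitted by $\mathit{IDS}$ only when, after a full five-slot cooling cycle, the sensed temperature exceeds $10$. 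By the $k=5$ case of Lemma~\ref{lem:sys3}, at that moment $\statefun^i{}(\mathit{temp}) \leq 11.1 + \gamma - 5(1-\gamma) = 6.1 + 6\gamma \leq 8.8$; accounting for the sensor error $\epsilon = 0.1$, the value read is at most $8.9 < 10$, so no alarm is ever fired. Combining the three arguments forces every action of every trace to be $\tau$ or $\tick$, as required.

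The main obstacle is not conceptual but numeric: the constant $\frac{9}{20}$ is tight, in that it is exactly what keeps the post-cooling temperature (and hence the sensed value) below the $10$-degree activation threshold and keeps the stress counter reset on the fourth and fifth cooling slots. Since Lemma~\ref{lem:sys3} already packages these delicate bounds, the remaining work reduces to the arithmetic substitutions above, so the proof is routine once the lemma is invoked.
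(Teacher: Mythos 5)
Your proof is correct and takes essentially the same route as the paper's: both appeal to Lemma~\ref{lem:sys3} to bound the temperature in $[0,11.1+\gamma]$ (so the invariant holds and no $\dead$ occurs), to cap the stress variable at $4<5$ via the reset clause for $k\in 4..5$ (so no $\unsafe$ occurs), and to bound the post-cooling sensed value by $6.2+6\gamma\leq 8.9<10$ (so no output on $\mathit{alarm}$ is ever fired). The only cosmetic difference is that you make explicit that $\mathit{sync}$ and $\mathit{ins}$ are restricted channels, a point the paper's proof leaves implicit.
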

\begin{proof}
	By Lemma~\ref{lem:sys3}, the temperature will always lay in the real
	interval $ [0, 11.1+ \gamma ]$. As a consequence, since $ \gamma \leq
	\frac{9}{20}$, the system will never deadlock.
	
	Moreover, after $5$ $\tick$ action of coolant the state variable
	$\mathit{temp}$ is in $( 9.9-5 *(1+\gamma), 11.1+\gamma -5*(1- \gamma)]
	=(4.9 -5\gamma \, , \, 6.1+6\gamma]$. Since $\epsilon = 0.1$, the value
	detected from the sensor will be in the real interval $(4.8 -5\gamma \, ,
	\, 6.2+6\gamma]$. Thus, the temperature sensed by $\mathit{IDS}$ will be
	at most $6.2 + 6\gamma \leq 6.2+6*\frac{9}{20}\leq 10 $, and no alarm will
	be fired.

	Finally, the maximum value that can be reached  by the state variable 
	$ \mathit{stress} $ is $   k{+}1$m for $k=3$. As a consequence, 
	the system will not reach an unsafe state. 
\end{proof}

The following Lemma is a variant of Proposition~\ref{prop:X}.   Here the behaviour of $\mathit{Sys} $ is  parametric on the 
uncertainty. 
%% instead of 
%%$\delta=0.4$. To have a correct behaviour we must have that    $0.4 < \gamma \leq \frac{19}{30}$.

\begin{lemma}
	\label{lem:X3}  
	Let $\mathit{Sys}$ be the system defined in Section~\ref{sec:running_example} and
	$\gamma$ such that  $0.4 < \gamma \leq \frac{9}{20}$. 
	Then, for 
	any execution trace of $\replaceENV  {Sys}  \delta \gamma$ we have the following:
	\begin{itemize}[noitemsep]
		\item if either process $\mathit{Ctrl}$ or process $\mathit{IDS}$ senses a temperature above $10$ then the value of
		the state variable $\mathit{temp}$ ranges over $(9.9, 11.1+\gamma]$;
		%%\item process $\mathit{Check}$ will always sense a temperature less than or equal $10$;
		\item 
		when the process  $\mathit{IDS}$  tests the temperature the value of
		the state variable $\mathit{temp}$ 
		ranges over $( 9.9-5 *(1+\gamma), 11.1+\gamma -5*(1- \gamma)] $. 
	\end{itemize}
\end{lemma}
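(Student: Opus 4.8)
The plan is to follow the proof of Proposition~\ref{prop:X} essentially verbatim, replacing every appeal to Lemma~\ref{lem:sys} by the corresponding appeal to its uncertainty-parametric refinement Lemma~\ref{lem:sys3}. Since the hypothesis $0.4 < \gamma \le \tfrac{9}{20}$ keeps $\replaceENV{\mathit{Sys}}\delta\gamma$ inside the regime covered by Lemma~\ref{lem:sys3}, all three of its items are available, and each of the two bullets of Lemma~\ref{lem:X3} reduces to a direct reading of those items together with the sensor accuracy $\epsilon = 0.1$.

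For the first bullet I would first record the sensor-error fact: since the measurement map perturbs the true temperature by at most $\epsilon = 0.1$, a sensed value strictly above $10$ forces $\statefun{}(\mathit{temp}) > 10 - \epsilon = 9.9$. I would then split on which process performs the reading. When $\mathit{Ctrl}$ reads (via $\rsens x {s_{\mathrm t}}$) the actuator $\mathit{cool}$ is off, so the first item of Lemma~\ref{lem:sys3} gives $\statefun{}(\mathit{temp}) \le 11.1 + \gamma$; combined with the $>9.9$ bound this yields the claimed interval $(9.9,\,11.1+\gamma]$. When $\mathit{IDS}$ reads, it does so only after synchronising with $\mathit{Check}$ on channel $\mathit{sync}$, which happens exactly after five $\tick$-steps of active cooling; applying the third item of Lemma~\ref{lem:sys3} with $k=5$ bounds the temperature by $11.1+\gamma - 5(1-\gamma) = 6.1 + 6\gamma$, which for $\gamma \le \tfrac{9}{20}$ is at most $8.8$. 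Hence the value sensed by $\mathit{IDS}$ is at most $8.9 < 10$, so the $\mathit{IDS}$ branch of the hypothesis is vacuous and the first bullet holds in all cases.

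For the second bullet I would argue that $\mathit{IDS}$ tests the temperature only at the synchronisation point, i.e.\ after five consecutive cooling slots, so the third item of Lemma~\ref{lem:sys3} applies with $k=5$ and directly gives the interval $(9.9 - 5(1+\gamma),\, 11.1+\gamma - 5(1-\gamma)]$. The main obstacle is purely the process-structure bookkeeping: one must verify from the definitions of $\mathit{Ctrl}$, $\mathit{Cooling}$, $\mathit{Check}$ and $\mathit{IDS}$ that the $\mathit{sync}$ handshake (and hence the $\mathit{IDS}$ reading) is reached precisely when the current cooling streak has length five, so that $k=5$ is forced and not merely $k \le 5$. This is exactly what guarantees, via the temperature bound just derived, that $\mathit{IDS}$ never replies $\mathsf{keep\_cooling}$ and that the streak never exceeds five, which is the same fact underlying the $k\in 1..5$ clause of Lemma~\ref{lem:sys3} and closes the loop with the first bullet.
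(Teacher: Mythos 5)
Your proposal is correct and follows essentially the same route as the paper's own proof: both bullets reduce to the uncertainty-parametric invariant Lemma~\ref{lem:sys3} together with the sensor accuracy $\epsilon=0.1$, with the second bullet obtained from the third item of that lemma at $k=5$ (the $\mathit{sync}$ point after exactly five cooling slots). The only cosmetic difference is that for the first bullet you discharge the $\mathit{IDS}$ case by showing it is vacuous (its sensed value is at most $6.1+6\gamma+\epsilon\leq 8.9<10$), whereas the paper simply applies the uniform upper bound $11.1+\gamma$ to both readers; the two observations are interchangeable here.
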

\begin{proof}
	As to the first statement,  since $\epsilon=0.1$, if either process $\mathit{Ctrl}$ or process  $\mathit{IDS}$ senses  a temperature above $10$ then the value of
	the state variable $\mathit{temp}$  is above $9.9$. 
	By Lemma~\ref{lem:sys3}, the state variable $\mathit{temp}$ is less than or equal to $11.1+\gamma$.
	Therefore, \emph{if either process $\mathit{Ctrl}$ or process  $\mathit{IDS}$ sense }  a temperature above $10$ then the value of
	the state variable $\mathit{temp}$ is in $(9.9,11.1+\gamma]$.
	
	Let us prove now the second statement. When the process $\mathit{IDS}$
	tests the temperature then the coolant has been active for $5$ $\tick$
	actions. By Lemma~\ref{lem:sys3}, the state variable $\mathit{temp}$ ranges
	over $( 9.9-5 *(1+\gamma), 11.1+\gamma -5*(1- \gamma)] $. 
\end{proof}

Everything is finally in place to prove Proposition~\ref{prop:toll}.

\begin{proof}[Proof of Proposition~\ref{prop:toll}] 
	For (1) we have to show that  $ \replaceENV  {\mathit{Sys}} \delta \gamma \, \sqsubseteq \,  \mathit{Sys}$, for $\gamma \in (\frac{8}{20} ,\frac{9}{20})$.  
	But this obviously holds by Lemma~\ref{lem:sys:damage}.  
	%\begin{itemize}[noitemsep]
	%\item \emph{Alarms}. 
	%When the temperature detected by the process $\mathit{Ctrl}$ of $\mathit{Sys} {\subst \gamma \delta}$ is above $10$ degrees, by \autoref{lem:X3} the value of the state variable $\mathit{temp}$ is in the real slot $(9.9, 11.1+\gamma]$. And, after $5$ time units of cooling, the variable $\mathit{temp}$ moves into the real slot
	%%%$(9.9+5{*}({-}1{-}\gamma)
	%%%\, , \,  11.1+\gamma+5{*}({-}1{+}\gamma)]=
	%$( 9.9-5 *(1+\gamma), 11.1+\gamma -5*(1- \gamma)] =(4.9 -5\gamma \, , \, 6.1+6\gamma]$. Since $\epsilon = 0.1$, the value detected from the sensor will be in the slot
	%%% $(9.9+5{*}({-}1{-}\gamma)-0.1, 11.1+\gamma+5*({-}1{+}\gamma)+0.1]=
	%$(4.8 -5\gamma \, , \, 6.2+6\gamma]$. Thus, the temperature sensed by process $\mathit{Check}$ will be less than or equal to $10$, and hence no alarm will be sent, only if 
	%%%$11.1+\gamma+5*(-1+\gamma)+0.1 \leq 10$.
	%$6.2 + 6\gamma \leq 10$. This happens for $\gamma \leq \frac{19}{30}$. 
	%\item \emph{Deadlocks}. As the system $\mathit{Sys} {\subst \gamma
	%\delta}$ may never emit an alarm, it follows that the cooling system is
	%always successful in dropping the temperature below $10$. Thus the
	%temperature may never reach $50$ degrees. Furthermore, after $5$ rounds of
	%cooling the cooling system is always turned off. In that moment the
	%temperature is at least $4.9 -5{*}\frac{19}{30}=1.7\overline{3}$ degrees.
	%As a consequence, the temperature may never get below $0$. Since the
	%invariant is never violated, the system $ Sys {\subst \gamma \delta}$ may
	%never deadlock.
	%\end{itenmize}

	As regards item (2), we have to prove that $ \replaceENV  {\mathit{Sys}} \delta \gamma \, \not \sqsubseteq \,  \mathit{Sys}$, for $\gamma
	> \frac{ 9}{20} $. By Proposition~\ref{prop:sys} it is enough to show that
	the system $\replaceENV {\mathit{Sys}}   \delta \gamma$ has a trace which either
	(i) sends an alarm, or (ii) deadlocks, or (iii) enters in an unsafe state. We can easily build up a trace for
	$\replaceENV {\mathit{Sys}}  {\delta} \gamma$ in which, after $10$
	$\tick$-actions, in the $11$-th time slot, the value of the state
	variable $\mathit{temp}$ is $10.1$. In fact, it is enough to increase the
	temperature of $ 1.01$ degrees for the first $10$ rounds. Notice that this
	is an admissible value since, $ 1.01 \in [ 1-\gamma,1+\gamma ]$, for any $
	\gamma > \frac{ 9}{20}$. Being $10.1$ the value of the state variable
	$\mathit{temp}$, there is an execution trace in which the sensed
	temperature is $10$ (recall that $\epsilon=0.1$) and hence the cooling
	system is not activated but the state variable $\mathit{stress}$ will be increased. 
	In the following time slot, i.e.,
	the $12$-th time slot, the temperature may reach at most the value
	$10.1 + 1+\gamma$ and the state variable  $\mathit{stress}$ is $1$.  Now, if $10.1 + 1+\gamma>50$ then the system deadlocks.
	Otherwise, the controller will activate the cooling system, and after $3$ time
	units of cooling, in the $15$-th time slot, the state variable  $\mathit{stress}$ will be $4$ and the variable 
	$\mathit{temp}$ will be at most $11.1+\gamma -3(1-\gamma)=8.1+4\gamma$.  
	Thus, there is an execution trace in which the
	temperature is $ 8.1+4\gamma$, which will be greater than $9.9$ being
	$\gamma> \frac{ 9}{20}$. As a consequence, in the 
	next time slot, the state variable  $\mathit{stress}$ will be $5$
	and the system will enter in an unsafe state.
	
	This is enough to derive that $ \replaceENV  {\mathit{Sys}} \delta \gamma \, \not \sqsubseteq \,  \mathit{Sys}$, for $\gamma
	> \frac{ 9}{20} $. 
\end{proof}

\begin{proof}[Proof of Theorem~\ref{thm:sound2}]
	We consider the two parts of the statement separately. 
	
	\emph{Definitive impact.}
	By an application of Lemma~\ref{lem:top} we have that $M \parallel A\trans t$ entails $ M \parallel \mathit{Top}(C)\Trans{\hat{t}}$. This implies $M \parallel A \sqsubseteq M \parallel \mathit{Top}(C)$. Thus, if $ M \parallel \mathit{Top}(C) \sqsubseteq {\replaceENV M {\uncertaintyfun{}} {{\uncertaintyfun{}}{+}{\xi}}}$, for $\xi \in \mathbb{R}^{\hat{\cal X}}$, $\xi >0$, then, by transitivity of $\sqsubseteq$, it follows that \( M \parallel A \sqsubseteq   {\replaceENV M {\uncertaintyfun{}} {{\uncertaintyfun{}}{+}{\xi}}}\).
	
	\smallskip
	\emph{Pointwise impact}. 
	The proof proceeds by contradiction.
	Suppose $\xi' > \xi $. Since $ \mathit{Top}(C) $ has a pointwise impact $\xi$ at time $m$, it follows that $\xi$ is given by:
	\begin{displaymath}
	\inf \big\{ \xi''  :  \xi'' {\in} \mathbb{R}^{\hat{\cal X}} 
	\: \wedge \: M \parallel   \mathit{Top}(C)    \sqsubseteq_{m ..n} 
	\replaceENV M  {\uncertaintyfun{}}  {{\uncertaintyfun{}} {+} {\xi''}},  n \in \mathbb{N}^+ \cup \infty   \big\}.
	\end{displaymath}
	Similarly, since $A$ has a pointwise impact $\xi'$ at time $m'$, it
	follows that $\xi'$ is given by
	\begin{displaymath}
	\inf \big\{ \xi''  :  \xi'' {\in} \mathbb{R}^{\hat{\cal X}} 
	\,  \wedge \, M \parallel  A   \sqsubseteq_{m'..n} 
	\replaceENV M  {\uncertaintyfun{}}  {{\uncertaintyfun{}} {+} {\xi''}},  n \in \mathbb{N}^+ \cup \infty   \big\}.
	\end{displaymath}
	
	Now, if $m=m'$, then $\xi \geq \xi'$ because $M \parallel A \trans{t}$ entails $M \parallel \mathit{Top}(C)\Trans{\hat{t}}$ due to an application of Lemma~\ref{lem:top}. 
	This is contradiction with the fact that $\xi < \xi'$. 
	Thus, it must be $m' < m$. Now, since both $\xi $ and  $\xi'$ are the infimum functions and since $\xi' > \xi $, there are $\overline{\xi}$ 
	and $\overline{ \xi'}$, with $\xi {\leq} \overline{\xi}{\leq} \xi' {\leq} \overline{ \xi'}$ such that: 
	\begin{inparaenum}[(i)]
		\item $M \parallel \mathit{Top}(C) \sqsubseteq_{m..n} \replaceENV M {\uncertaintyfun{}} {{\uncertaintyfun{}} {+} {\overline{\xi}}}$, for some $n$; 
		\item  $M \parallel A \sqsubseteq_{m'..n'} \replaceENV M {\uncertaintyfun{}} {{\uncertaintyfun{}} {+} {\overline{\xi'}}}$, for some $n'$.
	\end{inparaenum}

	From $M \parallel A \sqsubseteq_{m'..n'} \replaceENV M {\uncertaintyfun{}} {{\uncertaintyfun{}} {+} {\overline{\xi'}}}$ it follows that there exists a trace $t$ with $\#\tick(t)=m'-1$ such that $M \parallel A  \trans{t}$ and $\replaceENV M  {\uncertaintyfun{}}  {{\uncertaintyfun{}} {+} {\overline{\xi'}}}  \not\!\Trans{\hat{t}}$. 
	Since $\overline{\xi} \leq \overline{\xi'} $, by monotonicity (Proposition~\ref{prop:monotonicity}), we deduce that $\replaceENV M  {\uncertaintyfun{}}  {{\uncertaintyfun{}} {+} {\overline \xi}}  \not\!\Trans{\hat{t}}$. 
	Moreover,  by Lemma~\ref{lem:top} $ M \parallel A \trans{t}$ entails 
	$ M \parallel \mathit{Top}(C)\Trans{\hat{t}}$. 
	
	Summarising, there exists a trace $t'$ with $\#\tick(t')=m'-1$ such that $M \parallel \mathit{Top}(C) \trans{t'}$ and $\replaceENV M
	{\uncertaintyfun{}} {{\uncertaintyfun{}} {+} {\overline \xi }}
	\not\!\Trans{\hat{t'}}$. However, this, together with $m' < m$, is in
	contradiction with the fact (i) above saying that $M \parallel \mathit{Top}(C) \sqsubseteq_{m ..n} \replaceENV M {\uncertaintyfun{}} {{\uncertaintyfun{}} {+} {\overline \xi }}$, for some $n$. As a consequence it must be $\xi' \leq \xi$ and $m' \leq m$. 
	This concludes the proof. 
\end{proof}

\begin{proof}[Proof of Proposition~\ref{prop:effect1}]
	Let us prove the first sub-result. 
	From Proposition~\ref{prop:att:integrity}  we know that 
	$\mathit{Sys}  \parallel  A_{10}  \, \sqsubseteq _{14..21}   \, Sys$. In particular we showed that 
	the system $\mathit{Sys}  \parallel  A_{10} $ has an execution trace which  
	is in an unsafe state from the $14$-th to the $21$-th time interval and fires only one alarm in the $16$-th time slot, and which cannot be matched by $\mathit{Sys}$.

	Hence, if in the $14$-th time slot the the system is in unsafe state, then the temperature in the
	$9$-th time slot  must be greater than $9.9$.
	Moreover, to fire an allarm in the  in the $16$-th time slot,   the cooling must be activated in the $11$-th time slot and hence  the temperature in the
	$10$-th time slot  must be less or equal than $10.1$ (recall that $\epsilon=0.1$).    
	But this is impossible since in the
	$9$-th time slot  $\mathit{temp}$ is greater than $9.9$, and, the minimum increasing of the temperature is $1-\gamma=0.2$.

	As a consequence, 
	for  $\gamma \leq 0.8$  we have 
	\[
	Sys  \parallel  A_{10}   \not \sqsubseteq \,   Sys   {\subst { {\gamma}}{\delta}}
	\enspace .   
	\]

	Let us prove the second sub-result. 
	That is,  \[ Sys  \parallel  A_{10}    \sqsubseteq \,   Sys   {\subst { {\gamma}}{\delta}} \]
	for $\gamma > 0.8$. Formally, we have to demonstrate that 
	whenever $\mathit{Sys} \parallel  A_{10} \trans{t}$, for some trace $t$, then 
	$\mathit{Sys}   {\subst { {\gamma}}{\delta}} \Trans{\hat t}$ as well. 
	Let us do a case analysis on the structure of the trace $t$. 
	We notice that,   since $A_{10}$ is a temporanely attack, the trace $t$ does not contain  $\dead$-action. 
	We distinguish four possible cases.

	\begin{itemize}[noitemsep]

		\item The trace $t$ contains contains only $\tau$-, $\tick$-, $\overline{alarm}$- and  $\unsafe$-actions.
		Firstly, 
		notice that  the system $\mathit{Sys}  \parallel  A_{10} $ may produce \emph{only one}  output on 
		channel $alarm$, in the $16$-th time slot. 
		After that, the trace will have
		only $\tau$- and $\tick$-actions. 
		In fact,  in Proposition \ref{prop:att:integrity} we provided a trace 
		in which, in the $16$-th time slot, the state variable $\mathit{temp}$
		reaches the value $10.5$ and an output on channel $alarm$ is emitted. 
		This is the maximum possible value for variable $\mathit{temp}$ in that point in time. 
		After the transmission of the $alarm$, the 
		system $\mathit{Sys}  \parallel  A_{10} $ activates the cooling for the following $5$
		time slots. Thus, in the $21$-th time slot, the temperature will
		be at most $10.5- 5*(1-\delta)=10.5-5*(0.6)=7.5$, and no alarm is fired. From that time on,  since the attack $A_{10} $ terminated is life in the $10$-th time slot,  no other alarms will be fired.
		Moreover, in Proposition \ref{prop:att:integrity}, we have showed that $\mathit{Sys}  \parallel  A_{10} $ is in an unsafe state from the $14$-th to the $21$-th time interval.
		
		Summarising $\mathit{Sys}  \parallel  A_{10} $ is in an unsafe state from the $14$-th to the $21$-th time interval and fires only one alarm in the $16$-th time slot.

		By monotonicity (Proposition~\ref{prop:monotonicity}), it is enough 
		to show that such a trace exists 
		for $\replaceENV {\mathit{Sys}  } \delta \gamma$, with 
		$0.8 < \gamma \leq 0.81$. In fact, if this trace exists for $0.8 < \gamma \leq 0.81$, 
		then it would also exist for
		$  \gamma > 0.81$. 
		In the following, we show how to build the trace of $\replaceENV {\mathit{Sys}  } \delta \gamma$ which simulates  the trace $t$ of $ \mathit{Sys}  \parallel A_{10}$. 
		
		We can easily build up a trace for $\mathit{Sys}   {\subst {  {\gamma} }{\delta}}$ 
		in which, after $8$ $\tick$-actions, in the $9$-th time slot, 
		the value of the state variable  $\mathit{temp}$  is in  $9.1+\gamma$. 
		In fact, it is enough to increase the temperature of  a value   $\frac{9.1+\gamma}{8}$ degrees
		for the first $8$ rounds. Notice that this is an admissible value since,    $\frac{9.1+\gamma}{8}$  is in $ [  1-\gamma,1+\gamma ]$, for any  $0.8 < \gamma \leq 0.81$. Moreover, since $\gamma >0.8$, in the $9$-th time slot  we have that $9.1+\gamma>9.9$.
		Now, in the in the $10$-th time slot,    $\mathit{temp}$  may reach  $9.1+\gamma+(1-\gamma)=10.1$.
		%In the $10$-th time slot,
		% the temperature may reach   the value
		%$10.1$ and  the value of the state variable $\mathit{stress}$ is equal to $1$.
		Being $10.1$ the value  of the state variable $\mathit{temp}$, there is an execution
		trace  
		in which the  sensed temperature is $10$ (recall that $\epsilon=0.1$) and hence 
		the cooling system is not activated. However, 
		in the following time slot, i.e.\ the $11$-th time slot,   
		the temperature may reach   the value
		$11.8$, imposing the activation of the cooling system (notice $1.7$ is an admissible increasing). 
		
		Summarizing, in the $9$-th time slot  the temperature is greater than $9.9$ and in the $11$-th time slot the cooling system is activated with a temperature equal to $11.8$.
		The thesis  follows from the following two facts:
		\begin{itemize}
			\item Since in the $9$-th time slot  the temperature is greater than $9.9$, then in the $ 14$-th time slot the system enters in an unsafe state. Since in the $11$-th time slot the cooling system is activated with a temperature equal to $11.8$, then, in the $20$-th time slot, after $9$ time units of cooling, the  temperature may reach the value  
			$11.8 -9(1-\gamma)=2.8+9 \gamma$ which will be greater than $9.9$ being 
			$\gamma> 0.8$. 
			Hence,  in the $21$-th time slot,    the system still be in an unsafe state.
			Finally, in the $21$-th time slot, after $10$ time units of cooling, the  temperature may reach the value  
			$11.8 -10(1-\gamma)=1.8+10\gamma$ which will be less  or equal to  $9.9$ being 
			$\gamma\leq  0.81$. 
			Hence,  in the $22$-nd time slot,  the  variable $\mathit{stress}$ is reset to $0$   and  the system enters in a safe state. 
			From that time on, 
			since   $\mathit{Sys}   {\subst {\gamma}{\delta}}$ can mimic all traces of  $\mathit{Sys}$,   we can always choose a trace which does not enter in an unsafe state any more.
			
			Summarising $\mathit{Sys}  \parallel  A_{10} $ is in an unsafe state from the $14$-th to the $21$-th time interval.
			
			\item Since in the $11$-th time slot the cooling system is activated with a temperature equal to $11.8$, 
			then, in the $16$-th time slot, the  temperature may reach the value  
			$11.8 -5(1-\gamma)=6.8+5\gamma$.
			Since $\epsilon = 0.1$, the sensed temperature would be  in 
			the real interval $[6.7+ 5\gamma , 6.9+5\gamma] $. Thus,   the sensed temperature is  greater than $10$ being 
			$\gamma> 0.8$. 
			Thus,  the alarm will be transmitted, in the $16$-th time slot,  
			as required. 
			After the transmission on channel $alarm$, the 
			system $\mathit{Sys}   {\subst {  {\gamma} }{\delta}}$ activates the cooling for the following $5$
			time slots.   As a consequence,  in $21$-th time slot, the temperature
			will be at most $6.8+ 5\gamma- 5*(1-\gamma)= 1.8+10\gamma$. Since we assumed $ 0.8< \gamma \leq 0.81$ the temperature will be well below $10$ and no alarm will be sent. From that time on, 
			since   $\mathit{Sys}   {\subst {\gamma}{\delta}}$ can mimic all traces of  $\mathit{Sys}$,   we can always choose a trace which does not fire the alarm any more.
			
			Summarising $\mathit{Sys}  \parallel  A_{10} $  fires only one alarm in the $16$-th time slot. 
		\end{itemize}

		\item The trace $t$ contains contains only $\tau$-, $\tick$- and
		$\unsafe$-actions. This case is similar to the previous one.

		\item The trace $t$ contains only $\tau$-, $\tick$- and
		$\overline{alarm}$-actions. This case cannot occur. In fact, an $\overline{alarm}$-action
		can not occur without un $\unsafe$-action.
		
		\item The trace $t$ contains only $\tau$- and $\tick$-actions.
		If the system  $\mathit{Sys} \parallel  A_{10}$ has a trace
		$t$ which contains only $\tau$- and $\tick$-actions then,
		by Proposition \ref{prop:sys}, the system $\mathit{Sys}$ in isolation  must have a similar trace with the same number of  $\tick$-actions. 
		By an application of Proposition \ref{prop:monotonicity}, as  $\delta<\gamma$, 
		any trace of $\mathit{Sys} $ can be simulated by  
		$\mathit{Sys}   {\subst { {\gamma}}{\delta}} $. As a consequence, $\mathit{Sys}   {\subst { {\gamma}}{\delta}} \Trans{\hat t}$.

	\end{itemize}
	This is enough to derive that:
	\[
	Sys  \parallel  A_{10}    \sqsubseteq \,   Sys   {\subst { {\gamma}}{\delta}}
	\enspace .   
	\]
\end{proof}

\begin{proof}[Proof of Proposition~\ref{prop:effect2}]
	Let us prove the first sub-result. 
	As demonstrated  in Example~\ref{exa:att:DoS2},  we know that 
	$\mathit{Sys} \parallel A  \sqsubseteq_{14..\infty}  \mathit{Sys}$
	because  in the $14$-th time slot 
	the compound system  will violate the safety conditions emitting an $\unsafe$-action  until the invariant will  be violated.
	No alarm will be emitted.
	
	Since the system keeps violating the safety condition
	the temperature must remain greater than $9.9$.
	As proved for Lemma~\ref{lem:sys3} 
	we can prove that we have that the temperature is less than or equal to $11.1+\gamma $.  
	Hence, in the time slot before getting in deadlock, the temperature 
	of the system is in the real interval $(9.9,11.1+\gamma]$.
	To deadlock  with one $\tick$ action and from  a temperature in the real interval $(9.9,11.1+\gamma]$, either the temperature  reaches  a value greater than $50$ (namely, $11.1+\gamma+1+\gamma > 50$) or 
	the temperature reaches a value less than $ 0$ (namely, $9.9-1-\gamma <  0$ ). 
	Since $\gamma \leq 8.9$, both cases can not occur. Thus,  we have that 
	\[
	\mathit{Sys}  \parallel  A   \not \sqsubseteq \,  \replaceENV{\mathit{Sys}}  {\delta} {\gamma} 
	\enspace .   
	\]
	Let us prove the second sub-result. 
	That is,  \[ Sys  \parallel  A    \sqsubseteq \,   \replaceENV{\mathit{Sys}}  {\delta} {\gamma} \]
	for $\gamma >8.9$. 
	We demonstrate that 
	whenever $\mathit{Sys} \parallel  A \trans{t}$, for some trace $t$, then 
	$\replaceENV {\mathit{Sys}  } \delta \gamma\Trans{\hat t}$ as well. 
	We will proceed by case analysis on the kind of actions contained in $t$. 
	We distinguish four possible cases.

	\begin{itemize}[noitemsep]
		
		\item The trace $t$ contains contains only $\tau$-, $\tick$-, $\unsafe$-
		and $\dead$-actions. As discussed in Example~\ref{exa:att:DoS2},
		$\mathit{Sys} \parallel A \; \sqsubseteq_{14..\infty} \; \mathit{Sys}$
		because in the $14$-th time slot the system will violate the safety
		conditions emitting an $\unsafe$-action until the invariant will be
		broken. No alarm will be emitted. Note that, when $\mathit{Sys} \parallel
		A$ enters in an unsafe state then the temperature is at most
		$9.9+(1+\delta)+5(1+\delta)=9.9+6(1.4)=18.3$. Moreover, the fastest
		execution trace, reaching an unsafe state, deadlocks just after $\lceil
		\frac{ 50-18.3}{1 + \delta } \rceil = \lceil \frac{ 31,7}{1.4 } \rceil=23$
		$\tick$-actions. Hence, there are $m,n \in \mathbb{N}$, with $m\geq 14$ and
		$n\geq m+23$, such that the trace $t$ of $\mathit{Sys} \parallel A $
		satisfies the following conditions: (i) in the time interval $1..m-1$ the
		trace $t$ of is composed by $\tau$- and $\tick$-actions; (ii) in the time
		interval $m..(n-1)$, the trace $t$ is composed by $\tau$-, $\tick$- and
		$\unsafe$- actions; (iii) in the $n$-th time slot the trace $t$ deadlocks.
		
		By monotonicity (Proposition~\ref{prop:monotonicity}), it is enough 
		to show that such a trace exists 
		for $\replaceENV {\mathit{Sys}  } \delta \gamma$, with 
		$8.9 < \gamma < 9$. In fact, if this trace exists for  $ 8.9 < \gamma < 9$, then it would also exist for
		$  \gamma \geq 9$. 
		In the following, we show how to build the trace of $\replaceENV {\mathit{Sys}  } \delta \gamma$ which simulates  the trace $t$ of $ \mathit{Sys}  \parallel A$. 
		We build up the trace in three steps: (i)  
		the sub-trace from time slot $1$ to  time slot $m{-}6$;
		(ii) the sub-trace from the time slot $m{-}5$ to the time slot $n{-}1$;
		(iii) the final part of the trace reaching the deadlock.
		\begin{itemize}
			\item[(i)] 
			As $\gamma>8.9$ (and hence $1+\gamma>9.9$), the system may increment the
			temperature of $9.9$ degrees after a single $\tick$-action. Hence, we
			choose the trace in which the system $\replaceENV {\mathit{Sys} } \delta
			\gamma$, in the second time slot, reaches the temperature equal to $9.9$.
			Moreover, the system may maintain this temperature value until the
			$(m{-}6)$-th time slot (indeed $0$ is an admissible increasing since $0
			\in [1-\gamma,1+\gamma]\supseteq [-7.9,10.9]$) . Obviously, with a
			temperature equal to $9.9$, only $\tau$- and $\tick$-actions are possible.

			\item[(ii)]
			%%From the previous item we have that in the $(m{-}6)-th$ time slot the temperature is equal to $9.9$.
			Let $k \in \mathbb{R}$ such that $0< k < \gamma-8.9  $ (such $k$ exists since $\gamma>8.9 $).  
			We may consider an increment of the temperature of $k$. 
			This implies that in the $(m{-}5)$-th time slot, the system 
			$\replaceENV {\mathit{Sys}  } \delta \gamma$ may reach the temperature $9.9+k$. 
			Note that $ k$ is an admissible increment since $0< k < \gamma-8.9  $ and $8.9 < \gamma < 9$ entails $k \in (0,0.1)$. 
			Moreover, the system may maintain this temperature value  until the $(n{-}1)$-th time slot
			(indeed, as said before, $0$ is an admissible increment).
			Summarising from the $(m{-}5)$-th time slot to the $(n{-}1)$-th time slot,
			the temperature may remain equal to $9.9+k \in (9.9,10)$. As a
			consequence, from the $m$-th time slot to the $(n{-}1)$-th time slot the
			system $\replaceENV {\mathit{Sys} } \delta \gamma$ may enter in an unsafe
			state. Thus, an $\unsafe$-action may be
			performed in the time interval $m..(n{-}1)$. Moreover, since
			$\epsilon=0.1$ and the temperature is e $9.9+k \in (9.9,10)$, we can
			always assume that the cooling is not activated until the $(n{-}1)$-th
			time slot. This implies that neither alarm nor deadlock occur.

			\item[(iii)]
			At this point, since in the $(n{-}1)$-th time slot the temperature is
			equal to $9.9 + k \in (9.9,10)$ (recall that $k \in (0,1)$), the cooling
			may be activated. We may consider a decrement of $1+\gamma$. In this
			manner, in the $n$-th time slot the system may reach a temperature of
			$9.9+k-(1+\gamma)< 9.9+0 -1 -8.9 =0$ degrees, and the system $\replaceENV
			{\mathit{Sys} } \delta \gamma$ will deadlock.
			
		\end{itemize}
		
		Summarising, for any $\gamma > 8.9 $ the system $\replaceENV {\mathit{Sys}
		} \delta \gamma$ can mimic any trace $t$ of $ \mathit{Sys} \parallel A$.
		
		\item The trace $t$ contains contains only $\tau$-, $\tick$- and
		$\unsafe$-actions. This case is similar to the previous one.

		\item The trace $t$ contains only $\tau$-, $\tick$- and
		$\overline{alarm}$-actions. This case cannot occur. In fact, as discussed
		in Example~\ref{exa:att:DoS2}, the process $\mathit{Ctrl}$ never activates the
		$\mathit{Cooling}$ component (and hence also the $\mathit{IDS}$ component,
		which is the only one that could send an alarm) since it will always
		detect a temperature below $10$.
		
		\item The trace $t$ contains only $\tau$- and $\tick$-actions. If the
		system $\mathit{Sys} \parallel A $ has a trace $t$ that contains only
		$\tau$- and $\tick$-actions, then, by Proposition~\ref{prop:sys}, the system
		$\mathit{Sys}$ in isolation must have a similar trace with the same number
		of $\tick$-actions. By an application of Proposition~\ref{prop:monotonicity}, as
		$\delta<\gamma$, any trace of $\mathit{Sys} $ can be simulated by
		$\replaceENV {\mathit{Sys} } \delta \gamma$. As a consequence,
		$\replaceENV {\mathit{Sys} } \delta \gamma\Trans{\hat t}$.
	\end{itemize}
	
	This is enough to obtain what required:
	\[
	\mathit{Sys}   \parallel  A    \sqsubseteq \,   \replaceENV {\mathit{Sys}  } \delta \gamma
	\, . 
	\]
\end{proof}

\end{document}